\newcommand\CMLLPAR{
\usepackage{cmll}
\newcommand\IPar{\mathord{\parr}}
}
\title{From Differential Linear Logic to Coherent Differentiation}
\author{Thomas Ehrhard\\
Université Paris Cité, CNRS, Inria, IRIF, F-75013, Paris, France}
\begin{document}

\maketitle

\begin{abstract}
In this survey, we present in a unified way the categorical and
syntactical settings of coherent differentiation introduced recently,
which shows that the basic ideas of differential linear logic and of
the differential lambda-calculus are compatible with determinism.
Indeed, due to the Leibniz rule of the differential calculus,
differential linear logic and the differential lambda-calculus feature
an operation of addition of proofs or terms operationally interpreted
as a strong form of nondeterminism.
The main idea of coherent differentiation is that these sums can be
controlled and kept in the realm of determinism by means of a notion
of summability, upon enforcing summability restrictions on the
derivatives which can be written in the models and in the syntax.


\end{abstract}

\section*{Introduction}


During his development of the denotational semantics of \SystF{} in
the cartesian closed category of qualitative domains, and more
specifically of coherence spaces, Girard observed in~\cite{Girard86}
that a specific class of morphisms arises naturally among the general
morphisms of this model (the stable maps).
These particular stable functions are characterized by an additional
preservation property (they commute with compatible unions).

Girard not only recognized the relevance of these morphisms to the
description of the stable semantics, he also understood that they
should play a more fundamental role than the general stable functions
themselves.
He called them \emph{linear maps} because they collectively behave
very much like linear maps in algebra, forming a symmetric monoidal
category which is even \Staraut{} (a general categorical notion
introduced in~\cite{Barr79} describing categories of linear morphisms
where all objects are reflexive, that is, canonically isomorphic to
their bidual).

He understood that this observation, \emph{a priori} relative to the
stable semantics, was the denotational shadow of a fundamental and
hitherto hidden structure of \Intlog{} itself: Linear Logic (\LL),
see~\cite{Girard87}.
This essential discovery had a major impact in Logic and Computer
Science, notably on the study and design of programming languages.

One aspect of linearity which is not directly addressed in \LL{}
---~although it is contemplated in the concluding section
of~\cite{Girard87}~--- is its central role in the differential
calculus where differentiation consists in extracting from a morphism
its ``best linear approximation''.
The purpose of Differential \LL{} (\DILL{} for short) is to take this
role of linearity into account.
This logical system was introduced by the author and
Laurent~Regnier %
in~\cite{EhrhardRegnier02,EhrhardRegnier06d} and is summarized
in~\cite{Ehrhard18}.
It extends \LL{} without adding new connectives, but by adding a set
of new deduction rules that we can classify as follows.
\begin{itemize}
\item There are 3 rules relative to the exponential modality \(\oc\_\),
  dual to the standard rules of weakening, contraction and
  dereliction.
\item And there are two rules expressing that any finite family of
  proofs of the same formula has a sum which is again a proof of that
  formula.
  This includes the case of an empty family, meaning that any formula
  is provable by a \(0\) proof.
\end{itemize}
These latter rules mean that the proofs in \(\DILL\) are essentially
partial (the \(0\) proof is very similar to the term \(\Omega\) in the
theory of Böhm trees, which is a completely undefined term).
More importantly, the unrestricted ability of adding proofs of the
same formula means that \DILL{} features a fundamental non-determinism
in the sense that it is possible to add the two normal proofs of
\(\Plus\One\One\) which corresponds to the type of booleans, that is,
intuitively, the two booleans \(\True\) and \(\False\).
The most natural operational understanding of this ``boolean''
\(\True+\False:\Plus\One\One\) consists in considering it as a program
which can nondeterministically reduce to \(\True\) or \(\False\).
We provide in \Cref{sec:difflam} a more precise description of the \(\lambda\)-calculus account of this extension of \LL{}, which features a typing rule allowing to add any two terms having the same type.


\subsection*{Determinism and differentiation}
In \cite{EhrhardRegnier06a,EhrhardRegnier06b} we developed a Taylor
expansion of \(\lambda\)-terms which is based on the differential
\(\lambda\)-calculus.
This expansion consists in translating a \(\lambda\)-term \(M\) into a
(generally infinite) sum \(M^\ast\) of ``resource terms'', that is, of
differential \(\lambda\)-terms whose only use of the standard
\(\lambda\)-calculus operation of application is application to \(0\).
In other words, all the applications \(\App PQ\) are replaced
hereditarily by differential terms:
\begin{align*}
  (\App PQ)^\ast=
  \sum_{n=0}^\infty\frac 1{\Factor n}
  \App{\Diffsymb^nP^\ast\cdot
  (\overbrace{Q^\ast,\dots,Q^\ast}^n)}0\,.
\end{align*}
In this expression, if \(M:\Timpl AB\) and \(N_1,\dots,N_n:A\) are
differential \(\lambda\)-term, then
\[\Diffsymb^nM\cdot (N_1,\dots,N_n):\Timpl AB\] is the \(n\)th
derivative of \(M\) applied to linear argument \(\List N1n\), see
\Cref{sec:difflam}.

Analyzing the execution of standard \(\lambda\)-terms in the Krivine
Machine we showed that, if a standard \(\lambda\)-term \(M\) is
\(\beta\)-equivalent to a variable \(x\) then there is exactly one
term \(s\) in the Taylor expansion \(M^\ast\) of \(M\) which does not
reduce to \(0\), and this term reduces to \(\Factor n\,x\) if we take
multiplicities into account.
This resource term \(s\) is a trace of the execution of \(M\) in the
machine, or more precisely, \(s\) provides a precise account of the
quantitative use of the various subterms of \(M\) during the
execution.

In other words, in the infinite sum \(M^\ast\), only one term is
non-zero, although for knowing which one, one needs to reduce \(M\) to
its normal form.
This means that this infinite sum is only apparently nondeterministic.
But for proving this property we strongly use the fact that \(M\) is a
\emph{standard} \(\lambda\)-term, that is, contains no differential
construct of shape \(\Diffsymb P\cdot Q\).

So beyond this first encouraging observation, the question remained
open of whether differentiation (in the sense of \(\DILL\)) can be
made compatible with determinism.
Towards a positive answer to this question, the second crucial
observation was that, in the model of \LL{} based on probabilistic
coherence spaces (PCS), nonlinear morphisms are functions which are
analytic in the sense that they are defined by powerseries and hence
should be differentiable, in spite of the fact that, in the
corresponding category \(\PCOH\), it is not always possible to add two
morphisms of the same type.
For instance, the sum of two subprobability distributions on the
natural numbers is not always a subprobability distribution (the
global mass can become \(>1\)).
%
The simplest interesting object of this category is the closed unit
interval \(\Intercc 01\) that we denote as \(\Sone\), and a morphism
\(\Sone\to\Sone\) is a function \(f:\Intercc 01\to\Intercc 01\) such
that there is a (necessarily uniquely defined) sequence
\((t_n)_{n\in\Nat}\) of nonnegative real numbers such that
\(\forall x\in\Intercc01\ f(x)=\sum_{n=0}^\infty
t_nx^n\in\Intercc01\), a condition which simply means
\(\sum_{n=0}^\infty t_n\leq 1\).
Such a function has a derivative, defined on \(\Interco 01\) by %
\(f'(x)=\sum_{n=0}^\infty (n+1)t_{n+1}x^n\) and this derivative cannot
always be extended to \(\Intercc01\), think of \(f(x)=1-\sqrt{1-x}\)
whose derivative is \(f'(x)=\frac 1{2\sqrt{1-x}}\) for \(x<1\).
At this point it is useful to remember that this basic notion of
derivative that we have learned at school hides an ingredient that is
essential when one considers more general situations: a linear
argument.
Indeed, the derivative of a (sufficiently regular) function
\(h:E\to F\) where \(E\) and \(F\) are vector spaces is a function
\(h':E\to(\Limpl EF)\) where \(\Limpl EF\) is the vector space of
linear functions from \(E\) to \(F\).
In the case where \(E=\Real\) it is natural to identify \(\Limpl EF\)
with \(F\) and then \(h':\Real\to F\) as we are used to.
This identification amounts to saying that the derivative
\(h'(x)\in\Limpl\Real F\) is always given \(1\in\Real\) as linear
argument which is perfectly fine when one deals with vector spaces or
similar structures.
But the interval \(\Intercc 01\) is not a vector space, and in
particular when \(x,u\in\Intercc01\), it is not always true that
\(x+u\in\Intercc01\).
If we set \(S=\{(x,u)\in\Intercc01\St x+u\in\Intercc01\}\) and
consider the function \(\Diffsymb f:S\to\Realp\) given by %
\(\Diffsymb f(x,u)=f'(x)u\) then it is easy to check that %
\(\forall (x,u)\in S\ \Diffsymb f(x,u)\leq f(x+u)-f(x)\leq 1\) %
so that \(\Diffsymb f\) is actually an analytic morphism %
\(S\to\Intercc 01\).
We can even say that the map %
\(\mathsf T f:S\to S\) given by %
\(\mathsf T f(x,u)=(f(x),f'(x)u)\) is analytic.
This simple observation, together with the fact that the
correspondence %
\(f\mapsto\mathsf Tf\) is functorial is the starting point of a new
setting for differentiation in the \(\lambda\)-calculus and in \LL{}
called \emph{Coherent Differentiation} (CD).

\subsection*{Content}

The paper starts with a historical description of some ideas
leading to CD, that we divide in 3 phases:
\begin{itemize}
\item a first unpublished attempt by the author at defining the
  derivative of a stable function on coherence spaces in the 1980's;
\item the introduction of the differential \(\lambda\)-calculus and of
  \(\DILL\) in the 2000's
\item and last the discovery of CD in 2021, which gives a clear status
  to the first attempt and makes it work completely.
\end{itemize}

From \Cref{sec:summability}, we describe the categorical and syntactic
setting of CD.
In \Cref{sec:summability}, we introduce the basic structure of
summability, which allows to consider categories where hom-sets are
commutative partial monoids and more precisely axiomatizes
functorially an operation which maps an object \(X\) to the object
\(\Sfun X\) of pairs \((x_0,x_1)\) such that the sum \(x_0+x_1\)
exists.
This operation is presented as a functor \(\Sfun:\cL\to\cL\) equipped
with three natural transformations where, intuitively, \(\cL\) is a
``linear'' category.
Elaborating on this infrastructure, we introduce in
\Cref{sec:diff-structure} the basic idea of CD which is to represent
differentiation as a distributive law between the \(\oc\)-comonad of a
(weak) structure of model of \LL{} on \(\cL\) and the functor
\(\Sfun\) that we equip with a monad structure canonically induced by
the summability structure.
In \Cref{sec:closed-differential} we consider the case where the
category \(\cL\) is symmetric monoidal closed and hence induces a
model of the \(\lambda\)-calculus, explaining how the differential
structure interacts with the closed structure, and with fixpoint
operators when available (\Cref{sec:fixpoints}).

In these developments, we consider a particularly important situation,
called the \emph{elementary} situation where the functor \(\Sfun\) can
be described on objects by \(\Sfun X=\Limplp{\With\Sone\Sone}{X}\) and
similarly on morphisms.
In that case the differential structure boils down to a
\(\oc\)-coalgebra structure on \(\With\Sone\Sone\).
It turns out that all the concrete models of CD known so far are
elementary.

Last in \Cref{sec:syntax} we outline a syntax incorporating in a
functional language the categorical structures developed in the
previous section.
This functional language is an extension of Scott-Milner-Plotkin's
PCF~\cite{Plotkin77}.
Here are the main features of this extension.
\begin{itemize}
\item The only ground type (integers) is equipped with a \texttt{let}
  construct allowing to use call-by-value on integers, which is
  crucial when the language is extended with probabilistic choice
  (this refinement of PCF was introduced
  in~\cite{EhrhardPaganiTasson14});
\item there is a type constructor corresponding to the functor
  \(\Sfun\) alluded to above, and associated term constructors
  corresponding to the main categorical ingredients of the categorical
  axiomatization of CD;
\item the operational semantics is described by means of an abstract
  machine.
\end{itemize}
We give the main results about this operational semantics which have
been proven in~\cite{Ehrhard23b}: soundness and adequacy, and we
explain how the denotational semantics shows that this operational
semantics is deterministic.


\section{Notations and terminology}
A finite multiset of elements of a set \(A\) is a function %
\(m:A\to\Nat\) such that \(\Supp m=\{a\in A\St m(a)\not=0\}\) is
finite.
We use \(\Msetempty\) for the empty multiset such that
\(\Supp\Msetempty=\emptyset\) and standard algebraic notations
\(m_1+m_2\) and \(\sum_{i\in I}m_i\) (for \(I\) finite) for the
pointwise addition of multisets.

Given \(m=\Mset{\List a1k}\in\Mfin A\) and \(i\in I\) we define %
\(\Indactms im=\Mset{(i,a_1),\dots,(i,a_k)}\in\Mfin{I\times A}\).

If \(M=\Mset{\List m1k}\in\Mfin{\Mfin A}\), we set
\(\sum M=\sum_{i=1}^km_k\in\Mfin A\).

Let \(\cC\) be a category.
A family of morphisms \((h_i\in\cC(X,Y_i))_{i\in I}\) is %
\emph{jointly monic} if for any \(f,f'\in\cC(Z,X)\), if %
\((h_i\Compl f=h_i\Compl f')_{i\in I}\) then \(f=f'\).
And \((h_i\in\cC(X_i,Y))_{i\in I}\) is \emph{jointly epic} if for any
\(f,f'\in\cC(Y,Z)\), if %
\((f\Compl h_i=f'\Compl h_i)_{i\in I}\) then \(f=f'\).

\section{Differentiation in \LL} %
\label{sec:diff-in-LL}

A coherence space (CS) is a pair \(E=(\Web E,\mathord{\Coh E})\) where
\(\Web E\) is a set, the \emph{web}, and \(\Coh E\) is a binary,
reflexive and symmetric relation on \(\Web E\), the \emph{coherence
  relation}.
A clique of \(E\) is a subset \(x\) of \(\Web E\) such that
\(\forall a,a'\in x\ a\Coh Ea'\).
Given CS \(E\) and \(F\), one defines a CS \(\Limpl EF\) by
\(\Web{\Limpl EF}\) and \((a,b)\Coh{\Limpl EF}(a',b')\) if
\(a\Coh Ea'\Implies(b\Coh Fb'\text{ and }b=b'\Implies a=a')\), and the
category \(\COH\) has CS as objects, and \(\COH(E,F)=\Cl{\Limpl EF}\),
identity morphisms being the diagonal relations and composition being
the standard composition of relations.

The category \(\COH\) is a model of \LL{} and the exponential
considered first by Girard \(\Ocg E\) defined as follows:
\(\Web{\Ocg E}=\{\Set{\List a1n} \St n\in\Nat\text{ and }\{\List
a1n\}\in\Cl E\}\).
The main feature of this exponential is that its Kleisli category
\(\Klcat\COH{\Ocg}\) is isomorphic to the category of coherence spaces
and stable functions by the following correspondence:
with any \(t\in\Klcat\COH\Ocg(E,F)=\Cl{\Limpl{\Ocg E}F}\) one
associates the stable function %
\(\Fun t:\Cl E\to\Cl F\) defined by %
\(\Fun t(x)=\Set{b\in\Web F\St\exists x_0\subseteq x\ (x_0,b)\in
  t}\) %
and the mapping \(t\mapsto\Fun t\) is a bijection between %
\(\Klcat\COH\Ocg(E,F)\) and the set of stable functions
\(\Cl E\to\Cl F\).

Such a \(t\in\Klcat\COH\Ocg(E,F)\) is linear if all its elements
\((x_0,b)\) are such that \(x_0\) is a singleton, so that these linear
stable maps are just the same thing as elements of \(\COH(E,F)\).
We use the linear algebraic notation \(\Matappa tx\) to denote the
application of such a linear \(t\) to \(x\in\Cl E\), that is
\(\Matappa tx=\Set{b\in\Web F\St \exists a\in x\ (a,b)\in t}\).

\subsection{A first attempt: the derivative of a stable function} %
\label{sec:coh-space-stable-diff}

In front of these definitions, and in view of the role of linearity in
Analysis and Geometry, a natural question appeared to the author: %
is it possible to turn such a general stable morphism
\(t\in\Klcat\COH\Ocg(E,F)\) into a linear map by an operation similar
to differentiation?

This question should be made a bit more precise and is actually twofold: %
given \(x\in\Cl E\),
\begin{itemize}
\item can we define a coherence space \(E_x\) of all possible
  ``extensions'' of \(x\), that is, such that for all
  \(u\in\Cl{E_x} \), the set \(x\cup u\in\Cl E\)
\item and is there a linear \(t'(x)\in\Cl{\Limpl{E_x}{F}}\) such that,
  for all \(u\in\Cl{E_x}\), \(\Fun t(x)\cup\Matappa{t'(x)}u\) exists, is
  a subset of \(\Fun t(x\cup u)\) and is the ``best'' possible
  approximation of that set by means of a linear map?
\end{itemize}
A natural tentative answer to the first question is to take %
\(\Web{E_x}=\Set{a'\in\Web E\St \forall a\in x\ a\Coh Ea'}\) (with
\(a_1\Coh{E_x}a_2\) if \(a_1\Coh{E}a_2\)) and then
\(t'(x)=\Set{(a',b)\in\Web{E_x}\times\Web F \St \exists x_0\subseteq x\
  (x_0\cup\Set{a'},b)\in t}\).

This works quite well if we take \(x=\emptyset\), in that case
\(E_x=E\) and \(t'(\emptyset)\in\COH(E,F)\) and is characterized by
\begin{align*}
  \Matappa{t'(\emptyset)}u=\Union_{a\in u}\Fun t(\Set a)
\end{align*}
and notice also that we have a morphism %
\(\Coder_E=\Set{(a,\Set a)\St a\in\Web E}\in\COH(E,\Ocg E)\) %
such that \(t'(\emptyset)=t\Compl\Coder_E\).

Imagine now that \(x=\Set{a,a'}\) for some \(a\Scoh Ea'\), and that
moreover \(t=\Set{(x,b)}\).
Notice that \(a,a'\in\Web{E_x}\).
Then our definition of \(t'(x)\) yields \((a,b),(a',b)\in t'(x)\) and
hence \(t'(x)\notin\Cl{\Limpl{E_x}F}\).
We stopped our investigation of this idea at this point in 1986-1987,
but we could have tried to push this line of ideas a little bit
further as we explain now.

We can interpret this failure as meaning that our definition of
\(E_x\) is not satisfactory, we can try
\(\Web{E_x}=\Set{a'\in\Web E\St \forall a\in x\ a\Scoh Ea'}\) (with
\(a_1\Coh{E_x}a_2\) if \(a_1\Coh{E}a_2\)), that is, if
\(u\in\Cl{E_x}\), not only \(x\cup u\in\Cl E\) but also
\(x\cap u=\emptyset\).

With this definition of \(E_x\), let \(((a_i,b_i)\in t'(x))_{i=1,2}\)
and assume that \(a_1\Coh{E_x}a_2\), that is \(a_1\Coh{E}a_2\).
This means that there are \((x_i\subseteq x)_{i=1,2}\) with
\(((x_i\cup\Set{a_i},b_i)\in t)_{i=1,2}\), which implies
\(b_1\Coh Fb_2\) because \(x_1\cup x_2\cup\Set{a_1,a_2}\in\Cl E\) by
definition of \(E_x\).
If moreover \(b_1=b_2\), we know that
\(x_1\cup\Set{a_1}=x_2\cup\Set{a_2}\) which implies \(a_1=a_2\)
because \(a_1\notin x_2\) and \(a_2\notin x_1\) (remember that
\(x_1,x_2\subseteq x\)).
So we do have \(t'(x)\in\COH(E_x,F)\).
We can be even more precise: let \((a',b')\in t'(x)\) and let
\(b\in\Fun t(x)\), so that there is \(x_0\subseteq x\) such that
\((x_0,b)\in t\).
Let \(x'_0\subseteq x\) be such that \((x'_0\cup\Set{a'},b')\in t\),
then we have \(x_0\Scoh{\Ocg E}x'_0\cup\Set{a'}\) because we know that
\(a'\notin x_0\).
Therefore \(b\Scoh Fb'\) and we have shown that
\(b'\in\Web{F_{\Fun t (x)}}\) from which it follows that %
\(t'(x)\in\COH(E_x,F_{\Fun t(x)})\).

Let us adopt the following convention introduced by Girard:
given \((x_i\in\Cl E)_{i\in I}\), we use the notation
\(\sum_{i\in I}x_i\) to denote \(\Union_{i\in I}x_i\) and to express
at the same time that the \(x_i\)'s are pairwise disjoint.
Then our definition of \(t'(x)\) satisfies
\begin{align*}
  \Matappa{t'(x)}u=\sum_{a\in u}(\Fun t(x+\Set a)\setminus\Fun t(x))\,.
\end{align*}
Indeed, let first \(b\in\Matappa{t'(x)}u\), so let \(a\in u\) be such
that %
\((a,b)\in t'(x)\).
There is \(x_0\subseteq x\) such that \((x_0+\Set a,b)\in t\) from
which it follows that \(b\in\Fun t(x+\Set a)\).
Next since \(x_0+\Set a\) is minimal such that
\(b\in\Fun t(x_0+\Set a)\) and \(a\notin x\), we cannot have
\(b\in\Fun t(x)\) from which the \(\subseteq\) inclusion follows.
Let now \(b\in\sum_{a\in u}(\Fun t(x+\Set a)\setminus\Fun t(x))\), so
let \(a\in u\) be such that
\(b\in\Fun t(x+\Set a)\setminus\Fun t(x)\).
Let \(x'_0\subseteq x+\Set a\) be such that \((x'_0,b)\in t\).
Since \(b\notin\Fun t(x)\), we cannot have \(x'_0\subseteq x\) and
hence we must have \(x'_0=x_0+\Set a\) for some \(x_0\subseteq x\).
Since \(a\in u\in\Cl{E_x}\), we have \((a,b)\in t'(x)\) from which the
\(\supseteq\) inclusion follows.

This shows in particular that
\(\Fun t(x)+\Matappa{t'(x)}{u}\subseteq\Fun t(x+u)\).
Let now \(h\in\COH(E_x,F_{\Fun t(x)})\) be such that %
\(\forall u\in\Cl{E_x}\ \Fun t(x)+\Matappa hu\subseteq\Fun t(x+u)\),
that is \(\Matappa hu\subseteq\Fun t(x+u)\setminus\Fun t(x)\).
In particular, for \(u=\Set a\) with \(a\in\Web{E_x}\) we get
\(\Matappa h{\Set a}\subseteq\Matappa{t'(x)}{\Set a}\) which means that
\(h\subseteq t'(x)\).
In that precise sense \(t'(x)\) is the best linear under-approximation
of the map \(u\mapsto \Fun t(x+u)\setminus\Fun t(x)\) so can be
reasonably be called the derivative of \(t\) at \(x\).

%
Let \(s\in\Klcat\COH\Ocg(E,F)\) and \(t\in\Klcat\COH\Ocg(F,G)\) so
that %
\(t\Comp s\in\Klcat\COH\Ocg(E,G)\) is the composition in the Kleisli
category and can be described as follows:
\begin{multline*}
  t\Comp s
  =\{(x_1\cup\cdots\cup x_n,c)\in\Web{\Ocg E}\times\Web G
  \St \exists \List b1n\in\Web F\\
  ((x_i,b_i)\in s)_{i=1}^n \text{ and }
  (\Set{\List b1n},c)\in t\}
\end{multline*}
and is fully characterized by %
\(\Fun{t\Comp s}(x)=\Fun t(\Fun s(x))\).

Remember that the coherence space \(\With\Sone\Sone\) has %
\(\{(1,\Sonelem),(2,\Sonelem)\}\) as web, with %
\({(1,\Sonelem)}\Scoh{\With\Sone\Sone}{(2,\Sonelem)}\).
Let \(s\in\Klcat\COH\Ocg(\Sone,\With\Sone\Sone)\) and %
\(t\in\Klcat\COH\Ocg(\With\Sone\Sone,\Sone)\) be given by
\begin{align*}
  s&=\Set{(\Sonelem,(1,\Sonelem)),(\Sonelem,(2,\Sonelem))}\\
  t&=\Set{(\Set{(1,\Sonelem),(2,\Sonelem)},\Sonelem)}\,.
\end{align*}
Then we have
\(t\Comp s=\Set{(\Set\Sonelem,\Sonelem)}\).
Therefore
\(\Matappa{(t\Comp s)'(\emptyset)}{\Set\Sonelem}=\Set{\Sonelem}\)
and on the other hand we have %
\begin{align*}
  \Matappa{t'(\Fun s(\emptyset))}{(\Matappa{s'(\emptyset)}{\Set\Sonelem})}
  &=\Matappa{t'(\emptyset)}{(\Matappa{s'(\emptyset)}{\Set\Sonelem})}\\
  &=\Matappa{t'(\emptyset)}{\Set{(1,\Sonelem),(2,\Sonelem)}}\\
  &=\Fun t\Set{(1,\Sonelem)})
    \cup\Fun t\Set{(2,\Sonelem)})\\
  &=\emptyset\,.
\end{align*}

This means that the chain rule
\begin{align*}
  (t\Comp s)'(x)=t'(s(x))\Compl s'(x)
\end{align*}
does not hold for this differentiation of stable function, we only
have a weak version thereof
\((t\Comp s)'(x)\supseteq t'(s(x))\Compl s'(x)\).

The reason for the failure of the chain rule is clear: the morphism
\(t\) is nonlinear (it needs to be fed with
\(\Set{(1,\Sonelem),(2,\Sonelem)}\) to output the atomic result
\(\Sonelem\), that is, it uses its parameter at least twice) so in the
computation of \(\Fun{t\Comp s}(\Set\Sonelem)\), the atomic data
\(\Sonelem\) is actually used at least twice, but this nonlinearity
doesn't appear in \(t\Comp s=\Set{(\Set\Sonelem,\Sonelem)}\) which
turns out to be a linear morphism in \(\COH\): the two used copies of
\(\Sonelem\) have been ``merged''.


Another difficulty in this concrete approach to differentiation of
stable maps is that it is not clear how to express the regularity (and
hopefully, the stability) of the derivative \(t'(x)\) with respect to
\(x\) (the author was not aware of the ---~at that time recently
introduced~--- tangent categories of~\cite{Rosicky84}), and therefore,
it was unclear how to define higher derivatives in this kind of
setting.

\subsection{Coming back to differentiation in \LL}
\label{sec:difflam}
In the early 2000, motivated by the phase space parameterized \LL{}
models of~\cite{BucciarelliEhrhard99}, the author explored categorical
models of \LL{} where formulas are interpreted as vector spaces and
morphisms as linear maps (in the usual sense of Linear Algebra).
In such categories, the only known resource modalities yield infinite
dimensional vector spaces as soon as they are applied to a non \(0\)
space, and so the vector spaces under consideration must be equipped
with a topology compatible with their algebraic structure.
Two such models were developed by the author: Köthe sequence
spaces~\cite{Ehrhard00c} and finiteness spaces~\cite{Ehrhard00b}.

The objects of these models are topological vector spaces (tvs) which
admit a simple description based on the existence of a ``web'' (in the
sense of coherence spaces) which, in this algebraic context, can be
understood as a Schauder basis, that is, not exactly a basis in the
usual algebraic sense (Hamel basis), but a natural topological
adaptation thereof, in which all the elements of the vector space can
be written uniquely as infinite linear combinations of base vectors
(these infinite sums being defined as limits the sense of the topology
the vector space is endowed with).
One important feature of these models is that these webs are not used
in the definition of morphisms, which are just linear and continuous
maps.

The obtained categories \(\cL\) are models of \(\LL\) such that the
Kleisli category \(\Klcat\cL{\oc}\) can be described as a category of
analytic (or more precisely entire) functions.
Moreover, the homsets of these categories have a natural tvs structure
(because \(\cL\) is an SMCC) and in particular have an operation of
addition: they are additive categories.

The fact that the morphisms of such CCC \(\Klcat\cL\oc\) are analytic
and therefore infinitely differentiable led to the idea of extending
the typed%
\footnote{%
  Because the fixpoint operators of the untyped \(\lambda\)-calculus
  are hardly compatible with differentiation, and, in the concrete
  models at hand, the morphisms of \(\Klcat\cL\oc(X,X)\) do not have
  fixpoints in general.} %
\(\lambda\)-calculus with differential operations.
The basic differential typing rule of this calculus is
\begin{center}
  \begin{prooftree}
    \hypo{\Tseq\Gamma M{\Timpl AB}}
    \hypo{\Tseq\Gamma NA}
    \infer2{\Tseq\Gamma{\Diffapp MN}{\Timpl AB}}
  \end{prooftree}
\end{center}
the intuition being that \(A\) and \(B\) denote some kind of tvs \(E\)
(of the SMCC \(\cL\)) and \(F\), and \(M\) an entire function
\(f:E\to F\), more precisely \(f\in\Klcat\cL\oc(E,F)\) %
(possibly depending on additional parameters listed in \(\Gamma\)).
Such a function can be differentiated into another entire function %
\(f'\in\Klcat\cL\oc(E,\Limpl EF)\) such that, for any \(x\in E\), the
map \(u\mapsto f(x)+\Matappa{f'(x)}u\) is the best affine
approximation of the map \(u\mapsto f(x+u)\) (in the sense of the
topology our tvs are endowed with).
Then, if \(N\) denotes \(u\in E\), the term %
\(\Diffapp MN\) denotes the analytic function \(E\to F\) which maps
\(x\) to \(f'(x)\cdot u\).
This slightly unusual writing of differentials makes it easy to
iterate derivatives: given \((\Tseq\Gamma{N_i}A)_{i=1,2}\), we
have %
\(\Tseq\Gamma{\Diffapp M{(\Diffapp M{N_1})}{N_2}}{\Timpl AB}\), the
second derivative of \(M\), a bilinear morphism applied to its two
linear arguments.

This differential application induces a new redex, in the case where %
\(M=\Abst xAP\) with \(\Tseq{\Gamma,x:A}PB\), similar to a
\(\beta\)-redex.
The corresponding reduction is
\begin{align*}
  \Diffapp{(\Abst xAP)}{N}\Rel\Red\Diffp MxN
\end{align*}
where the term \(\Diffp MxN\) is the \emph{differential substitution}
of \(N\) for \(x\) in \(M\), defined by induction on \(M\), which is
typed as follows
\begin{align*}
  \Tseq{\Gamma,x:A}{\Diffp MxN}{B}
\end{align*}
which shows that, in the term \(\Diffp MxN\), the variable \(x\) can
still be free.
This is due to the fact that in that term only one linear copy is
substituted with \(N\).
This linear substitution operation performs a non-trivial operation on
\(M\), creating linear occurrences of \(x\) to be substituted by \(N\)
on request.
The most important case in the definition of \(\Diffp MxN\) is when
\(M\) is an (ordinary) application \(M=\App PQ\) with %
\(\Tseq{\Gamma,x:A}P{\Timpl CB}\) and \(\Tseq{\Gamma,x:A}Q{C}\).
We inductive definition of this linear substitution stipulates that
\begin{align*}
  \Diffp{\App PQ}xN
  =\App{\Diffp PxN}Q+\App{\Diffapp P{(\Diffp QxN)}}N
\end{align*}
and this definition involves the \(+\) operation on terms, subject to
the following typing rule
\begin{equation}
  \label{eq:sum-terms-lambdadiff}
  \begin{prooftree}
    \hypo{\Tseq\Gamma{M_1}A}
    \hypo{\Tseq\Gamma{M_2}A}
    \infer2{\Tseq\Gamma{M_1+M_2}A}
  \end{prooftree}
\end{equation}
which has an obvious denotational interpretation in our tvs models and
should, operationally, be understood as a nondeterministic
superposition.
The case of a variable is also interesting, we set
\begin{align*}
  \Diffp yxN=
  \begin{cases}
    N & \text{if }y=x\\
    0 & \text{otherwise}
  \end{cases}
\end{align*}
where we see a \(0\) which is the neutral element of the \(+\) above.
The meaning of this \(0\) is that if \(y\not=x\) then ``\(y\) does
not depend on \(x\)'' and so we are taking the derivative of a
constant function; a more operational understanding is that \(y\) has
no linear occurrence of \(x\) and hence the linear substitution fails.
Addition is allowed by \Cref{eq:sum-terms-lambdadiff} because in
term \(M\) the variable \(x\) may have several potential linear
occurrences.
Consider for instance \(M=\App x{\App xy}\) typed as follows:
\begin{align*}
  \Tseq{\Gamma,y:A,x:\Timpl AA}{\App x{\App xy}}{A}
\end{align*}
and let \(N\) be a term such that \(\Tseq\Gamma N{\Timpl AA}\), we
have
\begin{align*}
  \Diffp{\App x{\App xy}}xN
  =\App N{\App xy}+\App{\Diffapp x{\App Ny}}{\App xy}
\end{align*}
where some intuitively clear equations on terms (such as
\(\Diffapp P0=0\)) have been used implicitly.
Let us write the term \(M=\App{x_1}{\App{x_2}y}\), using \(x_1,x_2\)
to distinguish the two occurrences of \(x\) in \(M\).
Only the occurrence \(x_1\) is linear%
\footnote{In our definition of
  \(\frac{\partial M}{\partial x}\cdot N\), we are using implicitly
  the fact that our \(\lambda\)-calculus is equipped with a CBN
  operational semantics, translated in \LL{} by the standard Girard
  translation.
  If our calculus were CBV, translated in \LL{} through the ``boring''
  Girard translation, the situation would be different.}, %
the occurrence \(x_2\) is not because the occurrence \(x_1\) might
take as value a nonlinear function, using \(x_2\) in a nonlinear way.
This is why for substituting \(N\) linearly for \(x_2\) we need first
to make the function \(x_1\) use its argument linearly (or more
precisely extract a linear copy of its argument); this is exactly the
purpose of the \(\Diffapp x\_\) in the second term of the sum.

It is worth observing that these differential reduction rules produce
non trivial sums even if the term we start from does not contain such
sums.
Consider for instance, in a typed calculus with a base type of
booleans, the following term
\begin{align*}
  \Tseq{x:\Bool}{M=\If{x}{\If{x}{\False}{\True}}{\If{x}{\False}{\True}}}{\Bool}
\end{align*}
then the definition of differential substitution leads to
\begin{align*}
  \Diffapp{(\Diffapp{(\Abst x\Bool M)}\True)}\False
  \Red\True+\False\,.
\end{align*}
This results from the fact that, using iterated differential
application, we manage to give the variable \(x\) two incompatible
values \(\True\) and \(\False\) and the term \(M\) is written in such
a way that, when \(x\) changes value during the computation (an
impossible scenario in a deterministic setting), \(M\) issues
\(\True\) or \(\False\), depending on the scheduling of this change of
value.
The order of the differential substitution being essentially
irrelevant (this corresponds to the Schwarz rule of Calculus: the
second derivative is a \emph{symmetric} bilinear function), the
computation is necessarily nondeterministic and leads to this
nontrivial sum \(\True+\False\).

This kind of example strongly suggested that extending the
\(\lambda\)-calculus with differential constructs necessarily leads to
essentially non-deterministic systems.

Later on, the author developed a differential extension of \LL{},
fully compatible with theses semantic and syntactic ideas.
The beauty of this differential \LL{} is that the new differential
logical structure does not require new connectives, but introduces new
deduction rules relative to the resource modality \(\oc\) of \LL, dual
to the standard rules of dereliction, weakening and contraction.
We already mentioned codereliction \(\Coder_E\) in the setting of
coherence spaces in \Cref{sec:coh-space-stable-diff}; coweakening and
cocontraction are similar (the first also exists in coherence spaces,
the second not).
The associated new cut elimination rules preserve this new symmetry.
The recent~\cite{KerjeanLemay23} even extends this symmetry
to promotion, the fundamentally infinitary rule of \LL{}, which
becomes then a bimonad (as explained in that paper there is a further
price to pay for this extension).
Similar ideas were already considered in~\cite{Gimenez09}.

\section{Coherence and determinism}
At the most fundamental level, so was the situation concerning the
differential \(\lambda\)-calculus and \LL{} in May~2021.
Of course many results have been obtained and many notions have been
introduced concerning these systems, their applications and their
semantics since they have been introduced in the early 2000's, it
would not be possible to mention all of them here.
We can stress in particular many important advances on the categorical
semantics of differentiation ---~and notably the use of
2-categories~--- and applications of the syntactic Taylor expansion
associated with Differential \LL{} and the differential
\(\lambda\)-calculus.
Nevertheless, as far as we know, none of these developments questions the
assumption that it is always possible to add terms, proofs or morphisms
of the same type.
And as explained above there are very good reasons for such an
assumption.

However, the first observations summarized in
\Cref{sec:coh-space-stable-diff}, suggesting the possibility of giving
a meaning to derivatives in categorical models of \LL{} where addition
is a partial operation on morphisms, were strongly reinforced (more
than 30 years later!) by the study of Probabilistic Coherence Spaces
(PCS) developed in~\cite{Ehrhard22}, based on the fact that the
Kleisli morphisms can clearly be understood as analytic functions in a
very standard sense.
In that paper it is shown that the endeavor of
\Cref{sec:coh-space-stable-diff} can be carried out successfully in
PCS, in the sense that the chain rule, which failed in \(\COH\) as we
showed, perfectly holds in this setting.

As already observed, beyond the failure of the chain rule, one major
puzzling question in \Cref{sec:coh-space-stable-diff} was: how can we
express that \(s'(x)\in\COH(E_x,F_{\Fun s(x)})\) depends stably on
\(x\)?
This kind of question already arises in Differential Geometry where
the derivative of a map from a real manifold \(X\) to
another one \(Y\) at a point \(x\in X\) is a linear map
\(f'(x):\Tanspace xX\to\Tanspace{f(x)}Y\) where \(\Tanspace xX\) is
the tangent space to \(X\) at \(x\), which is a vector space.
To express that this derivative has, for instance, a derivative at
each point, one introduces a new manifold \(\Tanbundle X\) (the
tangent bundle of \(X\)) whose elements are the pairs \((x,u)\) such
that \(u\in\Tanspace xX\) and one turns this operation into a functor,
mapping \(f:X\to Y\) to the function \(\Tanbundle X\to\Tanbundle Y\)
defined by \(\Tanbundle f(x,u)=(f(x),f'(x)\cdot u)\) and then one can
speak of the regularity (for instance, the differentiability) of this
compound map \(\Tanbundle f\).
This standard construction has been categorically axiomatized
in~\cite{Rosicky84}, leading to a notion of \emph{tangent category}.
The functoriality of the operation \(\Tanbundle{}\) expresses exactly
the chain rule.

The main idea of Coherent Differentiation is very close to that of
tangent categories, with one major difference which required to the
author some time to be fully understood.
In the tangent bundle construct, the manifold \(X\) and the tangent
space at \(x\in X\) are typically of very different natures: the
tangent spaces are usually all isomorphic to \(\Realto d\) where \(d\)
is the dimension of the manifold ---~and so they are trivial geometric
objects~---, whereas the manifold itself is a complicated geometrical
object (defined typically by systems of equations, gluing, quotient
etc).
In our setting which arises from the \LL{} analysis of denotational
semantics, the manifold is replaced by a ``domain'' \(E\), a coherence
space for instance, and, given \(x\in\Cl E\), an element of
\(\Tanspace xE=E_x\) should be an \(y\in\Cl E\) such that \(x+y\) makes
sense, that is \(x\cap y=\emptyset\) and \(x\cup y\in\Cl E\), for the
quasi-example developed in \Cref{sec:coh-space-stable-diff}.
Since the \LL{} analysis of denotational semantics is based on a
fundamental analogy between domains and vector spaces, this means that
here the ``tangent bundle'' functor already applies non-trivially to
objects of the linear category and implements ---~functorially as we
shall see~--- a notion of partial summability.
In sharp contrast, in the tangent bundle case, when \(X\) is a vector
space, the associated tangent bundle is trivial:
\(\Tanbundle X=X\times X\) equipped with the first projection.

\begin{remark}
  This also means that some room is left for developing a notion of
  ``manifold'' for coherent differentiation, or of a notion of
  coherent tangent categories where the ``tangent spaces'' would only
  be partially additive.
  Such a generalization requires motivations coming from concrete
  computational situations or from coherent differential situations
  arising in geometry; as far as we know such situations are still to
  be discovered.
\end{remark}

The first ingredient of coherent differentiation is therefore an
axiomatization of categories where morphisms are only partially
summable.
It is perfectly meaningful, although not really necessary%
\footnote{See~\cite{EhrhardWalch23} where the theory is developed
  without this assumption.} %
to assume that such a category \(\cL\) is a ``linear category'', that
is an SMC category with possibly additional properties and structures
(cartesian products, resource modality etc).
The present paper makes this kind of assumption about \(\cL\).

We could assume that \(\cL\) is enriched in some kind of ``partial
commutative monoids'', but this would not be really sufficient,
because we also need to associate with any object \(X\) of \(\cL\) an
object \(\Sfun X\) whose elements are, intuitively, the pairs
\((x_0,x_1)\in X^2\) such that \(x_0+x_1\) is well defined.
Therefore our partial summability structure is axiomatized as a
functor \(\Sfun:\cL\to\cL\) equipped with three natural
transformations \(\Sproj0,\Sproj1,\Ssum:\Sfun X\to X\) which
intuitively map such a summable pair \((x_0,x_1)\) to \(x_0\), \(x_1\)
and \(x_0+x_1\) respectively.
Then saying that two morphisms \(f_0,f_1\in\cL(X,Y)\) are summable
simply means that there is a morphism \(h\in\cL(X,\Sfun Y)\) such that
\(\Sproj i\Compl h=f_i\) for \(i=0,1\) and since it is important for
us that \(h\), the ``witness of summability'' of \(f_0\) and \(f_1\),
be unique, we assume \(\Sproj0\) and \(\Sproj1\) to be jointly monic.
Thanks to this uniqueness, we can set \(f_0+f_1=\Ssum\Compl h\).
Suitable axioms on this structure allow to show that \(\cL\) is
enriched in partial commutative monoids%
\footnote{For a rather restrictive class of partially commutative
  monoids, some variations are probably possible on this aspect of
  the theory.
  The crucial point here is the way associativity is axiomatized in a
  partial setting: several options are available.}.

\section{Summability structures in a linear setting}
\label{sec:summability}

\subsection{Partial monoids}
We first describe the kind of partial commutative monoids that our
axiomatization of summability induces.

\begin{definition} %
  \label{def:part-comm-monoids}
  A \emph{partial commutative monoid} is a triple %
  \((M,0,\mathord +)\) where \(M\) is a set, \(0\in M\) and %
  \(\mathord+:M^2\to M\) is a partial function such that
  \begin{itemize}
  \item \(0+a\) is defined for all \(a\in M\) and \(0+a=a\);
  \item if \(a+b\) is defined then \(b+a\) is defined and \(a+b=b+a\);
  \item if \(a+b\) and \((a+b)+c\) are defined then \(b+c\) and
    \(a+(b+c)\) are defined, and \((a+b)+c=a+(b+c)\).
  \end{itemize}
\end{definition}

\begin{remark}
  This notion of partial commutative monoid is stronger than it might
  seem at first sight and involves some kind of ``positivity''.
  For instance the set \(M=\{0,1\}\subseteq\Relint\) with addition
  \(a+b\) defined as in \(\Relint\) if \(a+b\in M\) and
  undefined otherwise, is a partial commutative monoid.
  But if we apply a similar definition to \(M=\{-1,0,1\}\), the
  obtained structure does not satisfy the associativity condition of
  partial commutative monoids (take \(a=-1\) and \(b=c=1\)).

  This notion of partial monoid is perfectly adapted to the kind of
  denotational situations we are abstracting on ---~which are
  essentially positive~---, but other notions of partial monoid have
  been introduced and might lead to interesting notions of summability
  structure adapted to more algebraic or geometric situations; such an
  approach will be presented by Aymeric~Walch in a forthcoming paper.
\end{remark}

\begin{definition} %
  \label{def:part-com-mon-general-sums}
  Let \((M,0,\mathord +)\) be a partial commutative monoid.
  Then we define by induction on \(n\) what it means for a sequence
  \(\Vect a\in M^n\) to be summable, and the value \(\sum_{i=1}^na_i\)
  of its sum:
  \begin{itemize}
  \item if \(n=0\), the empty sequence is summable and has \(0\) as
    sum;
  \item if \(n>0\), a sequence \(\List a1n\) is summable if %
    \(\List a1{n-1}\) is summable and \(\sum_{i=1}^{n-1}a_i\) and
    \(a_n\) are summable, and then
    \(\sum_{i=1}^na_i=(\sum_{i=1}^{n-1}a_i)+a_n\).
  \end{itemize}
\end{definition}

\begin{lemma}
  \label{lemma:part-comm-mon-sum-permut}
  Let \((M,0,\mathord +)\) be a partial commutative monoid.
  Let \(\Vect a\in M^n\) and let \(f:\{1,\dots,n\}\to\{1,\dots,n\}\)
  be a bijection.
  The sequence \(\Vect a\) is summable iff
  \((a_{f(1)},\dots,a_{f(n)})\) is summable, and then %
  \(\sum_{i=1}^na_i=\sum_{i=1}^na_{f(i)}\).
\end{lemma}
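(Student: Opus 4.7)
The plan is to reduce the statement to the case of adjacent transpositions, using the fact that they generate the symmetric group $\Symgrp n$. The key auxiliary step, which carries most of the work, is an ``associativity of finite sums'' lemma that lets us split and regroup a summable sequence in arbitrary contiguous pieces.

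More precisely, first I would prove the following splitting lemma by induction on $n - k$: if $\Vect a \in M^n$ is summable with sum $s$, then for every $0 \leq k \leq n$, the two subsequences $(a_1, \dots, a_k)$ and $(a_{k+1}, \dots, a_n)$ are both summable, with respective sums $s_1, s_2$, and moreover $s_1 + s_2$ is defined and equals $s$. The base case $k = n$ is trivial (empty tail summing to $0$). For the inductive step, write $s = s' + a_n$ where $s'$ is the sum of $(a_1,\dots,a_{n-1})$, apply the induction hypothesis to split $(a_1, \dots, a_{n-1})$ at position $k$ as $s_1 + s_2'$, and then use the associativity and commutativity axioms of \Cref{def:part-comm-monoids} to reassociate $(s_1 + s_2') + a_n$ into $s_1 + (s_2' + a_n)$, recognizing the right factor as the sum of $(a_{k+1}, \dots, a_n)$. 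I would also state and prove the converse direction (joining two summable sequences whose total sums are summable yields a summable concatenation with the sum being the sum of the two partial sums), again by induction, since both directions will be used.

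Second, with the splitting lemma in hand, the case where $f$ is the adjacent transposition exchanging $k$ and $k+1$ follows cleanly: split $\Vect a$ at $k-1$, then split the right part further at $k+1$, to express $s$ as $p + ((a_k + a_{k+1}) + r)$; apply commutativity to rewrite $a_k + a_{k+1}$ as $a_{k+1} + a_k$; then use the converse direction of the splitting lemma twice to reassemble $(a_1, \dots, a_{k-1}, a_{k+1}, a_k, a_{k+2}, \dots, a_n)$ as a summable sequence with the same sum $s$. Finally, since any bijection $f$ is a finite composition of adjacent transpositions, iterating this argument transports summability and preserves the value of the sum from $\Vect a$ to $(a_{f(1)}, \dots, a_{f(n)})$. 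The reverse implication is obtained by applying the same result to $\Inv f$.

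The main obstacle is the splitting lemma: because the ternary associativity axiom is stated only in the ``left-to-right'' direction $(a+b)+c$ defined $\Rightarrow$ $b+c$ and $a+(b+c)$ defined, one has to be careful that every intermediate partial sum invoked in the induction is already known to exist before invoking associativity or commutativity. Tracking these definedness side-conditions at each regrouping step is the delicate bookkeeping part; once done, the permutation invariance itself is a routine consequence.
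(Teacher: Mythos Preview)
Your approach is correct. The splitting/joining lemma for contiguous blocks is exactly the right tool, and your awareness that the associativity axiom of \Cref{def:part-comm-monoids} is one-directional is the crucial point: the joining direction does require the commutativity-shuffle trick (from $s_1+(s_2'+a_n)$ defined, pass to $(s_2'+a_n)+s_1$, apply associativity to extract $a_n+s_1$, commute again, and apply associativity once more to obtain $s_1+s_2'$ and $(s_1+s_2')+a_n$). With that lemma in both directions, the adjacent-transposition step and the generation of $\Symgrp n$ go through exactly as you describe.

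As for the comparison: the paper does not actually give a proof of this lemma. It states the result, uses it to make sense of sums over unordered finite index sets, then states the more general partition theorem (\Cref{th:part-monoid-summable-fam}) and dismisses both with ``The proofs of these facts are standard and can also be found in~\cite{Ehrhard23a}.'' Your splitting/joining lemma is essentially the two-block case of \Cref{th:part-monoid-summable-fam}, but since that theorem is stated only after the present lemma (and the intervening definition depends on the lemma), you are right to prove it directly rather than invoke it. So your write-up supplies precisely the argument the paper omits.
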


Thanks to that lemma, the following definition makes sense.
\begin{definition}
  Let \((M,0,\mathord +)\) be a partial commutative monoid and \(I\)
  be a finite set.
  One says that \(\Vect a\in M^I\) is summable if there is an
  enumeration without repetitions \(\Vect i=(\List i1n)\) of the
  elements of \(I\) (so that \(n=\Card I\)) such that
  \((a_{i_1},\dots,a_{i_n})\) is summable, and if this is the case the
  sum \(\sum\Vect a=\sum_{i\in I}a_i\) of \(\Vect a\) is defined as
  \(\sum_{j=1}^na_{i_j}\).
  Indeed, by \Cref{lemma:part-comm-mon-sum-permut}, the summability
  and sum of \((a_{i_1},\dots,a_{i_n})\) does not depend on the
  enumeration \(\Vect i\) of \(I\).
\end{definition}

\begin{theorem} %
  \label{th:part-monoid-summable-fam}
  Let \((M,0,\mathord +)\) is a partial commutative monoid.
  Let \(I\) be a finite set and \((I_j)_{j\in J}\) be a finite family
  of pairwise disjoint sets such that \(\Union_{j\in J}I_j=I\).
  Let \(\Vect a\in M^I\).
  The following statements are equivalent
  \begin{itemize}
  \item \(\Vect a\) is summable;
  \item for all \(j\in J\) the family \((a_i)_{i\in I_j}\) is summable
    and the family \((\sum_{i\in I_j}a_i)_{j\in J}\) is summable.
  \end{itemize}
  When these two equivalent conditions hold, one has %
  \(\sum_{i\in I}a_i=\sum_{j\in J}\sum_{i\in I_j}a_i\).
\end{theorem}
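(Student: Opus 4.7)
The plan is to first reduce the general statement about partitions indexed by $J$ to a binary concatenation lemma, and then prove that binary lemma directly from the axioms of partial commutative monoid.

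First, using the permutation lemma, I may choose any convenient enumeration of $I$ when checking summability and computing the sum. For a partition $(I_j)_{j \in J}$, I would enumerate $I$ by listing all elements of $I_{j_1}$, then all elements of $I_{j_2}$, and so on, for some enumeration $j_1, \ldots, j_k$ of $J$. This lets me work directly with the inductive definition of summability on sequences.

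Second, I would establish the key binary lemma: given disjoint finite sets $A$ and $B$ with $I = A \cup B$ and $\vec{a} \in M^I$, the family $\vec{a}$ is summable if and only if $(a_i)_{i \in A}$ and $(a_i)_{i \in B}$ are summable with sums $s_A, s_B$ such that $s_A + s_B$ is defined, in which case $\sum_{i \in I} a_i = s_A + s_B$. Enumerating $I$ as $A$-elements followed by $B$-elements $b_1, \ldots, b_m$, summability of the concatenation means every prefix is summable. Restricted to the $A$-part this gives summability of $(a_i)_{i \in A}$ with sum $s_A$; continuing, $t_0 = s_A$ and each $t_k = t_{k-1} + b_k$ is defined. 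Repeated application of the associativity axiom (with $a := s_A$, $b := b_1 + \cdots + b_{k-1}$, $c := b_k$) yields by induction on $k$ that $b_1 + \cdots + b_k$ is defined and $s_A + (b_1 + \cdots + b_k) = t_k$. Taking $k = m$ gives the forward direction. The converse is built by the same associativity chain run in the other direction, starting from $s_A + s_B$ and successively rewriting back to $t_k$.

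Third, I would prove the theorem by induction on $\#J$. The cases $\#J \le 1$ are trivial (an empty partition forces $I = \emptyset$ and both sides reduce to the empty sum $0$). For $\#J = n+1$, pick $j_0 \in J$, let $J' = J \setminus \{j_0\}$ and $I' = \bigcup_{j \in J'} I_j$. Apply the binary lemma to the decomposition $I = I' \sqcup I_{j_0}$ to obtain that $\vec{a}$ is summable iff $(a_i)_{i \in I'}$ and $(a_i)_{i \in I_{j_0}}$ are summable with summable sums, and then $\sum_I \vec{a} = (\sum_{I'} \vec{a}) + (\sum_{I_{j_0}} \vec{a})$. The induction hypothesis applied to the partition $(I_j)_{j \in J'}$ of $I'$ replaces $(a_i)_{i \in I'}$-summability by the conjunction of $(a_i)_{i \in I_j}$-summability for $j \in J'$ and summability of $(\sum_{I_j} \vec{a})_{j \in J'}$, with $\sum_{I'} \vec{a} = \sum_{j \in J'} \sum_{I_j} \vec{a}$. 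Finally, a second invocation of the binary lemma with $A = J'$ and $B = \{j_0\}$ on the family $(\sum_{I_j} \vec{a})_{j \in J}$ gives the claimed equivalence and equality.

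The main obstacle is the bookkeeping in the binary lemma. The associativity axiom as stated only runs in one direction---given $a+b$ and $(a+b)+c$ defined, it yields $b+c$ and $a+(b+c)$---so reconstructing summability of $(a_i)_{i \in B}$ alone, rather than of the "offset" family $s_A + b_1, (s_A+b_1)+b_2, \ldots$, requires a careful inductive argument chaining the axiom with the previously built partial sum $b_1 + \cdots + b_{k-1}$ playing the role of $b$. Once this lemma is in place and the permutation lemma is invoked to justify freely reordering elements within each block, the remainder of the proof is a routine double induction.
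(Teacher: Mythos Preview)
Your plan is correct and follows the natural route; the paper itself does not spell out a proof but simply calls the result standard and defers to~\cite{Ehrhard23a}, so there is no detailed argument to compare against.

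One small point worth tightening: in the converse direction of your binary lemma, ``running the associativity chain in the other direction'' is not quite literal, since the associativity axiom is stated only one way. What actually works is an induction on $m = \#B$: from $s_A + s_B$ defined and $s_B = u_{m} + b_{m+1}$ (where $u_m = b_1+\cdots+b_m$), two applications of commutativity interleaved with two applications of associativity yield both that $s_A + u_m$ is defined (so the inductive hypothesis applies to $B' = \{b_1,\dots,b_m\}$) and that $(s_A + u_m) + b_{m+1}$ is defined and equals $s_A + s_B$. You clearly anticipate this kind of bookkeeping in your closing paragraph; just make sure commutativity is invoked explicitly alongside associativity at each step. With that adjustment the argument goes through cleanly.
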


The proofs of these facts are standard and can also be found
in~\cite{Ehrhard23a}.

\begin{lemma} %
  \label{th:part-monoid-summable-subfam}
  If \((a_i)_{i\in I}\) is a finite summable family in a partial
  commutative monoid \((M,0,\mathord +)\) and \(I'\subseteq I\), then
  \((a_i)_{i\in I'}\) is summable.
\end{lemma}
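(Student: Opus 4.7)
The plan is to reduce this directly to \Cref{th:part-monoid-summable-fam} by choosing the partition $\{I', I \setminus I'\}$ of $I$. That is, set $J = \{0,1\}$, $I_0 = I'$ and $I_1 = I \setminus I'$. These are pairwise disjoint and their union is $I$, so the hypotheses of \Cref{th:part-monoid-summable-fam} are met.

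Since $(a_i)_{i \in I}$ is summable by assumption, the ``top-to-bottom'' direction of the equivalence in \Cref{th:part-monoid-summable-fam} yields that for every $j \in J$ the family $(a_i)_{i \in I_j}$ is summable. In particular, taking $j = 0$, the family $(a_i)_{i \in I'}$ is summable, which is what we want.

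The only subtle point is whether \Cref{th:part-monoid-summable-fam} has been set up in such a way that its top-to-bottom direction can be invoked here without circularity; in the version stated, the equivalence is claimed unconditionally, so this is fine. I do not expect any genuine obstacle: the argument is a one-line application of the partition theorem, and no auxiliary computation or case distinction is needed. (If for expositional reasons one preferred an entirely self-contained argument, one could instead proceed by induction on $\Card{I \setminus I'}$, using at the inductive step only the fact that if $a + b$ and $(a+b) + c$ are defined in a partial commutative monoid, then so are $b + c$ and $a + (b+c)$ — i.e., the third clause of \Cref{def:part-comm-monoids} — to peel off one element of $I \setminus I'$ at a time from the sum $\sum_{i \in I} a_i$.)
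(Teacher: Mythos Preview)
Your proposal is correct and follows exactly the same approach as the paper: apply \Cref{th:part-monoid-summable-fam} with the two-block partition $\{I',\,I\setminus I'\}$ of $I$ (the paper uses $J=\{1,2\}$ rather than $\{0,1\}$, a purely cosmetic difference). Your remark about non-circularity is fine, and the alternative inductive argument you sketch is unnecessary here but sound.
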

\begin{proof}
  Immediate consequence of \Cref{th:part-monoid-summable-fam} (take
  \(J=\Eset{1,2}\), \(I_1=I'\) and \(I_2=I\setminus I'\)).
\end{proof}

\subsection{Summability structures} %
\label{sec:partial-monoids}

Let \(\cL\) be a category with zero-morphisms, that is, \(\cL\) is
enriched over the category of pointed sets.
We use \(0_{X,Y}\) or simply \(0\) for the distinguished zero-element
of \(\cL(X,Y)\), so that \(f\Compl 0=0\Compl f=0\).
If \(\cL\) is an SMC, we also assume that \(\Tens 0f=0\) and
\(\Tens f0=0\).
If \(\cL\) has a terminal object \(\Top\), notice that
\(\cL(X,\Top)=\Eset0\) for any object \(X\).

The first structure we assume \(\cL\) to be equipped with is a
functor %
\(\Sfun:\cL\to\cL\) whose intuitive meaning is to map any object \(X\)
to the object \(\Sfun X\) of all pairs \((x_0,x_1)\) of elements of
\(X\) for which the sum \(x_0+x_1\) exists.
In accordance with this intuition, this functor is equipped with three
natural transformations %
\(\Sproj0,\Sproj1,\Ssum\in\cL(\Sfun X,X)\) intuitively mapping such a
pair to \(x_0\), \(x_1\) and \(x_0+x_1\) respectively.

\begin{definition}
  A \emph{pre-summability structure} on \(\cL\) is a triple %
  \((\Sfun,\Sproj0,\Sproj1,\Ssum)\) where \(\Sfun:\cL\to\cL\) is a
  functor and \(\Sproj0,\Sproj1,\Ssum\in\cL(\Sfun X,X)\) are natural
  transformations such that \(\Sproj0\) and \(\Sproj1\) are jointly
  monic.
\end{definition}
Remember that the latter condition means that if
\(f,g\in\cL(Y,\Sfun X)\) satisfy %
\((\Sproj i\Compl f=\Sproj i\Compl g)_{i=0,1}\) then \(f=g\); this is
the categorical way to say that \(\Sfun X\) is an object of pairs.
From now on we assume to be given a pre-summability structure on
\(\cL\).

\begin{definition}
  We say that two morphisms \((f_i\in\cL(Y,X))_{i=0,1}\) are
  \emph{summable} if there is \(h\in\cL(Y,\Sfun X)\) such that
  \((\Sproj i\Compl h=f_i)_{i=0,1}\).
  If such an \(h\) exists, it is unique by joint monicity of the
  \(\Sproj i\)'s and we set \(h=\Stuple{f_0,f_1}\) and %
  \(f_0+f_1=\Ssum\Compl\Stuple{f_0,f_1}\in\cL(Y,X)\).
  The morphism \(\Stuple{f_0,f_1}\) is the \emph{witness} of the
  summability of \(f_0\) and \(f_1\), and \(f_0+f_1\) is their
  \emph{sum}.
\end{definition}

\begin{lemma}
  The morphisms \(\Sproj0,\Sproj1\in\cL(\Sfun X,Y)\) are summable,
  with \(\Stuple{\Sproj0,\Sproj1}=\Id_{\Sfun X}\) and
  \(\Sproj0+\Sproj1=\Ssum\).
\end{lemma}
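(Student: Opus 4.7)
The plan is entirely mechanical: the claim essentially unfolds the definitions of summability, of the witness $\Stuple{-,-}$, and of the sum $+$, with the identity morphism serving as the witness.

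First I would exhibit a witness of summability for the pair $(\Sproj0, \Sproj1)$ by considering $h = \Id_{\Sfun X} \in \cL(\Sfun X, \Sfun X)$. By the unit law for composition, $\Sproj i \Compl \Id_{\Sfun X} = \Sproj i$ for $i = 0, 1$, which is exactly the defining condition in the previous definition for $\Sproj0$ and $\Sproj1$ to be summable. Hence they are summable.

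Next, since $\Sproj0$ and $\Sproj1$ are jointly monic, the witness is unique, so $\Stuple{\Sproj0, \Sproj1} = \Id_{\Sfun X}$. Finally, by the definition $f_0 + f_1 = \Ssum \Compl \Stuple{f_0, f_1}$ applied to $f_0 = \Sproj0$ and $f_1 = \Sproj1$, together with the identity law, I obtain
\begin{equation*}
\Sproj0 + \Sproj1 = \Ssum \Compl \Stuple{\Sproj0, \Sproj1} = \Ssum \Compl \Id_{\Sfun X} = \Ssum.
\end{equation*}

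There is no real obstacle here: the statement is a direct consequence of the fact that $\Id_{\Sfun X}$ trivially satisfies the universal property characterizing the witness, and the joint monicity assumption built into the notion of pre-summability structure guarantees uniqueness. In this sense the lemma is best understood as a sanity check: the projections $\Sproj0, \Sproj1$ are the ``generic'' summable pair, and $\Ssum$ is their generic sum, from which every other summable pair is obtained by pulling back along a witness morphism $Y \to \Sfun X$.
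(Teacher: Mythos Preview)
Your proof is correct and is exactly what the paper intends: the paper's own proof consists of the single line ``This is tautological,'' and you have simply spelled out that tautology by taking $\Id_{\Sfun X}$ as the witness and unfolding the definitions of $\Stuple{-,-}$ and $+$.
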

This is tautological.

\begin{definition}
  If \(\Vect i\in\Eset{0,1}^n\), we set %
  \(\Sproj{\Vect i}=\Sproj{i_1}\Compl\cdots\Sproj{i_n}
  \in\cL(\Sfun^n X,X)\).
\end{definition}

\begin{lemma} %
  \label{lemma:sproj-it-jointly-monic}
  The morphisms \(\Sproj{\Vect i}\in\cL(\Sfun^n X,X)\), for
  \(\Vect i\in\Eset{0,1}^n\), are jointly monic.
\end{lemma}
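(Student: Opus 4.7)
The plan is to prove the statement by induction on $n$, using as base case the joint monicity of $\Sproj 0$ and $\Sproj 1$ which is part of the definition of a pre-summability structure, and as inductive step a two-level application of joint monicity: once at the outermost $\Sfun$ and once (via the induction hypothesis) at the inner $\Sfun^n$.

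For $n=0$, the only tuple is the empty one and $\Sproj{()} = \Id_X$, which is trivially jointly monic. For $n=1$, the claim is exactly the joint monicity of $\Sproj 0, \Sproj 1 \in \cL(\Sfun X, X)$ built into the pre-summability structure.

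For the inductive step, suppose the claim holds for $n$, and let $f,g \in \cL(Y, \Sfun^{n+1}X)$ satisfy $\Sproj{\Vect i} \Compl f = \Sproj{\Vect i} \Compl g$ for every $\Vect i \in \{0,1\}^{n+1}$. I want to conclude $f = g$. Since $\Sfun^{n+1} X = \Sfun(\Sfun^n X)$ and the natural transformations $\Sproj 0, \Sproj 1$ are jointly monic at every object (in particular at $\Sfun^n X$), it suffices to show $\Sproj j \Compl f = \Sproj j \Compl g$ in $\cL(Y, \Sfun^n X)$ for each $j \in \{0,1\}$.

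Fix such a $j$. By the induction hypothesis applied at level $n$, the morphisms $\Sproj j \Compl f$ and $\Sproj j \Compl g$ coincide provided $\Sproj{\Vect i'} \Compl \Sproj j \Compl f = \Sproj{\Vect i'} \Compl \Sproj j \Compl g$ for every $\Vect i' \in \{0,1\}^n$. But by the very definition of $\Sproj{\Vect i}$, the composite $\Sproj{\Vect i'} \Compl \Sproj j$ equals $\Sproj{\Vect i}$ for $\Vect i = (i'_1,\dots,i'_n,j) \in \{0,1\}^{n+1}$, so this is exactly an instance of our hypothesis on $f$ and $g$. Thus $\Sproj j \Compl f = \Sproj j \Compl g$ for both $j$, and joint monicity of $\Sproj 0, \Sproj 1$ yields $f = g$. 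No step looks truly hard here; the only subtle point is to keep track of the types and to notice that one invocation of joint monicity (on the outer $\Sfun$) plus one invocation of the induction hypothesis (on the inner $\Sfun^n$) is what is needed.
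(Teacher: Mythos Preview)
Your proof is correct and follows exactly the approach the paper indicates: the paper's proof consists of the single sentence ``Simple induction on \(n\)'', and you have correctly spelled out that induction, using joint monicity of \(\Sproj0,\Sproj1\) at \(\Sfun^nX\) for the outer layer and the inductive hypothesis for the inner \(\Sfun^n\).
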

\begin{proof}
  Simple induction on \(n\).
\end{proof}

\begin{lemma} %
  \label{lemma:summability-compl}
  If \(f_0,f_1\in\cL(X,Y)\) are summable and %
  \(g\in\cL(U,X)\) and \(h\in\cL(Y,V)\) then %
  \(h\Compl f_0\Compl g\) and \(h\Compl f_1\Compl g\) are summable
  with %
  \(\Stuple{h\Compl f_0\Compl g,h\Compl f_1\Compl g}
  =\Sfun h\Compl\Stuple{f_0,f_1}\Compl g\) and %
  \(h\Compl f_0\Compl g+h\Compl f_1\Compl g
      =h\Compl (f_0+f_1)\Compl g\).
\end{lemma}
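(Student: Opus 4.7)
The plan is to use naturality of $\Sproj_0$, $\Sproj_1$, $\Ssum$ together with joint monicity of $\Sproj_0,\Sproj_1$ in a short diagram chase. The natural candidate for the witness is $k = \Sfun h \Compl \Stuple{f_0,f_1} \Compl g \in \cL(U,\Sfun V)$, and the whole statement should fall out from verifying that $k$ witnesses the summability of $h\Compl f_0\Compl g$ and $h\Compl f_1\Compl g$.

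First I would compute $\Sproj i\Compl k$ for $i=0,1$. Using naturality of $\Sproj i$ at $h$, we have $\Sproj i\Compl\Sfun h=h\Compl\Sproj i$, so
\[
\Sproj i\Compl k = \Sproj i\Compl\Sfun h\Compl\Stuple{f_0,f_1}\Compl g
= h\Compl\Sproj i\Compl\Stuple{f_0,f_1}\Compl g
= h\Compl f_i\Compl g,
\]
where the last equality uses the defining property $\Sproj i\Compl\Stuple{f_0,f_1}=f_i$ of the witness. This shows $h\Compl f_0\Compl g$ and $h\Compl f_1\Compl g$ are summable with witness $k$, and by the joint monicity of $\Sproj 0,\Sproj 1$ (which gives uniqueness of witnesses), $\Stuple{h\Compl f_0\Compl g,h\Compl f_1\Compl g}=k=\Sfun h\Compl\Stuple{f_0,f_1}\Compl g$.

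For the sum equality, I would post-compose this witness identity with $\Ssum$ and use naturality of $\Ssum$ at $h$, namely $\Ssum\Compl\Sfun h=h\Compl\Ssum$:
\[
h\Compl f_0\Compl g+h\Compl f_1\Compl g
=\Ssum\Compl\Stuple{h\Compl f_0\Compl g,h\Compl f_1\Compl g}
=\Ssum\Compl\Sfun h\Compl\Stuple{f_0,f_1}\Compl g
=h\Compl\Ssum\Compl\Stuple{f_0,f_1}\Compl g
=h\Compl(f_0+f_1)\Compl g.
\]

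There is no real obstacle here, since everything is a formal consequence of the axioms of a pre-summability structure; the only subtlety is to remember that one is implicitly invoking joint monicity to justify writing $\Stuple{\,\cdot\,,\,\cdot\,}$ for the witness and thus to close the argument once the two projection equations have been verified. The proof is essentially a naturality square on each side (one for $g$, handled trivially by associativity of composition, and one for $h$, handled by naturality of $\Sproj 0$, $\Sproj 1$, $\Ssum$).
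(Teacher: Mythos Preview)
Your proof is correct and is precisely the argument the paper has in mind: the paper's own proof consists of the two words ``By naturality,'' and your proposal is a faithful unpacking of that, using naturality of $\Sproj0,\Sproj1,\Ssum$ together with the joint monicity of $\Sproj0,\Sproj1$.
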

By naturality.

The additional assumptions on pre-summability structures that we will
introduce now for defining summability structures will turn each
hom-set of \(\cL\) into a partial commutative monoid in the sense of
\Cref{sec:partial-monoids}.

\begin{definition}\labeltext{(\(\Sfun\)-com)}{ax:scom}\ref{ax:scom}
  The pre-summability structure \((\Sfun,\Sproj0,\Sproj1,\Ssum)\) is
  \emph{commutative} if \(\Sproj1\) and \(\Sproj0\) are summable and
  \(\Sproj1+\Sproj0=\Ssum\).
\end{definition}

\begin{lemma}
  If the pre-summability structure \((\Sfun,\Sproj0,\Sproj1,\Ssum)\)
  is commutative and \(f_0,f_1\in\cL(Y,X)\) are summable, then
  \(f_1,f_0\) are summable and \(f_1+f_0=f_0+f_1\).
\end{lemma}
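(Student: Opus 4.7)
The plan is to construct the witness for summability of $(f_1, f_0)$ by composing the given witness $\Stuple{f_0, f_1}$ with the ``swap'' morphism that the commutativity axiom provides, and then to read off the equality of sums from the same composition.

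First I would unpack the axiom \ref{ax:scom}: it says $\Sproj1$ and $\Sproj0$ are summable, so there is a (unique) witness $\sigma := \Stuple{\Sproj1, \Sproj0} \in \cL(\Sfun X, \Sfun X)$ characterised by $\Sproj0 \Compl \sigma = \Sproj1$ and $\Sproj1 \Compl \sigma = \Sproj0$, and moreover $\Ssum \Compl \sigma = \Sproj1 + \Sproj0 = \Ssum$. This $\sigma$ is the categorical incarnation of the swap, and it is the key ingredient of the proof.

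Next, given $f_0, f_1 \in \cL(Y, X)$ summable with witness $w := \Stuple{f_0, f_1} \in \cL(Y, \Sfun X)$, I would set $h := \sigma \Compl w \in \cL(Y, \Sfun X)$ and verify by direct computation that $\Sproj0 \Compl h = \Sproj1 \Compl w = f_1$ and $\Sproj1 \Compl h = \Sproj0 \Compl w = f_0$. This exhibits $(f_1, f_0)$ as summable, with $\Stuple{f_1, f_0} = h$ by joint monicity of $\Sproj0, \Sproj1$.

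Finally, for the equality of sums, I would compute
\[
f_1 + f_0 \;=\; \Ssum \Compl \Stuple{f_1, f_0} \;=\; \Ssum \Compl \sigma \Compl w \;=\; \Ssum \Compl w \;=\; f_0 + f_1,
\]
using $\Ssum \Compl \sigma = \Ssum$ from the commutativity axiom. There is no real obstacle here; the only subtle point to keep in mind is that joint monicity of $\Sproj0, \Sproj1$ is what makes the witness $\sigma$ (and hence $h$) well-defined and unique, so every step reduces to a naturality-style diagram chase.
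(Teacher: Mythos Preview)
Your proof is correct and is exactly the argument the paper has in mind: the paper states this lemma without proof, but it is the immediate instance of \Cref{lemma:summability-compl} obtained by pre-composing the summable pair $(\Sproj1,\Sproj0)$ with $\Stuple{f_0,f_1}$, which is precisely your construction $h=\sigma\Compl w$ unpacked.
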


\begin{remark}
  The~\ref{ax:scom} axiom corresponds to the commutativity condition
  in \Cref{def:part-comm-monoids}.
  It should be noticed that even if \(f_0+f_1=f_1+f_0\), it is of
  course not the case that \(\Stuple{f_0,f_1}=\Stuple{f_1,f_0}\)
  (unless \(f_0=f_1\)) and the fact that these witnesses are distinct
  is an essential aspect of the theory.
\end{remark}

\begin{definition}
  \labeltext{(\(\Sfun\)-zero)}{ax:szero}\ref{ax:szero} %
  The pre-summability structure \((\Sfun,\Sproj0,\Sproj1,\Ssum)\)
  \emph{has zero} if \(0_{X,X}\) and \(\Id_X\) are summable and
  \(0+\Id_X=\Id_X\).
\end{definition}

\begin{lemma}
  If the pre-summability structure \((\Sfun,\Sproj0,\Sproj1,\Ssum)\)
  has zero then, for any \(f\in\cL(X,Y)\), \(0\) and \(f\) are summable and
  \(0+f=f\).
\end{lemma}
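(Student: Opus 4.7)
The plan is to reduce the statement to the~\ref{ax:szero} axiom by pre/post-composing with appropriate morphisms, using \Cref{lemma:summability-compl}, which propagates summability under composition on both sides. The~\ref{ax:szero} hypothesis gives that $0_{X,X}$ and $\Id_X$ are summable in $\cL(X,X)$ with $0_{X,X}+\Id_X=\Id_X$; the task is then to ``transport'' this summability witness from $\cL(X,X)$ to $\cL(X,Y)$ for an arbitrary $f\in\cL(X,Y)$.

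Concretely, I would apply \Cref{lemma:summability-compl} to the summable pair $(0_{X,X},\Id_X)$ in $\cL(X,X)$, with the post-composition morphism chosen to be $f\in\cL(X,Y)$ and the pre-composition morphism chosen to be $\Id_X$ (which of course yields nothing, but fits the shape of the lemma). The conclusion is that $f\Compl 0_{X,X}\Compl\Id_X$ and $f\Compl\Id_X\Compl\Id_X$ are summable in $\cL(X,Y)$, with sum $f\Compl(0_{X,X}+\Id_X)\Compl\Id_X$. Using the zero-morphism enrichment hypothesis $f\Compl 0_{X,X}=0_{X,Y}$ (and the unit laws for $\Id$), this reads: $0_{X,Y}$ and $f$ are summable, with sum $f\Compl\Id_X=f$, which is exactly the claim.

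There is essentially no obstacle: the argument is a direct instantiation of \Cref{lemma:summability-compl} combined with the zero-enrichment of $\cL$ assumed at the beginning of \Cref{sec:partial-monoids} (namely $f\Compl 0=0\Compl f=0$) and the~\ref{ax:szero} axiom. The only thing to be slightly careful about is to name the hom-sets correctly and to observe that the witness of summability constructed is $\Sfun f\Compl\Stuple{0_{X,X},\Id_X}$ (by the formula in \Cref{lemma:summability-compl}), so uniqueness of witnesses is not even invoked.
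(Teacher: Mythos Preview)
Your proposal is correct and is exactly the approach the paper intends: the paper's own proof reads ``Easy consequence of \Cref{lemma:summability-compl}'', and your instantiation (post-composing the summable pair $(0_{X,X},\Id_X)$ by $f$ and using $f\Compl 0=0$) is precisely how that consequence is drawn.
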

\begin{proof}
  Easy consequence of \Cref{lemma:summability-compl}.
\end{proof}

\begin{definition}
  \labeltext{(\(\Sfun\)-wit)}{ax:swit}\ref{ax:swit} %
  The pre-summability structure \((\Sfun,\Sproj0,\Sproj1,\Ssum)\)
  \emph{has witnesses} if, for any \(f_0,f_1\in\cL(X,\Sfun Y)\), if %
  \(\Ssum\Compl f_0\) and \(\Ssum\Compl f_1\) are summable, then %
  \(f_0\) and \(f_1\) are summable.
\end{definition}

\begin{lemma} %
  If the pre-summability structure \((\Sfun,\Sproj0,\Sproj1,\Ssum)\)
  satisfies~\ref{ax:swit}, then there is a unique natural
  \(\Sflip_X\in\cL(\Sfun^2X,\Sfun^2X)\) such that %
  \(\Sproj i\Compl\Sproj j\Compl\Sflip_X=\Sproj j\Compl\Sproj i\) %
  for all \(i,j\in\Eset{0,1}\).
  Moreover \(\Sflip_X^2=\Id_{\Sfun^2X}\) and %
  \(\Sflip_X=\Stuple{\Sfun\Sproj0,\Sfun\Sproj1}\).
\end{lemma}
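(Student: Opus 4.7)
The plan is to construct $\Sflip_X$ as $\Stuple{\Sfun\Sproj0,\Sfun\Sproj1}$, so the first and most delicate step is to justify that this witness is well-defined, that is, that $\Sfun\Sproj0$ and $\Sfun\Sproj1$ are summable as morphisms $\Sfun^2 X \to \Sfun X$. For this I would invoke \ref{ax:swit}: its hypothesis requires that $\Ssum \Compl \Sfun\Sproj 0$ and $\Ssum \Compl \Sfun\Sproj 1$ be summable. By naturality of $\Ssum$ we have $\Ssum \Compl \Sfun\Sproj i = \Sproj i \Compl \Ssum$. Now $\Sproj 0$ and $\Sproj 1$ are summable (witnessed by $\Id_{\Sfun X}$), so by \Cref{lemma:summability-compl} the pair $(\Sproj 0 \Compl \Ssum, \Sproj 1 \Compl \Ssum)$ is summable. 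Applying \ref{ax:swit} then yields the required summability, and we set $\Sflip_X = \Stuple{\Sfun\Sproj 0, \Sfun\Sproj 1}$.

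Next I would verify the defining equations $\Sproj i \Compl \Sproj j \Compl \Sflip_X = \Sproj j \Compl \Sproj i$. By definition of the witness, $\Sproj j \Compl \Sflip_X = \Sfun\Sproj j$, and then by naturality of $\Sproj i : \Sfun \Rightarrow \Id$, we have $\Sproj i \Compl \Sfun\Sproj j = \Sproj j \Compl \Sproj i$. This gives the four required equations. Uniqueness is then immediate from \Cref{lemma:sproj-it-jointly-monic}: the family $(\Sproj i \Compl \Sproj j)_{i,j \in \{0,1\}}$ is jointly monic on $\Sfun^2 X$, so any morphism $\Sfun^2 X \to \Sfun^2 X$ making the same composites equal to $\Sproj j \Compl \Sproj i$ coincides with $\Sflip_X$.

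For the involution property $\Sflip_X^2 = \Id_{\Sfun^2 X}$, I would again argue by joint monicity: compute $\Sproj i \Compl \Sproj j \Compl \Sflip_X^2 = \Sproj j \Compl \Sproj i \Compl \Sflip_X = \Sproj i \Compl \Sproj j$, which agrees with $\Sproj i \Compl \Sproj j \Compl \Id_{\Sfun^2 X}$ for all $i,j$. For naturality of $\Sflip$, given $f : X \to Y$, I would show $\Sfun^2 f \Compl \Sflip_X = \Sflip_Y \Compl \Sfun^2 f$ by the same technique: composing on the left with $\Sproj i \Compl \Sproj j$ and using naturality of each $\Sproj k$ reduces both sides to $f \Compl \Sproj j \Compl \Sproj i$.

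The only genuinely delicate step is the first one, namely securing the summability of $\Sfun\Sproj 0$ and $\Sfun\Sproj 1$; this is precisely where \ref{ax:swit} is indispensable, and the rest of the argument is a routine exploitation of the joint monicity lemma together with naturality.
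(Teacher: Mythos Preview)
Your proof is correct and follows essentially the same approach as the paper: establish summability of $\Sfun\Sproj0,\Sfun\Sproj1$ via \ref{ax:swit} using the naturality identity $\Ssum\Compl\Sfun\Sproj i=\Sproj i\Compl\Ssum$, then derive the characterizing equations, uniqueness, naturality, and the involution from \Cref{lemma:sproj-it-jointly-monic}. The paper additionally spells out the intermediate identification $\Sfun\Sproj i=\Stuple{\Sproj i\Compl\Sproj0,\Sproj i\Compl\Sproj1}$, yielding the nested description $\Sflip_X=\Stuple{\Stuple{\Sproj0\Compl\Sproj0,\Sproj0\Compl\Sproj1},\Stuple{\Sproj1\Compl\Sproj0,\Sproj1\Compl\Sproj1}}$, but your slightly more direct route is equally valid.
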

\begin{proof}
  Since \(\Sproj0,\Sproj1\in\cL(\Sfun^2 X,\Sfun X)\) are summable, we
  know by \Cref{lemma:summability-compl} that
  \(\Sproj i\Compl\Sproj0,\Sproj i\Compl\Sproj1\in\cL(\Sfun^2X,X)\) %
  are summable for \(i=0,1\) which gives us witnesses %
  \[
    (f_i =\Stuple{\Sproj i\Compl\Sproj0,\Sproj i\Compl\Sproj1}
    =(\Sfun\Sproj i)\Stuple{\Sproj0,\Sproj1} =\Sfun\Sproj i\in
    \cL(\Sfun^2X,\Sfun X))_{i=0,1}\,.
  \]
  We have \((\Ssum\Compl f_i=\Sproj i\Compl\Ssum_{\Sfun X})_{i=0,1}\)
  by naturality of \(\Ssum\) so that \(\Ssum f_0\) and \(\Ssum f_1\)
  are summable by \Cref{lemma:summability-compl} again.
  So by~\ref{ax:swit} the morphisms \(f_0\) and \(f_1\) are summable.
  We set
  \begin{align*}
    \Sflip_X
    =\Stuple{f_0,f_1}
    =\Stuple{\Stuple{\Sproj0\Compl\Sproj0,\Sproj0\Compl\Sproj1},
    \Stuple{\Sproj1\Compl\Sproj0,\Sproj1\Compl\Sproj1}}
    =\Stuple{\Sfun\Sproj0,\Sfun\Sproj1}\,.
  \end{align*}
  This definition implies immediately that
  \(\Sproj i\Compl\Sproj j\Compl\Sflip_X=\Sproj j\Compl\Sproj i\) %
  for all \(i,j\in\Eset{0,1}\) and this characterizes \(\Sflip_X\) by
  \Cref{lemma:sproj-it-jointly-monic}.
  One proves in the same way naturality, as well as the equation %
  \(\Sflip_X^2=\Id_{\Sfun^2X}\).
\end{proof}

\begin{lemma}
  The following diagram commutes
  \begin{center}
    \begin{tikzcd}
      \Sfun^2 X
      \ar[rr,"\Sflip_X"]
      \ar[rd,swap,"\Ssum_{\Sfun X}"]
      &[-2em]
      &[-2em]
      \Sfun^2 X
      \ar[ld,"\Sfun\Ssum_X"]\\
      &
      \Sfun X
      &
    \end{tikzcd}
  \end{center}
\end{lemma}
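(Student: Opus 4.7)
The plan is to apply joint monicity of $\Sproj0,\Sproj1$: it suffices to show that $\Sproj i\Compl\Sfun\Ssum_X\Compl\Sflip_X=\Sproj i\Compl\Ssum_{\Sfun X}$ for $i\in\Eset{0,1}$, and then conclude that the two composites $\Sfun^2X\to\Sfun X$ coincide. Both sides will be massaged, using only the naturality of the three natural transformations $\Sproj0,\Sproj1,\Ssum$ and the characterization $\Sproj i\Compl\Sflip_X=\Sfun\Sproj i$ (equivalently, $\Sflip_X=\Stuple{\Sfun\Sproj0,\Sfun\Sproj1}$) that was established in the previous lemma, into the common form $\Ssum_X\Compl\Sfun\Sproj i$.

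For the right-hand side, the plan is to use the naturality of $\Ssum:\Sfun\to\mathrm{Id}$ at the morphism $\Sproj i:\Sfun X\to X$, which gives the commuting square
\begin{equation*}
\Sproj i\Compl\Ssum_{\Sfun X}=\Ssum_X\Compl\Sfun\Sproj i.
\end{equation*}
For the left-hand side, I would first apply the naturality of $\Sproj i:\Sfun\to\mathrm{Id}$ at the morphism $\Ssum_X:\Sfun X\to X$, yielding
\begin{equation*}
\Sproj i\Compl\Sfun\Ssum_X=\Ssum_X\Compl\Sproj i,
\end{equation*}
so that $\Sproj i\Compl\Sfun\Ssum_X\Compl\Sflip_X=\Ssum_X\Compl\Sproj i\Compl\Sflip_X$. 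Then I would invoke the identity $\Sproj i\Compl\Sflip_X=\Sfun\Sproj i$ from the previous lemma to rewrite this as $\Ssum_X\Compl\Sfun\Sproj i$. Matching the two computations gives the required equality on each of the projections, and joint monicity of $(\Sproj0,\Sproj1)$ finishes the argument.

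There is no real obstacle here: the proof is a two-step diagram chase that only uses formal naturality, the explicit characterization of $\Sflip_X$, and joint monicity; the axiom \ref{ax:swit} is not needed at this step since $\Sflip_X$ has already been produced. The only point that requires a little care is to keep track of which instance of each natural transformation (at $X$ or at $\Sfun X$) is being used when applying naturality, and to ensure that the two applications of naturality used on the two sides are genuinely different instances (one for $\Ssum:\Sfun\to\mathrm{Id}$, one for $\Sproj i:\Sfun\to\mathrm{Id}$), which together with the flip identity produce the same common expression $\Ssum_X\Compl\Sfun\Sproj i$.
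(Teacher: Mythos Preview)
Your proof is correct and follows exactly the same approach as the paper: post-compose both sides with $\Sproj i$, use naturality of $\Sproj i$ and of $\Ssum$ together with the identity $\Sproj i\Compl\Sflip_X=\Sfun\Sproj i$ from the previous lemma to reduce both to $\Ssum_X\Compl\Sfun\Sproj i$, and conclude by joint monicity. The paper presents the same three steps in a single chain from one side to the other, but the content is identical.
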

\begin{proof}
  For \(i\in\Eset{0,1}\), we have %
  \begin{align*}
    \Sproj i\Compl(\Sfun\Ssum_X)\Compl\Sflip_X
    &=\Ssum_X\Compl\Sproj i\Compl\Sflip_X
    \text{\quad by naturality of }\Sproj i\\
    &=\Ssum_X\Compl\Sfun\Sproj i
    \text{\quad by definition of }\Sflip_X\\
    &=\Sproj i\Compl\Ssum_{\Sfun X}
      \text{\quad by naturality of }\Ssum
  \end{align*}
  and the triangle commutes by joint monicity of the \(\Sproj i\)'s.
\end{proof}

Now we use these properties of \(\Sflip\) to prove associativity of
our partially defined addition on \(\cL(X,Y)\).

\begin{lemma} %
  \label{lemma:flip-assoc}
  Let \((f_{ij}\in\cL(X,Y))_{(i,j)\in\Eset{0,1}}\) be such that %
  \(f_{i0}\) and \(f_{i1}\) are summable for \(i=0,1\), and the
  corresponding sums \((f_{i0}+f_{i1})_{i=0,1}\) are summable.
  Then \(f_{0j}\) and \(f_{1j}\) are summable for \(j=0,1\), the
  corresponding sums \((f_{0j}+f_{1j})_{j=0,1}\) are summable, and %
  \((f_{00}+f_{01})+(f_{10}+f_{11})=(f_{00}+f_{10})+(f_{10}+f_{11})\).
\end{lemma}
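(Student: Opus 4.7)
The plan is to build a single morphism $h \in \cL(X, \Sfun^2 Y)$ which encodes the whole $2\times 2$ family $(f_{ij})$, then to exchange its two $\Sfun$-layers via $\Sflip_Y$, and finally to identify the two nested sums by combining the triangle of the preceding lemma with the naturality of $\Ssum$.

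First I would use the given summability of $(f_{i0}, f_{i1})$ to introduce $g_i = \Stuple{f_{i0}, f_{i1}} \in \cL(X, \Sfun Y)$ for $i=0,1$, so that $\Sproj j \Compl g_i = f_{ij}$ and $\Ssum_Y \Compl g_i = f_{i0} + f_{i1}$. The hypothesis that the two sums $f_{00}+f_{01}$ and $f_{10}+f_{11}$ are summable is then exactly the summability of $\Ssum_Y \Compl g_0$ and $\Ssum_Y \Compl g_1$, so axiom \ref{ax:swit} gives summability of $g_0$ and $g_1$ themselves. Let $h = \Stuple{g_0, g_1} \in \cL(X, \Sfun^2 Y)$, which satisfies $\Sproj j \Compl \Sproj i \Compl h = f_{ij}$. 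Applying \Cref{lemma:summability-compl} to rewrite the witness of $(\Ssum_Y \Compl g_0, \Ssum_Y \Compl g_1)$ as $\Sfun\Ssum_Y \Compl h$ yields
\[
(f_{00}+f_{01}) + (f_{10}+f_{11}) = \Ssum_Y \Compl \Sfun\Ssum_Y \Compl h.
\]

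Next I would set $h' = \Sflip_Y \Compl h$. The defining equation $\Sproj i \Compl \Sproj j \Compl \Sflip_Y = \Sproj j \Compl \Sproj i$ yields $\Sproj j \Compl \Sproj i \Compl h' = f_{ji}$, so $\Sproj i \Compl h'$ is the witness of summability of $f_{0i}$ and $f_{1i}$, establishing the first part of the conclusion. Moreover $\Ssum_Y \Compl \Sproj i \Compl h' = f_{0i}+f_{1i}$, and naturality of $\Sproj i$ rewrites this as $\Sproj i \Compl \Sfun\Ssum_Y \Compl h'$. Hence $\Sfun\Ssum_Y \Compl h'$ is a witness that $f_{00}+f_{10}$ and $f_{01}+f_{11}$ are summable, and
\[
(f_{00}+f_{10}) + (f_{01}+f_{11}) = \Ssum_Y \Compl \Sfun\Ssum_Y \Compl h'.
\]

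To conclude, I would use the triangle of the preceding lemma, $\Sfun\Ssum_Y \Compl \Sflip_Y = \Ssum_{\Sfun Y}$, to rewrite $\Sfun\Ssum_Y \Compl h' = \Ssum_{\Sfun Y} \Compl h$, and then invoke naturality of $\Ssum$ at the morphism $\Ssum_Y : \Sfun Y \to Y$ itself, giving $\Ssum_Y \Compl \Ssum_{\Sfun Y} = \Ssum_Y \Compl \Sfun\Ssum_Y$. Chaining these equalities produces the desired identity. The only real subtlety is bookkeeping the two layers of indices through $\Sflip_Y$: one must check that after flipping, $h'$ reads off the transposed family $(f_{ji})$ rather than $(f_{ij})$, so that the rows of $h$ become the columns of $h'$, and the associativity/commutativity identity then reduces cleanly to naturality of $\Ssum$.
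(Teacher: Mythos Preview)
Your proof is correct and follows essentially the same approach as the paper's: build the witness $h=\Stuple{\Stuple{f_{00},f_{01}},\Stuple{f_{10},f_{11}}}\in\cL(X,\Sfun^2Y)$ via~\ref{ax:swit}, apply $\Sflip_Y$ to transpose the family, and identify the two nested sums using the triangle $\Sfun\Ssum_Y\Compl\Sflip_Y=\Ssum_{\Sfun Y}$ together with naturality of $\Ssum$. The paper organises the final computation slightly differently (using $\Sproj i\Compl\Sflip_Y=\Sfun\Sproj i$ rather than naturality of $\Sproj i$), but the argument is the same.
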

\begin{proof}
  By~\ref{ax:swit}, the morphisms %
  \(\Stuple{f_{00},f_{01}},\Stuple{f_{10},f_{11}}\in\cL(X,\Sfun Y)\)
  are summable; this summability has a witness
  \(\Stuple{\Stuple{f_{00},f_{01}},\Stuple{f_{10},f_{11}}}
  \in\cL(X,\Sfun^2 Y)\) so that we can define 
  \[
    h=\Sflip_Y\Compl\Stuple{\Stuple{f_{00},f_{01}},\Stuple{f_{10},f_{11}}}
    \in\cL(X,\Sfun^2Y)\,.
  \]
  For \(i\in\Eset{0,1}\), let \(h_i=\Sproj i\Compl h\in\cL(X,\Sfun Y)\).
  We have %
  \(\Sproj j\Compl h_i
  =\Sproj i\Compl\Sproj j\Compl
  \Stuple{\Stuple{f_{00},f_{01}},\Stuple{f_{10},f_{11}}}=f_{ji}\) %
  and hence \(f_{0i}\) and \(f_{1i}\) are summable with %
  \(\Stuple{f_{0i},f_{1i}}=h_i\) and sum %
  \begin{align*}
    f_{0i}+f_{1i}
    &=\Ssum_Y\Compl h_i\\
    &=\Ssum_Y
      \Compl\Sproj i\Sflip_Y
      \Compl\Stuple{\Stuple{f_{00},f_{01}},\Stuple{f_{10},f_{11}}}\\
    &=\Ssum_Y\Compl(\Sfun\Sproj i)
      \Compl\Stuple{\Stuple{f_{00},f_{01}},\Stuple{f_{10},f_{11}}}
      \text{\quad by definition of }\Sflip_Y\\
    &=\Sproj i\Compl\Ssum_{\Sfun Y}
      \Compl\Stuple{\Stuple{f_{00},f_{01}},\Stuple{f_{10},f_{11}}}
      \text{\quad by naturality of }\Ssum
  \end{align*}
  so that \(f_{00}+f_{10}\) and \(f_{01}+f_{11}\) are summable with
  \begin{align*}
    \Stuple{f_{00}+f_{10},f_{01}+f_{11}}
    =\Ssum_{\Sfun Y}
      \Compl\Stuple{\Stuple{f_{00},f_{01}},\Stuple{f_{10},f_{11}}}\,.
  \end{align*}
  We have
  \begin{align*}
    (f_{00}+f_{10})+(f_{01}+f_{11})
    &=\Ssum_Y\Compl\Ssum_{\Sfun Y}
    \Compl\Stuple{\Stuple{f_{00},f_{01}},\Stuple{f_{10},f_{11}}}\\
    &=\Ssum_Y\Compl(\Sfun\Ssum_Y)
      \Compl\Stuple{\Stuple{f_{00},f_{01}},\Stuple{f_{10},f_{11}}}
      \text{\quad by naturality of }\Ssum\\
    &=\Ssum_Y
      \Compl\Stuple{\Ssum_Y\Compl\Stuple{f_{00},f_{01}},
      \Ssum_Y\Compl\Stuple{f_{10},f_{11}}}
      \text{\quad by \Cref{lemma:summability-compl}}\\
    &=\Ssum_Y
      \Compl\Stuple{f_{00}+f_{01},f_{10}+f_{11}}\\
    &=(f_{00}+f_{01})+(f_{10}+f_{11})\,.
      \qedhere
  \end{align*}
\end{proof}

\begin{lemma}
  If \(f_0,f_1\in\cL(X,Y)\) are summable, then so are %
  \(\Sfun f_0,\Sfun f_1\in\cL(\Sfun X,\Sfun Y)\) with %
  \(\Stuple{\Sfun f_0,\Sfun f_1}=\Sflip_Y\Compl\Sfun\Stuple{f_0,f_1}\)
  and \(\Sfun f_0+\Sfun f_1=\Sfun(f_0+f_1)\).
\end{lemma}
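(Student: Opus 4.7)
The plan is to exhibit an explicit witness of summability, namely $h = \Sflip_Y \Compl \Sfun\Stuple{f_0, f_1} \in \cL(\Sfun X, \Sfun^2 Y)$, verify that it has the required post-composition properties with the $\Sproj_i$'s, and then read off the sum using the commutative triangle proved just above.

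First, I would establish the witness equations. Recall from the construction of $\Sflip_Y$ that $\Sproj_i \Compl \Sflip_Y = \Sfun \Sproj_i$ for $i \in \{0,1\}$ (since $\Sflip_Y = \Stuple{\Sfun\Sproj_0, \Sfun\Sproj_1}$). Therefore, applying $\Sproj_i$ to $h$ and using functoriality of $\Sfun$,
\[
\Sproj_i \Compl h = \Sfun\Sproj_i \Compl \Sfun\Stuple{f_0, f_1} = \Sfun(\Sproj_i \Compl \Stuple{f_0, f_1}) = \Sfun f_i.
\]
This shows that $\Sfun f_0$ and $\Sfun f_1$ are summable with witness $\Stuple{\Sfun f_0, \Sfun f_1} = h$, giving the first claimed equation.

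Next, to identify the sum, I would compute $\Sfun f_0 + \Sfun f_1 = \Ssum_{\Sfun Y} \Compl h = \Ssum_{\Sfun Y} \Compl \Sflip_Y \Compl \Sfun\Stuple{f_0, f_1}$. The preceding lemma gives $(\Sfun\Ssum_Y) \Compl \Sflip_Y = \Ssum_{\Sfun Y}$, and since $\Sflip_Y$ is an involution, post-composing both sides with $\Sflip_Y$ yields $\Sfun\Ssum_Y = \Ssum_{\Sfun Y} \Compl \Sflip_Y$. Substituting and using functoriality of $\Sfun$ once more,
\[
\Ssum_{\Sfun Y} \Compl \Sflip_Y \Compl \Sfun\Stuple{f_0, f_1} = \Sfun\Ssum_Y \Compl \Sfun\Stuple{f_0, f_1} = \Sfun(\Ssum_Y \Compl \Stuple{f_0, f_1}) = \Sfun(f_0 + f_1),
\]
which is the second claimed equation.

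There is no real obstacle here, as the statement is essentially a bookkeeping exercise about how $\Sflip$ mediates between the two nested copies of $\Sfun$; the only mildly delicate point is remembering the direction of the triangle in the previous lemma and using $\Sflip_Y^2 = \Id_{\Sfun^2 Y}$ to reorient it. Note that the hypothesis that $\Sproj_0, \Sproj_1$ be jointly monic is not directly invoked in this particular argument, since we construct the witness $h$ explicitly rather than extract it abstractly.
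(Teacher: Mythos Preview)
Your proof is correct and follows essentially the same strategy as the paper: exhibit the explicit witness $h=\Sflip_Y\Compl\Sfun\Stuple{f_0,f_1}$ and verify the projection equations. Your verification of the first part is in fact slightly more direct than the paper's, since you use the identity $\Sproj_i\Compl\Sflip_Y=\Sfun\Sproj_i$ (immediate from $\Sflip_Y=\Stuple{\Sfun\Sproj_0,\Sfun\Sproj_1}$) right away, whereas the paper descends one level further to $\Sproj_i\Compl\Sproj_j$ and then implicitly cancels the outer $\Sproj_i$ by joint monicity.

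For the second equation the two arguments diverge a little. You use the preceding triangle $\Sfun\Ssum_Y\Compl\Sflip_Y=\Ssum_{\Sfun Y}$ together with $\Sflip_Y^2=\Id$ to rewrite $\Ssum_{\Sfun Y}\Compl\Sflip_Y$ as $\Sfun\Ssum_Y$, then apply functoriality of $\Sfun$. The paper instead checks $\Sfun f_0+\Sfun f_1=\Sfun(f_0+f_1)$ by post-composing both sides with $\Sproj_i$, using naturality of $\Sproj_i$ and the distributivity of composition over sums (Lemma~\ref{lemma:summability-compl}), and then invoking joint monicity. Your route avoids an explicit appeal to joint monicity at this step (as you note), at the cost of relying on the triangle lemma; the paper's route is more self-contained but reuses the monicity argument once more. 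Both are equally valid and of comparable length.
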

\begin{proof}
  We have
  \begin{align*}
    \Sproj i\Compl\Sproj j\Compl\Sflip_Y\Compl\Sfun\Stuple{f_0,f_1}
    &=\Sproj j\Compl\Sproj i\Compl\Sfun\Stuple{f_0,f_1}\\
    &=\Sproj j\Compl\Stuple{f_0,f_1}\Compl\Sproj i
      \text{\quad by naturality of }\Sproj i\\
    &=f_j\Compl\Sproj i\\
    &=\Sproj i\Compl\Sfun f_j\\
    &=\Sproj i\Compl\Sproj j\Compl\Stuple{\Sfun f_0,\Sfun f_1}
  \end{align*}
  which proves the first statement.
  The second follows easily from \Cref{lemma:summability-compl} using
  as usual the joint monicity of the \(\Sproj i\)'s.
\end{proof}

\begin{definition}
  A pre-summability structure \((\Sfun,\Sproj0,\Sproj1,\Ssum)\) which
  satisfies~\ref{ax:scom}, \ref{ax:szero} and \ref{ax:swit} is called
  a \emph{summability structure} and a category \(\cL\) equipped with
  a summability structure is called a \emph{summable category}.
\end{definition}

\begin{theorem} %
  \label{lemma:sumcat-enriched-part-com-mon}
  Any summable category is enriched in partial commutative monoids by
  the partial addition operation induced by the summability structure
  (with neutral elements \(0_{X,Y}\)).
\end{theorem}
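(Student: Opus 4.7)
The plan is to verify, for each pair of objects $X,Y$, that the triple $(\cL(X,Y), 0_{X,Y}, \mathord{+})$ satisfies the three axioms of Definition~\ref{def:part-comm-monoids}, and then to observe that composition respects this structure in each argument (this latter point being essentially \Cref{lemma:summability-compl}). So the real content is the verification of the three axioms of partial commutative monoid.

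Fix $X,Y$ and $f,g,h \in \cL(X,Y)$. For the zero axiom, we note that \ref{ax:szero} gives summability of $0_{Y,Y}$ and $\Id_Y$ with $0+\Id_Y = \Id_Y$; composing with $f$ on the right via \Cref{lemma:summability-compl} yields summability of $0_{X,Y}$ and $f$ with $0 + f = f$. Commutativity is immediate from \ref{ax:scom}: if $f,g$ are summable with witness $\Stuple{f,g}$, then by pre-composition $\Stuple{g,f} := \Sflip_Y\Compl\Stuple{f,g}$ — or more directly by applying \ref{ax:scom} through \Cref{lemma:summability-compl} — witnesses the summability of $g,f$ with $g+f = f+g$.

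For associativity, suppose $f+g$ is defined and $(f+g)+h$ is defined. The plan is to apply \Cref{lemma:flip-assoc} to the matrix
\[
f_{00}=f,\qquad f_{01}=g,\qquad f_{10}=0,\qquad f_{11}=h.
\]
By the zero axiom just proven, $f_{10}+f_{11}=0+h=h$ is defined, and $f_{00}+f_{01}=f+g$ is defined by hypothesis. The sums $(f+g)$ and $h$ are summable by hypothesis, so the premises of \Cref{lemma:flip-assoc} are met. The conclusion yields summability of $f_{00},f_{10}$ and of $f_{01},f_{11}$, i.e.\ of $f,0$ and of $g,h$, hence $g+h$ is defined; moreover the sums $f+0=f$ and $g+h$ are summable, so $f+(g+h)$ is defined; and the equation of the lemma gives $(f+g)+h = (f+g)+(0+h) = (f+0)+(g+h) = f+(g+h)$. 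This establishes associativity as stated in Definition~\ref{def:part-comm-monoids}.

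Finally, for the enrichment, one needs that composition $\cL(Y,Z)\times\cL(X,Y)\to\cL(X,Z)$ is a morphism of partial commutative monoids separately in each variable, that is: pre- and post-composition preserve $0$ (which holds by the zero-morphism assumption on $\cL$), preserve summability, and distribute over the partial sum. All three requirements follow immediately from \Cref{lemma:summability-compl}. I expect the associativity step to be the only one that requires any thought; the trick is simply to pad with a zero entry so that the medial/interchange law of \Cref{lemma:flip-assoc} specialises to true associativity.
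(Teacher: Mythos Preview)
Your proof is correct and follows essentially the same approach as the paper: the only substantive point is associativity, which you derive from \Cref{lemma:flip-assoc} by padding with a zero entry. The paper's one-line proof says ``take $f_{01}=0$'' where you take $f_{10}=0$, but this is an inessential difference in indexing (and your choice is arguably cleaner, requiring fewer appeals to commutativity to match the hypotheses).
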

\begin{proof}
  It suffices to prove the associativity condition
  in~\Cref{def:part-comm-monoids}.
  It results from \Cref{lemma:flip-assoc} and~\ref{ax:szero}, upon
  taking \(f_{01}=0\).
\end{proof}

In particular we can speak of finite summable families of morphisms %
\(\Vect f=(f_i\in\cL(X,Y))_{i\in I}\) and of their sum
\(\sum\Vect f=\sum_{i\in I}f_i\) without particular cautions.

\begin{theorem} %
  \label{th:gen-sum-composition}
  If \((f_i\in\cL(X,Y))_{i\in I}\) and \((g_j\in\cL(Y,Z))_{j\in J}\)
  are finite summable families of morphisms, then the family %
  \((g_j\Compl f_i\in\cL(X,Z))_{(i,j)\in I\times J}\) is summable and %
  \begin{align*}
    \sum_{(i,j)\in I\times J}g_j\Compl f_i
    =(\sum_{j\in J}g_j)\Compl(\sum_{i\in I}f_i)\,.
  \end{align*}
\end{theorem}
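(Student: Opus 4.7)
The plan is to reduce the theorem to the binary case \Cref{lemma:summability-compl} by a double induction on the cardinalities of $I$ and $J$, and then to combine the resulting one-sided summabilities with the Fubini-style \Cref{th:part-monoid-summable-fam}.

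First I would establish the following auxiliary fact: if $(f_i \in \cL(X,Y))_{i \in I}$ is a finite summable family with sum $s$, then for any $g \in \cL(Y,Z)$ and $h \in \cL(U,X)$ the family $(g \Compl f_i \Compl h)_{i \in I}$ is summable with sum $g \Compl s \Compl h$. This goes by induction on $\Card I$, using the definition of generalized sums in \Cref{def:part-com-mon-general-sums}: the empty case is immediate since composition preserves $0$; in the inductive step, pick an enumeration $i_1,\dots,i_n$ such that $(f_{i_k})_{k<n}$ is summable with some partial sum $s'$, and $s', f_{i_n}$ are summable with sum $s$. The inductive hypothesis gives summability of $(g \Compl f_{i_k} \Compl h)_{k<n}$ with sum $g \Compl s' \Compl h$, and \Cref{lemma:summability-compl} applied to the summable pair $s', f_{i_n}$ (pre-composing with $h$, post-composing with $g$) gives summability of $g \Compl s' \Compl h$ and $g \Compl f_{i_n} \Compl h$ with sum $g \Compl s \Compl h$, which is exactly what the definition requires.

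Applying this auxiliary fact first with $h = \Id_X$ and $g = g_j$ for each fixed $j \in J$, I get that $(g_j \Compl f_i)_{i \in I}$ is summable with sum $g_j \Compl (\sum_{i \in I} f_i)$. Then applying it a second time with the summable family $(g_j)_{j \in J}$, $h = \sum_{i \in I} f_i$ and $g = \Id_Z$, I get that $(g_j \Compl (\sum_{i \in I} f_i))_{j \in J}$ is summable with sum $(\sum_{j \in J} g_j) \Compl (\sum_{i \in I} f_i)$.

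To conclude, I apply \Cref{th:part-monoid-summable-fam} to the partition $I \times J = \bigsqcup_{j \in J} (I \times \{j\})$: for each $j$ the sub-family $(g_j \Compl f_i)_{(i,j) \in I \times \{j\}}$ is summable with sum $g_j \Compl (\sum_i f_i)$, and the family of these partial sums is summable by the previous paragraph. The theorem therefore tells us both that $(g_j \Compl f_i)_{(i,j) \in I \times J}$ is itself summable and that its total sum equals $\sum_{j \in J} g_j \Compl (\sum_{i \in I} f_i) = (\sum_{j \in J} g_j) \Compl (\sum_{i \in I} f_i)$. The only delicate point in this plan is the inductive step of the auxiliary lemma, since one must be careful to invoke the definition of summability for finite families via enumerations without accidentally assuming the statement one is proving; but since \Cref{lemma:summability-compl} already handles the binary case, the induction goes through without trouble.
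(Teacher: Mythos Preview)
Your proof is correct and follows essentially the same route as the paper's sketch: both reduce to the binary case \Cref{lemma:summability-compl} by induction on the sizes of $I$ and $J$. Your organization is slightly more modular---you factor out the one-sided auxiliary lemma and then invoke \Cref{th:part-monoid-summable-fam} to glue the two applications together, whereas the paper suggests a single induction on $m+n$---but the underlying argument is the same.
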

\begin{proof}[Proof sketch]
  One takes repetition-free enumerations \((i_l)_{l=1}^m\) and
  \((j_k)_{k=1}^n\) of \(I\) and \(J\) and proves the result by
  induction on \(m+n\) coming back to
  \Cref{def:part-com-mon-general-sums} and using
  \Cref{lemma:summability-compl}.
\end{proof}

\subsection{The monad structure of \(\Sfun\)}
We assume that \(\cL\) is equipped with a summability structure, we
use the notations introduced above.
By~\ref{ax:szero} and~\ref{ax:scom}, there are natural morphisms %
\((\Sin i\in\cL(X,\Sfun X))_{i=0,1}\) given by %
\(\Sin0=\Stuple{\Id_X,0}\) and \(\Sin1=\Stuple{0,\Id_X}\), and we have
\(\Ssum\Compl\Sin i=\Id_X\).

\begin{lemma} %
  \label{lemma:ssum-monad-mult}
  There is a natural morphism
  \(\Sfunmonm_X\in\cL(\Sfun^2X,\Sfun X)\) such that %
  \(\Sproj0\Compl\Sfunmonm=\Sproj0\Compl\Sproj0\) and %
  \(\Sproj1\Compl\Sfunmonm=\Sproj0\Compl\Sproj1+\Sproj1\Compl\Sproj0\).
  In particular, we have \(\Sfunmonm_X\Compl\Sflip_X=\Sfunmonm_X\).
\end{lemma}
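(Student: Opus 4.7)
The plan is to build $\Sfunmonm_X$ as the summability witness of the pair $\Sproj0\Compl\Sproj0$ and $\Sproj0\Compl\Sproj1+\Sproj1\Compl\Sproj0$ in $\cL(\Sfun^2 X,X)$, and then verify naturality and the commutation with $\Sflip_X$ by joint monicity of the projections.

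First I would establish the relevant summabilities. Since $\Sproj0,\Sproj1\in\cL(\Sfun X,X)$ are summable (with sum $\Ssum$), \Cref{lemma:summability-compl} gives summability of $\Sproj i\Compl\Sproj 0$ with $\Sproj i\Compl\Sproj 1$ for each $i\in\Eset{0,1}$, whose sums are $\Sproj i\Compl\Ssum$; these two sums are in turn summable, being of the form $\Sproj 0\Compl\Ssum$ and $\Sproj 1\Compl\Ssum$. Applying \Cref{lemma:flip-assoc} to $f_{ij}=\Sproj i\Compl\Sproj j$, one obtains that the four morphisms $(\Sproj i\Compl\Sproj j)_{i,j\in\Eset{0,1}}$ form a summable family in the sense of \Cref{th:part-monoid-summable-fam}. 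By \Cref{th:part-monoid-summable-subfam}, the subfamily indexed by $\Eset{(0,0),(0,1),(1,0)}$ is again summable, and regrouping via \Cref{th:part-monoid-summable-fam} with the partition $\Eset{(0,0)}\cup\Eset{(0,1),(1,0)}$ yields that $\Sproj0\Compl\Sproj0$ and $\Sproj0\Compl\Sproj1+\Sproj1\Compl\Sproj0$ are summable. I then define
\[
  \Sfunmonm_X=\Stuple{\Sproj0\Compl\Sproj0,\ \Sproj0\Compl\Sproj1+\Sproj1\Compl\Sproj0}\in\cL(\Sfun^2X,\Sfun X),
\]
which by construction satisfies the two required equations.

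For naturality, given $f\in\cL(X,Y)$, it suffices by joint monicity of $\Sproj0,\Sproj1$ to verify $\Sproj i\Compl\Sfunmonm_Y\Compl\Sfun^2 f=\Sproj i\Compl\Sfun f\Compl\Sfunmonm_X$ for $i=0,1$. On the $\Sproj 0$ side, both composites reduce via naturality of $\Sproj 0$ to $f\Compl\Sproj0\Compl\Sproj0$. On the $\Sproj 1$ side, naturality of $\Sproj 0,\Sproj 1$ together with \Cref{lemma:summability-compl} applied to the sum gives $f\Compl(\Sproj0\Compl\Sproj1+\Sproj1\Compl\Sproj0)$ on both sides.

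Finally, for $\Sfunmonm_X\Compl\Sflip_X=\Sfunmonm_X$, I again appeal to joint monicity of the projections and to the defining relations $\Sproj i\Compl\Sproj j\Compl\Sflip_X=\Sproj j\Compl\Sproj i$. Postcomposing with $\Sproj0$ gives $\Sproj0\Compl\Sproj0\Compl\Sflip_X=\Sproj0\Compl\Sproj0$, matching $\Sproj0\Compl\Sfunmonm_X$. Postcomposing with $\Sproj1$, and distributing the sum across $\Sflip_X$ via \Cref{lemma:summability-compl}, yields $\Sproj1\Compl\Sproj0+\Sproj0\Compl\Sproj1$, which coincides with $\Sproj1\Compl\Sfunmonm_X$ by commutativity of the partial addition (\ref{ax:scom}). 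The only delicate point in the whole argument is the first one, namely correctly extracting summability of the asymmetric grouping $\Sproj0\Compl\Sproj0$ versus $\Sproj0\Compl\Sproj1+\Sproj1\Compl\Sproj0$ from the symmetric horizontal/vertical summability directly produced by \Cref{lemma:flip-assoc}; the passage through \Cref{th:part-monoid-summable-fam,th:part-monoid-summable-subfam} is precisely what makes this available.
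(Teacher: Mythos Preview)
Your proof is correct and follows essentially the same approach as the paper: define $\Sfunmonm_X$ as the witness $\Stuple{\Sproj0\Compl\Sproj0,\ \Sproj0\Compl\Sproj1+\Sproj1\Compl\Sproj0}$, then check naturality and the $\Sflip$-invariance componentwise via joint monicity. The only minor difference is in how summability of the four-element family $(\Sproj i\Compl\Sproj j)_{i,j}$ is obtained: the paper invokes \Cref{th:gen-sum-composition} directly (composing the summable family $(\Sproj i)_i$ with itself), whereas you verify the hypotheses by hand and pass through \Cref{th:part-monoid-summable-fam}; your detour through \Cref{lemma:flip-assoc} is in fact unnecessary, since the partition $\Eset{(0,0),(0,1)}\cup\Eset{(1,0),(1,1)}$ already witnesses summability of the full family via \Cref{th:part-monoid-summable-fam} before any appeal to that lemma.
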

\begin{proof}
  By \Cref{th:gen-sum-composition}, the family %
  \((\Sproj i\Compl\Sproj j)_{(i,j)\in\Eset{0,1}^2}\) is summable and
  hence by \Cref{th:part-monoid-summable-subfam} the morphisms
  \(\Sproj0\Compl\Sproj0,\Sproj0\Compl\Sproj1,\Sproj1\Compl\Sproj0
  \in\cL(\Sfun^2X,X)\) are summable so that %
  \(\Sproj0\Compl\Sproj0,\Sproj0\Compl\Sproj1+\Sproj1\Compl\Sproj0\)
  are summable.
  We take
  \(\Sfunmonm_X
  =\Stuple{\Sproj0\Compl\Sproj0,\Sproj0\Compl\Sproj1+\Sproj1\Compl\Sproj0}\).
  Let us prove for instance naturality, so let \(f\in\cL(X,Y)\), we
  have
  \begin{align*}
    \Sfunmonm_Y\Compl(\Sfun^2 f)
    &=\Stuple{\Sproj0\Compl\Sproj0,\Sproj0\Compl\Sproj1+\Sproj1\Compl\Sproj0}
      \Compl(\Sfun^2 f)\\
    &=\Stuple{\Sproj0\Compl\Sproj0\Compl(\Sfun^2 f),
      (\Sproj0\Compl\Sproj1+\Sproj1\Compl\Sproj0)\Compl(\Sfun^2 f)}
    \text{\quad by \Cref{lemma:summability-compl}}\\
    &=\Stuple{\Sproj0\Compl\Sproj0\Compl(\Sfun^2 f),
      \Sproj0\Compl\Sproj1\Compl(\Sfun^2 f)
      +\Sproj1\Compl\Sproj0\Compl(\Sfun^2 f))}
    \text{\quad by \Cref{lemma:summability-compl}}\\
    &=\Stuple{f\Compl\Sproj0\Compl\Sproj0,
      f\Compl\Sproj0\Compl\Sproj1+f\Compl\Sproj1\Compl\Sproj0}
      \text{\quad by naturality of }\Sproj0\text{ and }\Sproj1\\
    &=(\Sfun f)\Compl\Stuple{\Sproj0\Compl\Sproj0,
      \Sproj0\Compl\Sproj1+\Sproj1\Compl\Sproj0}
    \text{\quad by \Cref{lemma:summability-compl}}\\
    &=(\Sfun f)\Compl\Sfunmonm_X\,.
  \end{align*}
  Last we have %
  \(\Sproj0\Compl\Sfunmonm_X\Compl\Sflip_X
  =\Sproj0\Compl\Sproj0\Compl\Sflip_X=\Sproj0\Compl\Sproj0\) %
  by definition of \(\Sflip\) and
  \begin{align*}
    \Sproj1\Compl\Sfunmonm_X\Compl\Sflip_X
    &=(\Sproj1\Compl\Sproj0+\Sproj0\Compl\Sproj1)\Compl\Sflip_X\\
    &=\Sproj1\Compl\Sproj0\Compl\Sflip_X+\Sproj0\Compl\Sproj1\Compl\Sflip_X
      \text{\quad by \Cref{lemma:summability-compl}}\\
    &=\Sproj0\Compl\Sproj1+\Sproj1\Compl\Sproj0
      \text{\quad by definition of }\Sflip\\
    &=\Sproj1\Compl\Sfunmonm_X
      \text{\quad by \Cref{lemma:sumcat-enriched-part-com-mon}}\,.
      \qedhere
  \end{align*}
\end{proof}

\begin{theorem} %
  \label{th:sfun-linear-monad}
  The triple \((\Sfun,\Sin0,\Sfunmonm)\) is a monad.
\end{theorem}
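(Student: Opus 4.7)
The plan is to verify the three monad laws
\[
\Sfunmonm_X \Compl \Sin0_{\Sfun X} = \Id_{\Sfun X}, \qquad
\Sfunmonm_X \Compl \Sfun\Sin0_X = \Id_{\Sfun X}, \qquad
\Sfunmonm_X \Compl \Sfunmonm_{\Sfun X} = \Sfunmonm_X \Compl \Sfun\Sfunmonm_X
\]
by exploiting the joint monicity of \(\Sproj0,\Sproj1\): in each case it suffices to post-compose both sides with \(\Sproj0\) and with \(\Sproj1\) and to check equality. The only tools needed are the two defining equations of \(\Sfunmonm\) from \Cref{lemma:ssum-monad-mult}, naturality of \(\Sproj0\) and \(\Sproj1\) (which gives \(\Sproj i \Compl \Sfun f = f \Compl \Sproj i\)), the identities \(\Sproj0 \Compl \Sin0 = \Id\), \(\Sproj1 \Compl \Sin0 = 0\), and the compatibility of partial sums with composition granted by \Cref{lemma:summability-compl} and \Cref{th:gen-sum-composition}.

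For the unit laws, first compute \(\Sproj0 \Compl \Sfunmonm_X \Compl \Sin0_{\Sfun X} = \Sproj0\Compl\Sproj0\Compl\Sin0_{\Sfun X} = \Sproj0\), and then
\[
\Sproj1 \Compl \Sfunmonm_X \Compl \Sin0_{\Sfun X}
= (\Sproj0\Compl\Sproj1 + \Sproj1\Compl\Sproj0)\Compl\Sin0_{\Sfun X}
= \Sproj0\Compl 0 + \Sproj1\Compl\Id = \Sproj1,
\]
using that the distribution of composition over \(+\) is legitimate because the parent morphisms are summable. A symmetric calculation, using \(\Sproj i \Compl \Sfun\Sin0 = \Sin0 \Compl \Sproj i\) by naturality, handles \(\Sfunmonm_X\Compl\Sfun\Sin0_X\); the only subtlety is that on the \(\Sproj1\) side one obtains \(\Sproj1 + 0\Compl\Sproj0 = \Sproj1\) rather than a more interesting combination.

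For associativity, write \(\Sproj{\Vect i}\) as in \Cref{lemma:sproj-it-jointly-monic}, and let \(L = \Sfunmonm_X \Compl \Sfun\Sfunmonm_X\) and \(R = \Sfunmonm_X \Compl \Sfunmonm_{\Sfun X}\). Post-composing with \(\Sproj0\) one obtains \(\Sproj{000}\) on both sides, using only \(\Sproj0\Compl\Sfunmonm = \Sproj{00}\) together with one instance of naturality on the \(L\) side. The critical computation is for \(\Sproj1\): using \(\Sproj1\Compl\Sfunmonm = \Sproj{01}+\Sproj{10}\) twice and naturality, one expands
\[
\Sproj1\Compl L = \Sproj{001} + \Sproj{010} + \Sproj{100} = \Sproj1\Compl R,
\]
where on the \(L\) side the second expansion arises from \(\Sproj1\Compl\Sfunmonm_X\Compl\Sproj0 = (\Sproj{01}+\Sproj{10})\Compl\Sproj0\), and on the \(R\) side from \(\Sproj0 \Compl (\Sproj{01}+\Sproj{10})\) read at the object \(\Sfun X\). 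The main obstacle is purely bookkeeping: one must justify, via \Cref{th:gen-sum-composition} and \Cref{th:part-monoid-summable-subfam}, that each three-term sum of the form \(\Sproj{ijk}+\Sproj{i'j'k'}+\Sproj{i''j''k''}\) encountered along the way is in fact summable (so that commutativity and associativity of \(+\) from \Cref{lemma:sumcat-enriched-part-com-mon} may be applied without restriction), and to align the two expansions so that the three resulting summands match; there is no conceptual difficulty, only careful tracking of which \(\Sproj i\) lives on which object.
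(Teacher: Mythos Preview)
Your proof plan is correct and follows essentially the same approach as the paper: verify the monad laws by post-composing with \(\Sproj0\) and \(\Sproj1\) and exploiting their joint monicity. The paper's proof is even terser than yours, simply noting that both sides of the associativity square equal \(\Stuple{\Sproj0\Compl\Sproj0\Compl\Sproj0,\ \Sproj1\Compl\Sproj0\Compl\Sproj0+\Sproj0\Compl\Sproj1\Compl\Sproj0+\Sproj0\Compl\Sproj0\Compl\Sproj1}\), which is exactly the computation you spell out.
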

\begin{proof}
  The monad commutations are easily checked, using as usual the joint
  monicity of the \(\Sproj i\)'s.
  For instance, one proves easily that
  \begin{align*}
    (\Sfun\Sfunmonm_X)\Compl\Sfunmonm_X
    &=\Sfunmonm_{\Sfun X}\Compl\Sfunmonm_X\\
    &=\Stuple{\Sproj0\Compl\Sproj0\Compl\Sproj0,
      \Sproj1\Compl\Sproj0\Compl\Sproj0
      +\Sproj0\Compl\Sproj1\Compl\Sproj0
      +\Sproj0\Compl\Sproj0\Compl\Sproj1}\,.
      \qedhere
  \end{align*}
\end{proof}

Another important structure map of \(\Sfun\) was overlooked
in~\cite{Ehrhard22}.

\begin{lemma}
  There is a natural \emph{lift} morphism %
  \(\Sfunlift_X\in\cL(\Sfun X,\Sfun^2X)\) characterized by %
  \(\Sproj i\Compl\Sproj i\Compl\Sfunlift=\Sproj i\)
  and   
  \(\Sproj i\Compl\Sproj{1-i}\Compl\Sfunlift=0\).
\end{lemma}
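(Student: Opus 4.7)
The plan is to exhibit $\Sfunlift_X$ explicitly as the witness of summability of two natural morphisms in $\cL(\Sfun X,\Sfun X)$, namely $\Sin 0\Compl\Sproj 0$ and $\Sin 1\Compl\Sproj 1$. Intuitively, on a summable pair $(x_0,x_1)$, we want $\Sfunlift$ to produce the ``diagonal lift'' whose outer components are $(x_0,0)$ and $(0,x_1)$.

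First I would verify that $\Sin 0\Compl\Sproj 0$ and $\Sin 1\Compl\Sproj 1$ are summable as morphisms $\Sfun X\to\Sfun X$. By definition of $\Sin i$ we have $\Ssum\Compl\Sin i=\Id_X$, so $\Ssum\Compl(\Sin 0\Compl\Sproj 0)=\Sproj 0$ and $\Ssum\Compl(\Sin 1\Compl\Sproj 1)=\Sproj 1$. The pair $(\Sproj 0,\Sproj 1)$ is summable with witness $\Id_{\Sfun X}$, so by axiom~\ref{ax:swit} applied to these two morphisms, $\Sin 0\Compl\Sproj 0$ and $\Sin 1\Compl\Sproj 1$ are summable. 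I then set
\[
  \Sfunlift_X=\Stuple{\Sin 0\Compl\Sproj 0,\Sin 1\Compl\Sproj 1}\in\cL(\Sfun X,\Sfun^2 X)\,.
\]

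Next I would check the four characterizing equations. For $i\in\{0,1\}$, $\Sproj i\Compl\Sfunlift_X=\Sin i\Compl\Sproj i$, so $\Sproj i\Compl\Sproj i\Compl\Sfunlift_X=\Sproj i\Compl\Sin i\Compl\Sproj i=\Sproj i$ (using $\Sproj i\Compl\Sin i=\Id$) and $\Sproj{1-i}\Compl\Sproj i\Compl\Sfunlift_X=\Sproj{1-i}\Compl\Sin i\Compl\Sproj i=0$ (using $\Sproj{1-i}\Compl\Sin i=0$, which follows from the definitions $\Sin 0=\Stuple{\Id,0}$ and $\Sin 1=\Stuple{0,\Id}$). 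Uniqueness of $\Sfunlift_X$ is immediate from the joint monicity of the iterated projections established in Lemma~\ref{lemma:sproj-it-jointly-monic}.

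Finally, naturality amounts to showing that $(\Sfun^2 f)\Compl\Sfunlift_X=\Sfunlift_Y\Compl(\Sfun f)$ for every $f\in\cL(X,Y)$. Again by Lemma~\ref{lemma:sproj-it-jointly-monic} it suffices to compare the two sides after composing with each $\Sproj i\Compl\Sproj j$. Unfolding both sides using naturality of $\Sproj 0$, $\Sproj 1$, $\Sin 0$, $\Sin 1$ (the latter being a consequence of naturality of $\Stuple{-,-}$ and of the zero morphisms), each of the four resulting equations reduces to a straightforward diagram chase. No step presents a real obstacle; the only point requiring care is the invocation of~\ref{ax:swit} to get summability of the ``outer'' pair in $\cL(\Sfun X,\Sfun X)$, which is exactly the role that axiom plays throughout this section.
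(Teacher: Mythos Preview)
Your proposal is correct and follows essentially the same approach as the paper: define \(\Sfunlift_X=\Stuple{\Sin 0\Compl\Sproj 0,\Sin 1\Compl\Sproj 1}\), using \ref{ax:swit} together with \(\Ssum\Compl\Sin i=\Id_X\) to obtain the required summability. You are in fact more thorough than the paper's proof, which stops after the definition and only remarks that the announced condition holds, whereas you spell out the four equations, uniqueness, and naturality.
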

\begin{proof}
  We have \(\Sin0=\Stuple{\Id_X,0}\) and hence %
  \(\Sin0\Compl\Sproj0=\Stuple{\Sproj0,0}\in\cL(\Sfun X,\Sfun X)\)
  and similarly
  \(\Sin1\Compl\Sproj1=\Stuple{0,\Sproj1}\in\cL(\Sfun X,\Sfun X)\).
  Therefore by~\ref{ax:szero} we have %
  \(\Ssum\Compl\Sin i\Compl\Sproj i=\Sproj i\) for \(i=0,1\) and
  since %
  \(\Sproj 0\) and \(\Sproj 1\) are summable, \ref{ax:swit} implies
  that \(\Sin0\Compl\Sproj0\) and \(\Sin1\Compl\Sproj1\) are summable
  and we set
  \begin{equation*}
    \Sfunlift=\Stuple{\Sin0\Compl\Sproj0,\Sin1\Compl\Sproj1}
    \in\cL(\Sfun X,\Sfun^2X)
  \end{equation*}
  which satisfies the announced condition.
\end{proof}

\begin{remark}
  The lift morphism \(\Sfunlift_X\) plays an important role in tangent
  categories as it allows to express that the differential is linear
  in a certain sense.
  It will play the very same role here.

  Equipped with \(\Ssum\) as counit and \(\Sfunlift\) as
  comultiplication, it is easy to check that \(\Sfun\) is also a
  comonad; it is actually a bimonad (with \(\Sflip\) as distributive
  law) but contrarily to the monad structure, the differential
  distributive law that we will introduce soon does not allow to
  extend the comonad structure of \(\Sfun\) to the Kleisli category of
  \(\oc\_\).
  Such an extension will be possible in the coherent theory of Taylor
  expansion developed in~\cite{EhrhardWalch23b}, and more
  precisely in the analytic situation where any morphism of the
  Kleisli category of \(\oc\_\) is the sum of its Taylor expansion.
  See that paper for more information about bimonads and about the
  extension of the bimonad structure of \(\Sfun\) when it represents a
  countable (instead of binary, as here) notion of summability.
\end{remark}

\subsection{Compatibility of the summability structure with the tensor
  product and the internal hom}
We assume that \(\cL\) has an SMC structure, that is, a distinguished
object \(\Sone\) (tensor unit) and a binary functor
\(\ITens:\cL^2\to\cL\) (tensor product) together with natural
isomorphisms \(\Leftu_X\in\cL(\Tens\Sone X,X)\),
\(\Rightu_X\in\cL(\Tens X\Sone,X)\),
\(\Assoc_{X_1,X_2,X_3}\in
\cL(\Tens{\Tensp{X_1}{X_2}}{X_3},\Tens{X_1}{\Tensp{X_2}{X_3}})\) and
\(\Sym_{X_1,X_2}\in\cL(\Tens{X_1}{X_2},\Tens{X_2}{X_1})\) subject to
the well-known McLane coherence conditions.

\begin{definition}\labeltext{(\(\Sfun\)-\(\otimes\))}{ax:stens}\ref{ax:stens}
  A pre-summability structure \((\cL,\Sproj0,\Sproj1,\Ssum)\) on an SMC
  \(\cL\) satisfies~\ref{ax:stens} if for any summable morphisms
  \(f_0,f_1\in\cL(X,Y)\) and \(g\in\cL(U,V)\), the morphisms %
  \(\Tens{f_0}g,\Tens{f_1}g\in\cL(\Tens XU,\Tens YV)\) are summable
  and satisfy \(\Tens{(f_0+f_1)}g=\Tens{f_0}g+\Tens{f_1}g\).
\end{definition}

\begin{lemma} %
  \label{lemma:sum-tens-dist-gen}
  Let \((f_i\in\cL(X,Y))_{i\in I}\) and \((g_j\in\cL(U,V))_{j\in J}\)
  be finite summable families of morphisms.
  Then the family
  \((\Tens{f_i}{g_j}\in\cL(\Tens XU,\Tens YV))_{(i,j)\in I\times J}\)
  is summable and we have %
  \(\Tens{(\sum_{i\in I}f_i)}{(\sum_{j\in J}g_j)}
  =\sum_{(i,j)\in I\times J\Tens{f_i}{g_j}}\).
\end{lemma}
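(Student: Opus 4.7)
My plan is to derive the lemma from the binary axiom~\ref{ax:stens} in three stages: first extend it to a finite sum on the left, then transfer the result to a finite sum on the right via the symmetry $\Sym$, and finally combine both extensions using \Cref{th:part-monoid-summable-fam}.

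For the first stage, I would prove by induction on $\Card I$ that whenever $(f_i)_{i\in I}$ is a finite summable family in $\cL(X,Y)$ and $g\in\cL(U,V)$ is a single morphism, the family $(\Tens{f_i}g)_{i\in I}$ is summable with sum $\Tens{(\sum_{i\in I}f_i)}g$. The cases $\Card I\in\Set{0,1}$ are immediate, using $\Tens 0g=0$ for the base. For the inductive step I would fix an enumeration and write $\sum_{i=1}^{n+1}f_i=(\sum_{i=1}^n f_i)+f_{n+1}$; by \Cref{def:part-com-mon-general-sums}, the pair $(\sum_{i=1}^n f_i, f_{n+1})$ is summable, so \ref{ax:stens} yields summability of $(\Tens{(\sum_{i=1}^n f_i)}g,\Tens{f_{n+1}}g)$ with the expected sum. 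Combined with the induction hypothesis applied to $(f_i)_{i=1}^n$, \Cref{def:part-com-mon-general-sums} read in reverse then gives summability of $(\Tens{f_i}g)_{i=1}^{n+1}$ together with the formula.

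Second, I would derive the symmetric statement. By naturality of $\Sym$ one has $\Tens f{g_j}=\Sym_{V,Y}\Compl(\Tens{g_j}f)\Compl\Sym_{X,U}$ for every $g_j\in\cL(U,V)$ and $f\in\cL(X,Y)$. Applying the first stage to the summable family $(g_j)_{j\in J}$ and the single morphism $f$ shows that $(\Tens{g_j}f)_{j\in J}$ is summable with sum $\Tens{(\sum_{j\in J}g_j)}f$; then \Cref{lemma:summability-compl}, applied with pre-composition by $\Sym_{X,U}$ and post-composition by $\Sym_{V,Y}$, transfers this to summability of $(\Tens f{g_j})_{j\in J}$ with sum $\Tens f{(\sum_{j\in J}g_j)}$.

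For the last stage, I would invoke \Cref{th:part-monoid-summable-fam} on the partition $I\times J=\bigcup_{i\in I}(\Set i\times J)$. By the second stage, for each fixed $i\in I$ the subfamily $(\Tens{f_i}{g_j})_{j\in J}$ is summable with sum $\Tens{f_i}{(\sum_{j\in J}g_j)}$; by the first stage applied to $(f_i)_{i\in I}$ and the single morphism $\sum_{j\in J}g_j$, the family of these partial sums $(\Tens{f_i}{(\sum_{j\in J}g_j)})_{i\in I}$ is itself summable with sum $\Tens{(\sum_{i\in I}f_i)}{(\sum_{j\in J}g_j)}$. \Cref{th:part-monoid-summable-fam} then yields at once the summability of the full family $(\Tens{f_i}{g_j})_{(i,j)\in I\times J}$ and the equality $\sum_{(i,j)\in I\times J}\Tens{f_i}{g_j}=\Tens{(\sum_{i\in I}f_i)}{(\sum_{j\in J}g_j)}$. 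I do not anticipate a genuine obstacle: the enrichment in partial commutative monoids given by \Cref{lemma:sumcat-enriched-part-com-mon}, together with~\ref{ax:stens} and the reshuffling theorem, supplies all the needed infrastructure. The only mild subtlety is that~\ref{ax:stens} is formulated so as to distribute over a sum only in the left tensor argument, which is why the symmetry step in stage two is genuinely necessary rather than cosmetic.
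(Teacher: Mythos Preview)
Your argument is correct. The only small wrinkle is in stage two: \Cref{lemma:summability-compl} is stated for a summable \emph{pair}, not for an arbitrary finite summable family, so strictly speaking you should either cite \Cref{th:gen-sum-composition} (with one of the families taken to be a singleton) or note that the extension of \Cref{lemma:summability-compl} to finite families follows by the same one-line induction you used in stage one. This is a citation issue, not a mathematical gap.

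The paper proceeds differently: it enumerates $I=\Eset{1,\dots,n}$ and $J=\Eset{1,\dots,p}$ and runs a single induction on $(n,p)$ ordered lexicographically, handling the empty cases via $\Tens 0g=0$. Your approach is more modular: you first isolate one-sided distributivity by induction on $\Card I$, then transport it to the right factor via the symmetry $\Sym$, and finally glue the two one-sided results together with the partition theorem \Cref{th:part-monoid-summable-fam}. The paper's route avoids invoking the symmetry and the partition theorem at the cost of a slightly heavier double induction; your route makes explicit that the two-sided statement is a formal consequence of the one-sided axiom~\ref{ax:stens} plus the SMC symmetry, which is conceptually cleaner and reusable.
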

\begin{proof}[Proof sketch]
  Without loss of generality we can assume that %
  \(I=\Eset{1,\dots,n}\) and \(J=\Eset{1,\dots,p}\) %
  for \(n,p\in\Nat\) and one proves the result by induction on
  \((n,p)\) ordered lexicographically (for instance).
  Notice that when \(I\) or \(J\) is empty, our assumption that the
  \(0\) morphisms satisfy \(\Tens 0g=0\) is essential.
\end{proof}

\begin{definition}
  An SMC \(\cL\) is a \emph{summable SMC} if it is equipped with a
  summability structure which satisfies~\ref{ax:stens}.
\end{definition}

\begin{lemma}
  Let \(\cL\) be a summable SMC.
  There is a natural morphism %
  \[
    \Sfunstr^1_{X_1,X_2}
    \in\cL(\Tens{\Sfun X_1}{X_2},\Sfun\Tensp{X_1}{X_2})
  \]
  such that %
  \((\Sproj i\Compl\Sfunstr^1=\Tens{\Sproj i}{X_2})_{i=0,1}\).
\end{lemma}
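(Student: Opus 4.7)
The plan is to build $\Sfunstr^1_{X_1,X_2}$ as a witness to the summability of the pair $(\Tens{\Sproj 0}{X_2}, \Tens{\Sproj 1}{X_2})$, using the joint monicity of $\Sproj 0$ and $\Sproj 1$ to obtain uniqueness and the characterizing equations $\Sproj i\Compl\Sfunstr^1 = \Tens{\Sproj i}{X_2}$ for free.

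First, I would observe that $\Sproj 0,\Sproj 1\in\cL(\Sfun X_1,X_1)$ are summable (with witness $\Id_{\Sfun X_1}$), so by the axiom~\ref{ax:stens} the morphisms $\Tens{\Sproj 0}{X_2}$ and $\Tens{\Sproj 1}{X_2}$ in $\cL(\Tens{\Sfun X_1}{X_2},\Tens{X_1}{X_2})$ are summable. Define
\begin{equation*}
\Sfunstr^1_{X_1,X_2} = \Stuple{\Tens{\Sproj 0}{X_2},\Tens{\Sproj 1}{X_2}}\in\cL(\Tens{\Sfun X_1}{X_2},\Sfun\Tensp{X_1}{X_2})\,.
\end{equation*}
The required equations $\Sproj i\Compl\Sfunstr^1_{X_1,X_2} = \Tens{\Sproj i}{X_2}$ hold by definition of the witness $\Stuple{\cdot,\cdot}$.

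For naturality in both arguments, let $f_1\in\cL(X_1,Y_1)$ and $f_2\in\cL(X_2,Y_2)$. I would verify
\begin{equation*}
\Sfun\Tensp{f_1}{f_2}\Compl\Sfunstr^1_{X_1,X_2} = \Sfunstr^1_{Y_1,Y_2}\Compl\Tens{\Sfun f_1}{f_2}
\end{equation*}
by precomposing with $\Sproj i$ and invoking joint monicity (\Cref{lemma:sproj-it-jointly-monic} for $n=1$). On the left-hand side, naturality of $\Sproj i$ gives $\Sproj i\Compl\Sfun\Tensp{f_1}{f_2}\Compl\Sfunstr^1_{X_1,X_2} = \Tensp{f_1}{f_2}\Compl\Tens{\Sproj i}{X_2} = \Tens{(f_1\Compl\Sproj i)}{f_2}$. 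On the right-hand side, the defining equation and functoriality of $\ITens$ yield $\Sproj i\Compl\Sfunstr^1_{Y_1,Y_2}\Compl\Tens{\Sfun f_1}{f_2} = \Tens{\Sproj i}{Y_2}\Compl\Tens{\Sfun f_1}{f_2} = \Tens{(\Sproj i\Compl\Sfun f_1)}{f_2} = \Tens{(f_1\Compl\Sproj i)}{f_2}$, again by naturality of $\Sproj i$. The two expressions coincide, which concludes the argument.

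There is no real obstacle here: the construction is entirely dictated by the required equations, and naturality reduces to two identical computations on either side of the diagram. The only subtle point is to remember that~\ref{ax:stens} only asserts summability for $\Tens{f_i}{g}$ with a single $g$ on the right; fortunately that is exactly the shape needed here, so no appeal to the more general \Cref{lemma:sum-tens-dist-gen} is required.
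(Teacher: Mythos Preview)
Your proposal is correct and follows essentially the same route as the paper: define $\Sfunstr^1_{X_1,X_2}=\Stuple{\Tens{\Sproj0}{X_2},\Tens{\Sproj1}{X_2}}$ using~\ref{ax:stens}, then check naturality. The only cosmetic difference is that the paper manipulates the witnesses directly via \Cref{lemma:summability-compl} (which packages the joint-monicity argument), whereas you unfold that lemma and postcompose with the $\Sproj i$'s explicitly; the two presentations are equivalent.
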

\begin{proof}
  We know that \(\Sproj0,\Sproj1\in\cL(\Sfun X_2,X_2)\) are summable, %
  so by~\ref{ax:stens} the morphisms %
  \(\Tens{\Sproj0}{X_2},\Tens{\Sproj1}{X_2}
  \in\cL(\Tens{\Sfun X_1}{X_2},\Tens{X_1}{X_2})\) are summable, we set %
  \(\Sfunstr^1=\Stuple{\Tens{\Sproj0}{X_2},\Tens{\Sproj1}{X_2}}\).
  Let \((f_i\in\cL(X_i,Y_i))_{i=1,2}\), we have
  \begin{align*}
    \Sfunstr^1\Compl\Tensp{\Sfun f_1}{f_2}
    &=\Stuple{\Tensp{\Sproj0}{X_2}\Compl\Tensp{\Sfun f_1}{f_2},
      \Tensp{\Sproj1}{X_2}\Compl\Tensp{\Sfun f_1}{f_2}} 
      \text{\quad by \Cref{lemma:summability-compl}}\\
    &=\Stuple{\Tens{f_1\Compl\Sproj 0}{f_2},\Tens{f_1\Compl\Sproj 1}{f_2}}\\
    &=\Sfun\Tensp{f_1}{f_2}\Compl\Sfunstr^1
      \text{\quad by \Cref{lemma:summability-compl}}
  \end{align*}
  which shows that \(\Sfunstr^1\) is natural.
\end{proof}
One defines %
\(\Sfunstr^2_{X_1,X_2}
\in\cL(\Tens{X_1}{(\Sfun X_2)},\Sfun\Tensp{X_1}{X_2})\) %
by %
\(\Sfunstr^2_{X_1,X_2}
=(\Sfun\Sym_{X_2,X_1})\Compl\Sfunstr^1_{X_2,X_1}\Sym_{X_1,\Sfun X_2}\).

\begin{theorem}
  The following diagrams commute
  \begin{equation*}
    \begin{tikzcd}
      \Tens{X_1}{X_2}
      \ar[r,"\Tens{\Sin0}{X_2}"]
      \ar[dr,swap,"\Sin0"]
      &
      \Tens{\Sfun X_1}{X_2}
      \ar[d,"\Sfunstr^1"]
      \\
      &
      \Sfun\Tensp{X_1}{X_2}
    \end{tikzcd}
    \Treesep
    \begin{tikzcd}
      \Tens{\Sfun^2X_1}{X_2}
      \ar[r,"\Sfunstr^1"]
      \ar[d,swap,"\Tens{\Sfunmonm}{X_2}"]
      &
      \Sfun\Tensp{\Sfun X_1}{X_2}
      \ar[r,"\Sfun\Sfunstr^1"]
      &
      \Sfun^2\Tensp{X_1}{X_2}
      \ar[d,"\Sfunmonm"]
      \\
      \Tens{\Sfun X_1}{X_2}
      \ar[rr,"\Sfunstr^1"]
      &&
      \Sfun\Tensp{X_1}{X_2}
    \end{tikzcd}
  \end{equation*}
  \begin{equation*}
    \begin{tikzcd}
      &[-2em]
      \Tens{\Sfun X_1}{\Sfun X_2}
      \ar[ld,swap,"\Sfunstr^1_{X_1,\Sfun X_2}"]
      \ar[rd,"\Sfunstr^2_{\Sfun X_1,X_2}"]
      &[-2em]\\
      \Sfun\Tensp{X_1}{\Sfun X_2}
      \ar[d,swap,"\Sfun\Sfunstr^2_{X_1,X_2}"]
      &
      &
      \Sfun\Tensp{\Sfun X_1}{X_2}
      \ar[d,"\Sfun\Sfunstr^1_{X_1,X_2}"]\\
      \Sfun^2\Tensp{X_1}{X_2}
      \ar[rr,"\Sflip_{\Tens{X_1}{X_2}}"]
      &&
      \Sfun^2\Tensp{X_1}{X_2}
    \end{tikzcd}
  \end{equation*}
  and hence \((\Sfun,\Sfunstr^1)\) is a commutative strong monad.
\end{theorem}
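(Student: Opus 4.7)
The plan is to verify each of the three diagrams in turn by exploiting the joint monicity of the family $(\Sproj{\Vect i})_{\Vect i \in \Eset{0,1}^n}$ established in \Cref{lemma:sproj-it-jointly-monic}, so that every equation between morphisms into $\Sfun Y$ (resp. $\Sfun^2 Y$) reduces to a family of equations in $\cL$ that can be checked directly from the defining equations of the structure maps: $\Sproj i \Compl \Sfunstr^1 = \Tens{\Sproj i}{X_2}$, $\Sproj i \Compl \Sfunstr^2 = \Tens{X_1}{\Sproj i}$ (from the definition of $\Sfunstr^2$ via $\Sym$), $\Sproj 0 \Compl \Sin 0 = \Id$, $\Sproj 1 \Compl \Sin 0 = 0$, together with the characterization of $\Sfunmonm$ from \Cref{lemma:ssum-monad-mult} and of $\Sflip$. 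Throughout, \Cref{lemma:summability-compl} and \ref{ax:stens} will be used to commute sums past composition and past tensoring.

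For the unit triangle, I would post-compose both sides with $\Sproj 0$ and with $\Sproj 1$: the $\Sproj 0$ branch reduces to $\Tens{(\Sproj 0 \Compl \Sin 0)}{X_2} = \Id$, while the $\Sproj 1$ branch reduces to $\Tens{(\Sproj 1 \Compl \Sin 0)}{X_2} = 0$, which matches $\Sproj 1 \Compl \Sin 0 = 0$. For the multiplication square, I would again post-compose with each $\Sproj i$. The $\Sproj 0$ branch reduces on both sides to $\Tens{\Sproj 0 \Compl \Sproj 0}{X_2}$, using naturality $\Sproj 0 \Compl \Sfun\Sfunstr^1 = \Sfunstr^1 \Compl \Sproj 0$ on the right. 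The $\Sproj 1$ branch is the key computation: on the left, $\Sproj 1 \Compl \Sfunstr^1 \Compl \Tens{\Sfunmonm}{X_2} = \Tens{(\Sproj 0 \Compl \Sproj 1 + \Sproj 1 \Compl \Sproj 0)}{X_2}$, which expands via \ref{ax:stens} into a sum of two tensors; on the right, $\Sproj 1 \Compl \Sfunmonm$ already supplies the matching sum, and each term is handled by naturality followed by $\Sproj i \Compl \Sfunstr^1 = \Tens{\Sproj i}{X_2}$.

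For the commutativity hexagon, I would test equality by post-composition with $\Sproj i \Compl \Sproj j$ for all four choices of $(i,j) \in \Eset{0,1}^2$. Using $\Sproj i \Compl \Sproj j \Compl \Sflip = \Sproj j \Compl \Sproj i$, the left composite evaluated at $(i,j)$ becomes, after pulling the two $\Sproj$'s through $\Sfun\Sfunstr^2$ and $\Sfunstr^1$ via naturality, a term of the shape $\Tens{\Sproj j}{(\Sproj i)}$ up to associators, while the right composite evaluated at $(i,j)$ similarly gives the same tensor. Thus all four instances agree, and joint monicity concludes.

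The main obstacle is the hexagon: the bookkeeping is the heaviest there because one has to track four instances of $(i,j)$ and to be careful with the symmetry that enters through the definition of $\Sfunstr^2$. Once this is set up, the argument is entirely mechanical, and the fact that $(\Sfun,\Sfunstr^1)$ then fits the definition of a commutative strong monad follows because the three diagrams we just verified are exactly the left-strength unit, the left-strength multiplication, and the compatibility of the two strengths, the other coherences (associativity with $\Assoc$, unitality with $\Rightu$) being obtained identically using $\Sproj i \Compl \Sfunstr^1 = \Tens{\Sproj i}{X_2}$ together with naturality of $\Assoc$ and $\Rightu$.
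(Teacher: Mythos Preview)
Your proposal is correct and follows exactly the approach the paper sketches: verify each diagram by post-composition with the jointly monic family $(\Sproj{\Vect i})_{\Vect i\in\Eset{0,1}^n}$ and reduce to the defining equations of $\Sfunstr^1$, $\Sin0$, $\Sfunmonm$, and $\Sflip$. One minor slip: in the hexagon both composites actually evaluate to $\Tens{\Sproj i}{\Sproj j}$ rather than $\Tens{\Sproj j}{\Sproj i}$, but since you only need that the two sides agree this does not affect the argument.
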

\begin{proof}
  The diagrams are proven commutative using the joint monicity of the
  \(\Sproj i\)'s.
  Commutativity of the monad means that the following diagram commutes
  \begin{equation*}
    \begin{tikzcd}
      &[-2em]
      \Tens{\Sfun X_1}{\Sfun X_2}
      \ar[ld,swap,"\Sfunstr^1_{X_1,\Sfun X_2}"]
      \ar[rd,"\Sfunstr^2_{\Sfun X_1,X_2}"]
      &[-2em]\\
      \Sfun\Tensp{X_1}{\Sfun X_2}
      \ar[d,swap,"\Sfun\Sfunstr^2_{X_1,X_2}"]
      &
      &
      \Sfun\Tensp{\Sfun X_1}{X_2}
      \ar[d,"\Sfun\Sfunstr^1_{X_1,X_2}"]\\
      \Sfun^2\Tensp{X_1}{X_2}
      \ar[rd,swap,"\Sfunmonm_{\Tens{X_1}{X_2}}"]
      &&
      \Sfun^2\Tensp{X_1}{X_2}
      \ar[ld,"\Sfunmonm_{\Tens{X_1}{X_2}}"]\\
      &
      \Sfun\Tensp{X_1}{X_2}
    \end{tikzcd}
  \end{equation*}
  which results from \Cref{lemma:ssum-monad-mult}.
\end{proof}

\begin{remark}
  It is a standard fact that the common value %
  \(\Sfunlaxt_{X_1,X_2}
  =\Sfunmonm_{\Tens{X_1}{X_2}}
  \Compl(\Sfun\Sfunstr^1_{X_1,X_2})
  \Compl\Sfunstr^2_{\Sfun X_1,X_2}
  =\Sfunmonm_{\Tens{X_1}{X_2}}
  \Compl(\Sfun\Sfunstr^1_{X_1,X_2})
  \Compl\Sfunstr^1_{X_1,\Sfun X_2}
  \in\cL(\Tens{\Sfun X_1}{\Sfun X_2},\Sfun\Tensp{X_1}{X_2})\) %
  (and associated unit \(\Sin0\in\cL(\Sone,\Sfun\Sone)\)) %
  turns \(\Sfun\) into a lax monoidal monad.
  Notice that this morphism is characterized by
  \begin{align*}
    \Sproj0\Compl\Sfunlaxt
    =\Tens{\Sproj0}{\Sproj0}\text{\quad and \quad}
    \Sproj1\Compl\Sfunlaxt
    =\Tens{\Sproj1}{\Sproj0}+\Tens{\Sproj0}{\Sproj1}\,.
  \end{align*}
\end{remark}

If the SMC \(\cL\) is closed, with internal hom of \(X\) and \(Y\)
denoted as \((\Limpl XY,\Evlin)\) where
\(\Evlin\in\cL(\Tens{\Limplp XY}{X},Y)\) is the evaluation morphism
(and, given \(f\in\cL(\Tens ZX,Y)\), we use
\(\Curlin f\in\cL(Z,\Limpl XY)\) for the currying of \(f\)), then we
need a further assumption on the summability structure expressing that
\(\Sfun{\Limplp XY}\) and \(\Limpl X{\Sfun Y}\) are isomorphic.
More precisely, notice that thanks to \ref{ax:stens} we have a
morphism \(\Sfunstr_{X,Y}\) defined as the following composition of
morphisms
\begin{equation*}
  \begin{tikzcd}
    \Tens{\Sfun{\Limplp XY}}{X}
    \ar[r,"\Sfunstr^1_{\Limpl XY,X}"]
    &[1.6em]
    \Sfun{\Tensp{\Limplp XY}{X}}
    \ar[r,"\Sfun\Evlin"]
    &
    \Sfun X
  \end{tikzcd}
\end{equation*}
so that %
\(\Sfunstri_{X,Y}
=\Curlin{\Sfunstr_{X,Y}}\in\cL(\Sfun{\Limplp XY},\Limpl X{\Sfun Y})\).

\begin{definition} %
  \labeltext{(\(\Sfun\)-\(\multimap\))}{ax:slimpl}\ref{ax:slimpl} %
  We say that \(\cL\) satisfies \ref{ax:slimpl} if the morphism %
  \(\Sfunstri_{X,Y}\) is an iso.
  If a summable SMC is closed, we always assume that it
  satisfies \ref{ax:slimpl}.
\end{definition}

\begin{lemma}
  \label{lemma:curlin-additive}
  Let \(\cL\) be a summable SMCC.
  If \((f_i\in\cL(\Tens ZX,Y))_{i=0,1}\) are summable, then so are %
  \((\Curlin{f_i}\in\cL(Z,\Limpl XY))_{i=0,1}\) and we have %
  \(\Curlin{(f_0+f_1)}=\Curlin{f_0}+\Curlin{f_1}\).
\end{lemma}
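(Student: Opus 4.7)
The plan is to transport the summability witness of $f_0,f_1$ from $\cL(\Tens Z X,\Sfun Y)$ to $\cL(Z,\Sfun\Limplp X Y)$ using the iso $\Sfunstri_{X,Y}$ provided by \ref{ax:slimpl}. Concretely, starting from the witness $\Stuple{f_0,f_1}\in\cL(\Tens Z X,\Sfun Y)$, I would curry to obtain $\Curlin{\Stuple{f_0,f_1}}\in\cL(Z,\Limplp X{\Sfun Y})$ and then set
\begin{equation*}
  h=\Funinv{\Sfunstri_{X,Y}}\Compl\Curlin{\Stuple{f_0,f_1}}\in\cL(Z,\Sfun\Limplp X Y)
\end{equation*}
as the candidate witness. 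The goals will then be $\Sproj i\Compl h=\Curlin{f_i}$ for $i=0,1$, which establishes summability, and $\Ssum\Compl h=\Curlin{f_0+f_1}$, which gives the sum identity.

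The computation reduces to two compatibility identities between $\Sfunstri$ and the summability structure on $\Limpl X Y$:
\begin{equation*}
  \Sproj i=(\Limpl X{\Sproj i})\Compl\Sfunstri_{X,Y}
  \quad\text{and}\quad
  \Ssum=(\Limpl X{\Ssum})\Compl\Sfunstri_{X,Y}
\end{equation*}
as morphisms $\Sfun\Limplp X Y\to\Limpl X Y$. Both are proved by uncurrying, unfolding $\Sfunstr_{X,Y}=(\Sfun\Evlin)\Compl\Sfunstr^1_{\Limpl X Y,X}$, and using the fact that the uncurrying of $\Limpl X g\Compl f$ is $g$ precomposed with the uncurrying of $f$. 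For the first, naturality of $\Sproj i$ and the characterizing equation $\Sproj i\Compl\Sfunstr^1=\Tens{\Sproj i}{X}$ give $\Sproj i\Compl\Sfunstr_{X,Y}=\Evlin\Compl\Tens{\Sproj i}{X}$, which is precisely the uncurrying of $\Sproj i:\Sfun\Limplp X Y\to\Limpl X Y$. For the second, naturality of $\Ssum$ gives $\Ssum\Compl\Sfunstr_{X,Y}=\Evlin\Compl\Ssum\Compl\Sfunstr^1$, and I would then combine the two instances of the previous equation to obtain $\Ssum\Compl\Sfunstr^1=\Tens{\Sproj 0}{X}+\Tens{\Sproj 1}{X}=\Tens{\Ssum}{X}$ via \ref{ax:stens} (as already used in \Cref{lemma:sum-tens-dist-gen}).

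Granted these identities, the rest is a diagram chase using the naturality of currying, namely $(\Limpl X g)\Compl\Curlin{k}=\Curlin{g\Compl k}$. Indeed, $\Sproj i\Compl h=(\Limpl X{\Sproj i})\Compl\Sfunstri_{X,Y}\Compl\Funinv{\Sfunstri_{X,Y}}\Compl\Curlin{\Stuple{f_0,f_1}}=\Curlin{\Sproj i\Compl\Stuple{f_0,f_1}}=\Curlin{f_i}$, so $\Curlin{f_0}$ and $\Curlin{f_1}$ are summable with witness $h$; the analogous calculation with $\Ssum$ in place of $\Sproj i$ yields $\Curlin{f_0}+\Curlin{f_1}=\Ssum\Compl h=\Curlin{\Ssum\Compl\Stuple{f_0,f_1}}=\Curlin{f_0+f_1}$. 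I expect the only mildly delicate point to be the second compatibility identity, since it is where \ref{ax:stens} is actually needed; everything else is essentially bookkeeping through the closed structure.
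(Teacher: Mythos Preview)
Your proposal is correct and follows essentially the same approach as the paper: define the candidate witness $h=\Invp{\Sfunstri}\Compl\Curlin{\Stuple{f_0,f_1}}$ and compute $\Sproj i\Compl h$ and $\Ssum\Compl h$ using the compatibility of $\Sfunstri$ with $\Sproj i$ and $\Ssum$. You are more explicit than the paper in isolating the identities $\Sproj i=(\Limpl X{\Sproj i})\Compl\Sfunstri$ and $\Ssum=(\Limpl X{\Ssum})\Compl\Sfunstri$ and in pointing out that \ref{ax:stens} is what makes the second one go through; the paper simply invokes ``naturality of $\Sfunstri$'' and ``the same kind of computation'' at these two spots.
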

\begin{proof}
  We have \(\Curlin{\Stuple{f_0,f_1}}\in\cL(Z,\Limpl X{\Sfun Y})\) %
  and hence %
  \(\Invp{\Sfunstri}\Compl\Curlin{\Stuple{f_0,f_1}}
  \in\cL(Z,\Sfun{\Limplp XY})\).
  By naturality of \(\Sfunstri\) we have %
  \(\Sproj i\Compl\Invp{\Sfunstri}\Compl\Curlin{\Stuple{f_0,f_1}}
  =\Limplp X{\Sproj i}\Compl\Curlin{\Stuple{f_0,f_1}} =\Curlin{(\Sproj
    i\Compl\Stuple{f_0,f_1})}=\Curlin{f_i}\) for \(i=0,1\).
  Hence \((\Curlin{f_i})_{i=0,1}\) are summable with %
  \(\Curlin{f_0}+\Curlin{f_1}
  =\Ssum\Compl\Invp{\Sfunstri}\Compl\Curlin{\Stuple{f_0,f_1}}
  =\Curlin{(f_0+f_1)}\) by the same kind of computation.
\end{proof}

\subsection{Summability in a cartesian category}
We assume now that \(\cL\) is cartesian, that is, any finite family
\((X_i)_{i\in I}\) has a cartesian product %
\((\Bwith_{i\in I}X_i,(\Proj i)_{i\in I})\) where the
\(\Proj j\in\cL(\Bwith_{i\in I}X_i,X_j)\) are the projections; we use
\(\Stop\) for the terminal object.
When \((f_i\in\cL(Y,X_i))_{i\in I}\), we use \(\Tuple{f_i}_{i\in I}\)
for the unique morphism \(Y\to\Bwith_{i\in I}X_i\) which,
post-composed with \(\Proj j\), yields \(f_j\).

\begin{definition}\labeltext{(\(\Sfun\)-\(\Bwith\))}{ax:swith}\ref{ax:swith}
  A pre-summability structure \((\cL,\Sproj0,\Sproj1,\Ssum)\)
  satisfies~\ref{ax:swith} or is \emph{cartesian} if, for any finite
  family \(\Vect X=(X_i)_{i\in I}\) of objects, the morphism
  \(\Tuple{\Sfun\Proj i}_{i\in I}\in\cL(\Sfun(\Bwith_{i\in
    I}X_i),\Bwith_{i\in I}\Sfun X_i)\) is an iso.
  We use then %
  \(\Swithiso_{\Vect X}\in\cL(\Bwith_{i\in I}\Sfun
  X_i),\Sfun(\Bwith_{i\in I}X_i))\) for the inverse of %
  \(\Tuple{\Sfun\Proj i}_{i\in I}\), which is a strong monoidal
  structure for the functor \(\Sfun\) wrt.~to the monoidal structure
  induced on \(\cL\) by its cartesian product.
\end{definition}

In the sequel, when dealing with a (pre-)summability structure on a
cartesian category, we always assume that~\ref{ax:swith} holds.

\subsection{Summability in the elementary situation}

It turns out that most non-trivial summability structures result from
very simple properties of the category \(\cL\) that we describe now.
Due to the very simple nature of these properties, we call such
summability structures \emph{elementary}.
We assume to be given an SMC category \(\cL\) with zero-morphisms
which is cartesian%
\footnote{It is not really necessary that all products exist, we only
  need \(\With\Sone\Sone\) to exist, but this assumption is not very
  strong anyway.}.

\begin{remark}
  The internal hom \((\Limpl\Sone X,\Evlin)\) exists: we can take
  \(\Limplp\Sone X=X\) and %
  \(\Evlin=\Rightu_X\in\cL(\Tens X\Sone,X)\).
  We will always use this particular version of this internal hom.
\end{remark}

We set \(\Dbimon=\With\Sone\Sone\), which will play a role similar to
that of an object of infinitesimals in Synthetic Differential
Geometry~\cite{Kock09}.

We assume that, for all object \(X\) of \(\cL\), the internal hom %
\((\Limpl\Dbimon X,\Evlin)\) exists, where %
\(\Evlin\in\cL(\Tens{\Limplp\Dbimon X}{\Dbimon},X)\) is the evaluation
morphisms.
In that way we define a functor \(\Sfun=\Limplp\Dbimon\_:\cL\to\cL\).
Since \(\cL\) has zero-morphisms, we can define %
\(\Win0=\Tuple{\Id_\Sone,0}\), \(\Win1=\Tuple{0,\Id_\Sone}\) and %
\(\Wdiag=\Tuple{\Id_\Sone,\Id_\Sone}\) which all belong to
\(\cL(\Sone,\Dbimon)\).
Notice that if \(f\in\cL(\Sone,\Dbimon)\), then %
\(\Limpl fX\in\cL(\Sfun X,X)\) is a natural transformation. %

\begin{definition}\labeltext{(\(\Dbimon\)-epi)}{ax:depi}\ref{ax:depi}
  The category \(\cL\) is \emph{elementarily pre-summable} if, for any
  object \(X\) of \(\cL\), the morphisms \(\Tens X{\Win0}\) and
  \(\Tens X{\Win1}\) are jointly epic.
\end{definition}

\begin{remark}
  If \(\cL\) satisfies~\ref{ax:depi} then \(\Win0\) and \(\Win1\) are
  jointly epic, and if \(\cL\) is an SMCC then this latter property
  implies~\ref{ax:depi}.
\end{remark}

\begin{lemma}
  If \(\cL\) satisfies~\ref{ax:depi} then %
  \((\Sfun,\Sproj 0=\Limplp{\Win 0}{X}, \Sproj 1=\Limplp{\Win
    1}{X},\Ssum=\Limplp{\Wdiag}X)\) is a pre-summability structure on
  \(\cL\), that is \(\Sproj0\) and \(\Sproj1\) are jointly monic.
  Moreover, the conditions~\ref{ax:scom} and~\ref{ax:szero} hold.
\end{lemma}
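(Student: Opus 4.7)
The plan is to leverage the adjunction $\Tens Y {\_} \dashv \Limpl{\_}{X}$ together with the universal property of the product $\Dbimon = \With\Sone\Sone$. The main substantive step is the joint monicity, for which \ref{ax:depi} is exactly what is needed; the two axioms \ref{ax:scom} and \ref{ax:szero} then follow by short computations using the contravariant functoriality of $\Limpl{\_}{X}$ applied to the canonical morphisms $\Win 0, \Win 1, \Wdiag \in \cL(\Sone,\Dbimon)$ and the product projections $\Proj 1, \Proj 2 \in \cL(\Dbimon,\Sone)$.

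For joint monicity, I would take $f, g \in \cL(Y,\Sfun X)$ with $\Sproj i \Compl f = \Sproj i \Compl g$ for $i=0,1$, and transpose them under the adjunction to $\hat f, \hat g \in \cL(\Tens Y \Dbimon, X)$. Unfolding the definition $\Sproj i = \Limpl{\Win i}{X}$ via the defining equation of the internal hom, and using bifunctoriality of $\Tens$ and naturality of $\Rightu$, one computes
\[
\hat f \Compl (\Tens Y {\Win i}) = (\Sproj i \Compl f) \Compl \Rightu_Y,
\]
and similarly for $\hat g$. The hypothesis therefore gives $\hat f \Compl (\Tens Y {\Win i}) = \hat g \Compl (\Tens Y {\Win i})$ for $i=0,1$, and \ref{ax:depi} applied to the object $Y$ forces $\hat f = \hat g$, whence $f=g$ by the adjunction bijection. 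This is where \ref{ax:depi} is essential and is the step I would expect to be the main obstacle, mostly due to the bookkeeping between the internal and external structure (one must in particular check that the defining equation of $\Sproj i = \Limpl{\Win i}{X}$ translates exactly into composition with $\Tens Y {\Win i}$).

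For \ref{ax:scom}, I would introduce the swap $s = \Tuple{\Proj 2, \Proj 1} \in \cL(\Dbimon,\Dbimon)$, which by the universal property of the product satisfies $s \Compl \Win 0 = \Win 1$, $s \Compl \Win 1 = \Win 0$ and $s \Compl \Wdiag = \Wdiag$. Setting $h = \Limpl{s}{X} \in \cL(\Sfun X, \Sfun X)$, contravariant functoriality yields $\Sproj i \Compl h = \Limpl{s \Compl \Win i}{X} = \Sproj{1-i}$, so $h$ witnesses the summability of $(\Sproj 1, \Sproj 0)$, and $\Ssum \Compl h = \Limpl{s \Compl \Wdiag}{X} = \Ssum$ gives $\Sproj 1 + \Sproj 0 = \Ssum$. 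For \ref{ax:szero}, the analogous witness is $h = \Limpl{\Proj 2}{X} \in \cL(X, \Sfun X)$, using that $\Limpl{\Sone}{X} = X$ by the chosen representative of the internal hom. From $\Proj 2 \Compl \Win 0 = 0$, $\Proj 2 \Compl \Win 1 = \Id_\Sone$ and $\Proj 2 \Compl \Wdiag = \Id_\Sone$, and the fact that $\Limpl{\_}{X}$ preserves zero morphisms (an immediate consequence of $\Tens{\Id}{0} = 0$ and the defining property of internal homs), one obtains $\Sproj 0 \Compl h = 0$, $\Sproj 1 \Compl h = \Id_X$ and $\Ssum \Compl h = \Id_X$, which is precisely the required summability of $(0, \Id_X)$ with $0 + \Id_X = \Id_X$.
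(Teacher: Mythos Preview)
Your argument is correct and is exactly the natural one: the paper states this lemma without proof, and your use of the tensor--hom adjunction to transport joint monicity of the $\Sproj i$'s back to the joint epicity of the $\Tens Y{\Win i}$'s supplied by \ref{ax:depi}, together with the contravariant functoriality of $\Limpl{\_}{X}$ applied to the swap and the second projection on $\Dbimon$, is precisely the intended verification. The only cosmetic point is that the paper indexes the projections of $\Dbimon=\With\Sone\Sone$ by $0,1$ rather than $1,2$, so your $\Proj 2$ corresponds to the paper's $\Proj 1$.
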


In this pre-summability structure, saying that two morphisms
\(f_0,f_1\in\cL(X,Y)\) are summable means that there is
\(f\in\cL(\Tens X\Dbimon,Y)\) such that %
\(f_i=f\Compl\Tensp\Dbimon{\Win i}\Compl\Inv\Rightu\), and then we
have \(f_0+f_1=f\Compl\Tensp\Dbimon{\Wdiag}\Compl\Inv\Rightu\).
We set \(\Stuplet{f_0,f_1}=f\), so that %
\(\Stuple{f_0,f_1}=\A\Curlin{\Stuplet{f_0,f_1}}\).

\begin{lemma} %
  \label{lemma:tens-win-epic}
  For any object \(X\) and any \(n\in\Nat\), the morphisms %
  \((X\ITens\Win{i_1}\ITens\cdots\ITens\Win{i_n}
  )_{\Vect i\in\Eset{0,1}^n}\) %
  are jointly epic.
\end{lemma}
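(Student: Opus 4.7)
The plan is to proceed by induction on $n$. The base case $n=0$ is trivial, as the family reduces to (an isomorph of) $\Id_X$, and the case $n=1$ is precisely the hypothesis~\ref{ax:depi}.

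For the inductive step, assume the statement for $n$ and write $\psi_{\vec i} = X \ITens \Win{i_1} \ITens \cdots \ITens \Win{i_n}$ for $\vec i \in \Eset{0,1}^n$. Set $Y = X \ITens \Tpower{\Dbimon}{n}$, so that $Y$ is again an object of $\cL$ and~\ref{ax:depi} applied to $Y$ tells us that $Y \ITens \Win{0}$ and $Y \ITens \Win{1}$ are jointly epic. The key observation, which follows from functoriality of $\ITens$ (up to canonical associator and unitor isomorphisms, which I treat silently), is the factorization
\begin{equation*}
  X \ITens \Win{i_1} \ITens \cdots \ITens \Win{i_n} \ITens \Win{j}
  = (Y \ITens \Win{j}) \Compl (\psi_{\vec i} \ITens \Sone)\,.
\end{equation*}

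Now let $f, g \in \cL(X \ITens \Tpower{\Dbimon}{n+1}, Z)$ be equalized by every morphism of the family indexed by $\Eset{0,1}^{n+1}$. For each $j \in \Eset{0,1}$, set $f_j = f \Compl (Y \ITens \Win{j})$ and $g_j = g \Compl (Y \ITens \Win{j})$, viewed as morphisms $Y \to Z$ via the right unitor $\Rightu_Y$. The displayed factorization, combined with the hypothesis, gives $f_j \Compl \psi_{\vec i} = g_j \Compl \psi_{\vec i}$ for every $\vec i \in \Eset{0,1}^n$. The inductive hypothesis, applied to the object $X$, then yields $f_j = g_j$ for each $j \in \Eset{0,1}$. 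Finally~\ref{ax:depi} applied to $Y$ yields $f = g$, completing the induction.

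The only subtle point in this argument --- really bookkeeping rather than genuine difficulty --- is the systematic handling of the canonical associator and unitor isomorphisms hidden in the factorization above and in the identification of $f_j, g_j$ with morphisms out of $Y$; these can be discharged uniformly by invoking Mac Lane's coherence theorem, or equivalently by strictifying the monoidal structure of $\cL$. No new categorical ingredient beyond~\ref{ax:depi} and the functoriality of $\ITens$ is required.
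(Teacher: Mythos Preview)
Your proof is correct and follows the same approach as the paper, which merely records ``Simple induction on $n$'' without further detail. Your careful factorization via $Y = X \ITens \Tpower{\Dbimon}{n}$ and the explicit handling of the unitor bookkeeping make the induction step precise, and the acknowledgment that Mac Lane coherence discharges the associator/unitor manipulations is appropriate.
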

\begin{proof}
  Simple induction on \(n\).
\end{proof}

\begin{definition} %
  \label{def:elementarily-summable-cat}
  We say that \(\cL\) is \emph{elementarily summable} if it is
  elementarily pre-summable, and the induced summability structure
  satisfies~\ref{ax:swit}.
\end{definition}

It is not particularly enlightening to unfold this definition and
express directly the condition~\ref{ax:swit} in terms of \(\Dbimon\),
see~\cite{Ehrhard23a}.

\begin{remark}
  Being an elementary summability category is a \emph{property} of a
  category \(\cL\), and not an additional structure (contrarily to the
  general notion of summability structure).
\end{remark}

\begin{Example}
  Remember that the cartesian product in \(\COH\) of a family
  \((E_i)_{i\in I}\) is %
  \(E=\Bwith_{i\in I}E_i\) defined by
  \(\Web E=\Union_{i\in I}\Web{E_i}\) with %
  \((i,a)\Coh E(i',a')\) if \(i=i'\Implies a\Coh{E_i}a'\) and
  projections %
  \(\Proj i=\Set{((i,a),a)\St i\in I\text{ and }a\in\Web{E_i}}
  \in\COH(E,E_i)\) as easily checked.
  Given \((t_i\in\COH(F,E_i))_{i\in I}\), the unique %
  \(\Tuple{t_i}_{i\in I}\COH(F,E)\) such that %
  \((\Proj j\Compl\Tuple{t_i}_{i\in I}=t_j)_{j\in I}\) is %
  \(\Tuple{t_i}_{i\in I}
  =\Set{(b,(i,a))\St i\in I\text{ and }(b,(i,a))\in t_i}\).

  The category \(\COH\) is elementarily summable.
  It has zero-morphisms (with %
  \(0_{E,F}=\emptyset\in\Cl{\Limpl EF}=\COH(E,F)\)).
  The object \(\Dbimon=\With\Sone\Sone\) can be described by %
  \(\Web\Dbimon=\Eset{0,1}\) with \(0\Coh\Dbimon1\) and we have %
  \(\Win i=\Eset{(\Sonelem,i)}\) and %
  \(\Wdiag=\Eset{(\Sonelem,0),(\Sonelem,1)}\).
  It is easy to check that this category is elementarily summable.
  The induced functor \(\Sfun:\COH\to\COH\) can be described directly
  as follows: \(\Web{\Sfun E}=\Eset{0,1}\times\Web E\) and %
  \((i,a)\Coh{\Sfun E}(i',a')\) if \(a\Coh Ea'\) and
  \(i\not=i'\Implies a\not=a'\).
  Therefore, up to a trivial order isomorphism, %
  \(\Cl{\Sfun E}
  =\Eset{(x_0,x_1)\in\Cl E^2\St x_0\cup x_1\in\Cl E
    \text{ and }x_0\cap x_1=\emptyset}\) (with the product order).
  Therefore two morphisms \(t_0,t_1\in\COH(E,F)\) are summable iff %
  \(t_0\cup t_1\in\COH(E,F)\) (which is not surprising) and
  \(t_0\cap t_1=\emptyset\) (which is more), and then their sum is
  \(t_0\cup t_1\).
  The corresponding witness is %
  \(\Stuple{t_0,t_1}=\Eset{(a,(i,b))
    \St i\in\Eset{0,1}\text{ and } (a,b)\in t_i}
  \in\COH(E,\Sfun F)\).
  Notice that the second definition of \(E_x\) in
  \Cref{sec:coh-space-stable-diff} is directly related to this notion
  of summability: \(E_x=\Eset{x'\St (x,x')\in\Sfun E}\).
\end{Example}

\begin{Example}
  \label{ex:pcoh-cart-smc-elem-sum}
  Another crucial example is that of \emph{probabilistic coherence
    space} (PCS) introduced in~\cite{Girard04a,DanosEhrhard08}.
  A PCS is a pair \(X=(\Web X,\Pcoh X)\) where \(\Web X\) is a set and
  \(\Pcoh X\subseteq\Realpto{\Web X}\) that we consider as a poset
  (whose order relation is the product order) and that we assume to
  satisfy%
  \begin{itemize}
  \item \(\Pcoh X\) is non-empty;
  \item for all \(a\in\Web X\), the set
    \(\Set{x_a\St x\in\Pcoh X}\subseteq\Realp\) is bounded and is not
    reduced to \(\Set 0\);
  \item \(\Pcoh X\) is down-closed (that is if \(x\in\Pcoh X\) and
    \(y\in\Realpto{\Web X}\) satisfy \(y\leq x\) then \(y\in\Pcoh X\))
    and closed under the lubs of monotone chains %
    (that is if \((x(n)\in\Pcoh X)_{n\in\Nat}\) is monotone for the
    pointwise order, then %
    \((\sup_{n\in\Nat} x(n)_a)_{a\in\Web X}\in\Pcoh X\));
  \item \(\Pcoh X\) is closed under barycentric combinations, that
    is, %
    if \(x,y\in\Pcoh X\) and \(p\in\Intercc 01\), then %
    \((1-p)x+py\in\Pcoh X\) (the algebraic operations being defined
    pointwise).
  \end{itemize}
  \begin{remark}
    \label{rk:pcs-bipolar}
    In the literature, the definition of PCS is most often
    based on a duality typical of \LL{}; the present definition is
    equivalent to the duality-based definition as shown
    in~\cite{Girard04a,Ehrhard22}, and is perhaps more intuitive.
  \end{remark}
  A morphism from \(X\) to \(Y\) is a matrix
  \(t\in\Realpto{\Web X\times\Web Y}\) such that for all
  \(x\in\Pcoh X\), one has \(\Matappa tx\in\Pcoh Y\) where
  \(\Matappa tx=(\sum_{a\in\Web X}t_{a,b}x_a)_{b\in\Web Y}\in\Pcoh
  X\).
  These morphisms are in bijective correspondence with the functions
  \(f:\Pcoh X\to\Pcoh Y\) which are monotone, commute with lubs of
  monotone sequences and satisfy \(f((1-p)x+py)=(1-p)f(x)+pf(y)\): %
  given such an \(f\) we define its matrix
  \(t\in\Realpto{\Web X\times\Web Y}\) as follows.
  Let \((a,b)\in\Web X\times\Web Y\), then there is an
  \(\epsilon\in\Realpnz\) such that \(\epsilon\Base a\in\Pcoh X\)
  (where \(\Base a\in\Realpto{\Web X}\) is defined by
  \((\Base a)_{a'}=\Kronecker a{a'}\)) and we set
  \(t_{a,b}=\Inv\epsilon f(\epsilon\Base a)_b\) which does not depend
  on the choice of \(\epsilon\).
  In that way we have defined a category \(\PCOH\) where composition
  is defined by the usual product of matrices: if \(s\in\PCOH(X,Y)\)
  and \(t\in\PCOH(Y,Z)\) then \(\Matapp ts\in\PCOH(X,Z)\) is defined
  by \((\Matapp ts)_{a,c}=\sum_{b\in\Web Y}s_{a,b}t_{b,c}\).
  The identity morphism \(\Id\in\PCOH(X,X)\) is the diagonal matrix,
  \(\Id_{a,a'}=\Kronecker a{a'}\).

  The category \(\PCOH\) is an SMC.
  The tensor unit is \(\Sone=(\Set\Sonelem,\Intercc 01)\) (upon
  identifying \(\Realpto{\Set\Sonelem}\) with \(\Realp\)).
  Given \((x(i)\in\Pcoh{X_i})_{i=1,2}\), we set %
  \(\Tens{x(1)}{x(2)}
  =(x(1)_{a_1}x(2)_{a_2})_{(a_1,a_2)\in\Web{X_1}\times\Web{X_2}}
  \in\Realpto{\Web{X_1}\times\Web{X_2}}\).
  We can define \(\Tens{X_1}{X_2}\) as
  \((\Web{X_1}\times\Web{X_2},P)\) where \(P\) is the least subset of
  \(\Realpto{\Web{X_1}\times\Web{X_2}}\) which contains all the %
  \(\Tens{x(1)}{x(2)}\) for \((x(i)\in\Pcoh{X_i})_{i=1,2}\) and
  satisfies the two closure properties in the definition of PCS (the
  two last conditions).
  To describe more explicitly this operation, it is convenient to
  introduce the PCS \(\Limpl XY=(\Web X\times\Web Y,Q)\) where
  \(Q=\PCOH(X,Y)\).
  Then \(\Limpl X\Sone\) is (trivially) isomorphic to the PCS
  \(\Orth X\) where \(\Web{\Orth X}=\Web X\) and
  \(x'\in\Realpto{\Web X}\) belongs to \(\Pcoh{\Orth X}\) if, for all
  \(x\in\Pcoh X\), one has
  \(\Eval x{x'}=\sum_{a\in\Web X}x_ax'_a\leq 1\).
  It is then possible to prove that \(\Biorth X=X\) (a kind of
  ``bipolar theorem'', see \Cref{rk:pcs-bipolar}).
  Then one has \(\Tens{X_1}{X_2}=\Orth{\Limplp{X_1}{\Orth{X_2}}}\)
  and, based on this property, that equipped with \(\Sone\) and
  \(\ITens\), the category \(\PCOH\) is an SMC.
  This SMC is closed with \((\Limpl XY,\Evlin)\) as internal hom from
  \(X\) to \(Y\), with
  \(\Evlin_{((a,b),a'),b'}=\Kronecker a{a'}\Kronecker b{b'}
  \in\cL(\Tens{\Limplp XY}{X},Y)\) as evaluation morphism, which
  satisfies of course \(\Matappa\Evlin{\Tensp tx}=\Matappa tx\) for
  \(t\in\PCOH(X,Y)\) and \(x\in\Pcoh X\).
  This SMCC is even \Staraut{} with \(\Sbot=\Sone\) as dualizing
  object (this essentially boils down to the fact that
  \(\Biorth X=X\)).
  
  The category \(\PCOH\) has all products: given a family
  \((X_i)_{i\in I}\) of PCS, let \(X\) be defined by
  \(\Web X=\Union_{i\in I}\Set i\times\Web{X_i}\) and %
  \(x\in\Realpto{\Web X}\) belongs to \(\Pcoh X\) if %
  \(((x_{i,a})_{a\in\Web{X_i}}\in\Pcoh{X_i})_{i\in I}\).
  Equipped with %
  \((\Proj i\in\PCOH(X,X_i))_{i\in I}\) %
  defined by \((\Proj i)_{(j,a),a'}=\Kronecker ji\Kronecker a{a'}\), %
  \(X\) is easily seen to be the cartesian product of the \(X_i\)'s,
  and we set \(\Bwith_{i\in I}X_i=X\).
  Notice that \(\Pcohp{\Bwith_{i\in I}X_i}\) is isomorphic (for the
  order and for the barycentric structures) to
  \(\prod_{i\in I}\Pcoh{X_i}\).
  Given \((t_i\in\PCOH(Y,X_i))_{i\in I}\), the unique %
  \(\Tuple{t_i}_{i\in I}\in\PCOH(Y,\Bwith_{i\in I}X_i)\) such that %
  \((\Proj j\Compl\Tuple{t_i}_{i\in I}=t_j)_{j\in I}\) is given by
  \((\Tuple{t_i}_{i\in I})_{b,(j,a)}=(t_j)_{b,a}\) for all \(j\in I\),
  \(b\in\Web Y\) and \(a\in\Web{X_j}\).

  The category \(\PCOH\) has zero morphisms (namely the \(0\) matrix
  which obviously belongs to all homsets \(\PCOH(X,Y)\)).
  The object \(\Dbimon=\With\Sone\Sone\) is described by
  \(\Web\Dbimon=\Set{0,1}\) and \(\Pcoh\Dbimon=\Intercc01^2\) (upon
  identifying \(\Realpto{\Web\Dbimon}\) with \(\Realpto2\)).
  Then \((\Win 0,\Win 1,\Wdiag\in\PCOH(\Sone,\Dbimon))\) are given %
  by  \(\Matappa{\Win 0}u=(u,0)\), \(\Matappa{\Win 1}u=(0,u)\) and %
  \(\Matappa\Wdiag u=(u,u)\), for all \(u\in\Intercc01\).
  Given a PCS \(X\), the PCS \(\Sfun X=\Limplp\Dbimon X\) is given by
  \(\Web{\Sfun X}=\Set{0,1}\times\Web X\) and an element of
  \(\Pcohp{\Sfun X}\) is a \(t\in\Realpto{\Set{0,1}\times\Web X}\)
  such that \(\Matappa t{(\Base 0+\Base 1)}\in\Pcoh X\) where
  \(\Base0+\Base1\in\Pcoh\Dbimon\) corresponds to the pair
  \((1,1)\in\Intercc01^2\).
  In other words
  \begin{align}
    \label{eq:ssum-pcoh-isom}
    \Pcohp{\Sfun X}\Isom\Set{(x(0),x(1))\in\Pcoh X^2\St x(0)+x(1)\in\Pcoh X}
  \end{align}
  for the poset and barycentric structures.
  With this identification, the morphisms %
  \(\Sproj0,\Sproj1,\Ssum\in\PCOH(\Sfun X,X)\) are characterized by %
  \(\Matappa{\Sproj i}{(x(0),x(1))}=x(i)\) and %
  \(\Matappa{\Ssum}{(x(0),x(1)}=x(0)+x(1)\).
  It results easily from these observations that two morphisms
  \(t(0),t(1)\in\PCOH(Y,X)\) are summable iff %
  \(\forall y\in\Pcoh Y\ \Matappa{t(0)}y+\Matappa{t(1)}y\) and the
  corresponding witness \(\Stuple{t(0),t(1)}\in\PCOH(Y,\Sfun X)\) is
  given as a matrix by \(\Stuple{t(0),t(1)}_{b,(i,a)}=t(i)_{b,a}\) and
  characterized by %
  \(\Matappa{\Stuple{t(0),t(1)}}y=(\Matappa{t(0)}y,\Matappa{t(1)}y)\) %
  if we consider \Cref{eq:ssum-pcoh-isom} as an equality.
  From this characterization of summability and witnesses, if follows
  easily that \(\PCOH\) is an elementarily summable category.
\end{Example}

\begin{remark}
  If \(\cL\) is additive, that is, enriched in commutative monoids,
  then it is well-known that finite cartesian products are also
  coproducts.
  It follows that, in that case, \(\Dbimon=\Plus\Sone\Sone\) and hence
  \(\Sfun X=\Limplp\Dbimon X\) is canonically isomorphic to
  \(\With XX\) and the summability structure is trivial: all pairs of
  morphisms \(f_0,f_1\in\cL(X,Y)\) are summable with witness
  \(\Tuple{f_0,f_1}\in\cL(X,\With YY)\) and sum \(f_0+f_1\) (the
  addition provided by the enrichment).
\end{remark}

\begin{theorem}
  Any elementarily summable category satisfies~\ref{ax:swith}
  and~\ref{ax:stens}.
\end{theorem}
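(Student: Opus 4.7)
The plan is to treat each axiom separately, exploiting the elementary description $\Sfun = \Limpl{\Dbimon}{-}$ with $\Dbimon = \With{\Sone}{\Sone}$.

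For \ref{ax:swith}, the key observation is that the functor $\Sfun$ is right adjoint to the functor $\Tens{-}{\Dbimon}$: currying along the internal hom at $\Dbimon$ provides a bijection $\cL(\Tens{Y}{\Dbimon}, X) \Isom \cL(Y, \Sfun X)$ which is natural in both arguments. Since right adjoints preserve limits, $\Sfun$ preserves finite products: given the cartesian product $(\Bwith_{i \in I} X_i, (\Proj i)_{i \in I})$, the cone $(\Sfun(\Bwith_{i \in I} X_i), (\Sfun \Proj i)_{i \in I})$ is a cartesian product of $(\Sfun X_i)_{i \in I}$. Consequently the canonical mediating morphism $\Tuple{\Sfun \Proj i}_{i \in I}$, whose $j$-th component is $\Sfun \Proj j$ by definition, is the unique iso identifying these two products of the same family, which is precisely \ref{ax:swith}.

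For \ref{ax:stens}, suppose $f_0, f_1 \in \cL(X, Y)$ are summable with witness $\Stuplet{f_0, f_1} \in \cL(\Tens{X}{\Dbimon}, Y)$ satisfying $f_i = \Stuplet{f_0, f_1} \Compl (\Tens{X}{\Win i}) \Compl \Inv{\Rightu_X}$, and let $g \in \cL(U, V)$. I propose as candidate witness
\[
w = (\Tens{\Stuplet{f_0, f_1}}{g}) \Compl \phi \in \cL(\Tens{(\Tens{X}{U})}{\Dbimon}, \Tens{Y}{V}),
\]
where $\phi \colon \Tens{(\Tens{X}{U})}{\Dbimon} \to \Tens{(\Tens{X}{\Dbimon})}{U}$ is the canonical isomorphism built from associators and the symmetry $\Sym_{U, \Dbimon}$. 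Precomposing $w$ with $(\Tens{(\Tens{X}{U})}{\Win i}) \Compl \Inv{\Rightu_{\Tens{X}{U}}}$, I use naturality of $\phi$ to push $\Win i$ across and MacLane coherence to collapse the remaining canonical isos, obtaining $\Tens{(\Stuplet{f_0, f_1} \Compl (\Tens{X}{\Win i}) \Compl \Inv{\Rightu_X})}{g} = \Tens{f_i}{g}$. Hence $w$ witnesses the summability of $\Tens{f_0}{g}$ and $\Tens{f_1}{g}$, and the same computation with $\Wdiag$ in place of $\Win i$ (recalling $\Ssum = \Limplp{\Wdiag}{X}$) yields $\Tens{(f_0 + f_1)}{g} = \Tens{f_0}{g} + \Tens{f_1}{g}$.

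The proof is conceptually immediate once one observes that $\Sfun$ is a right adjoint; the only real work is the MacLane-style coherence bookkeeping in \ref{ax:stens}, which is entirely routine and relies on nothing beyond naturality of $\Win i$, $\Wdiag$, the associator, and the symmetry. I anticipate no substantive obstacle beyond this diagram chase.
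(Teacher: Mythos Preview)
Your proof is correct and follows essentially the same approach as the paper's sketch: both use that $\Sfun=\Limpl\Dbimon{-}$ is right adjoint to $\Tens{-}{\Dbimon}$ to get~\ref{ax:swith}, and both build the witness for $\Tens{f_0}{g},\Tens{f_1}{g}$ as $(\Tens{\Stuplet{f_0,f_1}}{g})$ precomposed with the canonical iso moving $\Dbimon$ past $U$.
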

\begin{proof}[Proof sketch]
  The first property results from the fact that \(\Sfun\) is the right
  adjoint to the functor \(\Tens\_\Dbimon\) and as such preserves all
  existing limits.
  Next take \((f_i\in\cL(X,Y))_{i\in\Eset{0,1}}\) be summable so that %
  \(\Stuplet{f_0,f_1}\in\cL(\Tens{X}{\Dbimon},Y)\), and let %
  \(g\in\cL(U,V)\).
  We have %
  \((\Tens{\Stuplet{f_0,f_1}}{g})\Compl\Tensp X{\Sym_{U,\Dbimon}}
  \in\cL(X\ITens U\ITens\Dbimon,\Tens YV)\).
  It is easily checked that %
  \((\Tens{\Stuplet{f_0,f_1}}{g})\Compl\Tensp X{\Sym_{U,\Dbimon}}
  =\Stuplet{\Tens{f_0}g,\Tens{f_1}g}\).
\end{proof}

We know by \Cref{th:sfun-linear-monad} that \(\Sfun\) has a canonical
(bi)monad structure.
In the elementary situation we can describe more directly this
structure by means of a (bi)monoid structure on \(\Dbimon\).
We describe first the comonoid structure.

\begin{proposition}
  \label{prop:dbimon-comonoid}
  There is a unique \(\Dbimonm\in\cL(\Dbimon,\Tens\Dbimon\Dbimon)\)
  such that \(\Dbimonm\Compl\Win0=\Tens{\Win0}{\Win0}\) and
  \(\Dbimonm\Compl\Win1=\Tens{\Win1}{\Win0}+\Tens{\Win0}{\Win1}\) %
  (keeping the iso %
  \(\Leftu_1=\Rightu_1\in\cL(\Tens{\Sone}{\Sone},\Sone)\) implicit).
  The triple \((\Dbimon,\Proj0,\Dbimonm)\) is a commutative comonoid.
\end{proposition}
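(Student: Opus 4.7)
The strategy has three stages: uniqueness, existence of $\Dbimonm$, then verification of the comonoid axioms.

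Uniqueness is immediate: by \ref{ax:depi} (applied with $X = \Sone$), the morphisms $\Win 0, \Win 1 \in \cL(\Sone, \Dbimon)$ are jointly epic, so $\Dbimonm$ is fully determined by its composites with them.

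For existence, I would first check that $(\Win 0, \Win 1)$ is a summable pair in $\cL(\Sone, \Dbimon)$: a witness $h \in \cL(\Sone, \Sfun\Dbimon) = \cL(\Sone, \Limpl\Dbimon\Dbimon)$ is obtained by currying the unitor $\Leftu_\Dbimon$, and the equations $\Sproj i \Compl h = \Win i$ then follow from naturality. By Lemma \ref{lemma:sum-tens-dist-gen}, the family $(\Tens{\Win i}{\Win j})_{(i,j) \in \Eset{0,1}^2}$ is summable in $\cL(\Tens\Sone\Sone, \Tens\Dbimon\Dbimon)$; invoking Lemma \ref{th:part-monoid-summable-subfam} and the regrouping of Theorem \ref{th:part-monoid-summable-fam}, the two morphisms
\[
  g_0 = \Tens{\Win 0}{\Win 0}, \qquad
  g_1 = \Tens{\Win 1}{\Win 0} + \Tens{\Win 0}{\Win 1}
\]
form a summable pair (viewed as morphisms $\Sone \to \Tens\Dbimon\Dbimon$ via the implicit iso $\Leftu_\Sone = \Rightu_\Sone$). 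The crux is then an auxiliary observation: for every object $Z$ and every summable pair $(f_0, f_1) \in \cL(\Sone, Z)^2$, there is a unique $f \in \cL(\Dbimon, Z)$ with $f \Compl \Win i = f_i$. Existence comes from uncurrying the witness $\Stuple{f_0, f_1} \in \cL(\Sone, \Limpl\Dbimon Z)$ into $\Stuplet{f_0, f_1} \in \cL(\Tens\Sone\Dbimon, Z)$ and setting $f = \Stuplet{f_0, f_1} \Compl \Inv{\Leftu_\Dbimon}$; the equations $f \Compl \Win i = f_i$ follow from naturality of $\Leftu$ and the defining property of $\Stuplet{\_}{\_}$. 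Applying this to $(g_0, g_1)$ produces the required $\Dbimonm$.

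For the comonoid axioms, I would check each equation after precomposing with $\Win 0$ and $\Win 1$ (and, for coassociativity, with threefold tensors $\Tens{\Win{i_1}}{\Tens{\Win{i_2}}{\Win{i_3}}}$, which are jointly epic by Lemma \ref{lemma:tens-win-epic}). Cocommutativity on $\Win 0$ is trivial (since $\Tens{\Win 0}{\Win 0}$ is fixed by $\Sym$), and on $\Win 1$ reduces to \ref{ax:scom} applied to the pair $\Tens{\Win 0}{\Win 1}, \Tens{\Win 1}{\Win 0}$. Counitality with counit $\Proj 0$ follows from $\Proj 0 \Compl \Win 0 = \Id_\Sone$, $\Proj 0 \Compl \Win 1 = 0$, and \ref{ax:szero}. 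The most delicate verification is coassociativity: using \ref{ax:stens} and Theorem \ref{th:gen-sum-composition}, both $(\Tens\Dbimonm\Dbimon) \Compl \Dbimonm$ and $(\Tens\Dbimon\Dbimonm) \Compl \Dbimonm$ expand, after precomposition with $\Win 1$, into the same three-term sum $\Tens{\Win 1}{\Tens{\Win 0}{\Win 0}} + \Tens{\Win 0}{\Tens{\Win 1}{\Win 0}} + \Tens{\Win 0}{\Tens{\Win 0}{\Win 1}}$ (and into $\Tens{\Win 0}{\Tens{\Win 0}{\Win 0}}$ after precomposition with $\Win 0$), modulo the associator.

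The main obstacle is the chain of summability bookkeeping: at several points we need that tensoring a summable family with another morphism is again summable, that subfamilies and regrouped sums remain summable, and that composition distributes over these partial sums. All of this is provided by \ref{ax:stens}, Lemma \ref{lemma:sum-tens-dist-gen}, Theorem \ref{th:part-monoid-summable-fam}, and Theorem \ref{th:gen-sum-composition}, but invoking them in the right order is the only non-trivial aspect of an otherwise direct calculation.
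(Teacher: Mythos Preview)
Your argument is correct and follows essentially the same route as the paper: show $(\Win0,\Win1)$ summable, tensor to get the four $\Tens{\Win i}{\Win j}$ summable, extract the pair $(g_0,g_1)$, and build $\Dbimonm$ from the uncurried witness precomposed with unitors (the paper observes directly that $\Stuplet{\Win0,\Win1}=\Id_\Dbimon$, which is your currying argument in disguise). One small slip: in the coassociativity check your parenthetical about threefold tensors $\Win{i_1}\ITens\Win{i_2}\ITens\Win{i_3}$ being jointly epic is misplaced---coassociativity is an equation between morphisms \emph{out of} $\Dbimon$, so only the joint epicity of $\Win0,\Win1$ is needed, exactly as you in fact use in the subsequent sentence.
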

\begin{proof}
  Clearly \(\Win0,\Win1\in\cL(\Sone,\Dbimon)\) are summable %
  (with \(\Stuplet{\Win0,\Win1}=\Id_{\Dbimon}\) and %
  \(\Win0+\Win1=\Wdiag\)).
  So by \Cref{lemma:sum-tens-dist-gen} the morphisms %
  \((\Tens{\Win i}{\Win j}
  \in\cL(\Tens\Sone\Sone,\Tens\Dbimon\Dbimon))_{(i,j)\in\Eset{0,1}^2}\) %
  are summable.
  By \Cref{th:part-monoid-summable-subfam}, the morphisms %
  \(\Tens{\Win0}{\Win0}\), \(\Tens{\Win1}{\Win0}\) %
  and \(\Tens{\Win0}{\Win1}\) are summable, and hence the morphisms %
  \(\Tens{\Win0}{\Win0}\) and %
  \(\Tens{\Win1}{\Win0}+\Tens{\Win0}{\Win1}\) are summable.
  This gives us a witness %
  \(\Stuplet{\Tens{\Win0}{\Win0},
    \Tens{\Win1}{\Win0}+\Tens{\Win0}{\Win1}}
  \in\cL(\Tens{\Tens\Sone\Sone}\Dbimon,\Tens\Dbimon\Dbimon)\)
  and we set %
  \[
    \Dbimonm =\Stuplet{\Tens{\Win0}{\Win0},
      \Tens{\Win1}{\Win0}+\Tens{\Win0}{\Win1}}\Compl\Leftu\Compl\Leftu
    \in\cL(\Dbimon,\Tens\Dbimon\Dbimon)
  \]
  which obviously satisfies the announced equations.
  Uniqueness results from the joint epicity of \(\Win0\) and
  \(\Win1\).
  The fact that we define in that way a commutative comonoid is easily
  checked, again by joint epicity of \(\Win0\).
  For instance (keeping the associator \(\Assoc\) implicit), the
  morphism %
  \(\Dbimonm^{(3)} =\Tensp\Dbimonm\Dbimon\Compl\Dbimonm
  =\Tensp\Dbimon\Dbimonm\Compl\Dbimonm
  \in\cL(\Dbimon,\Tens\Dbimon{\Tens\Dbimon\Dbimon})\) %
  is characterized by %
  \(\Dbimonm^{(3)}\Compl\Win0=\Tens{\Win0}{\Tens{\Win0}{\Win0}}\) %
  and
  \(\Dbimonm^{(3)}\Compl\Win1
  =\Tens{\Win1}{\Tens{\Win0}{\Win0}}
  +\Tens{\Win0}{\Tens{\Win1}{\Win0}}
  +\Tens{\Win0}{\Tens{\Win0}{\Win1}}\).
  The commutations involving the counit \(\Proj0\) result from the
  fact that %
  \(\Proj i\Compl\Win j=\Kronecker ij\Id_\Sone\).
\end{proof}

\begin{theorem}
  There is a unique \(\Dbimonl\in\cL(\Tens\Dbimon\Dbimon,\Dbimon)\)
  such that %
  \(\Dbimonl\Compl\Tensp{\Win i}{\Win j}
  =\Kronecker ij\Win i\) %
  (keeping the iso %
  \(\Leftu_1=\Rightu_1\in\cL(\Tens{\Sone}{\Sone},\Sone)\) implicit).
  The triple \((\Dbimon,\Wdiag,\Dbimonl)\) is a commutative monoid,
  and \((\Dbimon,\Wdiag,\Dbimonl,\Proj0,\Dbimonm)\) is a bicommutative
  bimonoid.
\end{theorem}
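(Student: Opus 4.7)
The plan is to construct $\Dbimonl$ explicitly using the cartesian structure $\Dbimon = \With\Sone\Sone$. Its projections $\Proj 0, \Proj 1 \in \cL(\Dbimon, \Sone)$ satisfy $\Proj k \Compl \Win i = \Kronecker{k}{i}\Id_\Sone$ and $\Proj k \Compl \Wdiag = \Id_\Sone$. I set
\[
  \Dbimonl = \Tuple{\Tensp{\Proj 0}{\Proj 0}, \Tensp{\Proj 1}{\Proj 1}} \in \cL(\Tens\Dbimon\Dbimon, \Dbimon),
\]
keeping the unitor $\Tens\Sone\Sone \cong \Sone$ implicit. A direct computation gives $\Proj k \Compl \Dbimonl \Compl \Tensp{\Win i}{\Win j} = \Tensp{\Proj k \Compl \Win i}{\Proj k \Compl \Win j} = \Kronecker{k}{i}\Kronecker{k}{j}\Id_\Sone = \Kronecker{i}{j}\Kronecker{k}{i}\Id_\Sone$, which equals $\Proj k \Compl (\Kronecker{i}{j}\Win i)$, so joint monicity of $\Proj 0, \Proj 1$ yields the defining equation. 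Uniqueness follows from \Cref{lemma:tens-win-epic} applied with $X = \Sone$ and $n = 2$: composing with the unitor iso shows that the family $(\Tensp{\Win i}{\Win j})_{(i,j)\in\{0,1\}^2}$ is jointly epic from $\Tens\Sone\Sone$ to $\Tens\Dbimon\Dbimon$.

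The commutative monoid axioms are then verified by reducing to these generators. Commutativity $\Dbimonl \Compl \Sym = \Dbimonl$ is immediate from $\Kronecker{j}{i} = \Kronecker{i}{j}$. For the left unit axiom $\Dbimonl \Compl \Tensp\Wdiag\Dbimon = \Leftu_\Dbimon$, I first note that $\Wdiag = \Win 0 + \Win 1$ (by componentwise computation in $\With\Sone\Sone$ using \ref{ax:szero}), then invoke \Cref{lemma:sum-tens-dist-gen} to expand $\Tens\Wdiag{\Win j} = \Tens{\Win 0}{\Win j} + \Tens{\Win 1}{\Win j}$, and apply the defining equation of $\Dbimonl$ to get $\Dbimonl \Compl \Tens\Wdiag{\Win j} = \Win j$; joint epicity of $(\Sone \ITens \Win j)_j$ (another instance of \Cref{lemma:tens-win-epic}) concludes. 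Associativity is handled analogously: by iterating \Cref{lemma:tens-win-epic} the triples $(\Win{i_1} \ITens \Win{i_2} \ITens \Win{i_3})_{\Vect i \in \{0,1\}^3}$ are jointly epic, and both sides of the associativity square evaluate on such a generator to $\Kronecker{i_1}{i_2}\Kronecker{i_2}{i_3}\Win{i_1}$.

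The bimonoid compatibilities use the same strategy together with the defining equations of $\Dbimonm$ from \Cref{prop:dbimon-comonoid}. The counit-multiplication condition $\Proj 0 \Compl \Dbimonl = \Tensp{\Proj 0}{\Proj 0}$ is immediate from the construction. The comultiplication-unit condition $\Dbimonm \Compl \Wdiag = \Tens\Wdiag\Wdiag$ is verified by expanding $\Wdiag = \Win 0 + \Win 1$, invoking the equations for $\Dbimonm \Compl \Win i$, and using \Cref{lemma:sum-tens-dist-gen}. The main bialgebra law $\Dbimonm \Compl \Dbimonl = (\Dbimonl \ITens \Dbimonl) \Compl \sigma \Compl (\Dbimonm \ITens \Dbimonm)$ (with $\sigma$ the middle-four interchange) is checked by pre-composing with $\Tensp{\Win i}{\Win j}$: if $i \neq j$ both sides collapse to $0$ by the defining equation of $\Dbimonl$, while if $i = j$ the left-hand side is $\Dbimonm \Compl \Win i$ and the right-hand side unfolds via the comonoid equations and distributivity of tensor over summable families (\Cref{lemma:sum-tens-dist-gen} and \Cref{th:part-monoid-summable-fam}) to the same expression.

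The main obstacle is bookkeeping rather than conceptual depth: every axiom reduces via joint epicity to a small case analysis on the generators $\Win i$, but the redistributions of sums in the bialgebra law require careful tracking of summability of intermediate families (to legitimately apply \Cref{lemma:sum-tens-dist-gen} and \Cref{th:part-monoid-summable-fam}), and the coherence isomorphisms $\Leftu, \Rightu, \Assoc, \Sym$ must be threaded consistently throughout.
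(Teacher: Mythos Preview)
Your proposal is correct and follows essentially the same approach as the paper: the explicit construction of \(\Dbimonl\) via the cartesian pairing \(\Tuple{\Tensp{\Proj0}{\Proj0},\Tensp{\Proj1}{\Proj1}}\), uniqueness via \Cref{lemma:tens-win-epic}, and verification of the bimonoid axioms by testing against the jointly epic family \((\Tensp{\Win i}{\Win j})_{i,j}\). The only minor stylistic differences are that the paper sometimes argues via the projections \(\Proj k\) rather than the generators \(\Win i\) (e.g.\ for associativity it exhibits \(\Dbimonl^{(3)}=\Tuple{\Proj0\ITens\Proj0\ITens\Proj0,\Proj1\ITens\Proj1\ITens\Proj1}\) directly, and for the unit it uses \(\Proj i\Compl\Wdiag=\Id_\Sone\)), whereas you consistently reduce to generators; both routes are equivalent and equally short.
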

\begin{proof}
  Remember that \(\Dbimon=\With\Sone\Sone\), so we can set %
  \(\Dbimonl
  =\Tuple{\Leftu_\Sone\Compl\Tensp{\Proj0}{\Proj0},
    \Leftu_\Sone\Compl\Tensp{\Proj1}{\Proj1}}\)
  which obviously satisfies the announced property, which characterizes
  \(\Dbimonl\) uniquely by \Cref{lemma:tens-win-epic}.
  The fact that we define in that way a commutative monoid is easy to
  check.
  For instance (keeping implicit the associator \(\Assoc\) and the
  \(\Leftu\) isos) we have
  \(\Dbimonl^{(3)} =\Dbimonl\Compl\Tensp\Dbimonl\Dbimon
  =\Dbimonl\Compl\Tensp\Dbimon\Dbimonl
  =\Tuple{\Proj0\ITens\Proj0\ITens\Proj0,\Proj1\ITens\Proj1\ITens\Proj1}
  \in\cL(\Dbimon\ITens\Dbimon\ITens\Dbimon,\Dbimon)\).
  The commutations of the diagram involving the unit \(\Wdiag\) result
  from the fact that \(\Proj i\Compl\Wdiag=\Id_\Sone\) for \(i=0,1\).
  The last statement means that the following diagrams commute
  \begin{equation*}
    \begin{tikzcd}
      \Sone\ar[rr,"\Id"]\ar[rd,swap,"\Wdiag"]
      &[-2em]&[-2em]
      \Sone\\
      &
      \Dbimon
      \ar[ur,swap,"\Proj0"]
      &
    \end{tikzcd}
    \Treesep\Treesep
    \begin{tikzcd}
      \Tens\Dbimon\Dbimon
      \ar[d,swap,"\Dbimonl"]
      \ar[r,"\Tens{\Dbimonm}{\Dbimonm}"]
      &
      \Tens{\Tens\Dbimon\Dbimon}{\Tens\Dbimon\Dbimon}
      \ar[dd,"\Sym_{2,3}"]
      \\
      \Dbimon
      \ar[d,swap,"\Dbimonm"]
      &\\
      \Tens\Dbimon\Dbimon
      &
      \Tens{\Tens\Dbimon\Dbimon}{\Tens\Dbimon\Dbimon}
      \ar[l,swap,"\Tens\Dbimonl\Dbimonl"]
    \end{tikzcd}
  \end{equation*}
  where \(\Sym_{2,3}\) is the McLane iso which exchanges the 2 central
  factors of the quaternary tensor product.
  The first diagram obviously commutes.
  For the second we use \Cref{lemma:tens-win-epic}.
  We have
  \begin{align*}
    \Dbimonm\Compl\Dbimonl\Compl\Tensp{\Win i}{\Win j}
    &=\Dbimonm
      \Compl\Kronecker ij\Win i\\
    &=\Kronecker ij\sum_{\Biind{(l,r)\in\Eset{0,1}^2}{l+r=i}}
      \Tens{\Win l}{\Win r}
  \end{align*}
  and
  \begin{align*}
    \Tensp\Dbimonl\Dbimonl
    \Compl\Sym_{2,3}
    \Compl\Tensp\Dbimonm\Dbimonm
    \Compl\Tensp{\Win i}{\Win j}
    &=\Tensp\Dbimonl\Dbimonl
      \Compl\Sym_{2,3}
      \Compl\Tensp
      {(\sum_{\Biind{(l,r)\in\Eset{0,1}^2}{l+r=i}}
      \Tens{\Win l}{\Win r})}
      {(\sum_{\Biind{(l',r')\in\Eset{0,1}^2}{l'+r'=j}}
      \Tens{\Win l}{\Win r})}\\
    &=\Tensp\Dbimonl\Dbimonl
      \Compl\Sym_{2,3}
      \Compl
      \sum_{\Biind{(l,r,l',r')\in\Eset{0,1}^4}{l+r=i,\ l'+r'=j}}
      \Win l\ITens\Win r\ITens\Win{l'}\ITens\Win{r'}\\
    &=\Tensp\Dbimonl\Dbimonl
      \Compl
      \sum_{\Biind{(l,r,l',r')\in\Eset{0,1}^4}{l+r=i,\ l'+r'=j}}
      \Win l\ITens\Win{l'}\ITens\Win{r}\ITens\Win{r'}\\
    &=\sum_{\Biind{(l,r,l',r')\in\Eset{0,1}^4}{l+r=i,\ l'+r'=j}}
      \Kronecker l{l'}\Kronecker r{r'}
      \Win l\ITens\Win{r}\\
    &=\Kronecker ij\sum_{\Biind{(l,r)\in\Eset{0,1}^2}{l+r=i}}
      \Tens{\Win l}{\Win r}
      \qedhere
  \end{align*}
\end{proof}
Then it can be checked that the bimonad structure of \(\Sfun\) is
induced by this bimonoid structure of \(\Dbimon\): %
the unit of the monad is %
\(\Sin0=\Limplp{\Proj0}X\in\cL(\Sfun X,X)\) (identifying \(X\) and
\(\Limpl\Sone X\)) and its multiplication is %
\(\Sfunmonm=\Limplp\Dbimonm X\in\cL(\Sfun X,\Sfun^2 X)\) %
(identifying \(\Limpl{\Tens\Dbimon\Dbimon}X\) and %
\(\Sfun^2X=\Limplp{\Dbimon}{\Limplp\Dbimon X}\) which are canonically
isomorphic).
The comonad structure of \(\Sfun\) can be described similarly using
the unit \(\Wdiag\) and the product \(\Dbimonl\) of the monoid
structure of \(\Dbimon\).
The distributive law of the bimonad, which is the flip isomorphism %
\(\Sflip_X\in\cL(\Sfun^2X,\Sfun^2X)\), is obtained similarly from the %
braiding of the SMC structure of \(\cL\): %
\(\Sflip_X=\Limplp{\Sym_{\Dbimon,\Dbimon}}{X}\) %
(leaving again implicit the canonical iso between
\(\Limpl{\Tens\Dbimon\Dbimon}X\) and \(\Sfun^2X\)).

\begin{remark}
  The general categorical concept of \emph{mate} provides a systematic
  understanding of this correspondence.
  For instance, the commutative comonoid structure of \(\Dbimon\)
  allows to define very easily a comonad \(\Sfun^{\mathord{\ITens}}\) on
  \(\cL\) such that \(\Sfun^{\mathord{\ITens}} X=\Tens X\Dbimon\).
  Then the monad \((\Sfun,\Sin0,\Sfunmonm)\) is the mate of the
  comonad \(\Sfun^{\mathord{\ITens}}\) through the adjunction %
  \(\Tens\_\Dbimon\Adj\Limpl\Dbimon\_\).
  This point of view is developed in~\cite{EhrhardWalch23}.
\end{remark}

\begin{Example}
  In \(\COH\), the bimonoid structure of \(\Dbimon\) can be described
  as follows.
  \begin{itemize}
  \item Counit %
    \(\Proj0=\{(0,\Sonelem)\}\in\COH(\Dbimon,\Sone)\);
  \item Comultiplication %
    \(\Dbimonm=\Set{(0,(0,0)),(1,(1,0)),(1,(0,1))}
    \in\COH(\Dbimon,\Tens\Dbimon\Dbimon)\);
  \item Unit %
    \(\Wdiag=\Set{(\Sonelem,0),(\Sonelem,1)}\in\COH(\Sone,\Dbimon)\);
  \item Multiplication %
    \(\Dbimonl=\Set{((0,0),0),((1,1),1))}
    \in\COH(\Tens\Dbimon\Dbimon,\Dbimon)\).
  \end{itemize}
\end{Example}

\begin{Example}
  Let us describe the bimonoid structure of \(\Dbimon\) in \(\PCOH\).
  An element \(u\) of \(\Pcoh{\Dbimon}\) can be written uniquely %
  \(u=u_0\Base0+u_1\Base1\) where \(u_0,u_1\in\Intercc01\), so that %
  \(\Pcoh\Dbimon\Isom\Intercc01^2\) as already mentioned.
  \begin{itemize}
  \item The unit \(\Proj0\in\PCOH(\Dbimon,\Sone)\) is characterized by %
    \(\Matappa{\Proj0}{u}=u_0\);
  \item the comultiplication
    \(\Dbimonm\in\PCOH(\Dbimon,\Tens\Dbimon\Dbimon)\) %
    is characterized by %
    \(\Matappa\Dbimonm
    u=u_0\Base{(0,0)}+u_1(\Base{(1,0)}+\Base{(0,1)})\);
  \item the unit \(\Wdiag\in\PCOH(\Sone,\Dbimon)\) is characterized by %
    \(\Matappa{\Wdiag}p=p(\Base0+\Base1)\);
  \item the multiplication
    \(\Dbimonl\in\PCOH(\Tens\Dbimon\Dbimon,\Dbimon)\) %
    is characterized by %
    \(\Matappa\Dbimonl{\Tensp uv}=u_0v_0\Base0+u_1v_1\Base1\).
  \end{itemize}

\end{Example}

\section{The differential structure}
\label{sec:diff-structure}

The following categorical concept is a basic infrastructure which is
pervasive in the abstract description of denotational models of \LL.
\begin{definition} %
  \label{def:resource-category}
  A \emph{resource category} is a category \(\cL\) such that
  \begin{itemize}
  \item \(\cL\) is an SMC with zero-morphisms;
  \item \(\cL\) is cartesian;
  \item \(\cL\) is equipped with a \emph{resource modality}, that is a
    tuple %
    \((\Oc,\Der{},\Digg{},\Seelyz,\Seelyt)\) where \(\Oc:\cL\to\cL\)
    is a functor, \(\Der{}\) (dereliction) and \(\Digg{}\) (digging)
    are respectively the counit and the comultiplication of a comonad
    structure on this functor and \((\Seelyz\in\cL(\Sone,\Oc\Stop))\),
    an iso, and \(\Seelyt_{X_1,X_2}\in\cL(\Tens{\Oc X_1}{\Oc X_2})\),
    a natural iso, turn \(\Oc{}\) into a symmetric monoidal comonad
    from the SMC \((\cL,\Sone,\ITens)\) to the SMC
    \((\cL,\Stop,\Bwith)\).
    These isos are called the \emph{Seely isomorphisms} of \(\cL\).
  \end{itemize}
\end{definition}

We assume that \(\cL\) is such a resource category.
The resource modality induces a Kleisli category %
\(\Kloc\cL\) whose objects are those of \(\cL\) and where %
\(\Kloc\cL(X,Y)=\cL(\Oc X,Y)\).
In this category the identity morphisms are \(\Klocid_X=\Der X\) %
and composition of \(f\in\Kloc\cL(X,Y)\) and \(g\in\Kloc\cL(Y,Z)\) %
is defined by \(g\Comp f=g\Compl(\Oc f)\Compl\Digg X\).

The basic intuition in this situation is that the morphisms of \(\cL\)
are linear whereas \(\Kloc\cL\) is a category of nonlinear morphisms.
Here the word ``linear'' can be used in its algebraic and its computer
science meaning.
This intuition is supported by the fact that there is a functor %
\(\Derfun:\cL\to\Kloc\cL\) which acts as the identity on objects and
maps \(f\in\cL(X,Y)\) to \(\Derfun(f)=f\Compl\Der X\in\Kloc\cL(X,Y)\).
This functor is not necessarily faithful (it is, in most known
categorical models of \LL), but it should nevertheless be considered
as a kind of ``inclusion'' of \(\cL\) into the larger \(\Kloc\cL\).

We assume from now on that \(\cL\) is equipped with a summability
structure (remember that this means in particular that~\ref{ax:stens}
and~\ref{ax:swith} hold).

The main idea of \CD{} is to associate with any nonlinear morphism
\(f\in\Kloc\cL(X,Y)\) a ``derivative'' %
\(\Dfun f\in\Kloc\cL(\Sfun X,\Sfun Y)\) which intuitively maps a
summable pair \((x_0,x_1)\) of elements of \(X\) to the summable pair
\((f(x_0),f'(x_0)\cdot x_1)\), and the chain rule of Calculus tells us
that this operation \(\Dfun\) should be functorial.
In other words \(\Dfun\) should be an extension of the functor \(\Sfun\)
to %
\(\Kloc\cL\) in the sense that if \(f\in\cL(X,Y)\), one has %
\(\Dfun(\Derfun(f))=\Derfun(\Sfun f)\).
Intuitively this condition means that the derivative of a linear map
is the map itself.
It is known that such extensions are in one-to-one correspondence
with \emph{distributive laws} between the functor \(\Sfun\) and the
comonad \(\Oc{\_}\).

\begin{definition} %
  \labeltext{(\(\Sdiff\)-chain)}{ax:sdchain}\ref{ax:sdchain}
  A pre-differential structure on \(\cL\) is a distributive law
  between the functor \(\Sfun\) and the comonad \(\Oc{\_}\), that is, a
  natural transformation \(\Sdiff_X\in\cL(\Oc\Sfun X,\Sfun\Oc X)\)
  such that the following diagrams commute
  \begin{equation*}
    \begin{tikzcd}
      \Oc\Sfun X
      \ar[rr,"\Sdiff_X"]
      \ar[dr,swap,"\Der{\Sfun X}"]
      &[-2em]&[-2em]
      \Sfun\Oc X
      \ar[dl,"\Sfun\Der X"]\\
      &
      \Sfun X
      &
    \end{tikzcd}
    \Treesep
    \begin{tikzcd}
      \Oc\Sfun X
      \ar[rr,"\Sdiff_X"]
      \ar[d,swap,"\Digg{\Sfun X}"]
      &&
      \Sfun\Oc X
      \ar[d,"\Sfun\Digg X"]\\
      \Occ{\Sfun X}
      \ar[r,"\Oc\Sdiff_X"]
      &
      \Oc\Sfun\Oc X
      \ar[r,"\Sdiff_{\Oc X}"]
      &
      \Sfun\Occ X
    \end{tikzcd}
  \end{equation*}
\end{definition}

Then the extended functor \(\Dfun:\Kloc\cL\to\Kloc\cL\) is defined by %
\(\Dfun X=\Sfun X\), and
\(\Dfun f=(\Sfun f)\Compl\Sdiff_X\in\cL(\Oc\Sfun X,\Sfun Y)\) for %
\(f\in\cL(\Oc X,Y)\).

This simple condition is not sufficient for specifying a differential
operation.
Here are the additional conditions.

\begin{definition} %
    \labeltext{(\(\Sdiff\)-local)}{ax:sdloc}\ref{ax:sdloc}
    \begin{equation*}
      \begin{tikzcd}
      \Oc\Sfun X
      \ar[rr,"\Sdiff_X"]
      \ar[dr,swap,"\Oc\Sproj0"]
      &[-2em]&[-2em]
      \Sfun\Oc X
      \ar[dl,"\Sproj0"]\\
      &
      \Oc X
      &
      \end{tikzcd}
    \end{equation*}
\end{definition}

\begin{definition} %
  \labeltext{(\(\Sdiff\)-add)}{ax:sdadd}\ref{ax:sdadd} %
  The natural transformation \(\Sdiff\) is also a distributive law
  between the functor \(\oc\) and the monad
  \((\Sfun,\Sin0,\Sfunmonm)\), that is
  \begin{equation*}
    \begin{tikzcd}
    &[-2em]
      \Oc X
      \ar[ld,swap,"\Oc\Sin0"]
      \ar[rd,"\Sin0"]
      &[-2em]
      \\
      \Oc\Sfun X
      \ar[rr,"\Sdiff_X"]
      &&
      \Sfun\Oc X
    \end{tikzcd}
    \Treesep
    \begin{tikzcd}
      \Oc\Sfun^2X
      \ar[r,"\Sdiff_{\Sfun X}"]
      \ar[d,swap,"\Oc\Sfunmonm_X"]
      &
      \Sfun\Oc\Sfun X
      \ar[r,"\Sfun\Sdiff_X"]
      &
      \Sfun^2\Oc X
      \ar[d,"\Sfunmonm_{\Oc X}"]
      \\
      \Oc\Sfun X
      \ar[rr,"\Sdiff_X"]
      &&
      \Sfun\Oc X
    \end{tikzcd}
  \end{equation*}
\end{definition}
This means that the comonad \(\Oc\) can be extended to the Kleisli
category of the monad \(\Sfun\).
Due to~\ref{ax:stens} and~\ref{ax:swith}, this latter Kleisli category
is monoidal and cartesian so that, when~\ref{ax:sdadd} holds, it
becomes a resource category which can be understood as a categorical
version of Clifford's ring of \emph{dual numbers}.

More concretely, the condition~\ref{ax:sdadd} means that derivatives
are additive morphisms, that is, preserve \(0\) and (the partially
defined) addition of morphisms.

\begin{definition} %
  \labeltext{(\(\Sdiff\)-\(\IWith\))}{ax:sdwith}\ref{ax:sdwith} %
  \begin{equation*}
    \begin{tikzcd}
      \Sone
      \ar[r,"\Sin0"]\ar[d,swap,"\Seelyz"]
      &
      \Sfun\Sone
      \ar[dd,"\Sfun\Seelyz"]
      \\
      \Oc\Stop
      \ar[d,swap,"\Oc0"]
      &
      \\
      \Oc\Sfun\Stop
      \ar[r,"\Sdiff_\Stop"]
      &
      \Sfun\Oc\Stop
    \end{tikzcd}
    \Treesep
    \begin{tikzcd}
      \Tens{\Oc\Sfun X_1}{\Oc\Sfun X_1}
      \ar[r,"\Tens{\Sdiff_{X_1}}{\Sdiff_{X_2}}"]
      \ar[d,swap,"\Seelyt_{\Sfun X_1,\Sfun X_2}"]
      &[1em]
      \Tens{\Sfun\Oc X_1}{\Sfun\Oc X_2}
      \ar[r,"\Sfunlaxt_{\Oc X_1,\Oc X_2}"]
      &[1em]
      \Sfun\Tensp{\Oc X_1}{\Oc X_2}
      \ar[dd,"\Sfun\Seelyt_{X_1,X_2}"]\\
      \Oc\Withp{\Sfun X_1}{\Sfun X_2}
      \ar[d,swap,"\Oc{\Swithiso_{X_1,X_2}}"]
      &&
      \\
      \Oc{\Sfun\Withp{X_1}{X_2}}
      \ar[rr,"\Sdiff_{\With{X_1}{X_2}}"]
      &&
      \Sfun{\Oc\Withp{X_1}{X_2}}
    \end{tikzcd}
  \end{equation*}
\end{definition}
%
This condition means that the differential structure is compatible
with the strong monoidal structure of the resource category \(\cL\).
It becomes quite important when the SMC \(\cL\) is assumed to be
closed since, in that situation, this strong monoidal structure turns
\(\Kl\cL\) into a cartesian closed category.
%

\begin{definition} %
  \labeltext{(\(\Sdiff\)-Schwarz)}{ax:sdschwarz}\ref{ax:sdschwarz} %
  \begin{equation*}
    \begin{tikzcd}
      \Oc\Sfun^2 X
      \ar[r,"\Sdiff_{\Sfun X}"]
      \ar[d,swap,"\Oc\Sflip_X"]
      &
      \Sfun\Oc\Sfun X
      \ar[r,"\Sfun\Sdiff_X"]
      &
      \Sfun^2\Oc X
      \ar[d,"\Sflip_{\Oc X}"]\\
      \Oc\Sfun^2 X
      \ar[r,"\Sdiff_{\Sfun X}"]
      &
      \Sfun\Oc\Sfun X
      \ar[r,"\Sfun\Sdiff_X"]
      &
      \Sfun^2\Oc X
    \end{tikzcd}
  \end{equation*}
\end{definition}
This condition means that the second derivative is a \emph{symmetric}
bilinear function: %
\(\frac{\partial^2f(x_1,x_2)}{\partial x_1\partial x_2}\cdot(u_1,u_2)
=\frac{\partial^2f(x_1,x_2)}{\partial x_2\partial x_1}\cdot(u_2,u_1)\).

The last condition was overlooked in~\cite{Ehrhard23a}, but the
corresponding condition was already recognized as important in the
theory of tangent categories~\cite{Rosicky84}.
\begin{definition} %
  \labeltext{(\(\Sdiff\)-lift)}{ax:sdlift}\ref{ax:sdlift} %
  \begin{equation*}
    \begin{tikzcd}
      \Oc\Sfun X
      \ar[rr,"\Sdiff_X"]
      \ar[d,swap,"\Oc\Sfunlift_X"]
      &&
      \Sfun\Oc X
      \ar[d,"\Sfunlift_{\Oc X}"]
      \\
      \Oc\Sfun^2X
      \ar[r,"\Sdiff_{\Sfun X}"]
      &
      \Sfun\Oc\Sfun X
      \ar[r,"\Sfun\Sdiff_X"]
      &
      \Sfun^2\Oc X
    \end{tikzcd}
  \end{equation*}
\end{definition}
Keeping in mind that \(\Sfunlift\) is the comultiplication of the
bimonad \(\Sfun\), one might expect the commutation corresponding to
the counit \(\Ssum\) of that comonad to commute, that is
\begin{equation*}
  \begin{tikzcd}
    \Oc\Sfun X\ar[rr,"\Sdiff_X"]\ar[rd,swap,"\Oc\Ssum_X"]
    &[-2em]&[-2em]
    \Sfun\Oc X
    \ar[dl,"\Ssum_{\Oc X}"]
    \\
    &
    X
    &
  \end{tikzcd}
\end{equation*}
but this would be too strong a requirement in the present setting as
it would require intuitively that all morphisms \(f\in\Kloc\cL(X,Y)\)
satisfy \(f(x_0+x_1)=f(x_0)+f'(x_0)\cdot x_1\) (for all summable pair
\((x_0,x_1)\) of elements of \(X\)), that is, are affine.
So, in \CD, this latter commutation is not required.
In the infinitary setting of \cite{EhrhardWalch23b}, it expresses that
morphisms are analytic in the sense that they coincide with their
Taylor expansion, so this commutation will be the an essential
ingredient in the definition of a \emph{coherent analytic category}.

\subsection{The induced differentiation monad}

The axiom \ref{ax:dchain} exactly means that \(\Sfun:\cL\to\cL\) can
be extended to a functor \(\Dfun:\Kl\cL\to\Kl\cL\).
One speaks here of extension because we consider \(\cL\) as a
``subcategory'' of \(\Kl\cL\) (the inclusion being the functor
\(\Derfun:\cL\to\Kl\cL\)).
Concretely, \(\Dfun\) is defined on objects by \(\A\Dfun X=\Sfun X\),
and given \(f\in\Kl\cL(X,Y)=\cL(\Oc X,Y)\), one sets
\begin{equation*}
  \A\Dfun f=(\A\Sfun f)\Compl\Sdiff_X\,.
\end{equation*}

\begin{proposition}
  The operation \(\Dfun\) is a functor \(\Kl\cL\to\Kl\cL\) which
  extends \(\Sfun\) in the sense that for any \(f\in\cL(X,Y)\), one
  has %
  \(\Ap\Dfun{\A\Derfun f}=\Ap\Derfun{\A\Sfun f}\).
\end{proposition}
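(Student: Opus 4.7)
The plan is to use the standard fact that a distributive law between a functor $T$ and a comonad $(C,\varepsilon,\delta)$ lifts $T$ to the Kleisli category $\Kl{\cC}_C$, and to observe that the two commutations in~\ref{ax:sdchain} are exactly the two conditions that such a distributive law must satisfy. In our setting, $T=\Sfun$ and $(C,\varepsilon,\delta)=(\oc,\Der{},\Digg{})$, with the distributive law being $\Sdiff$, so the abstract machinery applies verbatim. I will spell out the three verifications that the proposition requires.

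First I would verify preservation of identities. Unfolding the definition, for the identity $\Der X\in\Kl\cL(X,X)$ we have $\A\Dfun{\Der X}=\A\Sfun{\Der X}\Compl\Sdiff_X$, and this equals $\Der{\Sfun X}$ by exactly the left-hand (counit) triangle of~\ref{ax:sdchain}. Thus $\A\Dfun{\Klocid_X}=\Klocid_{\Sfun X}$. The same triangle also gives the extension property: for $f\in\cL(X,Y)$ one has $\A\Dfun{\A\Derfun f}=\A\Sfun{f\Compl\Der X}\Compl\Sdiff_X=\A\Sfun f\Compl\A\Sfun{\Der X}\Compl\Sdiff_X=\A\Sfun f\Compl\Der{\Sfun X}=\A\Derfun{\A\Sfun f}$.

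The main (and only slightly non-trivial) step is preservation of composition: given $f\in\Kl\cL(X,Y)$ and $g\in\Kl\cL(Y,Z)$, one must show that $\A\Dfun{g\Comp f}=\A\Dfun g\Comp\A\Dfun f$. I would expand the left-hand side using $g\Comp f=g\Compl\oc f\Compl\Digg X$, giving
\begin{equation*}
  \A\Dfun{g\Comp f}=\A\Sfun g\Compl\A\Sfun{\oc f}\Compl\A\Sfun{\Digg X}\Compl\Sdiff_X.
\end{equation*}
The right-hand pentagon of~\ref{ax:sdchain} rewrites $\A\Sfun{\Digg X}\Compl\Sdiff_X$ as $\Sdiff_{\oc X}\Compl\oc\Sdiff_X\Compl\Digg{\Sfun X}$. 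Then naturality of $\Sdiff$ applied to $f\in\cL(\oc X,Y)$ rewrites $\A\Sfun{\oc f}\Compl\Sdiff_{\oc X}$ as $\Sdiff_Y\Compl\oc\A\Sfun f$, so that
\begin{equation*}
  \A\Dfun{g\Comp f}=(\A\Sfun g\Compl\Sdiff_Y)\Compl\oc(\A\Sfun f\Compl\Sdiff_X)\Compl\Digg{\Sfun X}=\A\Dfun g\Compl\oc\A\Dfun f\Compl\Digg{\Sfun X},
\end{equation*}
which is exactly $\A\Dfun g\Comp\A\Dfun f$ in $\Kl\cL$.

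The hard part is not really hard here: it is just bookkeeping to align the two applications of the \ref{ax:sdchain} diagrams with naturality of $\Sdiff$ in the correct order. The only thing one must be careful about is to apply naturality of $\Sdiff_{(-)}$ to $f$ viewed as a $\cL$-morphism $\oc X\to Y$ (not as a Kleisli morphism); once that is properly distinguished, the computation collapses in one line.
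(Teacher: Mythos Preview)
Your proof is correct and is precisely the standard argument the paper alludes to; the paper's own proof is the single sentence ``This is completely standard in the theory of distributive laws,'' and your three verifications (identity via the counit triangle, composition via the comultiplication pentagon plus naturality of \(\Sdiff\), and the extension property again via the counit triangle) are exactly what that sentence unpacks to.
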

This is completely standard in the theory of distributive laws.

We define %
\(\Dmonu_X=\A\Derfun{\Sin0}\in\Kl\cL(X,\Dfun X)\) and %
\(\Dmonm_X=\A\Derfun{\Sfunmonm_X}\in\Kl\cL(\A{\Dfun^2}X,\A\Derfun X)\).

\begin{proposition}
  The morphisms \(\Dmonu_X\in\Kl\cL(X,\Dfun X)\) and %
  \(\Dmonm_X\in\Kl\cL(\A{\Dfun^2}X,\A\Dfun X)\) are natural in \(X\)
  and turn the functor \(\Dfun\) into a monad on \(\Kl\cL\).
\end{proposition}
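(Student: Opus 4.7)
The plan is to recognize this proposition as an instance of a standard theorem in the theory of distributive laws: axiom \ref{ax:sdadd} exactly asserts that \(\Sdiff\) is a distributive law of the monad \((\Sfun,\Sin0,\Sfunmonm)\) over the endofunctor \(\oc\_\), and axiom \ref{ax:sdchain} says it is a distributive law with respect to the comonad structure of \(\oc\_\). The combination of these two facts is precisely what one needs for the monad \(\Sfun\) on \(\cL\) to lift to a monad on \(\Kl\cL\), with the lift defined by \(\Dfun=\) the functorial extension of \(\Sfun\) (which exists by \ref{ax:sdchain}), and unit and multiplication given by applying \(\Derfun\) to \(\Sin0\) and \(\Sfunmonm\).

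For naturality of \(\Dmonu_X\) in \(\Kl\cL\): given \(f\in\Kl\cL(X,Y)=\cL(\oc X,Y)\), I would unfold Kleisli composition and compute both sides. The side \(\Dmonu_Y\Comp f\) simplifies, using naturality of \(\Der\) and the comonad law \(\Der_{\oc X}\Compl\Digg_X=\Id\), to \(\Sin0_Y\Compl f\). The side \(\Dfun f\Comp\Dmonu_X\) expands to \((\Sfun f)\Compl\Sdiff_X\Compl\Oc\Sin0_X\Compl\Oc\Der_X\Compl\Digg_X\); using the left triangle of \ref{ax:sdadd} this becomes \((\Sfun f)\Compl\Sin0_{\oc X}\Compl\Oc\Der_X\Compl\Digg_X\), which by the other comonad law \(\Oc\Der_X\Compl\Digg_X=\Id\) and naturality of \(\Sin0\) equals \(\Sin0_Y\Compl f\). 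Naturality of \(\Dmonm_X\) is handled analogously: unfolding and applying naturality of \(\Der\) and the comonad laws reduces the claim to the square of \ref{ax:sdadd} relating \(\Sdiff_X\Compl\Oc\Sfunmonm_X\) with \(\Sfunmonm_{\oc X}\Compl\Sfun\Sdiff_X\Compl\Sdiff_{\Sfun X}\), together with naturality of \(\Sfunmonm\) in \(\cL\).

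For the monad laws of \((\Dfun,\Dmonu,\Dmonm)\) on \(\Kl\cL\), I would unfold each of the three equations (left unit \(\Dmonm\Comp\Dmonu_{\Dfun X}=\Id\), right unit \(\Dmonm\Comp\Dfun\Dmonu_X=\Id\), associativity \(\Dmonm\Comp\Dmonm_{\Dfun X}=\Dmonm\Comp\Dfun\Dmonm_X\)) using \(g\Comp h=g\Compl\Oc h\Compl\Digg\) and then clear the \(\oc\)-structure via the comonad laws. Each equation then reduces to the corresponding monad axiom for \((\Sfun,\Sin0,\Sfunmonm)\) in \(\cL\), possibly combined with one application of \ref{ax:sdadd} (for the right unit, which introduces \(\Oc\Sin0\) and must be rewritten as \(\Sin0_{\oc X}\); for associativity, which requires exchanging \(\Sfunmonm\) through \(\Sdiff\) via the second half of \ref{ax:sdadd}).

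The main obstacle is bookkeeping rather than mathematical depth: one must keep careful track of where \(\Oc\) and \(\Sfun\) occur in each composite, always in the right order, and identify precisely which naturality square or distributive law clause is being invoked at each step. No new ingredient beyond \ref{ax:sdchain} and \ref{ax:sdadd} together with the monad laws of \(\Sfun\) in \(\cL\) is needed, since the argument is exactly the categorical folklore that a mixed distributive law lifts the monad to the Kleisli category of the comonad.
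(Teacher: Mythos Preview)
Your approach is correct and matches the paper's: the paper's proof sketch says only that ``the only non-trivial properties are the naturality of \(\Dmonu\) and \(\Dmonm\). They result from~\ref{ax:sdadd}'', which is exactly the core of your argument.

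One small simplification you missed: the monad laws themselves do \emph{not} require~\ref{ax:sdadd}. Since \(\Dmonu=\A\Derfun{\Sin0}\), \(\Dmonm=\A\Derfun{\Sfunmonm}\), and the previous proposition gives \(\A\Dfun{(\A\Derfun f)}=\A\Derfun{(\A\Sfun f)}\), every vertex and every edge of each monad diagram for \((\Dfun,\Dmonu,\Dmonm)\) in \(\Kl\cL\) is the image under the functor \(\Derfun\) of the corresponding monad diagram for \((\Sfun,\Sin0,\Sfunmonm)\) in \(\cL\). So the monad laws are immediate --- this is why the paper calls them trivial. In particular your right-unit computation never actually produces an \(\Oc\Sin0\): unfolding gives \(\Sfunmonm_X\Compl\Sfun\Sin0_X\Compl\Sfun\Der X\Compl\Sdiff_X\), and the dereliction triangle of~\ref{ax:sdchain} (not~\ref{ax:sdadd}) finishes it. Likewise associativity needs only~\ref{ax:sdchain}. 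Your plan still works, it just invokes more than is necessary for that part.
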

\begin{proof}[Proof sketch]
  The only non-trivial properties are the naturality of %
  \(\Dmonu\) and \(\Dmonm\).
  They result from \ref{ax:dadd}.
\end{proof}

\begin{remark}
  Intuitively,
  \begin{align*}
    \A\Dfun f(x_0,x_1) &= (f(x_0),f'(x_0)\cdot x_1)\\
    \Dmonu_X(x) &= (x,0)\\
    \Dmonm_X((x_{00},x_{01}),(x_{00},x_{01}))
    &= (x_{00},x_{01}+x_{10})
  \end{align*}
  and the naturality of \(\Dmonu\) and \(\Dmonm\) means that %
  \(f'(x)\cdot 0=0\) and %
  \(f'(x_{00})\cdot(x_{01}+x_{10})
  =f'(x_{00})\cdot x_{01}+f'(x_{00})\cdot x_{10}\).
\end{remark}

\subsection{Partial derivatives}
Given \(f\in\Kl\cL(X_1\IWith\cdots\IWith X_n,Y)\), we have seen how to
define the global differential %
\(\Dfun f\in\Kl\cL(\Sfun(X_1\IWith\cdots\IWith X_n),\Sfun Y)\) %
of \(f\), that is (up to the iso stipulated by \ref{ax:swith}),
\(\Dfun f\in\Kl\cL(\Sfun X_1\IWith\cdots\IWith \Sfun X_n,\Sfun
Y)\), %
which intuitively maps \((x_1,u_1),\dots,(x_n,u_n)\) to %
\((f(x_1,\dots,x_n),f'(x_1,\dots,x_n)\cdot(u_1,\dots,u_n))\).
For any \(i\in\Set{1,\dots,n}\) we also need to be able to define a
\(i\)th \emph{partial derivative} %
\(\Dfunpart if \in\Kl\cL(X_1\IWith\cdots\IWith\Sfun
X_i\IWith\cdots\IWith X_n,\Sfun Y)\) %
which intuitively maps %
\((x_1,\dots,(x_i,u),\dots,x_n)\) to %
\((f(x_1,\dots,x_n),f'_i(x_1,\dots,x_n)\cdot u)\).
We also expect these partial derivatives to satisfy
\begin{align*}
  f'(x_1,\dots,x_n)\cdot(u_1,\dots,u_n)
  =\sum_{i=1}^n f'_i(x_1,\dots,x_n)\cdot u_i
  =\sum_{i=1}^n\frac{\partial f(\List x1n)}{\partial x_i}\cdot u_i\,.
\end{align*}
The conditions introduced so far allow us to define such partial
derivatives and prove their expected properties without further
assumptions as we explain now.

We take \(n=2\) to simplify notations but the general case is not more
complicated conceptually.
Then it is possible to define %
\(\Wstren1_{X_1,X_2} %
\in\cL(\With{\Sfun X_1}{X_2},\Sfun{\Withp{X_1}{X_2}})\) as the
following composition of morphisms
\begin{equation*}
  \begin{tikzcd}
    \Sfun{X_1}\IWith{X_2}
    \ar[r,"\With{\Sfun{X_1}}{\Sin0}"]
    &[2em]
    \Sfun{X_1}\IWith\Sfun{X_2}
    \ar[r,"\Swithiso_{X_1,X_2}"]
    &[2em]
    \Sfun{\Withp{X_1}{X_2}}
  \end{tikzcd}
\end{equation*}
and we use \(\Dfunstr1_{X_1,X_2}\) for the associated morphism %
\(\A\Derfun{\Wstren1_{X_1,X_2}}\in\Kl\cL(\With{X_1}{\Dfun X_2},
\Dfun{\Withp{X_1}{X_2}})\) in the Kleisli category
\(\Kl\cL\).
We define similarly %
\(\Dfunstr2_{X_1,X_2}
\in\Kl\cL(\With{X_1}{\Dfun X_2},\Dfun{\Withp{X_1}{X_2}})\).
Intuitively \(\Dfunstr1((x_1,u_1),x_2)=((x_1,x_2),(u_1,0))\) and
\(\Dfunstr2(x_1,(x_2,u_2))=((x_1,x_2),(0,u_2))\).
It is also easily checked that \(\Dfunstr2\) can be obtained from %
\(\Dfunstr1\) using the symmetry isomorphism associated with
\(\IWith\): %
\(\Dfunstr2
=\Dfun{\Tuple{\Proj2,\Proj1}}
\Comp\Dfunstr1
\Comp\Tuple{\Proj2,\Proj1}\).

\begin{theorem} %
  \label{th:dfun-strength-with}
  The morphisms \(\Dfunstr1_{X_1,X_2}\) and \(\Dfunstr2_{X_1,X_2}\) of
  \(\Kl\cL\) are natural in \(X_1\) and \(X_2\) and define a
  commutative strength on the monad \(\Dfun\).
  More precisely, the following diagram commutes in \(\Kl\cL\)
  \begin{equation} %
    \label{eq:com-strength-diff-kleisli}
    \begin{tikzcd}
      &[-2em]
      \With{\Dfun X_1}{\Dfun X_2}
      \ar[ld,swap,"\Dfunstr1_{X_1,\Dfun X_2}"]
      \ar[rd,"\Dfunstr2_{\Dfun X_1,X_2}"]
      &[-2em]\\
      \Dfun\Withp{X_1}{\Dfun X_2}
      \ar[d,swap,"\Dfun\Dfunstr2_{X_1,X_2}"]
      &
      &
      \Dfun\Withp{\Dfun X_1}{X_2}
      \ar[d,"\Dfun\Dfunstr1_{X_1,X_2}"]\\
      \Dfun^2\Withp{X_1}{X_2}
      \ar[rr,"\Sflip_{\With{X_1}{X_2}}"]
      &&
      \Dfun^2\Withp{X_1}{X_2}
    \end{tikzcd}
  \end{equation}
  and the induced monoidality %
  \(\Dfunmon_{X_1,X_2}=\Dmonm_{\With{X_1}{X_2}}
  \Comp\Dfun\Dfunstr2_{X_1,X_2}
  \Comp\Dfunstr1_{\Dfun X_1,X_2}
  =\Dmonm_{\With{X_1}{X_2}}
  \Comp\Dfun\Dfunstr1_{X_1,X_2}
  \Comp\Dfunstr2_{X_1,\Dfun X_2}\)
  coincides with \(\A\Derfun{\Swithiso_{X_1,X_2}}\)
  which is an iso in \(\Kl\cL\).
\end{theorem}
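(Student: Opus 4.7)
The first step of my plan is to observe that $\Dfunstr i_{X_1,X_2} = \A\Derfun{\Wstren i_{X_1,X_2}}$, where $\Wstren i$ is built in $\cL$ from the natural morphism $\Sin0$ and the iso $\Swithiso_{X_1,X_2}$ supplied by~\ref{ax:swith}. Naturality of $\Dfunstr i$ in $\Kl\cL$ therefore reduces to naturality in $\cL$ of these components, which is immediate. The symmetry relation $\Dfunstr2 = \Dfun\Tuple{\Proj2,\Proj1}\Comp\Dfunstr1\Comp\Tuple{\Proj2,\Proj1}$ recorded just before the theorem lets me reduce every commutation involving $\Dfunstr2$ to the corresponding commutation for $\Dfunstr1$ via a swap of factors. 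The unit and multiplication strength axioms for the monad $(\Dfun,\Dmonu,\Dmonm)$ then unfold, via the definition $\Dfun f=(\Sfun f)\Compl\Sdiff$ and the distributive-law axiom~\ref{ax:sdchain}, into routine commutations in $\cL$ between $\Wstren1$, $\Sin0$, and $\Sfunmonm$; both are verified by the joint monicity of the $\Sproj i$'s from~\Cref{lemma:sproj-it-jointly-monic} after a short calculation.

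The main obstacle is the commutative-strength square~\eqref{eq:com-strength-diff-kleisli}. The two composite morphisms it compares live, after unfolding of Kleisli composition, in $\cL(\Oc\Withp{\Sfun X_1}{\Sfun X_2},\Sfun^2\Withp{X_1}{X_2})$, and they differ precisely by an application of $\Sflip$ on the outer $\Sfun^2$. To reach a form where this flip can be introduced, my plan is to transport the computation from the cartesian side to the tensor side using axiom~\ref{ax:sdwith}: this rewrites the instances of $\Sfun\Swithiso$ and $\Swithiso$ in terms of $\Seelyt$ and $\Sfunlaxt$, at which point the flip discrepancy becomes exactly the content of axiom~\ref{ax:sdschwarz} applied to $\Sdiff_{\Sfun X}$. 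Converting back via~\ref{ax:sdwith} in the other direction closes the diagram. If the symbolic manipulation proves unwieldy, a componentwise check at the $\Sfun^2$ level via iterated joint monicity of the $\Sproj i$'s remains available as a fallback.

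Once the central square commutes, the two expressions for $\Dfunmon_{X_1,X_2}$ agree by construction. To identify their common value with $\A\Derfun{\Swithiso_{X_1,X_2}}$, I expand $\Dmonm\Comp\Dfun\Dfunstr2\Comp\Dfunstr1$ using $\Dmonm=\A\Derfun{\Sfunmonm}$ and $\Dfunstr i=\A\Derfun{\Wstren i}$, so that the $\cL$-level morphism to compute is $\Sfunmonm_{\With{X_1}{X_2}}\Compl\Sfun\Wstren2_{X_1,X_2}\Compl\Wstren1_{X_1,\Sfun X_2}$. By the characterization of $\Sfunmonm$ from~\Cref{lemma:ssum-monad-mult} and the fact that $\Wstren1$ and $\Wstren2$ place the $\Sin0$-summand on complementary factors, the $\Sproj0$- and $\Sproj1$-components of this composite match those of $\Swithiso_{X_1,X_2}$, whence equality by joint monicity. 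That $\A\Derfun{\Swithiso_{X_1,X_2}}$ is an iso in $\Kl\cL$ is then immediate, since $\Swithiso_{X_1,X_2}$ is an iso in $\cL$ by~\ref{ax:swith} and $\Derfun$ preserves isomorphisms.
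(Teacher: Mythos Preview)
Your treatment of the commutative-strength square~\eqref{eq:com-strength-diff-kleisli} is based on a misreading of what unfolding produces. Every arrow in that diagram is of the form $\A\Derfun{(\cdot)}$: the $\Dfunstr i$ by definition, and the arrows $\Dfun\Dfunstr i$ because $\Dfun(\A\Derfun g)=\A\Derfun{(\Sfun g)}$ by the extension property. Since $\Derfun$ is a functor, the Kleisli compositions on each side are just $\Derfun$ of the corresponding $\cL$-compositions, and \emph{no occurrence of $\Sdiff$ arises at all}. Hence neither~\ref{ax:sdwith} nor~\ref{ax:sdschwarz} is relevant here; the square is literally the image under $\Derfun$ of the analogous square in $\cL$ with $\Sfun$ and $\Wstren i$, and that one is checked by the joint monicity of $(\Sproj i\Compl\Sproj j)_{i,j}$. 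This is exactly what the paper does, and it is also your ``fallback''---but it is the direct route, not a fallback.

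Conversely, your naturality argument is too quick in the opposite direction. Writing $\Dfunstr i=\A\Derfun{\Wstren i}$ does not by itself make $\Dfunstr i$ natural in $\Kl\cL$: naturality there is with respect to arbitrary Kleisli morphisms $f\in\cL(\Oc X,Y)$, not just linear ones, so it does \emph{not} reduce to naturality of $\Wstren i$ in $\cL$. Unfolding the required square one finds that what must commute is a diagram in $\cL$ involving $\Sdiff$ (essentially that $\Sdiff_{\With{X_1}{X_2}}$ is compatible with the projections $\Oc\Sfun\Proj i$ and $\Sfun\Oc\Proj i$), and this follows from the naturality of $\Sdiff$ together with the joint monicity of the cartesian projections---the paper makes this explicit. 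Your final identification of $\Dfunmon$ with $\A\Derfun{\Swithiso}$ via the characterization of $\Sfunmonm$ and joint monicity is correct and matches the paper.
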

\begin{proof}[Proof sketch]
  This is essentially trivial.
  For instance \Cref{eq:com-strength-diff-kleisli} is the image by
  \(\Derfun\) of the diagram
  \begin{equation*}
    \begin{tikzcd}
      &[-2em]
      \With{\Sfun X_1}{\Sfun X_2}
      \ar[ld,swap,"\Wstren1_{X_1,\Sfun X_2}"]
      \ar[rd,"\Wstren2_{\Sfun X_1,X_2}"]
      &[-2em]\\
      \Sfun\Withp{X_1}{\Sfun X_2}
      \ar[d,swap,"\Sfun\Wstren2_{X_1,X_2}"]
      &
      &
      \Sfun\Withp{\Sfun X_1}{X_2}
      \ar[d,"\Sfun\Wstren1_{X_1,X_2}"]\\
      \Sfun^2\Withp{X_1}{X_2}
      \ar[rr,"\Sflip_{\With{X_1}{X_2}}"]
      &&
      \Sfun^2\Withp{X_1}{X_2}
    \end{tikzcd}    
  \end{equation*}
  whose commutation is easily proven using the joint monicity of %
  \((\Sproj i\Compl\Sproj j)_{(i,j)\in\Set{0,1}}\).
  The naturality of \(\Dfunstr1\) and \(\Dfunstr2\) boils down to the
  commutativity in \(\cL\) of
  \begin{equation*}
    \begin{tikzcd}
      \Oc\Sfun{\Withp{X_1}{X_2}}
      \ar[r,"\Sdiff_{\With{X_1}{X_2}}"]
      \ar[d,swap,"\Tuple{\Oc\Sfun{\Proj1},\Oc\Sfun{\Proj2}}"]
      &[2em]
      \Sfun\Oc{\Withp{X_1}{X_2}}
      \ar[d,"\Tuple{\Sfun\Oc{\Proj1},\Sfun\Oc{\Proj2}}"]
      \\
      \With{\Oc\Sfun{X_1}}{\Oc\Sfun{X_2}}
      \ar[r,"\With{\Sdiff_{X_1}}{\Sdiff_{X_2}}"]
      &
      \With{\Sfun{\Oc X_1}}{\Sfun{\Oc X_2}}
    \end{tikzcd}
  \end{equation*}
  which results from the naturality of \(\Sdiff_X\) in \(X\) and the
  joint monicity of \(\Proj1\) and \(\Proj2\).
\end{proof}

\begin{definition}
  \label{def:kleisli-partial-derivatives}
  The two partial derivatives of \(f\in\Kl\cL(\With{X_1}{X_2},Y)\) %
  are %
  \begin{align*}
    \Dfun_1 f
    &=\Dfun f
      \Comp\Dfunstr1_{X_1,X_2}\in\Kl\cL(\Dfun{X_1}\IWith X_2,\Dfun Y)\\
    \Dfun_2 f
    &=\Dfun f
      \Comp\Dfunstr2_{X_1,X_2}\in\Kl\cL({X_1}\IWith{\Dfun X_2},\Dfun Y)\,.
  \end{align*}
\end{definition}

\begin{proposition}
  If \(f\in\Kl\cL(\With{X_1}{X_2},Y)\) %
  we have
  \begin{align*}
    \Dfun f=\Dmonm\Comp\Dfun_1{\Dfun_2f}=\Dmonm\Comp\Dfun_2{\Dfun_1f}\,.
  \end{align*}
\end{proposition}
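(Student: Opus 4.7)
The plan is to compute $\Dmonm_Y \Comp \Dfun_1(\Dfun_2 f)$ by unfolding the definitions of the partial derivatives and matching the result against $\Dfun f$ via the canonical iso of \Cref{th:dfun-strength-with}. Strictly speaking, the two sides of the displayed equation live over different domains, $\Dfun(\With{X_1}{X_2})$ on the left and $\With{\Dfun X_1}{\Dfun X_2}$ on the right, so the equalities must be read modulo the iso $\Dfunmon_{X_1, X_2}$ supplied by that theorem: what I will actually prove is $\Dfun f \Comp \Dfunmon_{X_1, X_2} = \Dmonm_Y \Comp \Dfun_1(\Dfun_2 f)$ and its dual with the roles of $\Dfun_1$ and $\Dfun_2$ swapped.

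Using \Cref{def:kleisli-partial-derivatives} and the functoriality of $\Dfun$ on $\Kl\cL$, one obtains
\[
  \Dfun_1(\Dfun_2 f)
  = \Dfun(\Dfun_2 f) \Comp \Dfunstr1_{X_1, \Dfun X_2}
  = \Dfun(\Dfun f) \Comp \Dfun\Dfunstr2_{X_1, X_2} \Comp \Dfunstr1_{X_1, \Dfun X_2}.
\]
Post-composing with $\Dmonm_Y$ and invoking the naturality of the monad multiplication $\Dmonm$ in the form $\Dmonm_Y \Comp \Dfun(\Dfun f) = \Dfun f \Comp \Dmonm_{\With{X_1}{X_2}}$ yields
\[
  \Dmonm_Y \Comp \Dfun_1(\Dfun_2 f)
  = \Dfun f \Comp \Dmonm_{\With{X_1}{X_2}} \Comp \Dfun\Dfunstr2_{X_1, X_2} \Comp \Dfunstr1_{X_1, \Dfun X_2}.
\]
The trailing three-morphism composite is exactly one of the two expressions for $\Dfunmon_{X_1, X_2}$ recorded in \Cref{th:dfun-strength-with}, so the right-hand side equals $\Dfun f \Comp \Dfunmon_{X_1, X_2}$, which is the first equality (modulo the canonical identification).

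The second equality follows by the entirely symmetric computation: unfold $\Dfun_2(\Dfun_1 f) = \Dfun(\Dfun f) \Comp \Dfun\Dfunstr1_{X_1, X_2} \Comp \Dfunstr2_{\Dfun X_1, X_2}$, apply the same naturality step for $\Dmonm$, and recognise the resulting trailing composite as the other expression for $\Dfunmon_{X_1, X_2}$ in \Cref{th:dfun-strength-with}. There is no serious obstacle here: the proposition is essentially a repackaging of \Cref{th:dfun-strength-with} as a decomposition of the total derivative into partial derivatives followed by the monad multiplication, and the whole argument reduces to the functoriality of $\Dfun$, the naturality of $\Dmonm$, and the two equivalent formulae for $\Dfunmon_{X_1, X_2}$. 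The only thing to be careful about is the bookkeeping of subscripts on $\Dfunstr1$ and $\Dfunstr2$ and the fact that composition throughout is taken in $\Kl\cL$.
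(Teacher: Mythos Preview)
Your proof is correct and is essentially a spelled-out version of the paper's one-line proof, which just says ``Apply \Cref{th:dfun-strength-with}.'' You unfold the partial derivatives, use functoriality of $\Dfun$ and naturality of $\Dmonm$, and then recognise the two equivalent expressions for $\Dfunmon_{X_1,X_2}$ from that theorem---exactly the intended argument. Your observation about the domain mismatch and the need to read the equality modulo the iso $\Dfunmon_{X_1,X_2}=\A\Derfun{\Swithiso_{X_1,X_2}}$ is a useful clarification that the paper leaves implicit.
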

\begin{proof}
  Apply \Cref{th:dfun-strength-with}.
\end{proof}
This means intuitively, as expected, that
\(f'(x_1,x_2)\cdot(u_1,u_2)
=f'_1(x_1,x_2)\cdot u_1+f'_2(x_1,x_2)\cdot u_2\).

Composing these morphisms \(\Dfunstr1\) and \(\Dfunstr2\), one can
define %
\(\Dfunstr i
\in\Kl\cL(X_1\IWith\cdots\IWith\Dfun{X_i}\IWith\cdots\IWith
X_n,\Ap\Dfun{X_1\IWith\cdots\IWith X_n})\) that one can also define
directly as \(\Dfunstr i=\A\Derfun{\Wstren i}\) where %
\(\Wstren i =\Swithiso_{\List X1n}
\Compl(\Sin0\IWith\cdots\IWith\Sfun{X_i}\IWith\cdots\IWith\Sin0)\),
that is, intuitively, %
\[
  \Dfunstr i(x_1,\dots,(x_i,u),\dots,x_n) =((\List
  x1n),(0,\dots,u,\dots,0))\,.
\]
Given \(f\in\Kl\cL(X_1\IWith\cdots\IWith X_n,Y)\), we can define the
\(i\)th partial derivative of \(f\) as
\begin{align*}
  \Dfun_i f
  =\Dfun f\Comp\Dfunstr i
  \in\Kl\cL(X_1\IWith\cdots\IWith\Dfun{X_i}\IWith\cdots\IWith X_n,\Dfun Y)\,.
\end{align*}
and given any repetition-free enumeration \((\List i1n)\) of
\(\Set{1,\dots,n}\) we have %
\begin{align*}
  \Dfun f=\Dmonm^n\Comp\Dfun_{i_1}\cdots\Dfun_{i_n}f\,.
\end{align*}

\subsection{The differential structure, in the elementary case}

Let \(\cL\) be an elementarily summable category (see
\Cref{def:elementarily-summable-cat}) and let
\((\Sfun,\Sproj0,\Sproj1,\Ssum)\) be the associated summability
structure.

Let \(\Ddiff\in\cL(\Dbimon,\Oc\Dbimon)\) be a morphism.

\begin{definition} %
  \labeltext{(\(\Ddiff\)-chain)}{ax:dchain}\ref{ax:dchain} %
  We say that \(\Ddiff\) satisfies \ref{ax:dchain} if it is a %
  \(\Oc\)-coalgebra structure on \(\Dbimon\), that is, the two
  following diagrams commute
  \begin{equation*}
    \begin{tikzcd}
      \Dbimon
      \ar[r,"\Ddiff"]
      \ar[dr,swap,"\Id_\Dbimon"]
      &
      \Oc\Dbimon
      \ar[d,"\Der\Dbimon"]
      \\
      &
      \Dbimon
    \end{tikzcd}
    \Treesep
    \begin{tikzcd}
      \Dbimon
      \ar[r,"\Ddiff"]
      \ar[d,swap,"\Ddiff"]
      &
      \Oc\Dbimon
      \ar[d,"\Digg\Dbimon"]
      \\
      \Oc\Dbimon
      \ar[r,"\Oc\Ddiff"]
      &
      \Occ\Dbimon
    \end{tikzcd}
  \end{equation*}
\end{definition}

Remember (from~\cite{Mellies09}, for instance) that the functor
\(\Oc\_\) inherits, from its strict symmetric monoidal structure
\((\Seelyz,\Seelyt)\) from the SMC \((\cL,\mathord{\IWith})\) to the
SMC \((\cL,\mathord{\ITens})\), a lax symmetric monoidal structure
\((\ExpMon0,\ExpMon2)\) from the SMC \((\cL,\mathord{\ITens})\) to
itself.
This means that \(\ExpMon0\in\cL(\Sone,\Oc\Sone)\) and %
\(\ExpMon2_{X,Y}\in\cL(\Tens{\Oc X}{\Oc Y},\Oc{\Tensp XY})\) satisfy
some coherence diagrams that we do not record here.
These morphisms are defined as follows:
\begin{equation*}
  \begin{tikzcd}
    \Sone
    \ar[r,"\Seelyz"]
    &
    \Oc\Stop
    \ar[r,"\Digg\Stop"]
    &[1.4em]
    \Occ\Stop
    \ar[r,"\Oc{\Invp{\Seelyz}}"]
    &[2em]
    \Oc\Sone
    &[2.4em]
    \\
    \Tens{\Oc X}{\Oc Y}
    \ar[r,"\Seelyt_{X,Y}"]
    &
    \Oc{\Withp XY}
    \ar[r,"\Digg{\With XY}"]
    &
    \Occ{\Withp XY}
    \ar[r,"\Oc{\Inv{(\Seelyt_{X,Y})}}"]
    &
    \Oc{\Tensp{\Oc X}{\Oc Y}}
    \ar[r,"\Oc{\Tensp{\Der X}{\Der Y}}"]
    &
    \Oc{\Tensp XY}
  \end{tikzcd}
\end{equation*}

This structure is quite important in particular when considering the
Eilenberg-Moore category \(\Em\cL\) of the comonad \(\Oc\_\). %
Remember that an object of that category is a pair %
\(P=(\Coalgca P,\Coalgst P)\) where \(\Coalgca P\) is an object of
\(\cL\) and \(\Coalgst P\in\cL(\Coalgca P,\Oc{\Coalgca P})\) satisfies
\begin{equation*}
  \begin{tikzcd}
    \Coalgca P
    \ar[r,"\Coalgst P"]
    \ar[dr,swap,"\Id_{\Coalgca P}"]
    &
    \Oc{\Coalgca P}
    \ar[d,"\Der{\Coalgca P}"]
    \\
    &
    \Coalgca P
  \end{tikzcd}
  \Treesep
  \begin{tikzcd}
    \Coalgca P
    \ar[r,"\Coalgst P"]
    \ar[d,swap,"\Coalgst P"]
    &
    \Oc{\Coalgca P}
    \ar[d,"\Digg{\Coalgca P}"]
    \\
    \Oc{\Coalgca P}
    \ar[r,"\Oc{\Coalgst P}"]
    &
    \Occ{\Coalgca P}
  \end{tikzcd}
\end{equation*}
and a morphism \(P\to Q\) in \(\Em\cL\) is an %
\(f\in\cL(\Coalgca P,\Coalgca Q)\) such that
\begin{equation*}
  \begin{tikzcd}
    \Coalgca P
    \ar[r,"f"]
    \ar[d,swap,"\Coalgst P"]
    &
    \Coalgca Q
    \ar[d,"\Coalgst Q"]
    \\
    \Oc{\Coalgca P}
    \ar[r,"\Oc f"]
    &
    \Oc{\Coalgca Q}
  \end{tikzcd}
\end{equation*}

Equipped with \(\ExpMon1\), the object \(\Sone\) is a
\(\Oc\)-coalgebra that we simply denote as \(\Sone\) and, given two
\(\Oc\)-coalgebras \(P\) and \(Q\), the pair %
\((\Tens{\Coalgca P}{\Coalgca Q},h)\) where \(h\) is defined as the
following composition of morphisms
\begin{equation*}
  \begin{tikzcd}
    \Tens{\Coalgca P}{\Coalgca Q}
    \ar[r,"\Tens{\Coalgst P}{\Coalgst Q}"]
    &[1.4em]
    \Tens{\Oc{\Coalgca P}}{\Oc{\Coalgca Q}}
    \ar[r,"\ExpMon2_{\Coalgca P,\Coalgca Q}"]
    &
    \Oc{\Tensp{\Coalgca P}{\Coalgca Q}}
  \end{tikzcd}
\end{equation*}
is a \(\Oc\)-coalgebra that we denote as \(\Tens PQ\).

\begin{definition}
  \labeltext{(\(\Ddiff\)-local)}{ax:dloc}\ref{ax:dloc} %
  We say that \(\Ddiff\) satisfies \ref{ax:dloc} if the following
  diagram commutes
  \begin{equation*}
    \begin{tikzcd}
      \Sone
      \ar[r,"\ExpMon0"]
      \ar[d,swap,"\Win0"]
      &
      \Oc\Sone
      \ar[d,"\Oc\Win0"]\\
      \Dbimon
      \ar[r,"\Ddiff"]
      &
      \Oc\Dbimon
    \end{tikzcd}
  \end{equation*}
  In other words \(\Win 0\) is a \(\Oc\)-coalgebra morphism from
  \(\Sone\) to \((\Dbimon,\Ddiff)\).
\end{definition}

\begin{definition}
  \labeltext{(\(\Ddiff\)-add)}{ax:dadd}\ref{ax:dadd} %
  We say that \(\Ddiff\) satisfies \ref{ax:dadd} if the following
  diagrams commute
  \begin{equation*}
    \begin{tikzcd}
      \Dbimon
      \ar[r,"\Ddiff"]
      \ar[d,swap,"\Proj0"]
      &
      \Oc\Dbimon
      \ar[d,"\Oc\Proj0"]
      \\
      \Sone
      \ar[r,"\ExpMon0"]
      &
      \Oc\Sone
    \end{tikzcd}
    \Treesep
    \begin{tikzcd}
      \Dbimon
      \ar[d,swap,"\Dbimonm"]
      \ar[rr,"\Ddiff"]
      &
      &
      \Oc\Dbimon
      \ar[d,"\Oc\Dbimonm"]
      \\
      \Tens\Dbimon\Dbimon
      \ar[r,"\Tens\Ddiff\Ddiff"]
      &
      \Tens{\Oc\Dbimon}{\Oc\Dbimon}
      \ar[r,"\ExpMon2_{\Dbimon,\Dbimon}"]
      &
      \Oc{\Tensp\Dbimon\Dbimon}
    \end{tikzcd}
  \end{equation*}
  In other words, the unit and the comultiplication of the bimonoid
  \(\Dbimon\) are \(\Oc\)-coalgebra morphisms from %
  \((\Dbimon,\Ddiff)\) to \(\Sone\) and to
  \(\Tens{(\Dbimon,\Ddiff)}{(\Dbimon,\Ddiff)}\) respectively.
\end{definition}

In~\cite{Ehrhard23a}, we proved the following result.

\begin{theorem}
  There is a bijective correspondence between the natural
  transformations \(\Sdiff_X\in\cL(\Oc\Sfun X,\Sfun\Oc X)\) which are
  differential structures on \(\cL\) and the morphisms
  \(\Ddiff\in\cL(\Dbimon,\Oc\Dbimon)\) which satisfy %
  \ref{ax:dchain}, \ref{ax:dloc} and \ref{ax:dadd}.
\end{theorem}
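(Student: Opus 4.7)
The plan is to exploit that $\Sfun$ is representable by $\Dbimon$, via the adjunction $\Tens{-}{\Dbimon}\Adj\Sfun$, to reduce natural transformations $\Sdiff_X:\Oc\Sfun X\to\Sfun\Oc X$ to morphisms out of $\Dbimon$. First I would define the forward direction: given $\Ddiff:\Dbimon\to\Oc\Dbimon$, set $\Sdiff_X$ to be the currying of the composite
\begin{equation*}
\Tens{\Oc\Sfun X}{\Dbimon}
\xrightarrow{\Tens{\Oc\Sfun X}{\Ddiff}}
\Tens{\Oc\Sfun X}{\Oc\Dbimon}
\xrightarrow{\ExpMon2_{\Sfun X,\Dbimon}}
\Oc(\Tens{\Sfun X}{\Dbimon})
\xrightarrow{\Oc\Evlin_{\Dbimon,X}}
\Oc X\,.
\end{equation*}
Naturality of $\Sdiff_X$ in $X$ is immediate from naturality of $\ExpMon2$ and $\Evlin$.

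For the reverse direction, given a natural $\Sdiff_X$, I would use the curried identity $\widetilde{\Id_\Dbimon}:\Sone\to\Sfun\Dbimon$ (corresponding to $\Id_\Dbimon$ under the adjunction) to extract $\Ddiff$: uncurry $\Sdiff_\Dbimon$, precompose with $\Oc\widetilde{\Id_\Dbimon}\ITens\Dbimon$ and $\ExpMon0\ITens\Dbimon$, and use $\Leftu:\Tens\Sone\Dbimon\cong\Dbimon$. Inversion of the two constructions reduces, by a Yoneda-type argument exploiting naturality of $\Sdiff$ in $X$, to the universal characterization of $\widetilde{\Id_\Dbimon}$ (namely $\Evlin\Compl\Tens{\widetilde{\Id_\Dbimon}}{\Dbimon}\Compl\Inv{\Leftu}=\Id_\Dbimon$) together with the monoidal-comonad identities satisfied by $(\ExpMon0,\ExpMon2)$.

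I would then verify the axiom correspondence diagram by diagram. For \ref{ax:dchain} versus \ref{ax:sdchain}, the counit triangle of the $\Oc$-coalgebra structure on $\Dbimon$ translates into the counit triangle of the distributive law via the compatibility of $\ExpMon2$ with $\Der{-}$, and the coassociative square translates, using compatibility of $\ExpMon2$ with $\Digg{-}$, into the second \ref{ax:sdchain} diagram. For \ref{ax:dloc} versus \ref{ax:sdloc}, unwinding the currying at $\Win0$ shows that the coalgebra-morphism diagram for $\Win0$ is precisely the projection diagram. For \ref{ax:dadd} versus \ref{ax:sdadd}, I would crucially invoke the descriptions $\Sin0=\Limplp{\Proj0}{X}$ and $\Sfunmonm=\Limplp{\Dbimonm}{X}$ established at the end of \Cref{sec:summability}, so that the coalgebra-morphism conditions on $\Proj0$ and $\Dbimonm$ translate exactly to the monad compatibilities of $\Sdiff$ with $\Sin0$ and $\Sfunmonm$.

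The main obstacle will be the \ref{ax:dadd}/\ref{ax:sdadd} correspondence, specifically the comultiplication square: unwinding both $\Sfunmonm$ (defined via the summability structure) and its representation $\Limplp{\Dbimonm}{X}$ requires careful tracking of $\Seelyt$, the lax monoidal structure $\ExpMon2$, and the comultiplication $\Dbimonm$ through several adjunction transpositions; in particular, one must verify that the $\Oc$-coalgebra structure on $\Tens{(\Dbimon,\Ddiff)}{(\Dbimon,\Ddiff)}$ formed via $\ExpMon2$ is exactly what is required for the second \ref{ax:sdadd} diagram to commute after transport through the currying. The chain-rule correspondence is also delicate because $\Oc\Sdiff$ must be passed through $\ExpMon2$ and $\Oc\Evlin$, but it is essentially bookkeeping once the strong-monoidal-comonad identities are in hand.
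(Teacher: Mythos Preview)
Your forward construction of $\Sdiff$ from $\Ddiff$ is exactly the one the paper gives (it does not prove the theorem here but refers to~\cite{Ehrhard23a} and records precisely the composite you wrote down, \Cref{eq:Sdiff-elem-def-uncurried}). Your reverse construction via the unit $\Sone\to\Sfun\Dbimon$ and a Yoneda argument is the natural one, and your pairing of \ref{ax:dchain}/\ref{ax:sdchain}, \ref{ax:dloc}/\ref{ax:sdloc}, \ref{ax:dadd}/\ref{ax:sdadd} is correct in spirit.

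There is, however, a real gap. The theorem asserts a bijection between morphisms $\Ddiff$ satisfying three conditions and natural transformations $\Sdiff$ which are \emph{differential structures}, and in this paper a differential structure is a $\Sdiff$ satisfying all of \ref{ax:sdchain}, \ref{ax:sdloc}, \ref{ax:sdadd}, \ref{ax:sdwith}, \ref{ax:sdschwarz} and \ref{ax:sdlift}. You only treat the first three. The remaining three have no elementary counterpart on $\Ddiff$: they must be shown to hold \emph{automatically} once $\Sdiff$ is produced from $\Ddiff$ by your formula. Concretely, \ref{ax:sdschwarz} reduces, after transport through the adjunction, to the symmetry of $\ExpMon2$ together with $\Sflip_X=\Limplp{\Sym_{\Dbimon,\Dbimon}}{X}$; \ref{ax:sdlift} reduces to compatibility of $\ExpMon2$ with the multiplication $\Dbimonl$ of the bimonoid $\Dbimon$ via $\Sfunlift_X=\Limplp{\Dbimonl}{X}$; and \ref{ax:sdwith} reduces to the interaction of $\ExpMon2$ with the Seely isomorphisms. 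None of these is conceptually hard in the elementary setting, but each requires an explicit diagram chase that your plan omits, and without them you have not established the full bijection the theorem claims.
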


Let us just explain how the distributive law \(\Sdiff\) is defined when
\(\Ddiff\) is given:
first we define a morphism \(\Sdiffuc_X\) as the following composition
of morphisms
\begin{equation} %
  \label{eq:Sdiff-elem-def-uncurried}
  \begin{tikzcd}
    \Tens{\Oc{\Limplp{\Dbimon}{X}}}{\Dbimon}
    \ar[r,"\Tens{\Oc{\Limplp{\Dbimon}{X}}}{\Ddiff}"]
    &[2em]
    \Tens{\Oc{\Limplp{\Dbimon}{X}}}{\Oc\Dbimon}
    \ar[r,"\ExpMon2"]
    &
    \Oc{\Tensp{\Limplp{\Dbimon}{X}}{\Dbimon}}
    \ar[r,"\Oc{\Evlin}"]
    &
    \Oc X
  \end{tikzcd}
\end{equation}
and then we set %
\(\Sdiff_X=\Curlinp{\Sdiffuc_X}
\in\cL(\Oc{\Sfun X},\Sfun{\Oc X})\).

\begin{definition}
  A \emph{differential elementarily summable resource category} is an
  elementarily summable category resource \(\cL\) equipped with a
  morphism \(\Ddiff\in\cL(\Dbimon,\Oc\Dbimon)\) which satisfies
  \ref{ax:dchain}, \ref{ax:dloc} and \ref{ax:dadd}.
\end{definition}

The whole point of these definitions is the observation
in~\cite{Ehrhard23a} that such a differential structure on \(\Dbimon\) is
quite easy to obtain.
Remember in particular that a SMC \(\cL\) is \emph{Lafont} if \(\Oc\_\)
is the comonad associated with an adjunction between \(\cL\) and the
category \(\cL^{\mathord{\ITens}}\) of commutative comonoids on
\(\cL\).
More precisely, this means that the obvious forgetful functor
\(\cL^{\mathord{\ITens}}\to\cL\) has a right adjoint, and \(\Oc\_\) is
the comonad on \(\cL\) induced by this adjunction%
\footnote{More concretely, but more fuzzily also: in a Lafont resource
  category, any commutative comonoid is a \(\Oc\)-coalgebra.}.
It turns out that many interesting and non additive models of \LL{}
are Lafont resource categories, here are a few examples but there are
many others:
\begin{itemize}
\item the category of coherence spaces with the multiset based
  exponential;
\item the category of hypercoherence spaces with the multiset based
  exponential\footnote{This exponential has not been formally
    introduced as far as we know but is easy to describe.};
\item the category of nonuniform coherence spaces equipped with the
  Boudes exponential~\cite{BucciarelliEhrhard99,Boudes11};
\item the category of probabilistic coherence spaces with its unique
  known exponential~\cite{DanosEhrhard08,CrubilleEhrhardPaganiTasson16}.
\end{itemize}
\begin{theorem} %
  \label{th:lafont-elementary-differential-structure}
  If \(\cL\) is a Lafont resource category which is elementarily
  summable, then \(\Dbimon\) has exactly one differential coalgebra
  structure.
\end{theorem}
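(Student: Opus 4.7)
The plan is to exploit the Lafont condition in its strongest form: since $\cL$ is Lafont, the Eilenberg-Moore category of the $\Oc$-comonad is equivalent to $\cL^{\mathord{\ITens}}$, so $\Oc$-coalgebra structures on any object $X$ correspond bijectively to commutative comonoid structures on $X$, with coalgebra morphisms corresponding to comonoid morphisms. Under this equivalence, \ref{ax:dchain} becomes automatic (it is the very data of a coalgebra), \ref{ax:dloc} becomes the requirement that $\Win 0 : \Sone \to \Dbimon$ be a comonoid morphism, and \ref{ax:dadd} becomes the requirement that $\Proj 0$ and $\Dbimonm$ be comonoid morphisms. So the theorem reduces to showing that there is exactly one commutative comonoid structure on $\Dbimon$ turning each of $\Win 0$, $\Proj 0$, $\Dbimonm$ into a comonoid morphism.

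For existence, I would take the canonical comonoid structure $(\Dbimon, \Proj 0, \Dbimonm)$ furnished by \Cref{prop:dbimon-comonoid}. Then $\Proj 0$ and $\Dbimonm$ are by definition the counit and comultiplication of this structure, hence tautologically comonoid morphisms (to $\Sone$ and to $\Tens\Dbimon\Dbimon$ respectively). Moreover \Cref{prop:dbimon-comonoid} states precisely that $\Proj 0\Compl\Win 0 = \Id_\Sone$ and $\Dbimonm\Compl\Win 0 = \Tens{\Win 0}{\Win 0}$, which are exactly the two equations expressing that $\Win 0$ is a comonoid morphism from $\Sone$ (with its trivial comonoid structure) to $(\Dbimon,\Proj 0,\Dbimonm)$. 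Transferring through the Lafont equivalence yields the required $\Ddiff$ satisfying all three axioms.

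For uniqueness, let $\Ddiff'$ be another differential coalgebra structure, corresponding via Lafont to a commutative comonoid $(\Dbimon, e', m')$. Counit-preservation of $\Proj 0$ (from \ref{ax:dadd}) gives $e' = \Id_\Sone \Compl \Proj 0 = \Proj 0$. Comultiplication-preservation of $\Win 0$ (from \ref{ax:dloc}) gives $m'\Compl\Win 0 = \Tens{\Win 0}{\Win 0}$ (modulo the canonical iso $\Tens\Sone\Sone\simeq\Sone$). To identify $m'\Compl\Win 1$, I would use the comultiplication-preservation condition of $\Dbimonm$ as a comonoid morphism, namely $\Sym_{2,3}\Compl\Tens{m'}{m'}\Compl\Dbimonm = \Tens\Dbimonm\Dbimonm\Compl m'$, precomposed with $\Win 1$. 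Expanding $\Dbimonm\Compl\Win 1 = \Tens{\Win 1}{\Win 0}+\Tens{\Win 0}{\Win 1}$ and exploiting bilinearity of $\ITens$ together with the already-known value $m'\Compl\Win 0 = \Tens{\Win 0}{\Win 0}$, this reduces to a constraint on $\alpha := m'\Compl\Win 1$. Combined with the counit laws $(\Proj 0\ITens\Id)\Compl m' = \Inv\Leftu$, $(\Id\ITens\Proj 0)\Compl m' = \Inv\Rightu$ and cocommutativity $\Sym\Compl m' = m'$, one reads off $\alpha = \Tens{\Win 1}{\Win 0}+\Tens{\Win 0}{\Win 1}$, matching $\Dbimonm\Compl\Win 1$. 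The joint epicity of $\Win 0,\Win 1$ (guaranteed by \ref{ax:depi}, which holds because $\cL$ is elementarily summable) then forces $m' = \Dbimonm$, so the comonoid structure and hence $\Ddiff'$ is uniquely determined.

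The main obstacle is the final step of uniqueness: pinning down $m'\Compl\Win 1$ from the comultiplication-preservation square for $\Dbimonm$. The calculation is not a single-line verification but requires juggling the shuffle $\Sym_{2,3}$, the bilinearity of $\ITens$ with respect to summability (\ref{ax:stens}), and the comonoid laws for $(e',m')$ simultaneously; in particular one must argue that the cofactors $\Tens{\Win 0}{\Win 0}$ appearing on the left of the tensor disentangle the contribution of $\alpha$, and that no ``hidden'' component of $\alpha$ living outside the span of $\Tens{\Win i}{\Win j}$ can survive all the imposed projections. Everything else (existence, the equation for $e'$, the equation for $m'\Compl\Win 0$) is formal once the Lafont equivalence is invoked.
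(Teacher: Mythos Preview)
Your proposal is essentially the paper's argument: obtain the $\oc$-coalgebra on $\Dbimon$ from the commutative comonoid $(\Dbimon,\Proj0,\Dbimonm)$ of \Cref{prop:dbimon-comonoid} via the Lafont property, so that \ref{ax:dchain} is automatic and \ref{ax:dloc}, \ref{ax:dadd} amount to $\Win0$, $\Proj0$, $\Dbimonm$ being comonoid morphisms. The paper's sketch stops at existence and leaves uniqueness implicit, whereas you spell out a route to uniqueness through the same correspondence; the obstacle you flag is genuine (the counit laws and cocommutativity alone do not pin down $m'\Compl\Win1$, as witnessed already in $\COH$ by the alternative $m'\Compl\Win1=\Win1\ITens\Win0+\Win0\ITens\Win1+\Win1\ITens\Win1$), and the comultiplication-preservation of $\Dbimonm$ is indeed what rules out such alternatives.
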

\begin{proof}[Sketch of the proof]
  We know that \((\Dbimon,\Proj0,\Dbimonm)\) is a commutative
  comonoid, see \Cref{prop:dbimon-comonoid}.
  This structure induces the announced \(\oc\)-coalgebra structure on
  \(\Dbimon\) through the Lafont property of \(\cL\).
\end{proof}

\begin{Example}
  The category \(\COH\) is a resource category.
  Its tensor product is defined by
  \(\Web{\Tens{E_1}{E_2}}=\Web{E_1}\times\Web{E_2}\) and coherence
  given by %
  \((a_1,a_2)\Coh{\Tens EF}(a'_1,a'_2)\) if
  \((a_i\Coh{E_i}a'_i)_{i=1,2}\) which is easily seen to be a functor:
  given %
  \((s_i\in\COH(E_i,F_i))_{i=1,2}\), the set %
  \(\Tens{s_1}{s_2}=\Set{((a_1,a_2),(b_1,b_2))
    \St ((a_i,b_i)\in s_i)_{i=1,2}}\) is an element of %
  \(\COH(\Tens{E_1}{E_2},\Tens{F_1}{F_2})\).
  The associated unit is \(\Sone=(\Set\Sonelem,\mathord=)\).

  The resource modality originally introduced by Girard, that we
  considered in \Cref{sec:coh-space-stable-diff} and that fails to
  provide an adequate differential setting, is defined by taking for %
  \(\Oc E\) the set of finite cliques of \(E\), with %
  \(x_1\Coh{\Oc E}x_2\) if \(x_1\cup x_2\in\Cl E\).
  As is quite well known \(\COH\) has also a free exponential whose
  definition is quite similar to the original one: %
  take for \(\Web{\Oc E}\) the set of all \(m\in\Mfin{\Web E}\) such
  that \(\Supp m\in\Cl E\), and coherence given by %
  \(m_1\Coh{\Oc E}m_2\) if \(m_1+m_2\in\Web{\Oc E}\).

  Notice that \(\Web{\Oc\Dbimon}\) is the set of all finite multisets
  of elements of \(\Set{0,1}\) since \(0\Scoh\Dbimon 1\), and
  \(m\Coh{\Oc\Dbimon}m'\) holds for all \(m,m'\in\Web{\Oc\Dbimon}\).
  The differential structure induced by the fact that this exponential
  is free is %
  \(\Ddiff\in\COH(\Dbimon,\Oc\Dbimon)\) is given by
  \begin{equation*}
    \Ddiff=\{(i,\Mset{\List i1k})\St k\in\Nat,\,i,\List i1k\in\Set{0,1}
    \text{ and }i=i_1+\cdots+i_k\}\,.
  \end{equation*}
  In other words \((i,m)\in\Ddiff\) if \(i=0\) and all the elements of
  \(m\) are \(0\), or \(i=1\) and all the elements of \(m\) are \(0\)
  but exactly one, which is \(1\), that is
  \begin{equation*}
    \Ddiff=\Set{(0,k\Mset 0)\St k\in\Nat}
    \cup\Set{(1,\Mset 1+k\Mset 0\St k\in\Nat)}\,.
  \end{equation*}
  Since \(0\Scoh\Dbimon 1\) we have \(m\Coh{\Oc\Dbimon}m'\) for all %
  \(m,m'\in\Web{\Oc\Dbimon}\).
  Moreover if \((0,m),(1,m')\in\Ddiff\) we have \(m\not=m'\) as
  required, since \(1\in\Supp{m'}\setminus\Supp m\).
  Let us check for instance that the second diagram of \ref{ax:dadd}
  commutes%
  \footnote{Remember that this verification is not really needed since
    we know that \ref{ax:dchain}, \ref{ax:dloc} and \ref{ax:dadd} hold
    by the simple fact that the exponential is free.
    We think it is nevertheless useful to have a better intuition on
    the morphism \(\Ddiff\).}.
  So let
  \((i,p) \in\Web\Dbimon\times\Web{\Oc{\Tensp\Dbimon\Dbimon}}\) %
  and let us write \(p=\Mset{(l_1,r_1),\dots,(l_k,r_k))}\).
  Saying that \((i,p)\in\Oc\Dbimonm\Compl\Ddiff\) means that there is %
  \(m\in\Web{\Oc\Dbimon}\) such that %
  \((i,m)\in\Ddiff\) and \((m,p)\in\Oc\Dbimonm\).
  That is,
  \(m=\Mset{\List i1k}\) with \(i=i_1+\cdots+i_k\), %
  \((i_j=l_j+l'_j)_{j=1}^k\)
  and \(p=\Mset{(l_1,l'_1),\dots,(l_k,l'_k)}\).
  To summarize, \((i,p)\in\Oc\Dbimonm\Compl\Ddiff\) holds iff %
  \(p=\Mset{(l_1,l'_1),\dots,(l_k,l'_k)}\) with
  \(i=l_1+\cdots+l_k+l'_1+\cdots+l'_k\) for some \(k\in\Nat\) and %
  \(\List l1k,\List{l'}1k\in\Set{0,1}\).
  Saying that
  \((i,p)\in\ExpMon2\Compl\Tensp\Ddiff\Ddiff\Compl\Dbimonm\) means
  that there are \(l,l'\in\Set{0,1}\) such that \(l+l'=i\), and %
  \(m,m'\in\Web{\Oc\Dbimon}\) such that %
  \((l,m),(l',m')\in\Ddiff\) and \(((m,m'),p)\in\ExpMon2\).
  Up to reindexing, this latter condition means that %
  \(m=\Mset{\List l1k}\), \(m'=\Mset{\List{l'}1k}\) and %
  \(p=\Mset{(l_1,l'_1),\dots,(l_k,l'_k)}\).
  This shows that \((i,p)\in\Oc\Dbimonm\Compl\Ddiff\) holds iff
  \((i,p)\in\ExpMon2\Compl\Tensp\Ddiff\Ddiff\Compl\Dbimonm\), that is,
  the second diagram of \ref{ax:dadd} commutes.

  It is also interesting to describe the associated distributive law
  \(\Sdiff_E\in\COH(\Oc{\Sfun E},\Sfun{\Oc E})\).
  The composition of morphisms described in
  \Cref{eq:Sdiff-elem-def-uncurried} gives us %
  \(\Sdiff'_E\in\COH(\Tens{\Oc{\Limplp\Dbimon E}}{\Dbimon},\Oc E)\):
  \begin{multline*}
    \Sdiff'_E
    =\{((\Mset{(i_1,a_1),\dots,(i_k,a_k)},i),\Mset{\List a1k})\\
    \St k\in\Nat,\,i,\List i1k\in\Set{0,1},\,i=i_1+\cdots+i_k\text{
      and } \Set{\List a1k}\in\Cl E\}
  \end{multline*}
  and hence
  \begin{multline*}
    \Sdiff_E
    =\{(\Mset{(i_1,a_1),\dots,(i_k,a_k)},(i,\Mset{\List a1k}))\\
    \St k\in\Nat,\,i,\List i1k\in\Set{0,1},\,i=i_1+\cdots+i_k\text{
      and } \Set{\List a1k}\in\Cl E\}\,.
  \end{multline*}
  This means that the differential
  \(\Dfun t=(\Sfun t)\Compl\Ddiff_E\in\Kl\COH(\Sfun E,\Sfun F)\) %
  of \(t\in\Kl\COH(E,F)\) is given by
  \begin{align*}
    \Dfun t
    &=\{(\Mset{(i_1,a_1),\dots,(i_k,a_k)},(i,b))
    \in\Web{\Oc{\Sfun E}}\times\Web{\Sfun F}\\
    &\Textsep\St (\Mset{\List a1k},b)\in t,\, i,\List i1k\in\Set{0,1}\text{ and
      } i=i_1+\cdots+i_k\}\\
    &=\{(\Mset{(0,a_1),\dots,(0,a_k)},(0,b))\St(\Mset{\List a1k},b)\in t\}\\
    &\Textsep\cup
      \{(\Mset{(0,a_1),\dots,(0,a_k),(1,a)},(1,b))\St(\Mset{\List a1k,a},b)
      \in t\text{ and }(a\Scoh Ea_j)_{j=1}^k\}
  \end{align*}
  where the condition \((a\Scoh Ea_j)_{j=1}^k\) comes from the fact
  that we must have %
  \(\Mset{(0,a_1),\dots,(0,a_k),(1,a)}\in\Web{\Oc{\Sfun E}}\).

  We can define a stable%
  \footnote{This means that \(\Fun t\) commutes with unions of
    directed families of cliques and with intersections of bounded
    non-empty finite families of cliques.} %
  function \(\Fun t:\Cl E\to\Cl F\) by %
  \(\Fun t(x)=\Set{b\in\Web Y\St\exists m\in\Web{\Oc E} \ \Supp
    m\subseteq x\text{ and }(m,b)\in t}\).
  Then, under the identification
  \begin{align*}
    \Cl{\Sfun E}=\Set{(x,u)\in\Cl E^2\St x\cup u\in\Cl E
    \text{ and }x\cap u=\emptyset}
  \end{align*}
  we have
  \begin{align*}
    \Fun{\Dfun t}(x,u)=
    (\Fun t(x),\Union_{a\in u}\Fun t(x\cup\Set a)\setminus\Fun t(x))
  \end{align*}
  and we recover the initial intuition of the derivative of a stable
  function.
  What makes this differentiation \(t\mapsto\Dfun t\) functorial (that
  is, the chain rule to hold) is the fact that we have moved to the
  free exponential, whose web uses finite multicliques instead of
  finite cliques.

  The great benefit of this systematic approach based on the
  elementary differential structure of \(\COH\) is that now
  \(\Dfun t\) is a morphism \(\Sfun E\to\Sfun F\) in \(\Kl\COH\) and
  hence induces a stable function \(\Cl{\Sfun E}\to\Cl{\Sfun F}\):
  this is a way of saying that the differential depends stably from
  the point where it is computed.
\end{Example}

\begin{Example}
  The category \(\PCOH\) is also a resource category.
  We have seen in Example~\ref{ex:pcoh-cart-smc-elem-sum} that this category
  is a cartesian SMCC which is actually \Staraut{} for the dualizing
  object \(\Sbot=\Sone\).
  Then we define \(\Oc X\) by \(\Web{\Oc X}=\Mfin{\Web X}\) and %
  \(\Pcohp{\Oc X}=\Biorth{\Set{\Prom x\St x\in\Pcoh X}}\) where %
  \(\Prom x\in\Realpto{\Web{\Oc X}}\) is defined by
  \(\Prom x_m=x^m=\prod_{a\in\Web X}x_a^{m(a)}\).
  Given \(t\in\PCOH(X,Y)\), it is easy to check that there is exactly one %
  \(\Oc t\in\Realpto{\Web{\Oc X}\times\Web{\Oc Y}}\) such that %
  \begin{align} %
    \label{eq:pcs-excl-prom-basic-req}
    \forall x\in\Pcoh X\quad\Matappa{\Oc t}{\Prom x}=\Promp{\Matappa tx}\,.
  \end{align}
  Explicitly, a simple computation using
  \Cref{eq:pcs-excl-prom-basic-req} shows that
  \begin{align*}
    \forall (m,p)\in\Web{\Oc X}\times\Web{\Oc Y}\quad 
    (\Oc t)_{m,p}=\sum_{r\in\Mexpset mp}\Multinomb pr t^r
  \end{align*}
  where \(\Mexpset mp\) is the set of all
  \(r\in\Mfin{\Web X\times\Web Y}\) such that %
  \(\forall a\in\Web X\ \sum_{b\in\Web Y}r(a,b)=m(a)\) and %
  \(\forall b\in\Web Y\ \sum_{a\in\Web X}r(a,b)=p(b)\), and %
  \(\Multinomb pr
  =\prod_{b\in\Web Y}\frac{\Factor{p(b)}}
  {\prod_{a\in\Web X}\Factor{r(a,b)}}\in\Nat\)
  is a multinomial coefficient.
  It can be proven that if \(s,t\in\Kl\PCOH(\Oc X,Y)\) satisfy %
  \(\forall x\in\Pcoh X\ \Matappa s{\Prom x}=\Matappa t{\Prom x}\), %
  then \(s=t\) (as matrices).
  So the function \(\Fun t:\Pcoh X\to\Pcoh Y\) defined by %
  \(\Fun t(x)=\Matappa t{\Prom x}\) fully determines \(t\); %
  such a function \(\Pcoh X\to\Pcoh Y\) will be called an
  \emph{analytic} function since indeed we have
  \begin{align*}
    \Fun t(x)=\Big(\sum_{m\in\Web{\Oc X}}t_{m,b}x^m\Big)_{b\in\Web Y}
  \end{align*}
  meaning that \(\Fun t\) is defined as a (generalized) power series
  with nonnegative coefficients.

  This functor is a comonad with counit %
  \(\Der X\in\PCOH(\Oc X,X)\) characterized by \(\Fun{\Der X}(x)=x\) %
  and \(\Digg X\in\PCOH(\Oc X,\Occ X)\) by %
  \(\Fun{\Digg X}(x)=\Promm x\), that is, as matrices, %
  \((\Der X)_{m,a}=\Kronecker m{\Mset a}\) %
  (for \((m,a)\in\Web{\Oc X}\times\Web X\)) and %
  \((\Digg X)_{m,M}=\Kronecker m{\sum M}\).
  It is easily checked to be strong monoidal from %
  \((\PCOH,\mathord{\IWith})\) to \((\PCOH,\mathord{\ITens})\).
  For instance the Seely isomorphism %
  \(\Seelyt_{X,Y}\in\PCOH(\Tens{\Oc X}{\Oc Y},\Oc{\Withp XY}\) is
  fully characterized by %
  \(\Matappa{\Seelyt_{X,Y}}{\Tensp{\Prom x}{\Prom y}}=\Prom{(x,y)}\) %
  (identifying \(\Pcohp{\With XY}\) with \({\Pcoh X}\times{\Pcoh Y}\)).

  It was proved in~\cite{CrubilleEhrhardPaganiTasson16} that this
  exponential is the free one, that is, the SMC \(\PCOH\) is a Lafont
  category.
  So by \Cref{th:lafont-elementary-differential-structure} the object
  \(\Dbimon=\With\Sone\Sone\) has a structure of \(\oc\)-coalgebra
  \(\Ddiff\in\PCOH(\Dbimon,\Oc\Dbimon)\), which is
  characterized by
  \begin{align*}
    \Ddiff_{i,\Mset{\List i1k}}=\Kronecker i{i_1+\cdots+i_k}
  \end{align*}
  for \(i,\List i1k\in\Set{0,1}\).
  This structure turns \(\PCOH\) into an elementary coherent
  differential category.
  So we have an induced distributive law %
  \(\Sdiff_X\in\PCOH(\Oc{\Sfun X},\Sfun{\Oc X})\) and an easy
  computation shows that
  \begin{align*}
    (\Sdiff_X)_{p,(i,m)}
    =
    \begin{cases}
      1 & \text{if }p=\Indactms 0m\\
      m(a) & \text{if }m=m_0+\Mset a
      \text{ and }p=\Indactms 0{m_0}+\Mset{(1,a)}\\
      0 & \text{otherwise.}
    \end{cases}
  \end{align*}
  so that the extension \(\Dfun\) of \(\Sfun\) to \(\Kl\PCOH\)
  acts as follows on morphisms.
  Let \(t\in\PCOH(\Oc X,Y)\) then
  \(\A\Dfun t\in\PCOH(\Oc{\Sfun X},\Sfun Y)\) is given by %
  \begin{align*}
    (\Dfun t)_{p,(i,b)}
    =
    \begin{cases}
      t_{m,b} & \text{if }i=0\text{ and }p=\Indactms 0m\\
      m(a)t_{m,b} & \text{if }i=1\text{, }p=\Indactms 0{m_0}+\Mset{(1,a)}
      \text{ and }m=m_0+\Mset a\\
      0 & \text{otherwise.}
    \end{cases}
  \end{align*}
  This means that, given an element of \(\Pcohp{\Sfun X}\) that we
  consider as a pair %
  \((x,u)\in\Pcoh X^2\) such that \(x+u\in\Pcoh X\), we have
  \begin{align*}
    \Fun{\Dfun t}(x,u)
    =(\Fun t(x),\Fun t'(x)\cdot u)
  \end{align*}
  where
  \[
    \Fun t'(x)\cdot u
    =\left(\sum_{\Biind{m\in\Mfin{\Web X}}{a\in\Web X}}
      (m(a)+1)t_{m+\Mset a,b}x^mu_a\right)_{b\in\Web Y}
    =\left(
      \lim_{\epsilon\to 0^+}\frac{\Fun{t}(x+\epsilon u)_b
        -\Fun t(x)_b}{\epsilon}
      \right)_{b\in\Web Y}
  \]
  is the differential of \(\Fun t\) computed at \(x\) in the direction
  \(u\).
\end{Example}

\section{The closed case}
\label{sec:closed-differential}
So far we have not considered function space constructions, but the
reader acquainted with the denotational semantics of \LL{} probably
knows that both SMC's \(\COH\) and \(\PCOH\) are symmetric monoidal
closed categories, and that the associated Kleisli categories
\(\Kl\COH\) and \(\Kl\PCOH\) are cartesian closed.

More generally, when \(\cL\) is a resource category which is closed
(as an SMC), we know that \(\Kl\cL\) is a CCC%
\footnote{This is the categorical counterpart of the Girard's
  translation of intuitionistic logic into linear logic.}.
We use the notation \(\Simpl XY\) for the object of morphisms from
\(X\) to \(Y\) in \(\Kl\cL\), which is \(\Limpl{\Oc X}{Y}\) and whose
associated evaluation morphism %
\(\Ev\in\Kl\cL(\With{(\Simpl XY)}{X},Y)\) is defined as the following
composition of morphisms in \(\cL\).
\begin{equation*}
  \begin{tikzcd}
    \Oc{\Withp{(\Simpl XY)}{X}}
    \ar[r,"\Invp{\Seelyt}"]
    &[1em]
    \Tens{\Oc\Limplp{\Oc X}{Y}}{\Oc X}
    \ar[r,"\Der{\Limpl{\Oc X}Y}"]
    &[1.8em]
    \Tens{\Limplp{\Oc X}{Y}}{\Oc X}
    \ar[r,"\Evlin"]
    &[-1em]
    Y
  \end{tikzcd}
\end{equation*}
Given \(f\in\Kl\cL(\With ZX,Y)=\cL(\Oc{\Withp ZX},Y)\), we have %
\(f\Compl\Seelyt\in\cL(\Tens{\Oc Z}{\Oc X},Y)\) and hence %
\(\Ap\Curlin{f\Compl\Seelyt}\in\Kl\cL(Z,\Simpl XY)\) and this
morphism \(\A\Cur f=\Ap\Curlin{f\Compl\Seelyt}\) is uniquely
characterized by the equation
\begin{align*}
  \Ev\Comp\Withp{\A\Cur f}{X}=f\,.
\end{align*}

The main ingredient in the interpretation of a coherent differential
\(\lambda\)-calculus in the CCC associated with a coherent
differential resource category which is closed (as an SMC) will be a
morphism allowing to internalize the action of the \(\Dfun\) functor
as a morphism %
\(\Dfunint\in\Kl\cL(\Simpl XY,\Simpl{\Dfun X}{\Dfun Y})\).
This is not a surprise since one the main features of a strong monad is
precisely to allow such internalizations.
This morphism is defined by %
\(\Dfunint=\Ap\Cur{\A\Dfun\Ev\Comp\Dfunstr2_{\Simpl XY,X}}\) %
where \(\A\Dfun\Ev\Comp\Dfunstr1_{\Simpl XY,X}\) %
is typed as follows in \(\Kl\cL\):
\begin{equation*}
  \begin{tikzcd}
    \With{(\Timpl XY)}{\Dfun X}
    \ar[r,"\Dfunstr2_{\Simpl XY,X}"]
    &[2em]
    \Ap\Dfun{\With{(\Timpl XY)}{X}}
    \ar[r,"\A\Dfun\Ev"]
    &
    \Dfun Y\,.
  \end{tikzcd}
\end{equation*}

Let \(f\in\Kl\cL(\With ZX,Y)\).
There are two morphisms in %
\(\Kl\cL(Z,\Timpl{\Dfun X}{\Dfun Y})\) that we can naturally define
using \(f\), namely
\begin{align*}
  \Dfunint\Comp\A\Cur f
  \text{\quad and\quad}
  \Ap\Cur{\Dfun_2 f}=\Ap\Cur{\Dfun f\Comp\Dfunstr1_{Z,X}}
\end{align*}

\begin{proposition}
  For any \(f\in\Kl\cL(\With ZX,Y)\) we have %
  \(\Dfunint\Comp\A\Cur f=\Ap\Cur{\Dfun f\Comp\Dfunstr1_{Z,X}}\).
\end{proposition}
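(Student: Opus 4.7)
The natural approach is to prove the equality by uncurrying both sides in the CCC $\Kl\cL$ and comparing the resulting morphisms $\With Z{\Dfun X} \to \Dfun Y$. Since $\Ev \Comp \Withp{\_}{\Dfun X}$ is a bijection between $\Kl\cL(Z, \Simpl{\Dfun X}{\Dfun Y})$ and $\Kl\cL(\With Z{\Dfun X}, \Dfun Y)$ (this is just the universal property of the exponential), it suffices to show that the two uncurried morphisms agree.

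On the left-hand side, I would first expand using the defining equation of $\Dfunint$, namely $\Ev \Comp \Withp{\Dfunint}{\Dfun X} = \A\Dfun\Ev \Comp \Dfunstr2_{\Simpl XY, X}$, which gives
\begin{equation*}
  \Ev \Comp \Withp{\Dfunint \Comp \A\Cur f}{\Dfun X}
  = \A\Dfun\Ev \Comp \Dfunstr2_{\Simpl XY, X} \Comp \Withp{\A\Cur f}{\Dfun X}.
\end{equation*}
Next I would apply naturality of $\Dfunstr2$ in its first argument (a component of Theorem~\ref{th:dfun-strength-with}) to the Kleisli morphism $\A\Cur f : Z \to \Simpl XY$, obtaining $\Dfunstr2_{\Simpl XY, X} \Comp \Withp{\A\Cur f}{\Dfun X} = \A\Dfun(\Withp{\A\Cur f}{X}) \Comp \Dfunstr2_{Z, X}$. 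Then functoriality of $\Dfun$ lets me collapse $\A\Dfun\Ev \Comp \A\Dfun(\Withp{\A\Cur f}{X}) = \A\Dfun(\Ev \Comp \Withp{\A\Cur f}{X}) = \A\Dfun f$, using the defining property $\Ev \Comp \Withp{\A\Cur f}{X} = f$ of the currying. This yields $\A\Dfun f \Comp \Dfunstr2_{Z, X}$, which is precisely the uncurried version of the right-hand side, completing the proof.

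The main subtlety to watch is that $\Dfunstr2$ is a natural transformation of functors $\Kl\cL \times \Kl\cL \to \Kl\cL$, so its naturality must be invoked with respect to a \emph{Kleisli} morphism $\A\Cur f$ rather than a raw morphism in $\cL$; this is legitimate because Theorem~\ref{th:dfun-strength-with} establishes $\Dfunstr2$ precisely at the level of $\Kl\cL$. Beyond this care, everything reduces to bookkeeping: the defining equation of $\Dfunint$, naturality of $\Dfunstr2$, functoriality of $\Dfun$, and the universal property of currying. No nontrivial use of the differential distributive law $\Sdiff$ itself is required at this stage, since it has been absorbed into the construction of $\Dfun$ and $\Dfunint$.
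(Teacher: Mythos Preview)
Your argument is correct and is exactly the standard uncurrying argument one would expect; the paper itself gives no detailed proof beyond saying ``the proof is easy,'' so your elaboration is a faithful expansion of what the paper intends.

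One point worth flagging: you (correctly) use $\Dfunstr2$ throughout, whereas the statement as printed in the paper has $\Dfunstr1_{Z,X}$ on the right-hand side. This is a typo in the paper: for the types to match, the uncurried right-hand side must be a morphism $\With{Z}{\Dfun X}\to\Dfun Y$, which forces $\Dfunstr2_{Z,X}$ (the strength with $\Dfun$ on the second factor), not $\Dfunstr1_{Z,X}$. The same slip occurs in the surrounding text of the paper, where $\Ap\Cur{\Dfun_2 f}$ is written as $\Ap\Cur{\Dfun f\Comp\Dfunstr1_{Z,X}}$ even though by Definition~\ref{def:kleisli-partial-derivatives} one has $\Dfun_2 f=\Dfun f\Comp\Dfunstr2_{Z,X}$. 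Your proof implicitly corrects this and is the right reading.
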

The proof is easy, and the meaning of this statement is that \(\cL\)
validates a form of ``differential \(\beta\)-reduction''.
This proposition shows how to compute the derivative of an abstraction
wrt.~to one of its free parameters.

Consider now \(f\in\Kl\cL(\With ZX,\Simpl YU)\) and %
\(g\in\Kl\cL(\With ZX,Y)\); \(f\) should be seen as a function %
\(Y\to U\) depending on two parameters in \(Z\) and \(X\) and \(g\)
as an argument for that function, similarly parameterized.
We can apply \(f\) to \(g\), defining %
\(\App fg=\Ev\Comp\Tuple{f,g}\in\Kl\cL(\With ZX,U)\) %
and then we can take the derivative of \(\App fg\) wrt.~the second
parameter which is
\begin{align*}
  \A{\Dfun_2}{\App fg}=\A\Dfun{\App fg}\Comp\Dfunstr2
  \in\Kl\cL(\With Z{\A\Dfun X},\A\Dfun{U})\,.
\end{align*}

On the other hand we have %
\(\A{\Dfun_2}g\in\Kl\cL(\With Z{\A\Dfun X},\A\Dfun Y)\) and %
\(\A{\Dfun_2}f\in\Kl\cL(\With Z{\A\Dfun X}),\Ap\Dfun{\Simpl YU})\) %
so that %
\(\Dfunstrf_{Y,U}\Comp\A{\Dfun_2}f
\in\Kl\cL(\With{Z}{\A\Dfun X},\Simpl Y{\A\Dfun U})\) %
where %
\(\Dfunstrf_{U,V}=\A\Derfun{\Sfunstri_{\Oc Y,U}}
\in\Kl\cL(\Ap\Dfun{\Simpl YU},\Simpl Y{\A\Dfun U})\) is an iso by %
\ref{ax:slimpl}.
Therefore %
\(\Dfunint\Comp\Dfunstrf_{Y,U}\Comp\A{\Dfun_2}f
\in\Kl\cL(\With{Z}{\A\Dfun X},\Simpl{\A\Dfun Y}{\A{\Dfun^2}U})\) %
and hence %
\(\App{\Dfunint\Comp\Dfunstrf_{Y,U}\Comp\A{\Dfun_2}f}{\A{\Dfun_2}g}
\in\Kl\cL(\With Z{\A\Dfun X},\A{\Dfun^2}U)\).
Remember that \(\Dmonm_U\in\Kl\cL(\A{\Dfun^2}{U},\A\Dfun U)\) is the
multiplication of the monad \(\Dfun\) on \(\Kl\cL\).

\begin{theorem} %
  \label{th:dfun-part-app}
  If \(f\in\Kl\cL(\With ZX,\Simpl YU)\) and %
  \(g\in\Kl\cL(\With ZX,Y)\) then we have
  \begin{equation*}
    \A{\Dfun_2}{\App fg}
    =\Dmonm_U
    \Comp\App{\Dfunint\Comp\Dfunstrf_{Y,U}\Comp\A{\Dfun_2}f}{\A{\Dfun_2}g}
    =\App{(\Simpl{\Dfun Y}{\Dmonm_U})
      \Comp\Dfunint
      \Comp\Dfunstrf_{Y,U}
      \Comp\A{\Dfun_2}f}
    {\A{\Dfun_2}g}\,.
  \end{equation*}
\end{theorem}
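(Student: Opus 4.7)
The plan is to expand $\App fg = \Ev\Comp\Tuple{f,g}$ in $\Kl\cL$, apply the chain rule for $\A\Dfun$, then reorganise using the strong monoidal structure of $\Dfun$ wrt.\ $\IWith$ until the result matches the internal application of $\Dfunint$ on the right.

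First I would write, by functoriality of $\Dfun$ on $\Kl\cL$ and the definition $\A{\Dfun_2}h=\A\Dfun h\Comp\Dfunstr2$,
\[
  \A{\Dfun_2}(\App fg) = \A\Dfun\Ev \Comp \A\Dfun\Tuple{f,g} \Comp \Dfunstr2_{Z,X}.
\]
Since $\Dfun$ is strong monoidal wrt.\ $\IWith$ with structure $\Dfunmon$ (\Cref{th:dfun-strength-with}), pairing is transported as $\A\Dfun\Tuple{f,g}=\Dfunmon_{\Simpl YU,Y}\Comp\Tuple{\A\Dfun f,\A\Dfun g}$. Distributing $\Dfunstr2_{Z,X}$ across the pairing using naturality and the universal property of $\Tuple{-,-}$ yields
\[
  \A\Dfun\Tuple{f,g}\Comp\Dfunstr2_{Z,X}
  \;=\;\Dfunmon_{\Simpl YU,Y}\Comp\Tuple{\A{\Dfun_2}f,\A{\Dfun_2}g}.
\]

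Next I would substitute the alternate expression from \Cref{th:dfun-strength-with},
$\Dfunmon_{A,B}=\Dmonm_{\With AB}\Comp\A\Dfun\Dfunstr1_{A,B}\Comp\Dfunstr2_{A,\A\Dfun B}$, and slide $\A\Dfun\Ev$ past $\Dmonm$ via naturality of $\Dmonm$, producing $\Dmonm_U\Comp\A\Dfun^2\Ev$ on the outside. Functoriality of $\A\Dfun$ then collapses the residual inner composition into $\A\Dfun(\A\Dfun\Ev\Comp\Dfunstr2_{\Simpl YU,Y})$, and the parenthesised term is, by the very definition of $\Dfunint$, the uncurried form of $\Dfunint$. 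The iso $\Dfunstrf_{Y,U}$ coming from $\ref{ax:slimpl}$ then mediates between $\A\Dfun(\Simpl YU)$ and $\Simpl Y{\A\Dfun U}$, so that the outer $\A\Dfun$ moves inside the function slot and the composite takes the form $\Dfunint\Comp\Dfunstrf_{Y,U}\Comp\A{\Dfun_2}f$ in the function position, applied to $\A{\Dfun_2}g$; this gives the first announced equality after prefixing with $\Dmonm_U$. The second equality is then an instance of the cartesian-closed identity $\App{(\Simpl Yh)\Comp k}{a}=h\Comp\App ka$ with $h=\Dmonm_U$.

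The main obstacle is the careful bookkeeping of typings of the various strengths $\Dfunstr1,\Dfunstr2,\Dfunmon,\Dfunstrf$: one must pick the ``$\Dfunstr2$-outer'' variant of $\Dfunmon$ so that after naturality manipulations the remaining inner composition is exactly the uncurried $\Dfunint$, and verify that the $\Dfunstrf_{Y,U}$ iso (provided precisely by $\ref{ax:slimpl}$) absorbs the mismatch between $\A\Dfun(\Simpl YU)$ and $\Simpl Y{\A\Dfun U}$ forced by the codomain $\Simpl{\A\Dfun Y}{\A\Dfun U}$ of $\Dfunint$. Everything else is routine application of naturality and of the monad and strong-monoidal laws already established in \Cref{th:dfun-strength-with}.
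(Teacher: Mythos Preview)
Your reduction is essentially the one the paper performs: expand $\App fg=\Ev\Comp\Tuple{f,g}$, use $\A\Dfun\Tuple{f,g}=\Dfunmon\Comp\Tuple{\A\Dfun f,\A\Dfun g}$, and reduce everything to a single commutation relating $\A\Dfun\Ev\Comp\Dfunmon_{\Simpl YU,Y}$ to the path through $\Dfunstrf_{Y,U}$, $\Dfunint$ and $\Dmonm_U$. The paper states exactly this diagram as the heart of the proof.

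The gap is in your last paragraph. You write that ``everything else is routine application of naturality and of the monad and strong-monoidal laws already established in \Cref{th:dfun-strength-with}'', invoking only \ref{ax:slimpl}. The paper says explicitly that this commutation ``is not straightforward and uses crucially \ref{ax:sdwith}'', the compatibility of the distributive law $\Sdiff$ with the Seely isomorphisms. You never mention \ref{ax:sdwith}. The reason it is unavoidable is that $\Ev\in\Kl\cL(\With{(\Simpl YU)}{Y},U)$ is \emph{not} the image under $\Derfun$ of a linear morphism: it is built from $\Evlin$ precomposed with $\Inv{(\Seelyt)}$ and a dereliction. Hence computing $\A\Dfun\Ev=(\Sfun\Ev)\Compl\Sdiff_{\With{(\Simpl YU)}{Y}}$ forces you to push $\Sdiff$ through $\Seelyt$, and that interaction is precisely what \ref{ax:sdwith} axiomatises (together with the lax monoidality $\Sfunlaxt$ of $\Sfun$). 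The strong-monoidal facts of \Cref{th:dfun-strength-with} concern only the $\IWith$-structure on objects and do not by themselves control how $\Sdiff$ behaves against the Seely iso inside $\Ev$. So your sketch identifies the right target diagram but stops just before the genuinely non-trivial step; to complete it you must unfold $\A\Dfun\Ev$ in $\cL$, apply \ref{ax:sdwith} to commute $\Sdiff$ with $\Seelyt$, and then use \ref{ax:slimpl} to recognise $\Dfunstrf_{Y,U}$ on the other side.
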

\begin{proof}[Proof sketch]
  It suffices to establish the following commutation
  \begin{equation*}
    \begin{tikzcd}
      \With{\Ap\Dfun{\Simpl YU}}{\A\Dfun Y}
      \ar[r,"\With{\Dfunstrf_{Y,U}}{\A\Dfun Y}"]
      \ar[dd,swap,"\Dfunmon_{\Simpl YU,Y}"]
      &[2em]
      \With{\Simplp Y{\A\Dfun U}}{\A\Dfun Y}
      \ar[r,"\Dfunstr1_{\Simpl Y{\A\Dfun U},Y}"]
      &[2em]
      \Ap\Dfun{\With{\Simplp Y{\A\Dfun U}}{Y}}
      \ar[d,"\A\Dfun\Ev"]
      \\
      &
      &
      \A{\Dfun^2}U
      \ar[d,"\Dmonm_U"]
      \\
      \Ap\Dfun{\With{\Simplp YU}{Y}}
      \ar[rr,"\A\Dfun\Ev"]
      &&
      \A\Dfun{U}
    \end{tikzcd}
  \end{equation*}
  The proof is not straightforward and uses crucially \ref{ax:sdwith}.
  It can be found in~\cite{Ehrhard23b}.
\end{proof}
This theorem shows how to compute the derivative of an application
wrt.~one of its parameters.
`
\section{Fixpoints}
\label{sec:fixpoints}

We start with stating a few standard results about fixpoints in a CCC
which is enriched in \(\omega\)-cpos.

\subsection{Reminder about fixpoint operators in a CCC}
\begin{definition}
  An \(\omega\)-cpo is a poset which has a least element \(0\) and
  where any monotone sequence has a lub.
  If \(D\) and \(D'\) are \(\omega\)-cpos, a function \(f:D\to D'\) is
  Scott-continuous
  (or simply continuous) if \(f\) is monotone and commutes with the
  lubs of monotone sequences.
\end{definition}

\begin{remark}
  In the literature, Scott continuity is usually defined as
  preservation of the lubs of arbitrary directed sets.
  However such lubs do not always exist in the situations we are
  interested in. This is specifically the case in categories arising
  in continuous probabilistic settings, but in such situations lubs of
  monotone countable sequences can be assumed to exist thanks to the
  monotone convergence theorem.
  We nevertheless use the term ``Scott continuity'' since the
  fundamental ideas of Dana~Scott are also central in such models.
\end{remark}

\begin{proposition}
  If \(D\) is an \(\omega\)-cpo and \(f:D\to D\) is continuous then
  \(f\) has a least fixpoint, which is \(\sup_{n\in\Nat}f^n(0)\).
\end{proposition}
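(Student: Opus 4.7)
The plan is to proceed in the classical Kleene style in three steps: construct the candidate, show it is a fixpoint, show it is least.

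First I would verify that the sequence \((f^n(0))_{n\in\Nat}\) is monotone, so that the supremum \(x=\sup_{n\in\Nat}f^n(0)\) exists in \(D\). The base case is immediate: \(0\leq f(0)\) since \(0\) is the least element. The inductive step \(f^n(0)\leq f^{n+1}(0)\Implies f^{n+1}(0)\leq f^{n+2}(0)\) follows from monotonicity of \(f\), which is part of Scott-continuity.

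Next I would show that \(x\) is a fixpoint. This is where Scott-continuity is used in an essential way: since \((f^n(0))_{n\in\Nat}\) is a monotone sequence with lub \(x\), continuity of \(f\) gives
\begin{equation*}
  f(x)=f\Big(\sup_{n\in\Nat}f^n(0)\Big)=\sup_{n\in\Nat}f^{n+1}(0)=\sup_{n\in\Nat}f^n(0)=x,
\end{equation*}
where the last equality uses that shifting a monotone sequence by one index (and the initial term \(0\) being the least element) does not change the lub.

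Finally, for leastness, let \(y\in D\) satisfy \(f(y)=y\). I would prove by induction on \(n\) that \(f^n(0)\leq y\): for \(n=0\) this holds because \(0\) is least, and if \(f^n(0)\leq y\) then by monotonicity \(f^{n+1}(0)\leq f(y)=y\). Taking the lub over \(n\) yields \(x\leq y\), as required. No step is particularly subtle here; the only point that deserves care is the use of Scott-continuity with monotone \emph{sequences} (rather than arbitrary directed sets), which is exactly the notion adopted in the preceding definition.
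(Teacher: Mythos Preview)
Your proof is correct and is the standard Kleene argument. The paper itself provides no proof for this proposition, treating it as a well-known result, so there is nothing to compare against; your write-up would serve perfectly well as the omitted proof.
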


\begin{definition} %
  \label{def:lambda-category}
  A \emph{\(\lambda\)-category} is a CCC which is enriched in
  \(\omega\)-cpos and in which the pairing operation %
  \(\cC(X,Y_1)\times\cC(X,Y_2)\to\cC(X,\With{Y_1}{Y_2})\) and the
  currying operation %
  \(\cC(\With ZX,Y)\to\cC(Z,\Impl XY)\) %
  are Scott continuous.
\end{definition}

\begin{proposition} %
  \label{prop:lambda-cat-basic-fixpoint}
  Let \(\cC\) be a \(\lambda\)-category and \(Y\) be an object of
  \(\cC\).
  Then any morphism \(h\in\cC(Y,Y)\) has a least fixpoint, that is,
  there is a morphism \(y\in\cC(\Stop,Y)\) such that \(h\Comp y=y\)
  and \(y\) is minimal with this property in the poset
  \(\cC(\Stop,X)\).
\end{proposition}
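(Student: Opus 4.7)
The plan is to apply the classical Kleene fixpoint construction inside the $\omega$-cpo $\cC(\Stop,Y)$, using that enrichment in $\omega$-cpos means the hom-posets carry least elements and that composition is Scott continuous in each argument.

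First I would fix some notation. Since $\cC$ is enriched in $\omega$-cpos, the poset $\cC(\Stop,Y)$ is itself an $\omega$-cpo and therefore has a least element, which I will denote $\bot\in\cC(\Stop,Y)$. I then define a sequence $(y_n)_{n\in\Nat}$ in $\cC(\Stop,Y)$ by $y_0=\bot$ and $y_{n+1}=h\Comp y_n$. The first step is to establish that this sequence is monotone: $y_0=\bot\leq y_1$ holds by definition of the least element, and if $y_n\leq y_{n+1}$, then postcomposing with $h$ (which, being a specialisation of the continuous composition operation, is in particular monotone) yields $y_{n+1}=h\Comp y_n\leq h\Comp y_{n+1}=y_{n+2}$.

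Next, since $\cC(\Stop,Y)$ is an $\omega$-cpo, I can set $y=\sup_{n\in\Nat}y_n\in\cC(\Stop,Y)$. The key computation is that $y$ is a fixpoint: using the Scott continuity of composition in its right argument (which is part of the $\omega$-cpo enrichment of $\cC$), one gets
\begin{equation*}
h\Comp y=h\Comp\sup_{n\in\Nat}y_n=\sup_{n\in\Nat}(h\Comp y_n)=\sup_{n\in\Nat}y_{n+1}=\sup_{n\in\Nat}y_n=y\,,
\end{equation*}
where the last equality uses that the sequences $(y_n)_{n\geq 0}$ and $(y_{n+1})_{n\geq 0}$ are cofinal in each other.

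For minimality, let $z\in\cC(\Stop,Y)$ be any morphism such that $h\Comp z=z$. I prove $y_n\leq z$ for all $n$ by induction: for $n=0$ this holds because $y_0=\bot$ is the least element of $\cC(\Stop,Y)$, and if $y_n\leq z$ then by monotonicity of composition $y_{n+1}=h\Comp y_n\leq h\Comp z=z$. Passing to the supremum yields $y\leq z$. There is no genuine obstacle here: the only delicate point is being careful about which argument of composition needs continuity (the right one, for the fixpoint equation) versus mere monotonicity (for building the chain and proving minimality), both of which are provided by the $\omega$-cpo enrichment assumed in \Cref{def:lambda-category}. The continuity of pairing and currying required in that definition is not needed for this basic statement, but will be essential later when parameterising fixpoints.
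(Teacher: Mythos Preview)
Your proof is correct and follows exactly the same Kleene iteration argument as the paper: define $y_0$ as the least element, iterate $y_{n+1}=h\Comp y_n$, use monotonicity of composition to get a chain, and Scott continuity of composition to show the supremum is a fixpoint. You are in fact more thorough than the paper's proof, which omits the explicit verification of minimality that you spell out.
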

\begin{proof}
  One defines a sequence \((y_n\in\cC(\Stop,Y))_{n\in\Nat}\) by
  setting %
  \(y_0=0\) and \(y_{n+1}=f\Comp y_n\) and using the fact that
  composition is monotone, one checks that this sequence is monotone.
  Its lub \(y\) satisfies the required property by Scott continuity of
  composition.
\end{proof}

\begin{theorem} %
  \label{th:lambda-cat-fixpoint-operator}
  Let \(\cC\) be a \(\lambda\)-category.
  For any objects \(X\) of $\cC$ there is a morphism %
  \(\cY\in\cC(\Impl XX,X)\) such that, for any morphism %
  \(f\in\cC(Z,\Simpl XX)\) %
  the morphism \(\cY\Comp f\in\cC(Z,X)\) %
  is the least morphism \(g\in\cC(Z,X)\) such that %
  \(\Ev\Comp\Tuple{f,g}=g\).
\end{theorem}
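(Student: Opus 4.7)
The plan is to construct $\cY$ as the least fixed point of an endomap on the hom-cpo $\cC(\Simpl XX, X)$, and then verify that its parametric behaviour on an arbitrary $f \in \cC(Z, \Simpl XX)$ produces exactly the least fixpoint of the map induced by $f$. Concretely, I would consider the operator
\[
  \Phi : \cC(\Simpl XX, X) \longrightarrow \cC(\Simpl XX, X),
  \qquad \Phi(g) = \Ev \Comp \Tuple{\Id_{\Simpl XX}, g}.
\]
The fact that $\cC$ is enriched in $\omega$-cpos ensures that $\cC(\Simpl XX, X)$ is an $\omega$-cpo with a least element $0$, and that both pre- and post-composition are Scott continuous; combining this with the Scott continuity of pairing (Definition~\ref{def:lambda-category}) gives that $\Phi$ is itself Scott continuous. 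I then define $\cY = \sup_{n \in \Nat} \Phi^n(0) \in \cC(\Simpl XX, X)$; by the same reasoning as in Proposition~\ref{prop:lambda-cat-basic-fixpoint}, $\cY$ is the least fixed point of $\Phi$, so in particular
\[
  \Ev \Comp \Tuple{\Id_{\Simpl XX}, \cY} = \cY.
\]

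Next I would establish the parametric fixpoint property. Let $f \in \cC(Z, \Simpl XX)$. Precomposing the fixpoint equation for $\cY$ with $f$ and using the naturality of pairing, namely $\Tuple{a, b} \Comp c = \Tuple{a \Comp c, b \Comp c}$, I get
\[
  \cY \Comp f = \Ev \Comp \Tuple{\Id, \cY} \Comp f = \Ev \Comp \Tuple{f, \cY \Comp f},
\]
so $\cY \Comp f$ is indeed a fixed point of the map $g \mapsto \Ev \Comp \Tuple{f, g}$ on $\cC(Z,X)$.

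For minimality, suppose $g \in \cC(Z, X)$ satisfies $\Ev \Comp \Tuple{f, g} = g$. I would prove by induction that $\Phi^n(0) \Comp f \leq g$ for every $n$. The base case is immediate since $\Phi^0(0) \Comp f = 0 \Comp f = 0 \leq g$. For the inductive step, from $\Phi^n(0) \Comp f \leq g$ and monotonicity of composition and pairing, one computes
\[
  \Phi^{n+1}(0) \Comp f
  = \Ev \Comp \Tuple{\Id, \Phi^n(0)} \Comp f
  = \Ev \Comp \Tuple{f, \Phi^n(0) \Comp f}
  \leq \Ev \Comp \Tuple{f, g}
  = g.
\]
Then by Scott continuity of precomposition with $f$,
\[
  \cY \Comp f = \bigl(\sup_{n} \Phi^n(0)\bigr) \Comp f
  = \sup_{n} \bigl(\Phi^n(0) \Comp f\bigr) \leq g,
\]
which gives the desired minimality. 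Notice that taking $Z = \Simpl XX$ and $f = \Id$ recovers the fixed point equation for $\cY$ itself, ensuring consistency.

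The only potentially delicate point is making sure all continuity hypotheses are correctly assembled: one needs Scott continuity of pairing (given), Scott continuity of composition in both variables (from the $\omega$-cpo enrichment), and the existence of bottoms in hom-cpos (built into the definition of $\omega$-cpo used here). Once these are in place, the whole argument is essentially the standard Kleene construction, packaged internally to the enriched CCC. No use is made of currying beyond having the object $\Simpl XX$ and its evaluation morphism, so the proof works uniformly over any such $\lambda$-category.
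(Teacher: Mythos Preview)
Your proof is correct. Both you and the paper end up with the same iterative description \(\cY=\sup_n\cY_n\) where \(\cY_0=0\) and \(\cY_{n+1}=\Ev\Comp\Tuple{\Id,\cY_n}\), and both verify the parametric fixpoint property by the computation \(\cY_n\Comp f=g_n\). The difference is in packaging: the paper first reduces to Proposition~\ref{prop:lambda-cat-basic-fixpoint} by currying, taking the least fixpoint of an endomap on \(\cC(\Stop,\Simpl{(\Simpl XX)}{X})\), and then uncurrying to recover \(\cY\); you instead run the Kleene iteration directly in the hom-cpo \(\cC(\Simpl XX,X)\). Your route is more elementary in that it never invokes the Scott continuity of currying (as you yourself note), using only continuity of composition and pairing; the paper's route makes explicit use of the full \(\lambda\)-category structure. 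Both tacitly rely on \(0\Comp f=0\) for the base case of the minimality induction, which is harmless here since the ambient categories are enriched in pointed sets with zero morphisms.
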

This is a standard result in semantics, we give the proof because we
think it it helps understanding \Cref{sec:diff-fix}.
\begin{proof}
  Apply \Cref{prop:lambda-cat-basic-fixpoint} with
  \(Y=\Implp{\Implp XX}X\) and \(h=\Cur(H)\in\cC(Y,Y)\) where \(H\) is
  the following composition of morphisms in \(\cC\):
  \begin{equation*}
    \begin{tikzcd}
      \With Y{\Implp XX}
      \ar[d,"\With Y{\Tuple{\Id,\Id}}"]
      \\
      \With Y{\With{\Implp XX}{\Implp XX}}
      \ar[d,"\Tuple{\Proj3,\Proj1,\Proj2}"]
      \\
      \With{\With{\Implp XX}{Y}}{\Implp XX}
      \ar[d,"\With{\Implp XX}\Ev"]
      \\
      \With{\Implp XX}{X}
      \ar[d,"\Ev"]
      \\
      X
    \end{tikzcd}
  \end{equation*}
  which gives us \(\cY'\in\cC(\Stop,\Simpl{(\Simpl XX)}{X})\) which is
  the least morphism such that \(H\Comp\cY'=\cY'\).
  Then \(\cY\) is the following composition of morphisms in \(\cC\):
  \begin{equation*}
    \begin{tikzcd}
      \Simpl XX
      \ar[r,"\Tuple{0,\Simpl XX}"]
      &[2em]
      \With\Stop{(\Simpl XX)}
      \ar[r,"\With{\cY'}{(\Simpl XX)}"]
      &[3em]
      \With{(\Simpl{(\Simpl XX)}{X})}{(\Simpl XX)}
      \ar[r,"\Ev"]
      &
      X
    \end{tikzcd}
  \end{equation*}
  By monotonicity and Scott continuity of all the CCC operations in
  \(\cC\), it follows that \(\cY=\sup_{n\in\Nat}\cY_n\) where
  \((\cY_n\in\cC(\Simpl XX,X))_{n\in\Nat}\) is the (obviously
  monotone) sequence of morphisms inductively defined by \(\cY_0=0\)
  and \(\cY_{n+1}=\Ev\Comp\Tuple{\Simpl XX,\cY_n}\).

  Therefore
  \begin{align*}
    \cY\Comp f
    &=\sup_{n\in\Nat}(\cY_n\Comp f)\\
    &=\sup_{n\in\Nat}(\cY_{n+1}\Comp f)
      \text{\quad since }\cY_0=0\\
    &=\sup_{n\in\Nat}(\Ev\Comp\Tuple{f,\cY_n\Comp f})\\
    &=\sup_{n\in\Nat}g_n
  \end{align*}
  where \((g_n\in\cC(Z,X))_{n\in\Nat}\) is the (obviously monotone)
  sequence of morphisms inductively defined by \(g_0=0\) and %
  \(g_{n+1}=\Ev\Comp\Tuple{f,g_n}\).
\end{proof}

\subsection{The differential of fixpoints}
\label{sec:diff-fix}
Let \(\cL\) be a summable category.
\begin{definition}
  Let \(f,g\in\cL(X,Y)\), we write \(f\leq g\) if there is
  \(h\in\cL(X,Y)\) such that \(f\) and \(h\) are summable and
  \(g=f+h\).
\end{definition}

\begin{lemma}
  The relation \(\leq\) is a preorder relation on \(\cL(X,Y)\).
  The composition operation %
  \(\cL(X,Y)\times\cL(Y,Z)\to\cL(X,Z)\) is monotone wrt.~this preorder
  relation and if \(\cL\) is a symmetric monoidal summable category
  (that is~\ref{ax:stens} holds), then the tensor product of \(\cL\)
  is monotone wrt.~this preorder relation. If \(\cL\) is a cartesian
  summable category (that is~\ref{ax:swith} holds), then the pairing
  operation \(\cL(X,Y_1)\times\cL(X,Y_2)\to\cL(X,\With{Y_1}{Y_2})\) is
  monotone.
\end{lemma}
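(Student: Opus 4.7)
The plan is to treat each clause in turn, each time reducing the statement to an axiom or an already proven lemma. For reflexivity of $\leq$, the axiom \ref{ax:szero} (together with \Cref{lemma:summability-compl}) gives that $0_{X,Y}$ and $f$ are summable with $f = f+0$, hence $f\leq f$. For transitivity, suppose $f\leq g$ and $g\leq h$ with witnesses $g = f + h_1$ (so $f,h_1$ summable) and $h = g + h_2$ (so $g,h_2$ summable). Then $(f+h_1)+h_2$ is defined and equals $h$; applying the associativity clause of \Cref{def:part-comm-monoids}, which holds by \Cref{lemma:sumcat-enriched-part-com-mon}, we get that $h_1+h_2$ exists, that $f$ and $h_1+h_2$ are summable, and that $f + (h_1+h_2) = h$. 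Thus $f\leq h$.

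For the monotonicity of composition, it suffices to prove monotonicity in each argument separately, and then combine the two using transitivity. Suppose $f\leq f'$ in $\cL(X,Y)$ witnessed by $f' = f + h$, and let $u\in\cL(U,X)$, $v\in\cL(Y,V)$ be arbitrary. By \Cref{lemma:summability-compl}, $v\Compl f\Compl u$ and $v\Compl h\Compl u$ are summable with sum $v\Compl f'\Compl u$, so $v\Compl f\Compl u \leq v\Compl f'\Compl u$. Specialising $u = \Id_X$ (resp.\ $v = \Id_V$) gives monotonicity of composition in either slot.

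For the tensor product, the same strategy works: assuming \ref{ax:stens}, if $f' = f + h$ and $g\in\cL(U,V)$, then by \ref{ax:stens} the morphisms $\Tens f g$ and $\Tens h g$ are summable with $\Tens {f'} g = \Tens f g + \Tens h g$, yielding $\Tens f g \leq \Tens{f'}g$. Monotonicity in the right-hand variable follows by conjugating with the braiding $\Sym$, or equivalently by invoking \Cref{lemma:sum-tens-dist-gen}. Joint monotonicity is then obtained by transitivity. The pairing case is slightly more delicate: assuming \ref{ax:swith}, let $f,h\in\cL(X,Y_1)$ and $g,k\in\cL(X,Y_2)$ be two summable pairs, and set $w = \Tuple{\Stuple{f,h},\Stuple{g,k}}\in\cL(X,\With{\Sfun Y_1}{\Sfun Y_2})$. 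Then I claim $\Swithiso_{Y_1,Y_2}\Compl w$ is a witness for the summability of $\Tuple{f,g}$ and $\Tuple{h,k}$, with sum $\Tuple{f+h,g+k}$. This is checked by composing with $\Proj_j\Compl\Sproj_i$: using the characterisation $\Sfun\Proj_j\Compl\Swithiso = \Proj_j$ and naturality of $\Sproj_i$ and $\Ssum$, one computes $\Proj_j\Compl\Sproj_i\Compl\Swithiso\Compl w = \Sproj_i\Compl\Proj_j\Compl w$, which yields the four entries $f,h,g,k$ and, after post-composing with $\Ssum$, the entries $f+h,g+k$. Joint monicity of the $\Sproj_i$ together with the universal property of the product then identify the sum as $\Tuple{f+h,g+k}$.

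The main obstacle, and the only point where more than a routine unfolding is needed, is this last verification for pairing: one must track the interaction between the monicity used to characterise witnesses, the pointwise definition of $\Swithiso$ as the inverse of $\Tuple{\Sfun\Proj_j}_j$, and the naturality of $\Sproj_i$ and $\Ssum$. Everything else is a direct appeal to \Cref{lemma:summability-compl}, \ref{ax:stens}, and the associativity encoded in \Cref{lemma:sumcat-enriched-part-com-mon}.
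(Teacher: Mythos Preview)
Your proof is correct and follows the same approach as the paper, which merely cites \Cref{lemma:summability-compl} for composition and declares the tensor and pairing cases to be obvious consequences of~\ref{ax:stens} and~\ref{ax:swith}. You have supplied the details the paper omits, in particular the explicit witness $\Swithiso\Compl\Tuple{\Stuple{f,h},\Stuple{g,k}}$ for the pairing case, which is exactly how one unpacks the axiom~\ref{ax:swith}; the preorder verification (reflexivity via~\ref{ax:szero} and commutativity, transitivity via the partial-monoid associativity of \Cref{lemma:sumcat-enriched-part-com-mon}) is also correct and is left implicit in the paper.
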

\begin{proof}
  Monotonicity of composition results from \Cref{lemma:summability-compl}.
  The two next statements are obvious consequences of~\ref{ax:stens}
  and~\ref{ax:swith} respectively.
\end{proof}

\begin{definition}
  We say that \(\cL\) \emph{is Scott} if the following conditions are
  satisfied:
  \begin{itemize}
  \item in any homset, the relation \(\leq\) is an order relation;
  \item for any objects \(X\) and \(Y\), any monotone sequence
    \((f_n)_{n\in\Nat}\) of elements of \(\cL(X,Y)\) has a least upper
    bound \(\sup_{n\in\Nat}f_n\in\cL(X,Y)\);
  \item the composition operation %
    \(\cL(X,Y)\times\cL(Y,Z)\to\cL(X,Z)\) is Scott-continuous in the
    sense that it commutes with the lubs of monotone sequences (its
    domain being equipped with the product order relation);
  \item if \(\cL\) is a symmetric monoidal summable category, we also
    require \(\ITens\) to commute with the lubs of monotone sequences;
  \item and if \(\cL\) is a summable resource category, then the
    functor \(\Oc\_\) is required to be monotone and to commute with the
    lubs of monotone sequences.
  \end{itemize}
\end{definition}

\begin{lemma}
  \label{lemma:curlin-continuous}
  Let \(\cL\) be a summable symmetric monoidal closed category which is Scott.
  Then \(\Curlin:\cL(\Tens ZX,Y)\to\cL(Z,\Limpl XY)\) is continuous.
\end{lemma}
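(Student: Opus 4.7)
The plan is to deduce both halves of Scott continuity~---~monotonicity and preservation of lubs of monotone sequences~--- from the adjunction bijection between $\Curlin$ and uncurrying $h\mapsto\Evlin\Compl(\Tens hX)$, combined with \Cref{lemma:curlin-additive} for the order part.

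For monotonicity, suppose $f\leq g$ in $\cL(\Tens ZX,Y)$, so that $g=f+h$ for some $h\in\cL(\Tens ZX,Y)$ summable with $f$. Then \Cref{lemma:curlin-additive} directly yields that $\Curlin f$ and $\Curlin h$ are summable with $\Curlin g=\Curlin f+\Curlin h$, whence $\Curlin f\leq\Curlin g$ in $\cL(Z,\Limpl XY)$.

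For Scott continuity, let $(f_n)_{n\in\Nat}$ be a monotone sequence in $\cL(\Tens ZX,Y)$ with lub $f=\sup_{n\in\Nat}f_n$, which exists since $\cL$ is Scott. By monotonicity of $\Curlin$, the sequence $(\Curlin f_n)_{n\in\Nat}$ is monotone in $\cL(Z,\Limpl XY)$ and admits a lub $g=\sup_{n\in\Nat}\Curlin f_n$ by the same assumption. The key step is to show $g=\Curlin f$, which by the adjunction bijection reduces to verifying $\Evlin\Compl(\Tens gX)=f$. I would establish this by the computation
\begin{align*}
  \Evlin\Compl(\Tens gX)
  &=\Evlin\Compl(\Tens{\sup_{n\in\Nat}\Curlin f_n}{X})\\
  &=\sup_{n\in\Nat}\Evlin\Compl(\Tens{\Curlin f_n}{X})\\
  &=\sup_{n\in\Nat}f_n=f\,,
\end{align*}
which uses that $\ITens$ commutes with lubs of monotone sequences (applied to $(\Curlin f_n)_n$ tensored with the constant sequence $\Id_X$) and that composition is Scott continuous, both of which are part of the Scott hypothesis.

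There is no real obstacle here: the statement is essentially the observation that the inverse of $\Curlin$, namely uncurrying, is manifestly Scott continuous (being built from tensoring with an identity and composition with $\Evlin$), so $\Curlin$ itself must preserve lubs of monotone sequences. The only mild subtlety is that the order $\leq$ is defined through summability rather than directly, which is why \Cref{lemma:curlin-additive} is required to transfer the order through $\Curlin$.
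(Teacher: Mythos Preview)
Your proposal is correct and follows essentially the same approach as the paper: monotonicity via \Cref{lemma:curlin-additive}, and preservation of lubs via the observation that the inverse map $h\mapsto\Evlin\Compl(\Tens hX)$ is Scott continuous by the Scott hypotheses on $\ITens$ and composition. The paper states this last point in a single sentence whereas you unfold the computation explicitly, but the argument is the same.
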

\begin{proof}
  Monotonicity results from \Cref{lemma:curlin-additive}.
  Continuity results from the fact that the inverse of the map %
  \(\Curlin:\cL(\Tens ZX,Y)\to\cL(Z,\Limpl XY)\) is the function %
  \(f\mapsto\Evlin\Compl\Tensp fX\) which is continuous by our
  assumptions about \(\cL\).
\end{proof}

\begin{proposition}
  Let \(\cL\) be a cartesian summable category.
  If \(\cL\) is Scott then the pairing operation %
  \(\cL(X,Y_1)\times\cL(X,Y_2)\to\cL(X,\With{Y_1}{Y_2})\) is
  continuous.
\end{proposition}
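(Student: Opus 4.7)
The plan is to reduce continuity of pairing to Scott-continuity of composition via the universal property of the cartesian product. Monotonicity is already established in the preceding lemma, so only preservation of suprema of monotone sequences remains.

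First I would fix a monotone sequence $((f_{1,n}, f_{2,n}))_{n \in \Nat}$ in the poset $\cL(X,Y_1) \times \cL(X,Y_2)$ (endowed with the componentwise order) and set $f_i = \sup_n f_{i,n}$ for $i = 1,2$ (these lubs exist by the Scott assumption on each homset $\cL(X,Y_i)$). By monotonicity of pairing (the previous lemma), the sequence $(\Tuple{f_{1,n},f_{2,n}})_{n \in \Nat}$ is monotone in $\cL(X, \With{Y_1}{Y_2})$, so its lub $g = \sup_n \Tuple{f_{1,n},f_{2,n}}$ exists. The goal is to show $g = \Tuple{f_1, f_2}$.

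By the universal property of the product, it suffices to establish $\Proj_i \Comp g = f_i$ for $i = 1,2$. Here I would invoke Scott-continuity of composition (part of the definition of $\cL$ being Scott): since $\Proj_i \in \cL(\With{Y_1}{Y_2}, Y_i)$ is a fixed morphism, post-composition by $\Proj_i$ commutes with lubs of monotone sequences, so
\begin{align*}
\Proj_i \Comp g
= \Proj_i \Comp \sup_n \Tuple{f_{1,n}, f_{2,n}}
= \sup_n (\Proj_i \Comp \Tuple{f_{1,n}, f_{2,n}})
= \sup_n f_{i,n}
= f_i\,.
\end{align*}
This gives the two defining equations of $\Tuple{f_1, f_2}$, and by uniqueness of the mediating morphism into $\With{Y_1}{Y_2}$ we conclude $g = \Tuple{f_1, f_2}$, i.e.~$\sup_n \Tuple{f_{1,n},f_{2,n}} = \Tuple{\sup_n f_{1,n}, \sup_n f_{2,n}}$, which is exactly Scott-continuity of the pairing operation.

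There is no real obstacle here: the argument is a routine combination of the universal property of products and Scott-continuity of composition, mirroring the standard proof that currying is continuous given in \Cref{lemma:curlin-continuous}. The only subtle point worth flagging is that we use monotonicity of pairing from the preceding lemma twice -- first to guarantee that the sequence of pairs is monotone (so that its lub exists by the Scott assumption), and implicitly through the fact that $\leq$ on $\cL(X, \With{Y_1}{Y_2})$ is an order (again part of the Scott assumption), which is needed for the lub to be unique.
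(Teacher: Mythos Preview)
Your proof is correct and is precisely the argument the paper has in mind: its proof reads in full ``By the universal property of the cartesian product,'' and you have spelled out exactly that, using Scott-continuity of post-composition by $\Proj_i$ to identify the lub of the paired sequence.
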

\begin{proof}
  By the universal property of the cartesian product.
\end{proof}



\begin{theorem}
  Let \(\cL\) be a Scott summable resource category which is
  closed (as an SM category).
  Then the cartesian closed category \(\Kl\cL\) is a
  \(\lambda\)-category (in the sense of \Cref{def:lambda-category}).
%
\end{theorem}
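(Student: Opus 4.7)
My plan is to verify the three requirements of \Cref{def:lambda-category} for $\Kl\cL$ one by one, using the fact that each of the relevant operations in $\Kl\cL$ decomposes into operations of $\cL$ that are already known to be Scott-continuous by the ``Scott'' hypothesis.

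First I would establish the enrichment in $\omega$-cpos. For each pair of objects $X,Y$ we have $\Kl\cL(X,Y)=\cL(\Oc X,Y)$, which is an $\omega$-cpo by the Scott assumption on $\cL$. Composition in $\Kl\cL$ is defined by $g\Comp f = g\Compl(\Oc f)\Compl\Digg X$, so its Scott continuity reduces to: (i) monotonicity and continuity of $\Oc$ on hom-sets (part of the Scott hypothesis for resource categories), (ii) continuity of composition in $\cL$, and (iii) the fact that precomposition with the fixed morphism $\Digg X$ preserves lubs, which is again a case of (ii). Assembling these gives continuity of composition in $\Kl\cL$ separately in each argument, hence jointly by a standard argument on monotone sequences in a product.

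Next I would check that pairing in $\Kl\cL$ is Scott-continuous. The cartesian product of $Y_1,Y_2$ in $\Kl\cL$ is the same object $\With{Y_1}{Y_2}$ as in $\cL$, with projections $\A\Derfun{\Proj i}$, and given $f_i\in\Kl\cL(X,Y_i)=\cL(\Oc X,Y_i)$ the pairing $\Tuple{f_1,f_2}_{\Kl\cL}$ coincides with the pairing $\Tuple{f_1,f_2}$ computed in $\cL$ viewed as a morphism $\Oc X\to\With{Y_1}{Y_2}$. The Scott continuity of this pairing is therefore exactly the continuity of pairing in $\cL$, which was established just after the definition of being ``Scott''.

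For currying, given $f\in\Kl\cL(\With ZX,Y)=\cL(\Oc{\Withp ZX},Y)$, the Kleisli currying was explicitly described earlier as $\A\Cur f=\Curlin(f\Compl\Seelyt_{Z,X})$, which is an element of $\cL(\Oc Z,\Limplp{\Oc X}Y)=\Kl\cL(Z,\Simpl XY)$. This is the composite of three operations on $f$: precomposition with the fixed morphism $\Seelyt_{Z,X}$ (continuous by continuity of composition in $\cL$), and then the linear currying $\Curlin$, which is Scott-continuous by \Cref{lemma:curlin-continuous}. The composite of two continuous operations is continuous, so $\A\Cur$ is Scott-continuous.

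The one potential subtlety worth singling out as the ``main obstacle'' is making sure that the structural morphisms $\Digg X$ and $\Seelyt_{Z,X}$ really do behave as constants in these continuity arguments, and that the joint (as opposed to separate) continuity of the binary operations pairing and composition follows from separate continuity. Both points are handled by a routine diagonal argument on monotone sequences in a product poset, using monotonicity of the operations to dominate $F(f_n,g_n)$ by $F(f_n,g_m)$ and $F(f_m,g_n)$ for $m\geq n$; no further hypothesis on $\cL$ is needed beyond those already in force.
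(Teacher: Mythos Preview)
Your proposal is correct and follows exactly the approach the paper intends: the paper's own proof is simply ``straightforward, using \Cref{lemma:curlin-continuous}'', and you have spelled out precisely the straightforward verifications (enrichment via $\Kl\cL(X,Y)=\cL(\Oc X,Y)$, continuity of Kleisli composition via continuity of $\Oc$ and of composition in $\cL$, pairing inherited from $\cL$, and currying via $\A\Cur f=\Curlin(f\Compl\Seelyt)$ together with \Cref{lemma:curlin-continuous}). There is no meaningful difference in approach.
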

The proof is straightforward, using \Cref{lemma:curlin-continuous}.

\begin{Example}
  Saying that \(s,t\in\Kl\COH(E,F)\) satisfy \(s\leq t\) simply means
  \(s\subseteq t\) and the the fact that \(\COH\) is Scott comes from
  the fact that the set of cliques of a coherence space is closed
  under directed unions.
  
  The situation is completely similar in \(\PCOH\).
  Given \(x,y\in\Pcoh X\), we have \(x\leq y\) iff \(x_a\leq y_a\) for
  all \(a\in\Web X\) as easily checked.
  Therefore a monotone sequence \((x(n)\in\Pcoh X)_{n\in\Nat}\) has a
  lub in \(\Pcoh X\), namely
  \(x=(\sup_{n\in\Nat}x(n)_a)_{a\in\Web X}\in\Pcoh X\).
  It follows easily that \(\PCOH\) is Scott.
  As a consequence, for any probabilistic coherence space \(X\), we
  have %
  \(\cY\in\Kl\PCOH(\Simpl XX,X)\) which maps \(t\in\Kl\PCOH(X,X)\) to
  its least fixpoint \(\sup_{n\in\Nat}\Fun t^n(0)\).
  The fact that this fixpoint operator is itself an analytic morphism
  is a remarkable property of this semantics, and is deeply related to
  the fact that, in this semantics, the morphisms are matrices with
  nonnegative coefficients.

  If this were not the case, we could accept as a morphism the
  following \(w\in\Kl\PCOH(\With\Sone\Sone,\Sone)\) such that %
  \(\Fun w(u,v)=u+v-uv\).
  In other words, for \(m\in\Web{\With\Sone\Sone}\), we have
  \(w_{m,\Sonelem}=1\) if
  \(m\in\Set{\Mset{(1,\Sonelem)},\Mset{(2,\Sonelem)}}\),
  \(w_{m,\Sonelem}=-1\) if \(m=\Mset{(1,\Sonelem),(2,\Sonelem)}\) and
  \(w_{m,\Sonelem}=0\) otherwise.
  If the fixpoint operator were accepted by this semantics, we would
  be able to define \(t\in\Kl\PCOH(\Sone,\Sone)\) such that
  \(\Fun t(u)=\Fun w(u,\Fun t(u))=u+\Fun t(u)+u\Fun t(u)\), that is %
  \(u(1-\Fun t(u))=0\).
  So we must have \(\Fun t(u)=1\) if \(u\in\Interoc 01\), and %
  \(\Fun t(0)=0\) because \(\Fun t(0)\) should be the least fixpoint
  of the function \(\Fun w(0,\_)\).
  So the function \(\Fun t:\Intercc01\to\Intercc01\) is not
  continuous, and \emph{a fortiori} cannot be described as a
  powerseries (even with possibly negative coefficients).
\end{Example}

Let \(\cL\) be a coherent differential resource category which is
closed (as an SMC) and Scott.
Given \(f\in\Kl\cL(\With ZX,\Simpl YY)\), we can define %
\(\Ap\Sfix f\in\Kl\cL(\With ZX,Y)\) as %
\(\Ap\Sfix f=\cY\Comp f\) where %
\(\cY\in\Kl\cL(\Simpl YY,Y)\) comes from
\Cref{th:lambda-cat-fixpoint-operator}.
Notice that 
\begin{equation*}
  \Ap\Sfix f=\App f{\Ap\Sfix f}\,.
\end{equation*}
Remember also that the family %
\((\Ap{\Sfix_n}f\in\Kl\cL(\With ZX,Y))_{n\in\Nat}\) %
defined inductively by \(\Ap{\Sfix_0}f=0\) and
\(\Ap{\Sfix_{n+1}}f=\App f{\Ap{\Sfix_n}f}\) %
is monotone and that
\begin{equation*}
  \Ap\Sfix f=\sup_{n\in\Nat}\Ap{\Sfix_n}f\,.
\end{equation*}

\begin{theorem}
  Let \(\cL\) be a coherent differential resource category which is
  closed (as an SMC) and Scott.
  Let \(f\in\Kl\cL(\With ZX,\Simpl YY)\), we have
  \begin{equation*}
    \Ap{\Dfun_2}{\Ap\Sfix f}
    =\Ap\Sfix{(\Simpl{\Dfun Y}{\Dmonm_Y})
      \Comp\Dfunint
      \Comp\Dfunstrf_{Y,Y}
      \Comp\A{\Dfun_2}f}
  \end{equation*}
\end{theorem}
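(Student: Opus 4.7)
The plan is to reduce the statement to an application of Theorem~\ref{th:dfun-part-app} combined with a Scott-continuity argument on the approximants of the fixpoint. Write
\[
  g=(\Simpl{\Dfun Y}{\Dmonm_Y})\Comp\Dfunint\Comp\Dfunstrf_{Y,Y}\Comp\A{\Dfun_2}f
\]
so that the claim becomes $\Ap{\Dfun_2}{\Ap\Sfix f}=\Ap\Sfix g$. Applying Theorem~\ref{th:dfun-part-app} to $f$ and to an arbitrary morphism $h\in\Kl\cL(\With ZX,Y)$ yields
\[
  \Ap{\Dfun_2}{\App fh}=\App g{\A{\Dfun_2}h}\,,
\]
so $\Dfun_2$ intertwines the one-step unfoldings $h\mapsto\App fh$ and $k\mapsto\App gk$ associated with $f$ and $g$.

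Next I would compare the two fixpoints via their approximants. Recall from the proof of Theorem~\ref{th:lambda-cat-fixpoint-operator} that $\Ap\Sfix f=\sup_{n}\Ap{\Sfix_n}f$ and $\Ap\Sfix g=\sup_n\Ap{\Sfix_n}g$, where $\Ap{\Sfix_0}h=0$ and $\Ap{\Sfix_{n+1}}h=\App h{\Ap{\Sfix_n}h}$. A straightforward induction on $n$ then shows $\Ap{\Dfun_2}{\Ap{\Sfix_n}f}=\Ap{\Sfix_n}g$: for the base case, $\Ap{\Dfun_2}{0}=0$, using that $\Sfun$ sends the zero morphism to the zero morphism by joint monicity of $\Sproj0,\Sproj1$; for the inductive step,
\[
  \Ap{\Dfun_2}{\Ap{\Sfix_{n+1}}f}
  =\Ap{\Dfun_2}{\App f{\Ap{\Sfix_n}f}}
  =\App g{\Ap{\Dfun_2}{\Ap{\Sfix_n}f}}
  =\App g{\Ap{\Sfix_n}g}
  =\Ap{\Sfix_{n+1}}g\,,
\]
combining the intertwining identity above with the induction hypothesis.

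The conclusion then follows by taking limits:
\[
  \Ap{\Dfun_2}{\Ap\Sfix f}
  =\Ap{\Dfun_2}{\sup_n\Ap{\Sfix_n}f}
  =\sup_n\Ap{\Dfun_2}{\Ap{\Sfix_n}f}
  =\sup_n\Ap{\Sfix_n}g
  =\Ap\Sfix g\,.
\]
The main obstacle is not the structure of the argument but the verification that $\Dfun_2$ is Scott continuous in its argument, which is what justifies moving the sup across $\Dfun_2$ in the first equality. Unpacking $\A{\Dfun_2}h=(\Sfun h)\Compl\Sdiff_{\With ZX}\Compl\Oc\Dfunstr2\Compl\Digg_{\With Z{\Sfun X}}$ in $\cL$, Scott continuity in $h$ reduces to $\Sfun$ preserving the lubs of monotone sequences of parallel morphisms, which itself follows from the joint monicity of $\Sproj0,\Sproj1$ combined with the Scott continuity of composition in $\cL$. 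Granted this, Theorem~\ref{th:dfun-part-app} does all the real work.
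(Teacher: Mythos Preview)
Your proof is correct and follows essentially the same approach as the paper: both argue by induction on the approximants that $\Ap{\Dfun_2}{\Ap{\Sfix_n}f}=\Ap{\Sfix_n}g$, invoking Theorem~\ref{th:dfun-part-app} in the inductive step, and then pass to the supremum. The paper states only ``it suffices to prove by induction on $n$'' and leaves the passage to the limit implicit, whereas you spell out explicitly why $\Dfun_2$ commutes with the sup; this extra care is welcome and does not constitute a different route.
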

\begin{proof}
  It suffices to prove by induction on \(n\in\Nat\) that %
  \[
    \Ap{\Dfun_2}{\Ap{\Sfix_n}f}
    =\Ap{\Sfix_n}{(\Simpl{\Dfun Y}{\Dmonm_Y})
      \Comp\Dfunint
      \Comp\Dfunstrf_{Y,Y}
      \Comp\A{\Dfun_2}f}\,.
  \]
  The base case is obvious.
  Next we have
  \begin{align*}
    &\Ap{\Dfun_2}{\Ap{\Sfix_{n+1}}f}
    =\Ap{\Dfun_2}{\App{f}{\Ap{\Sfix_n}{f}}}\\
    &\hspace{3em}=\App{(\Simpl{\Dfun Y}{\Dmonm_Y})
      \Comp\Dfunint
      \Comp\Dfunstrf_{Y,Y}
      \Comp\A{\Dfun_2}f}{\Ap{\Dfun_2}{\Ap{\Sfix_n}f}}
      \text{\quad by \Cref{th:dfun-part-app}}\\
    &\hspace{3em}=\App{(\Simpl{\Dfun Y}{\Dmonm_Y})
      \Comp\Dfunint
      \Comp\Dfunstrf_{Y,Y}
      \Comp\A{\Dfun_2}f}
      {\Ap{\Sfix_n}{(\Simpl{\Dfun Y}{\Dmonm_Y})
      \Comp\Dfunint
      \Comp\Dfunstrf_{Y,Y}
      \Comp\A{\Dfun_2}f}}\\
    &\hspace{28em}\text{by inductive hypothesis}\\
    &\hspace{3em}=\Ap{\Sfix_{n+1}}{(\Simpl{\Dfun Y}{\Dmonm_Y})
      \Comp\Dfunint
      \Comp\Dfunstrf_{Y,Y}
      \Comp\A{\Dfun_2}f}\text{\quad by definition.}
      \qedhere
  \end{align*}
\end{proof}

\begin{remark}
  So the differential of a fixpoint can itself be written as a
  fixpoint, meaning that we can combine our coherent differential
  calculus with general fixpoints, which is another major difference
  with the differential \(\lambda\)-calculus and \(\LL\).
  These results justify the way we deal with fixpoints in \(\PCFD\) in
  \Cref{sec:syntax}.
\end{remark}

\subsection{Linear and multilinear morphisms}
\label{sec:multilinearity}

\begin{definition}
  A morphism \(f\in\Kl\cL({X_1\IWith\cdots\IWith X_n},Y)\) is
  \(n\)-linear in \(\List X1n\) if there is
  \(f_0\in\cL({X_1\ITens\cdots\ITens X_n},Y)\) such that
  \(f\) coincides with the following composition of morphisms
  \begin{equation*}
    \begin{tikzcd}
      \Ap\Oc{X_1\IWith\cdots\IWith X_n}
      \ar[r,"\Inv{(\Seely^n)}"]
      &[1em]
      {\A\Oc{X_1}}\ITens\cdots\ITens{\A\Oc{X_n}}
      \ar[r,"\Der{X_1}\ITens\cdots\ITens\Der{X_n}"]
      &[5em]
      X_1\ITens\cdots\ITens X_n
      \ar[r,"f_0"]
      &[-0.6em]
      Y
    \end{tikzcd}
  \end{equation*}
  and in that case we write \(f=\Derfunm n{f_0}\).
\end{definition}

\begin{theorem}
  If \(f\in\Kl\cL(X_1\IWith\cdots\IWith X_n,Y)\) is \(n\)-linear and
  \(i\in\Set{1,\dots,n}\), then the \(i\)th partial differential %
  \(\Dfun_i f\in\Kl\cL(X_1\IWith\cdots\IWith\Dfun
  X_i\IWith\cdots\IWith X_n,\Dfun Y)\) of \(f\) satisfies the following
  commutation in \(\Kl\cL\)
  \begin{center}
    \begin{tikzcd}
      X_1\IWith\cdots\IWith\Dfun X_i\IWith\cdots\IWith X_n
      \ar[d,swap,"X_1\IWith\cdots\IWith\Sproj j\IWith\cdots\IWith X_n"]
      \ar[r,"\Dfun_i f"]
      &
      \Dfun Y
      \ar[d,"\Sproj j"]
      \\
      X_1\IWith\cdots\IWith X_i\IWith\cdots\IWith X_n
      \ar[r,"f"]
      &
      Y
    \end{tikzcd}
  \end{center}
  for \(j=0,1\).
\end{theorem}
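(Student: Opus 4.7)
The plan is to translate the $\Kl\cL$-diagram into an equation in $\cL$, and then handle the two values $j=0,1$ separately: one by a direct application of \ref{ax:sdloc}, the other by expanding $\Sdiff$ over the cartesian structure and invoking the multilinearity of $f$. Since both downward arrows of the square are images under $\Derfun$ of linear morphisms, Kleisli composition with them simplifies: for any linear $g$, $\A\Derfun g\Comp k=g\Compl k$ and $k\Comp\A\Derfun g=k\Compl\Oc g$ (the second using the comonad identity $\Oc\Der\Compl\Digg=\Id$). Setting $X=\Bwith_k X_k$ and recalling
\begin{equation*}
  \Wstren i=\Swithiso_{\List X1n}\Compl(\Sin 0\IWith\cdots\IWith\Id_{\Sfun X_i}\IWith\cdots\IWith\Sin 0),
\end{equation*}
a short calculation reduces $\Dfun_i f=\Dfun f\Comp\Dfunstr i$ to $\Dfun_i f=\Sfun f\Compl\Sdiff_X\Compl\Oc\Wstren i$ in $\cL$. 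Pushing $\Sproj j$ past $\Sfun f$ by naturality, the target becomes
\begin{equation*}
  f\Compl\Sproj j\Compl\Sdiff_X\Compl\Oc\Wstren i
  =f\Compl\Oc(X_1\IWith\cdots\IWith\Sproj j\IWith\cdots\IWith X_n).
\end{equation*}

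For $j=0$, which in fact does not require multilinearity, I invoke \ref{ax:sdloc} to rewrite $\Sproj 0\Compl\Sdiff_X$ as $\Oc\Sproj 0$, turning the left-hand side into $f\Compl\Oc(\Sproj 0\Compl\Wstren i)$. Post-composing $\Swithiso_{\List X1n}$ with each projection $\Proj k$ and using naturality of $\Sproj 0$ together with the defining identity $\Sfun\Proj k\Compl\Swithiso=\Proj k$ yields $\Sproj 0\Compl\Swithiso=\Sproj 0\IWith\cdots\IWith\Sproj 0$. Combined with $\Sproj 0\Compl\Sin 0=\Id$ (from \ref{ax:szero}), this collapses the off-$i$ factors and gives $\Sproj 0\Compl\Wstren i=X_1\IWith\cdots\IWith\Sproj 0\IWith\cdots\IWith X_n$, as required.

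For $j=1$, the multilinearity $f=f_0\Compl(\Der_{X_1}\ITens\cdots\ITens\Der_{X_n})\Compl\Inv{(\Seely^n)}$ is essential. Iterating \ref{ax:sdwith} on the binary product (and reassociating) yields
\begin{equation*}
  \Sdiff_X\Compl\Oc\Swithiso_{\List X1n}
  =(\Sfun\Seely^n)\Compl\Psi\Compl(\Sdiff_{X_1}\ITens\cdots\ITens\Sdiff_{X_n})\Compl\Inv{\Seely^n},
\end{equation*}
where $\Psi$ is the $n$-fold composite of $\Sfunlaxt$'s moving $\Sfun$ outside the tensor. By naturality of $\Seely^n$, the $\Sin 0$'s inside $\Wstren i$ translate to $\Oc\Sin 0$ factors on the tensor side, and \ref{ax:sdadd} ($\Sdiff\Compl\Oc\Sin 0=\Sin 0$) collapses each to $\Sin 0$ in every slot $\neq i$. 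The $n$-ary extension of $\Sproj 1\Compl\Sfunlaxt=\Tens{\Sproj 1}{\Sproj 0}+\Tens{\Sproj 0}{\Sproj 1}$ expands $\Sproj 1\Compl\Psi$ as an $n$-term sum, each summand placing $\Sproj 1$ in exactly one slot and $\Sproj 0$ in the others; by \ref{ax:stens} and the vanishing $\Sproj 1\Compl\Sin 0=0$, all summands whose $\Sproj 1$ falls outside position $i$ die, leaving only the $i$-th term. This surviving monomial contributes $\Sproj 0\Compl\Sin 0=\Id$ in every slot but the $i$-th, where $\Sproj 1\Compl\Sdiff_{X_i}$ remains. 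Precomposing with $\Der_{X_1}\ITens\cdots\ITens\Der_{X_n}$, the non-$i$ slots simply become $\Der_{X_k}$, while the $i$-th slot is computed via
\begin{equation*}
  \Der_{X_i}\Compl\Sproj 1\Compl\Sdiff_{X_i}
  =\Sproj 1\Compl\Sfun\Der_{X_i}\Compl\Sdiff_{X_i}
  =\Sproj 1\Compl\Der_{\Sfun X_i}
  =\Der_{X_i}\Compl\Oc\Sproj 1
\end{equation*}
by naturality of $\Sproj 1$, the first diagram of \ref{ax:sdchain}, and naturality of $\Der$. A final naturality pass through $\Inv{\Seely^n}$ reassembles everything as $f\Compl\Oc(X_1\IWith\cdots\IWith\Sproj 1\IWith\cdots\IWith X_n)$.

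The main obstacle is the $j=1$ case, and specifically the combinatorial bookkeeping of the iterated \ref{ax:sdwith} expansion together with the many-term expansion of $\Sproj 1\Compl\Psi$; the collapse of that sum to a single surviving monomial hinges on $\Sproj 1\Compl\Sin 0=0$ and on the distributivity of $\ITens$ over partial addition granted by \ref{ax:stens}. The individual steps are all routine naturality or direct axiom invocations, but coordinating them coherently across the $n$ tensor factors requires some care.
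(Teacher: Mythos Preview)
The paper does not provide a proof of this theorem; it only states the result and then remarks that the diagram for \(j=0\) commutes for \emph{any} \(f\in\Kl\cL(X_1\IWith\cdots\IWith X_n,Y)\), so that only the case \(j=1\) is substantive. Your proposal is consistent with this remark (you correctly observe that the \(j=0\) argument via \ref{ax:sdloc} does not invoke multilinearity) and, for the \(j=1\) case, supplies a complete argument where the paper gives none.

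Your route for \(j=1\) is the natural one and is correct. The reduction of the Kleisli square to the equation in \(\cL\), the unfolding of \(\Sdiff_X\Compl\Oc\Swithiso\) via the iterated \ref{ax:sdwith}, the collapse of the off-\(i\) \(\Sdiff\)'s to \(\Sin0\) via \ref{ax:sdadd}, the expansion of \(\Sproj1\) across the \(n\)-fold \(\Sfunlaxt\) into an \(n\)-term sum killed down to a single survivor by \(\Sproj1\Compl\Sin0=0\), and the final simplification of \(\Der{X_i}\Compl\Sproj1\Compl\Sdiff_{X_i}\) using the first triangle of \ref{ax:sdchain} and naturality of \(\Der{}\) all check out. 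The only place where multilinearity is actually consumed is precisely where you say: to strip the outer \(\Seely^n\) after the sum has collapsed, so that the surviving \(\Sproj1\Compl\Sdiff_{X_i}\) can be pushed through a single \(\Der{X_i}\). Since the paper leaves the proof to the reader, there is nothing further to compare.
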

Notice that this diagram commutes for \(j=0\) for any
\(f\in\Kl\cL(X_1\IWith\cdots\IWith X_n,Y)\), so this result concerns
only the case \(j=1\).
This result is essential in the semantics of the constructions
\(\Ifd dMPQ\) and \(\Letd dxMP\) of the language \(\PCFD\) we describe
now.

\section{A syntax for coherent differentiation}
\label{sec:syntax}

To conclude the paper, we describe briefly a syntax \(\PCFD\) which
extends Scott-Plotkin's \(\PCF\) with differentiation.
The syntax is directly derived from the semantical framework described
above.
Its theory is developed in full detail in~\cite{Ehrhard23b} to which
we refer, so that most results in this section are provided without
proofs.
We start with some simple considerations about rewriting systems.

\subsection{Rewriting systems}
\label{sec:rewriting}

Usually, a rewriting system is a set \(T\) of terms together with a
rewriting relation \(\rho\subseteq T\times T\).

In the present setting as well as in the original differential
\(\lambda\)-calculus of \Cref{sec:difflam}, a term \(t\in T\) (or a
state of the Krivine machine that we will introduce) can reduce to
several different terms, not because several redexes are available in
\(t\) (as in the usual \(\lambda\)-calculus), but because \(t\)
reduces to a ``sum'' of terms, since the
rewriting system must somehow implement the Leibniz rule of Calculus.

So the rewriting relations that we consider have type %
\(\rho\subseteq T\times\Mfin T\).
Such a relation can be lifted into a relation %
\(\Mslift\rho\subseteq\Mfin T\times\Mfin T\) defined by %
\((m,m')\in\Mslift\rho\) if \(m=\Mset t+m_1\) and \(m'=m_0+m_1\) with
\((t,m_0)\in\rho\).

\begin{lemma}
  The reflexive and transitive closure \(\Mslifttr\rho\) of
  \(\Mslift\rho\) is the least reflexive and transitive relation on
  \(\Mfin T\) such that
  \begin{itemize}
  \item if \((t,m)\in\rho\) then \((\Mset t,m)\in\Mslifttr\rho\)
  \item and if \(((m_i,m'_i)\in\Mslifttr\rho)_{i=1,2}\) then %
    \((m_1+m_2,m'_1+m'_2)\in\Mslifttr\rho\).
  \end{itemize}
\end{lemma}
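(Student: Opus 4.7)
The plan is to prove the two inclusions separately. I first verify that $\Mslifttr\rho$ itself satisfies the closure conditions (and is reflexive and transitive by construction), which gives one inclusion; then I show that any reflexive and transitive relation $R$ containing the ``base'' pairs and closed under sums must contain $\Mslift\rho$, and hence $\Mslifttr\rho$.

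For the first inclusion, reflexivity and transitivity are immediate. Condition (i) is also immediate: if $(t,m)\in\rho$, then taking the decomposition $\Mset t=\Mset t+\Emptymset$ and $m=m+\Emptymset$ in the definition of $\Mslift\rho$ shows $(\Mset t,m)\in\Mslift\rho\subseteq\Mslifttr\rho$. The real work is condition (ii). First I would establish a ``frame'' lemma: if $(m,m')\in\Mslift\rho$ then $(m+n,m'+n)\in\Mslift\rho$ for any $n\in\Mfin T$. This is straightforward from the definition ($m=\Mset t+m_1$, $m'=m_0+m_1$ becomes $m+n=\Mset t+(m_1+n)$, $m'+n=m_0+(m_1+n)$). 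By induction on the length of reduction sequences, framing lifts to $\Mslifttr\rho$. Then, given $(m_1,m'_1),(m_2,m'_2)\in\Mslifttr\rho$, framing yields $(m_1+m_2,m'_1+m_2)\in\Mslifttr\rho$ and $(m'_1+m_2,m'_1+m'_2)\in\Mslifttr\rho$, and transitivity gives $(m_1+m_2,m'_1+m'_2)\in\Mslifttr\rho$.

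For the second inclusion, let $R$ be any reflexive, transitive relation on $\Mfin T$ satisfying (i) and (ii). It suffices to show $\Mslift\rho\subseteq R$, because then $\Mslifttr\rho\subseteq R$ follows from the reflexivity and transitivity of $R$. So let $(m,m')\in\Mslift\rho$, with $m=\Mset t+m_1$, $m'=m_0+m_1$, and $(t,m_0)\in\rho$. Condition (i) gives $(\Mset t,m_0)\in R$; reflexivity gives $(m_1,m_1)\in R$; condition (ii) then gives $(\Mset t+m_1,m_0+m_1)=(m,m')\in R$, as required.

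The only step with any subtlety is the frame lemma and its use to derive closure under sums for $\Mslifttr\rho$; everything else is a direct unfolding of definitions. I do not foresee any genuine obstacle here, since the lemma is essentially a bookkeeping result about how single-step multiset rewriting interacts with multiset addition.
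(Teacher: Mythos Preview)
Your proof is correct. The paper states this lemma without proof, so there is nothing to compare against; your argument via the frame lemma (closure of $\Mslift\rho$ and hence $\Mslifttr\rho$ under adding a fixed multiset on both sides) followed by transitivity is the standard way to establish closure under sums, and your converse direction is a direct unfolding of the definitions.
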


To enforce the algebraic flavor of this kind of rewriting, we adopt
the following conventions:
\begin{itemize}
\item the singleton multiset \(\Mset t\in\Mfin T\) is simply written \(t\);
\item the empty multiset \(\Msetempty\in\Mfin T\) is simply written \(0\)
\item and we use \(t_1+\cdots+t_k\) for \(\Mset{\List t1k}\in\Mfin T\).
\end{itemize}
In other words we identify \(\Mfin T\) with the free
\(\Nat\)-semimodule generated by \(T\).

\subsection{Syntax}

The grammar of types is inductively defined by
\begin{align*}
  A,B,\cdots %
  \Bnfeq \Tdiffm d\Tnat \Bnfor \Timpl AB
\end{align*}
and then, given a type \(A\), we define the type \(\Tdiff A\)
inductively by \(\Tdiff{(\Tdiffm d\Tnat)}=\Tdiffm{d+1}\Tnat\) %
and \(\Tdiff{(\Timpl AB)}=\Timpl A{\Tdiff B}\).

The syntax of terms is inductively defined as follows; we split it
into three kinds of constructions:
\begin{align*}
  M,N,P,Q,\dots
  \Bnfeq
  x
  &\Bnfor
    \App PN
    \Bnfor
    \Abst xAM
    \Bnfor
    \Fix M
  &\lambda\text{-calculus}
  \\
  &\Bnfor
    \Num\nu
    \Bnfor
    \Succd dM
    \Bnfor
    \Predd dM
    \Bnfor
    \Ifd dMPQ
    \Bnfor
    \Letd dxMN
  &\text{arithmetics}
  \\
  &\Bnfor
    \Pd M
    \Bnfor
    \Pdinj diM
    \Bnfor
    \Pdsum dM
    \Bnfor
    \Pdflip dlM
    \Bnfor
    \Pdproj diM
  &\text{differentiation}
\end{align*}
where \(\nu\in\Nat\), \(d,l\in\Nat\) and \(i\in\Set{0,1}\).

\begin{definition}
  We say that a type \(E\) is \emph{sharp} if it is cannot be
  written %
  \(E=\Tdiff A\) for some other type \(A\), which simply means that %
  \(E=(\Timpl{A_1}{\Timpl{\cdots}{\Timpl{A_n}{\Tnat}}})\) %
  (with the usual convention that \(\Timpl{}{}\) associates on the
  right).
  We use letters \(E\), \(F\) to denote types when we want to stress
  that they are sharp.
\end{definition}

\begin{remark}
  For any type \(A\) there is exactly one \(d\in\Nat\) and one sharp
  type such that \(A=\Tdiffm dE\).
\end{remark}

The typing rules are given in \Cref{fig:cdpcf-typing}.
\begin{figure}
  \centering
  \begin{center}
  \begin{prooftree}
    \infer0{\Tseq{\Gamma,x:A}{x}{A}}
  \end{prooftree}
  \Treesep
  \begin{prooftree}
    \hypo{\Tseq\Gamma P{\Timpl AB}}
    \hypo{\Tseq\Gamma NA}
    \infer2{\Tseq\Gamma{\App PN}B}
  \end{prooftree}
  \Treesep
  \begin{prooftree}
    \hypo{\Tseq{\Gamma,x:A}MB}
    \infer1{\Tseq\Gamma{\Abst xAM}{\Timpl AB}}
  \end{prooftree}
  \Treesep
  \begin{prooftree}
    \hypo{\Tseq\Gamma M{\Timpl AA}}
    \infer1{\Tseq\Gamma{\Fix M}A}
  \end{prooftree}
  \end{center}
  \begin{center}
    \begin{prooftree}
      \hypo{\nu\in\Nat}
      \infer1{\Tseq\Gamma{\Num\nu}\Tnat}
    \end{prooftree}
    \Treesep
    \begin{prooftree}
      \hypo{\Tseq\Gamma M{\Tdiffm d\Tnat}}
      \infer1{\Tseq\Gamma{\Succd dM}{\Tdiffm d\Tnat}}
    \end{prooftree}
  \end{center}
  \begin{center}
    \begin{prooftree}
      \hypo{\Tseq\Gamma M{\Tdiffm d\Tnat}}
      \hypo{\Tseq\Gamma PA}
      \hypo{\Tseq\Gamma QA}
      \infer3{\Tseq\Gamma{\Ifd dMPQ}{\Tdiffm dA}}
    \end{prooftree}
    \Treesep
    \begin{prooftree}
      \hypo{\Tseq\Gamma M{\Tdiffm d\Tnat}}
      \hypo{\Tseq{\Gamma,x:\Tnat}PA}
      \infer2{\Tseq\Gamma{\Letd dxMP}{\Tdiffm dA}}
    \end{prooftree}
  \end{center}
  \begin{center}
    \begin{prooftree}
      \hypo{\Tseq\Gamma M{\Timpl AB}}
      \infer1{\Tseq\Gamma{\Pd M}{\Timpl{\Tdiff A}{\Tdiff B}}}
    \end{prooftree}
    \Treesep
    \begin{prooftree}
      \hypo{\Tseq{\Gamma}{M}{\Tdiffm dA}}
      \infer1{\Tseq{\Gamma}{\Pdinj diM}{\Tdiffm{d+1}A}}
    \end{prooftree}
    \Treesep
    \begin{prooftree}
      \hypo{\Tseq{\Gamma}{M}{\Tdiffm{d+2}A}}
      \infer1{\Tseq{\Gamma}{\Pdsum dM}{\Tdiffm{d+1}A}}
    \end{prooftree}
  \end{center}
  \begin{center}
    \begin{prooftree}
      \hypo{\Tseq{\Gamma}{M}{\Tdiffm{d+l+2}A}}
      \infer1{\Tseq{\Gamma}{\Pdflip dlM}{\Tdiffm{d+l+2}A}}
    \end{prooftree}
    \Treesep
    \begin{prooftree}
      \hypo{\Tseq{\Gamma}{M}{\Tdiffm{d+1}A}}
      \infer1{\Tseq{\Gamma}{\Pdproj diM}{\Tdiffm{d}A}}
    \end{prooftree}
  \end{center}
  \caption{Typing rules for \(\PCFD\)}
  \label{fig:cdpcf-typing}
\end{figure}

As explained in~\cite{Ehrhard23b}, this syntax can be equipped with a
rewriting system \(\Pcfred\) which is inspired by the categorical
setting described previously.
Beyond ordinary substitution, the definition of this rewriting system
requires a ``differential modification'' operator \(\Diffm xM\) whose
definition, by induction on \(M\), is given if \Cref{fig:diffm-def}.

\begin{remark}
  It is important to notice that, in sharp contrast with the linear
  substitution %
  \(\frac{\partial M}{\partial x}\cdot N\) of the differential
  \(\lambda\)-calculus, the construction \(\Diffm xM\) \emph{does not
    introduce actual sums of terms}, but only potential ones by
  inserting \(\Pdsum d\_\) syntactic constructs at various places.
\end{remark}

\begin{figure}
  \centering
  \begin{align*}
    \Diffm xx&=x
    &
      \Diffm xy&=\Pdinj00 y \text{\quad if }y\not=x
    \\
    \Diffm x{\App PN}&=\App{\Pdsum 0{\Tdiff{\Diffm xP}}}{\Diffm xN}
    &
      \Diffm x{\Abst yBP}&=\Abst yB{\Diffm xP}
    \\
    \Diffm x{\Fix P}&=\Fix{\Pdsum 0{\Tdiff{\Diffm xP}}}
    &&
    \\
    \Diffm x{\Num\nu}&=\Pdinj00{\Num\nu}
    &
      \Diffm x{\Succd dN}&=\Succd{d+1}{\Diffm xN}
    \\
    \Diffm x{\Ifd dNPQ}
             &=\Pdsum0{\Pdflip 0d{\Ifd{d+1}{\Diffm xN}{\Diffm xP}{\Diffm xQ}}}
    &&
    \\
    \Diffm x{\Letd dyNP}
             &=\Pdsum0{\Pdflip 0d{\Letd{d+1}y{\Diffm xN}{\Diffm xP}}}
    &&
    \\
             &\text{\quad\quad\quad with }y\not=x
               &&
  \end{align*}
  \caption{Definition of \(\Diffm xM\)}
  \label{fig:diffm-def}
\end{figure}

\begin{lemma}
  \label{lemma:diffm-typing}
  If \(\Tseq{\Gamma,x:A}{M}{B}\) then
  \(\Tseq{\Gamma,x:\Tdiff A}{\Diffm xM}{\Tdiff B}\).
\end{lemma}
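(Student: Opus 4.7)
I will prove the lemma by induction on the structure of $M$, following the inductive clauses used to define $\Diffm xM$ in Figure~\ref{fig:diffm-def}. The key invariant to keep in mind is that for any type $B$, written in sharp form as $B = \Timpl{A_1}{\Timpl\cdots{\Timpl{A_n}{\Tdiffm k\Tnat}}}$, the operation $\Tdiff$ increments $k$ to $k+1$, so that for any such $B$ one has the equalities $\Tdiffm m{\Tdiff B} = \Tdiffm{m+1}B = \Tdiff{\Tdiffm m B}$ in the obvious generalized sense. All typing manipulations below reduce to this bookkeeping.

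For the base cases: when $M=x$ one has $B=A$, and $\Diffm xx = x$ directly has type $\Tdiff A = \Tdiff B$ in the context $\Gamma,x:\Tdiff A$. For $M=y$ with $y\neq x$ (respectively $M=\Num\nu$), the term $y$ (respectively $\Num\nu$) still has type $B$ in the new context, and applying the $\Pdinj 0 0$ rule once (with $d=0$, so input type $\Tdiffm 0 B = B$ and output $\Tdiffm 1 B = \Tdiff B$) produces the desired type. For $M=\Abst yC P$, write $B = \Timpl C{B'}$; by IH the body $\Diffm xP$ has type $\Tdiff{B'}$ in the extended context, so that $\Abst yC{\Diffm xP}$ has type $\Timpl C{\Tdiff{B'}} = \Tdiff B$. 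The successor case is analogous.

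The interesting algebraic cases are application, fix, conditional, and let. For $M=\App PN$ with $P:\Timpl CB$ and $N:C$, the IH gives $\Diffm xP:\Timpl C{\Tdiff B}$ and $\Diffm xN:\Tdiff C$. The $\Pd{}$ rule then yields $\Pd{\Diffm xP}:\Timpl{\Tdiff C}{\Tdiff{\Tdiff B}}$, which by the bookkeeping above equals $\Tdiffm 2(\Timpl{\Tdiff C}{B})$; applying $\Pdsum 0$ collapses two $\Diffsymb$-layers into one, producing $\Timpl{\Tdiff C}{\Tdiff B}$, and applying this to $\Diffm xN$ gives the expected type $\Tdiff B$. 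The fix case reduces to the same computation. For $M=\Ifd dNPQ$, with $N:\Tdiffm d\Tnat$ and $P,Q:A'$ so that $B=\Tdiffm dA'$, the IH yields $\Diffm xN:\Tdiffm{d+1}\Tnat$ and $\Diffm xP,\Diffm xQ:\Tdiff{A'}$. Applying the $\Ifd{d+1}$ rule produces type $\Tdiffm{d+1}{\Tdiff{A'}}$, which by the bookkeeping equals $\Tdiffm{d+2}{A'}$; $\Pdflip 0d$ preserves this shape, and $\Pdsum 0$ finally collapses it to $\Tdiffm{d+1}{A'} = \Tdiff{B}$. The $\Letd d$ case is essentially identical once one assumes, without loss of generality via $\alpha$-conversion, that the bound variable differs from $x$.

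The main obstacle is understanding why $\Pdflip$ is needed at all in the conditional and let cases: one has to verify that the pattern $\Tdiffm{d+1}{\Tdiff A'}$ produced by the conditional rule can be legitimately rewritten as $\Tdiffm{d+2}A'$, and that $\Pdsum 0$ is applicable only after $\Pdflip 0 d$ has permuted the outer with the inner layers to match the type schema of the $\Pdsum$ rule; this is semantic in nature (it reflects Schwarz's rule~\ref{ax:sdschwarz}) but manifests here purely as type arithmetic. The remaining clauses of $\Diffm x\_$ not displayed in Figure~\ref{fig:diffm-def}, namely those for $\Predd d$ and for the purely differential constructs $\Pd{}$, $\Pdinj$, $\Pdsum$, $\Pdflip$, $\Pdproj$, are handled by the same pattern: the IH gives a term at type $\Tdiff{(\cdot)}$ of the immediate subterm, and one or two applications of the differential constructors suffice to reshuffle the $\Diffsymb$-layers to reach $\Tdiff B$.
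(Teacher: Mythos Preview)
Your proof is correct and follows the same inductive approach as the paper's proof hint, which only works out the \(\Fix\) and \(\Ifd d{}{}{}\) cases; you cover more cases explicitly and with the right type arithmetic. One small correction: in your discussion of the conditional case, the claim that ``\(\Pdsum 0\) is applicable only after \(\Pdflip 0 d\)'' is not accurate at the level of typing, since \(\Pdflip dl\) is a type-preserving construct (input and output both \(\Tdiffm{d+l+2}A\)); the \(\Pdflip\) is required for semantic soundness, not for the typing derivation to go through.
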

\begin{proof}[Proof hint]
  Straightforward induction on \(M\), or rather on the derivation of
  the typing judgment \(\Tseq{\Gamma,x:A}{M}{B}\).
  As a first example, assume that \(M=\Fix N\) with %
  \(\Tseq{\Gamma,x:A}{N}{\Timpl BB}\) so that
  \(\Tseq{\Gamma,x:A}{\Fix N}{B}\).
  By inductive hypothesis, we have %
  \(\Tseq{\Gamma,x:\Tdiff A}{\Diffm xN}{\Timpl B{\Tdiff B}}\) and
  hence %
  \(\Tseq{\Gamma,x:\Tdiff A} {\Pd{\Diffm xN}}{\Timpl{\Tdiff B}{\Tdiffm
      2B}}\) so that %
  \(\Tseq{\Gamma,x:\Tdiff A}{\Pdsum0{\Pd{\Diffm xN}}} {\Timpl{\Tdiff
      B}{\Tdiff B}}\) since %
  \((\Timpl{\Tdiff B}{\Tdiffm 2B}) = \Tdiffm2{(\Timpl{\Tdiff B}B)}\).
  Therefore
  \(\Tseq{\Gamma,x:\Tdiff A}{\Fix{\Pdsum0{\Pd{\Diffm xN}}}}{\Tdiff
    B}\) as required.

  As a second example, take \(M=\Ifd dNPQ\) with %
  \(\Tseq{\Gamma,x:A}{N}{\Tdiffm d\Tnat}\),
  \(\Tseq{\Gamma,x:A}{P}{C}\) and
  \(\Tseq{\Gamma,x:A}{Q}{C}\) so that \(B=\Tdiffm dC\).
  By inductive hypothesis, we have %
  \(\Tseq{\Gamma,x:\Tdiff A}{\Diffm xN}{\Tdiffm{d+1}\Tnat}\),
  \(\Tseq{\Gamma,x:\Tdiff A}{\Diffm xP}{\Tdiff C}\) and
  \(\Tseq{\Gamma,x:\Tdiff A}{\Diffm xQ}{\Tdiff C}\).
  It follows that %
  \[
    \Tseq{\Gamma,x:\Tdiff A}{\Ifd{d+1}{\Diffm xN}{\Diffm xP}{\Diffm xQ}}
    {\Tdiffm{d+1+1}C}
  \]
  and hence 
  \[
    \Tseq{\Gamma,x:\Tdiff A}
    {\Pdflip 0d{\Ifd{d+1}{\Diffm xN}{\Diffm xP}{\Diffm xQ}}}
    {\Tdiffm{d+2}C}
  \]
  and finally
  \[
    \Tseq{\Gamma,x:\Tdiff A}
    {\Pdsum0{\Pdflip 0d{\Ifd{d+1}{\Diffm xN}{\Diffm xP}{\Diffm xQ}}}}
    {\Tdiff B}
  \]
  as required.
\end{proof}

\begin{remark}
  The main purpose of this differential modification is to allow the
  following rewriting
  \begin{align*}
    \Ap\Pd{\Abst xAM}\Rel\Pcfred
    \Abst x{\Tdiff A}{\Diffm xM}\,,
  \end{align*}
  which gives its operational meaning to the \(\Pd\) construction of
  the syntax, exactly as the ordinary \(\beta\)-rewriting %
  \(\App{\Abst xAM}N\Rel\Pcfred\Subst MNx\) gives its operational
  meaning to the application construct of the \(\lambda\)-calculus.

  In contrast with the differential substitution of the differential
  \(\lambda\)-calculus, differentiation in \(\PCFD\) requires a
  combination of differential modification and ordinary substitution.
  Let \(N\) be such that \(\Tseq\Gamma N{\Tdiff A}\), so that \(N\)
  should be intuitively considered as the pair made of
  \((N_i=\Pdproj 0iN)_{i=0,1}\) which satisfy
  \((\Tseq\Gamma{N_i}A)_{i=0,1}\) and are summable in the type \(A\).
  Then the term %
  \[
    \Pdproj 01{\Subst{\Diffm xM}{N}x}
  \]
  of \(\PCFD\) has the same meaning as the term
  \[
    \Subst{\left(\Diffp{M}x{\Pdproj01N}\right)}{\Pdproj00N}x
  \]
  of the differential \(\lambda\)-calculus.
  For that reason we can understand \(\PCFD\) as a sublanguage of the
  differential \(\lambda\)-calculus (extended with integers and
  fixpoint operators).
\end{remark}

\begin{remark}
  It is important to notice that the rewriting \(\Pcfred\) we equip
  \(\PCFD\) with is not an ordinary rewriting relation from terms to
  terms, but from terms to finite multisets of terms as explained in
  \Cref{sec:rewriting}.
  More specifically, there are exactly three rewriting rules which
  produce non-singleton multisets, namely %
  \(\Pdproj di{\Pdinj d{1-i}M}\Rel\Pcfred 0\) (for \(i=0,1\)) and %
  \(\Pdproj d1{\Pdsum dM} \Rel\Pcfred\Pdproj d0{\Pdproj d1 M}+\Pdproj
  d1{\Pdproj d0 M}\).
  For this rewriting system, one can prove a form of subject reduction
  which expresses that if %
  \(\Tseq\Gamma MA\) and \(M\Rel\Pcfred\sum_{i=1}^kM_i\) %
  then we have \(\Tseq\Gamma{M_i}A\) for \(i=1,\dots,k\).
  Notice that actually \(k\in\Set{0,1,2}\).
\end{remark}

\Cref{lemma:diffm-typing}, together with an ordinary substitution
lemma (if \(\Tseq{\Gamma,x:A}MB\) and \(\Tseq\Gamma NA\) then
\(\Tseq{\Gamma}{\Subst MNx}{B}\)), allows to prove subject reduction.

\begin{theorem}
  If \(\Tseq\Gamma MA\) and \(M\Rel\Pcfred\sum_{i=1}^k M_i\) then
  \((\Tseq\Gamma{M_i}A)_{i=1}^k\).
\end{theorem}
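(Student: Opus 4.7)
The plan is to prove the theorem by induction on the derivation of the reduction $M\Rel\Pcfred\sum_{i=1}^k M_i$, reducing to case analysis on the base rewriting rules together with a treatment of the contextual closure. The whole argument rests on two substitution-style lemmas: the ordinary substitution lemma stated just before the theorem (if $\Tseq{\Gamma,x:A}MB$ and $\Tseq\Gamma NA$ then $\Tseq{\Gamma}{\Subst MNx}{B}$, proved by an easy induction on $M$), and \Cref{lemma:diffm-typing} about $\Diffm xM$, which is already established.

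First I would dispatch the base rules that produce a singleton multiset. The $\beta$-rule $\App{\Abst xAM}N\Rel\Pcfred\Subst MNx$ is handled by the ordinary substitution lemma; the differential $\beta$-rule $\Ap\Pd{\Abst xAM}\Rel\Pcfred\Abst x{\Tdiff A}{\Diffm xM}$ is handled directly by \Cref{lemma:diffm-typing} together with the observation that $\Tdiff{(\Timpl AB)}=\Timpl A{\Tdiff B}$. The fixpoint unfolding rule $\Fix M\Rel\Pcfred\App M{\Fix M}$ is immediate from the typing rules for $\Fix$ and application. The arithmetic rules (for $\Succd d{}$, $\Predd d{}$, $\Ifd d{}{}{}$, $\Letd dx{}{}$) are all straightforward inversions of their typing rules, the only subtlety being the bookkeeping of the differentiation index $d$. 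The rules governing the interaction between $\Pdproj d{}$, $\Pdinj d{}$, $\Pdsum d{}$ and $\Pdflip dl{}$ reduce to type-level arithmetic on the indices, which can be read off directly from the typing rules in \Cref{fig:cdpcf-typing}.

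Second I would treat the three rules producing non-singleton multisets. The two annihilation rules $\Pdproj di{\Pdinj d{1-i}M}\Rel\Pcfred 0$ (for $i=0,1$) are handled vacuously: the empty multiset $0$ contains no term to type, so the conclusion $(\Tseq\Gamma{M_i}A)_{i=1}^k$ holds trivially for $k=0$. The interesting case is the Schwarz-like rule $\Pdproj d1{\Pdsum dM}\Rel\Pcfred\Pdproj d0{\Pdproj d1 M}+\Pdproj d1{\Pdproj d0 M}$: if $\Tseq\Gamma{\Pdproj d1{\Pdsum dM}}{\Tdiffm dA}$, then by inversion $\Tseq\Gamma M{\Tdiffm{d+2}A}$, so applying the typing rule for $\Pdproj d{}$ twice (once at level $d+1$, once at level $d$) yields that both $\Pdproj d0{\Pdproj d1 M}$ and $\Pdproj d1{\Pdproj d0 M}$ have type $\Tdiffm dA$, as required.

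Finally, for the contextual closure I would argue that whenever $M\Rel\Pcfred m$ with every $M'$ in $m$ inheriting the type of $M$, and $M$ occurs as an immediate subterm of some term $N$ in a typing context, replacing $M$ by any $M'\in m$ in $N$ yields a typing derivation of the same type, simply because each typing rule only constrains its immediate subterms by their types. The main bookkeeping obstacle will be verifying that the various differential-index shifts are correct in every case of the contextual closure involving $\Pdsum d{}$, $\Pdflip dl{}$, $\Pdinj di{}$ and $\Pdproj di{}$; this is tedious but mechanical, and it is exactly what the indices on these constructors have been designed to make go through.
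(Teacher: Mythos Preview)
Your proposal is correct and follows exactly the approach the paper indicates: the paper's own proof is just the sentence preceding the theorem, stating that \Cref{lemma:diffm-typing} together with the ordinary substitution lemma suffices, and your case analysis spells this out. One minor wording issue: in the $\Pdsum d{}$ case, both syntactic projection indices on the right-hand side are $d$ (not $d+1$ and $d$); the typing still goes through because $\Tdiffm{d+2}A=\Tdiffm{d+1}(\Tdiff A)$, so the inner $\Pdproj d{j}$ is applied with $A'=\Tdiff A$ in the typing rule, yielding $\Tdiffm d(\Tdiff A)=\Tdiffm{d+1}A$, and then the outer $\Pdproj d{i}$ brings this down to $\Tdiffm dA$.
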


\begin{corollary}
  Assume that \(\List M1p\) are terms such that %
  \((\Tseq\Gamma{M_i}A)_{j=1}^p\).
  If \(\sum_{j=1}^pM_j\Rel{\Mslifttr{\Pcfred}}\sum_{j=1}^{p'}M'_{j}\)
  then we have
  \((\Tseq\Gamma{M'_j}A)_{j=1}^{p'}\).
\end{corollary}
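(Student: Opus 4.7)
The plan is to derive the corollary directly from the Subject Reduction theorem just stated by appealing to the inductive characterization of \(\Mslifttr{\Pcfred}\) given in the lemma at the beginning of \Cref{sec:rewriting}. Specifically, I would consider the binary relation \(P\) on \(\Mfin{T}\) defined by: \(P(m,m')\) holds iff, whenever every term occurring in \(m\) is typable with type \(A\) in context \(\Gamma\), every term occurring in \(m'\) is also typable with type \(A\) in context \(\Gamma\). The goal then reduces to showing \(\Mslifttr{\Pcfred}\subseteq P\), from which the corollary follows by applying \(P\) to the hypothesis \(\sum_{j=1}^pM_j\Rel{\Mslifttr{\Pcfred}}\sum_{j=1}^{p'}M'_{j}\) together with the typing assumptions on \(M_1,\dots,M_p\).

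Since the aforementioned lemma characterizes \(\Mslifttr{\Pcfred}\) as the \emph{least} reflexive and transitive relation on \(\Mfin T\) that contains the ``singleton lift'' of \(\Pcfred\) and is closed under pointwise addition of multisets, it suffices to check that \(P\) enjoys these same four closure properties. Reflexivity (\(P(m,m)\)) and transitivity of \(P\) are immediate from its definition. Closure under multiset addition is also immediate: if \((P(m_i,m_i'))_{i=1,2}\) and every term in \(m_1+m_2\) is typable at type \(A\), then each \(m_i\) consists of such terms, so each \(m_i'\) does too, hence so does \(m_1'+m_2'\). The singleton-lift case is exactly the content of the Subject Reduction theorem: if \(\Tseq\Gamma tA\) and \((t,m)\in\Pcfred\), i.e.\ \(t\Rel\Pcfred\sum_{i=1}^k M_i\), then each \(M_i\) satisfies \(\Tseq\Gamma{M_i}A\).

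There is essentially no obstacle in this argument; the whole point of having stated the lemma characterizing \(\Mslifttr{\Pcfred}\) inductively, and of having established Subject Reduction at the level of single-step multiset-valued rewriting, is precisely to make this extension to arbitrary reduction sequences a formality. The one conceptual point worth highlighting is that because the reduction relation is multiset-valued, the usual induction ``on the length of a reduction sequence'' is replaced by an induction on the inductive definition of \(\Mslifttr{\Pcfred}\), where the step case of multiset addition plays the role of parallel reduction of several redexes occurring in distinct copies of a multiset.
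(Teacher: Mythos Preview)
Your proposal is correct and is exactly the intended argument: the paper gives no explicit proof for this corollary, treating it as an immediate consequence of the Subject Reduction theorem together with the inductive characterization of \(\Mslifttr{\Pcfred}\) stated in \Cref{sec:rewriting}, which is precisely the route you take.
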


\subsection{Operational semantics}
Rather than providing a complete definition of the \(\PCFD\) reduction
system, which is lengthy and has already been given in~\cite{Ehrhard23b}%
\footnote{We hope to be able to improve and somehow simplify this
  system in a near future.} %
we describe a seemingly more canonical ``Krivine machine'' which
allows to evaluate terms \(M\) of \(\PCFD\) such that
\(\Tseq{}M\Tnat\).

A state of the machine is a triple %
\(\State \delta Ms\) where \(M\) a closed term, \(s\) is a stack and
\(\delta\in\Finseq{\Set{0,1}}\).
Our stacks are defined by the following grammar:
\begin{align*}
  s,t,\dots
  \Bnfeq \Stempty
  \Bnfor \Starg Ms
  \Bnfor \Stsucc s
  \Bnfor \Stpred s
  \Bnfor \Stif \delta PQs
  \Bnfor \Stlet \delta xPs
  \Bnfor \Stdiff is\,.
\end{align*}
Stacks are typed by judgments of shape \(\Stseq sE\) where \(E\) is a
sharp type.
The typing rules for stacks are given in \Cref{fig:typing-stacks}.

\begin{figure}
  \begin{center}
  \begin{prooftree}
    \infer0{\Stseq\Stempty\Tnat}
  \end{prooftree}
  \Treesep
  \begin{prooftree}
    \hypo{\Stseq s\Tnat}
    \infer1{\Stseq{\Stsucc s}\Tnat}
  \end{prooftree}
  \Treesep
  \begin{prooftree}
    \hypo{\Stseq s\Tnat}
    \infer1{\Stseq{\Stpred s}\Tnat}
  \end{prooftree}
  \end{center}
  \begin{center}
    \begin{prooftree}
      \hypo{\Tseq{}{P}{\Tdiffm dE}}
      \hypo{\Tseq{}{Q}{\Tdiffm dE}}
      \hypo{\Stseq sE}
      \hypo{\delta\in\Set{0,1}^d}
      \infer4{\Stseq{\Stif\delta PQs}\Tnat}
    \end{prooftree}
    \Treesep
    \begin{prooftree}
      \hypo{\Tseq{}{P}{\Tdiffm dE}}
      \hypo{\Stseq sE}
      \hypo{\delta\in\Set{0,1}^d}
      \infer3{\Stseq{\Stlet\delta xPs}\Tnat}
    \end{prooftree}
  \end{center}
  \begin{center}
    \begin{prooftree}
      \hypo{\Tseq{}PA}
      \hypo{\Stseq sE}
      \infer2{\Stseq{\Starg Ps}{\Timpl AE}}
    \end{prooftree}
    \Treesep
    \begin{prooftree}
      \hypo{\Stseq{s}{\Timpl{\Tdiff A}{E}}}
      \hypo{i\in\Set{0,1}}
      \infer2{\Stseq{\Stdiff is}{\Timpl AE}}
    \end{prooftree}
  \end{center}
  \caption{Typing rules for stacks}
  \label{fig:typing-stacks}
\end{figure}

\begin{definition}
  A state \(e=\State\delta Ms\) is \emph{well typed} if %
  \(\Tseq{}M{\Tdiffm dE}\), \(\delta\in\Set{0,1}^{\Len\delta}\) and %
  \(\Stseq sE\) for some sharp type \(E\).
\end{definition}

The transition rules for states are given in
\Cref{fig:state-reductions-push,fig:state-reductions-pop,fig:state-reductions-access}
where we classified them in three categories.
Notice that this is a rewriting system in the sense of
\Cref{sec:rewriting}, that is, from states to finite multisets (or
finite formal sums) of states.
The only rules yielding actual sums are the 3rd and 5th transition
rules in \Cref{fig:state-reductions-access}.

With a stack \(s\) such that \(\Tseqst sE\) we can associate a context
\(\Stctx s\Holee\), that is, as a closed term of type \(\Tnat\) with
one hole \(\Holee\) of type \(E\) in linear position; the definition
of this context is given in %
\Cref{fig:stack-to-context}.
\begin{figure}
  \centering
  \begin{align*}
    \Stctx\Stempty&=\Holee
    &&\\
    \Stctx{\Stsucc s}&=\Stctx s\Holef{\Succ{\Holee}}
    &
      \Stctx{\Stpred s}&=\Stctx s\Holef{\Pred{\Holee}}
    \\
    \Stctx{\Stif\delta PQs}
                  &=\Stctx s\Holef{\Pdproj 0\delta{\Ifd0{\Holee}{P}{Q}}}
    &
      \Stctx{\Stlet\delta xPs}
                     &=\Stctx s\Holef{\Pdproj 0\delta{\Letd0 x{\Holee}{P}}}
    \\
    \Stctx{\Starg Ps}&=\Stctx s\Holef{\App{\Holee}P}
    &
      \Stctx{\Stdiff is}&=\Stctx s\Holef{\Pdproj 0i{\Pd\Holee}}
  \end{align*}
  \caption{Context associated with a stack}
  \label{fig:stack-to-context}
\end{figure}

\begin{lemma}
  \label{lemma:typing-context-of-stack}
  If \(\Tseqst sE\) and \(\Tseq{}ME\), then
  \(\Tseq{}{\Stctx s\Holef M}\Tnat\).
\end{lemma}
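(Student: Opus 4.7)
The plan is to prove the statement by straightforward induction on the derivation of $\Stseq sE$, equivalently on the structure of the stack $s$. For each case the induction hypothesis applied to the sub-stack $s'$ gives the conclusion, provided I first check, using the typing rules of \Cref{fig:cdpcf-typing}, that the term being plugged into $\Stctx{s'}[\_]$ has exactly the type $E'$ that $s'$ requires.

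For the base case $s=\Stempty$, the only stack typing rule forces $E=\Tnat$ and $\Stctx{\Stempty}[M]=M$, so $\Tseq{}M\Tnat$ is the hypothesis itself. For arithmetic frames $s=\Stsucc{s'}$ and $s=\Stpred{s'}$ we have $E=\Tnat$, and the typing rule for $\Succd0{}$ (resp.\ $\Predd0{}$) gives $\Tseq{}{\Succd0M}\Tnat$, so IH on $s'$ (also at $\Tnat$) concludes. The case $s=\Starg P{s'}$, with $E=\Timpl A{E''}$, $\Tseq{}PA$, $\Stseq{s'}{E''}$, is discharged by the application typing rule, which gives $\Tseq{}{\App MP}{E''}$, and then by IH on $s'$.

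For $s=\Stif\delta PQ{s'}$, the stack typing rule gives $E=\Tnat$, $\Tseq{}P{\Tdiffm d{E''}}$, $\Tseq{}Q{\Tdiffm d{E''}}$, $\Stseq{s'}{E''}$, and $\delta\in\Set{0,1}^d$. Applying the typing rule for $\Ifd{}{}{}{}$ with $d=0$ to $M:\Tnat$ and $P,Q:\Tdiffm d{E''}$ yields $\Ifd 0MPQ:\Tdiffm d{E''}$, and iterating the typing rule for $\Pdproj{}{}{}$ exactly $d$ times according to the coordinates of $\delta$ produces $\Pdproj0\delta{\Ifd0MPQ}:E''$. IH on $s'$ then gives the required typing of $\Stctx{s'}[\Pdproj0\delta{\Ifd0MPQ}]=\Stctx sM$. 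The case $s=\Stlet\delta xP{s'}$ is entirely analogous, using the typing rule for $\Letd{}{}{}{}$.

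The most delicate case is the differential frame $s=\Stdiff i{s'}$, with $E=\Timpl A{E''}$ and $\Stseq{s'}{\Timpl{\Tdiff A}{E''}}$. Given $M:\Timpl A{E''}$, the typing rule for $\Pd$ yields $\Pd M:\Timpl{\Tdiff A}{\Tdiff{E''}}$, and here I need the definitional identity $\Tdiff{(\Timpl{\Tdiff A}{E''})}=\Timpl{\Tdiff A}{\Tdiff{E''}}$ to rewrite $\Pd M$ as having type $\Tdiff{(\Timpl{\Tdiff A}{E''})}$. Applying the typing rule for $\Pdproj d i{}$ with $d=0$ then gives $\Pdproj0i{\Pd M}:\Timpl{\Tdiff A}{E''}$, matching the type required by $s'$, and IH concludes. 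The argument is entirely mechanical; the only potential pitfall is the correct definitional unfolding of $\Tdiff$ on arrow types in this last case.
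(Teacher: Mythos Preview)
Your proposal is correct and follows exactly the approach the paper indicates: a straightforward induction on the derivation of \(\Stseq sE\). The paper's own proof is the single sentence ``Straightforward induction on \(s\)'', and your case analysis is a faithful unfolding of that induction, including the one non-trivial observation (the definitional equality \(\Tdiff{(\Timpl{\Tdiff A}{E''})}=\Timpl{\Tdiff A}{\Tdiff{E''}}\) in the \(\Stdiff i{s'}\) case).
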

\begin{proof}
  Straightforward induction on \(s\).
\end{proof}

\begin{definition}
  Given a state \(e\) we define a term %
  \(\Termofst e\) by %
  \(\Termofst{\State\delta Ms}=\Stctx s\Holef{\Pdproj0\delta{M}}\).
\end{definition}

\begin{lemma}
  If \(e\) is a well typed state then \(\Tseq{}{\Termofst e}{\Tnat}\).
\end{lemma}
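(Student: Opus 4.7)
The plan is to unfold the definition of $\Termofst e$ and combine the iterated projection typing rule with the preceding lemma on contexts associated with stacks, since all the ingredients are already in place.

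First I would write $e = \State\delta Ms$ and use well-typedness to obtain a sharp type $E$ and a natural number $d = \Len\delta$ such that $\Tseq{}{M}{\Tdiffm d E}$, $\delta \in \Set{0,1}^d$, and $\Stseq s E$. The definition of $\Termofst e$ then unfolds as $\Stctx s \Holef{\Pdproj 0 \delta M}$, so by Lemma \ref{lemma:typing-context-of-stack} it suffices to establish $\Tseq{}{\Pdproj 0 \delta M}{E}$.

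To prove the latter, I would do a straightforward induction on $d = \Len\delta$. When $d = 0$, the sequence $\delta$ is empty, the iterated projection $\Pdproj 0 \delta M$ is just $M$, and the conclusion is immediate since $\Tdiffm 0 E = E$. For the inductive step, writing $\delta = \delta' \cdot i$ (or $i \cdot \delta'$, depending on the convention chosen for iterating $\Pdproj 0 {(\,)}{}$), one applies the typing rule for $\Pdproj d i$ once to strip one $\Diffsymb$ layer, turning a term of type $\Tdiffm{d}E$ into one of type $\Tdiffm{d-1}E$, and then invokes the induction hypothesis on the shorter sequence of indices. No subtlety arises because $E$ is sharp, so every $\Diffsymb$ peeled off by the typing rule corresponds unambiguously to one component of $\delta$.

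Having established $\Tseq{}{\Pdproj 0 \delta M}{E}$, the conclusion $\Tseq{}{\Termofst e}{\Tnat}$ follows from Lemma \ref{lemma:typing-context-of-stack} applied to $s$ and to the term $\Pdproj 0 \delta M$ in the hole. There is no genuine obstacle in this proof; it is purely a matter of bookkeeping of types, mirroring the proof of Lemma \ref{lemma:typing-context-of-stack} itself, with the only small care being the matching between the length of the index sequence $\delta$ and the differential depth $d$ of the type of $M$ — both guaranteed by the definition of a well typed state.
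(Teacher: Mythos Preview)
Your proposal is correct and follows the same approach as the paper, which simply states the result as a direct consequence of Lemma~\ref{lemma:typing-context-of-stack}. You have merely made explicit the small induction on $\Len\delta$ needed to show $\Tseq{}{\Pdproj0\delta M}{E}$ from the projection typing rule, which the paper leaves implicit.
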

\begin{proof}
  Direct consequence of \Cref{lemma:typing-context-of-stack}.
\end{proof}

\begin{theorem}
  \label{th:state-trans-to-term-red}
  If \(e\Rel\Stred\sum_{i=1}^ke_i\) then %
  \(\Termofst e\Rel\Pcfred\sum_{i=1}^k\Termofst{e_i}\) (notice that
  \(k\in\Set{0,1,2}\)).
\end{theorem}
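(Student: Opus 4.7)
The plan is to proceed by case analysis on the transition rule used to derive the single step $e\Rel\Stred\sum_{i=1}^k e_i$, going through the rules of Figures~\ref{fig:state-reductions-push}, \ref{fig:state-reductions-pop} and \ref{fig:state-reductions-access}. The key observation that makes the statement tractable is structural: the translation $\Termofst{\State\delta Ms}=\Stctx{s}\Holef{\Pdproj0\delta M}$ places the focus term $\Pdproj0\delta M$ inside a term-context $\Stctx{s}\Holef{\cdot}$ whose definition in Figure~\ref{fig:stack-to-context} is designed precisely to mirror the action of the stack. Since $\Pcfred$ lifts through term contexts, for each transition rule it suffices to exhibit the reduction at the focus (or an equality, if the transition is purely a bookkeeping one) and lift it through $\Stctx{s}\Holef{\cdot}$.

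First I would dispatch the push rules, which move a piece of syntactic structure from the term into the top of the stack (for instance $\State\delta{\App PN}s\Stred\State\delta P{\Starg Ns}$, or the analogous rules for $\Succd d$, $\Predd d$, $\Ifd d$, $\Letd d$, and the differential operators). In each such case the two associated terms $\Termofst{e}$ and $\Termofst{e'}$ are either syntactically identical or connected by a single $\Pcfred$-step expressing commutation of $\Pdproj0\delta\_$ with the constructor in question; either way $k=1$ and the conclusion is immediate. Then I would treat the pop rules, which correspond to genuine head reductions: an abstraction meeting a $\Starg N$ marker yields a $\beta$-step $\App{\Abst xAM}N\Rel\Pcfred\Subst MNx$; a $\Fix M$ unfolds; $\Succ{\Num\nu}$ and $\Pred{\Num\nu}$ compute; $\Ifd0{\Num\nu}{P}{Q}$ selects a branch and $\Letd0 x{\Num\nu}P$ substitutes. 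All of these cases produce a single term ($k=1$) and the verification is a direct $\Pcfred$-reduction at the focus.

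The real content lies in the access rules, where the stack carries $\Stdiff i$ markers or where the prefix $\delta$ at the head of $\Pdproj0\delta M$ must be reorganised. These are the only transitions that can produce $k=0$ or $k=2$, and they match exactly the two $\Pcfred$-rules $\Pdproj di{\Pdinj d{1-i}M}\Rel\Pcfred 0$ and $\Pdproj d1{\Pdsum dM}\Rel\Pcfred\Pdproj d0{\Pdproj d1 M}+\Pdproj d1{\Pdproj d0 M}$ together with the identities governing $\Pdflip$. I expect the main obstacle to be the $\Pd$-transition, where the focus $\Pd{(\Abst xAM)}$ facing a $\Stdiff i$-loaded stack must reduce to $\Abst x{\Tdiff A}{\Diffm xM}$ and then consume the pending $\Stdiff i$ markers by substitution; one must check carefully that the $\delta$ prefix attached to the result, combined with the $\Pdproj/\Pdinj$ bookkeeping coming from the stack translation via Figure~\ref{fig:stack-to-context}, matches the structure of $\Diffm xM$ as given in Figure~\ref{fig:diffm-def}. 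Once this combinatorial matching is established rule by rule, the theorem follows by closure of $\Pcfred$ under the context $\Stctx{s}\Holef{\cdot}$.
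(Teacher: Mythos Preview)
The paper does not give its own proof of this theorem (it defers to~\cite{Ehrhard23b}), so there is nothing to compare against directly. Your overall strategy---a case analysis on the transition rule, exploiting that $\Stctx{s}\Holef{\cdot}$ is a linear context through which $\Pcfred$ lifts---is the natural one and is correct in outline.

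A few details of your description are off, however. First, $\Fix$ is a \emph{push} rule, not a pop rule: the machine rewrites $\State\delta{\Fix N}s$ to $\State\delta N{\Starg{\Fix N}s}$, so the unfolding shows up as the commutation $\Pdproj0\delta{\Fix N}\Rel\Pcfred\App{\Pdproj0\delta N}{\Fix N}$ (or the appropriate $\Pcfred$ rule for $\Fix$), not as a pop. Second, you conflate two distinct transitions in your ``main obstacle'' paragraph: the push rule $\State{\delta i}{\Pd N}{s}\Rel\Stred\State\delta N{\Stdiff is}$ does not require $N$ to be an abstraction, and the pop rule $\State{\delta}{\Abst xBN}{\Stdiff is}\Rel\Stred\State{\delta i}{\Abst{x}{\Tdiff B}{\Diffm xN}}{s}$ does not involve $\Pd$ at all. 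These are separate cases, each matching a single $\Pcfred$ rule (one for commuting $\Pdproj0i$ past $\Pd$, the other the rule $\Pd(\Abst xBN)\Rel\Pcfred\Abst x{\Tdiff B}{\Diffm xN}$); there is no ``consuming pending $\Stdiff i$ markers by substitution''. Third, be careful with the claim ``a single $\Pcfred$-step'': when $\delta$ is empty several push transitions yield a syntactic identity rather than a step, so the statement presupposes that $\Pcfred$ (or the relevant fragment) is taken reflexively, which you should make explicit.
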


\begin{remark}
  Performing transitions from the state \(e=\State\delta Ms\)
  amounts actually to evaluating the term %
  \(\Pdproj 0\delta M\) in the environment \(s\) in a ``weak head''
  restriction of the \(\PCFD\) reduction system.
  As a whole, we could consider \((\delta,s)\) as the context %
  \(\Stctx s[\Pdproj{0}\delta{[\ ]}]\) which suggests to integrate the
  access path \(\delta\) in the stack since the purpose of the stack
  is to store the current context of evaluation.
  
  We did not do so because many rules of the rewriting system
  \(\Pcfred\) express some commutations between the \(\Pdproj di\_\)
  constructs (stored in the \(\delta\) component of the state) and the
  other constructs of the language (stored in the \(s\) component).
  These commutations express that the \(s\)-context and the
  \(\delta\)-context act \emph{in parallel} on the term component of
  the machine, strongly suggesting to keep them separate.
  The benefit of this choice is that, in the transition rules of %
  \Cref{fig:state-reductions-push,fig:state-reductions-pop,fig:state-reductions-access},
  we do not mention these commutations explicitly: they are
  implemented in a purely implicit way, which is a major improvement
  of this machine wrt.~the rewriting system \(\Pcfred\).

  For instance, the rewriting system features the reduction %
  \(\Pdproj{0}i{\App NP}\Rel\Pcfred\App{\Pdproj{0}iN}P\) where we see
  that the action of the projection is transferred from \(\App NP\) to
  \(N\).
  This transfer of action of the projection is implemented implicitly
  in the transitions
  \(\State{\delta}{\Pdproj{0}i{\App NP}}{s}
  \Rel\Stred\State{\delta i}{\App NP}s
  \Rel\Stred\State{\delta i}N{\Starg Ps}\)
  of \Cref{fig:state-reductions-access,fig:state-reductions-push}.
  Using only a stack for storing the context, we would have
  obtained a sequence of reductions like %
  \((\Pdproj{0}i{\App NP},s)
  \Rel\Stred(\App NP,\pi_i\cdot s)
  \Rel\Stred(N,\Starg P{\pi_i\cdot s})\).
  But we might have \(N=\Pdsum{0}{\Abst xBQ}\) and then the only
  natural option ---~keeping in mind the fundamental principle that
  the stack should be accessed only from the top~--- would be to push
  again the \(\theta^0\) onto the stack, leading to something like %
  \((\Abst xBQ,\theta^0\cdot\Starg P{\pi_i\cdot s})\) but then the
  argument that the abstraction \(\Abst xBQ\) is waiting for is not
  available on the top of the stack.
  To solve this issue we would need an equivalence relation on stacks
  accounting for the above mentioned commutation reduction rules of
  \(\Pcfred\).
  In other words, the stack should not be indexed by a finite totally
  ordered set (that is, should not be a list), but rather by a tree or
  perhaps a more general directed acyclic graph.
  Our dichotomy between the stack and the access word avoids these
  technicalities in a very simple and, we think, natural way.
  Notice by the way that the access word is not dealt with as a stack
  since we insert and remove elements anywhere in the word, and even
  perform cyclic permutations of factors, see
  \Cref{fig:state-reductions-access}; a simple implementation of such
  a data structure could use a linked list.
\end{remark}

\begin{definition}
  \label{def:final-state}
  A \emph{final state} is a state of shape
  \(\State{\Seqempty}{\Num\nu}{\Stempty}\).
\end{definition}

\begin{lemma}
  If \(e\) is a well typed state and there is no transition from
  \(e\), then \(e\) is a final state.
\end{lemma}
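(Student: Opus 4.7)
The plan is to proceed by structural case analysis on the term component $M$ of the state $e=\State{\delta}{M}{s}$, using the well-typedness assumption $\Tseq{}{M}{\Tdiffm{d}{E}}$ with $\delta\in\Set{0,1}^{\Len\delta}$ such that $\Len\delta=d$ and $\Stseq{s}{E}$, in order to exhibit a transition in each case except the final-state one. Since $M$ is closed, the variable case is excluded immediately.

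First I would handle the ``non-value'' term constructors. For $M=\App{P}{N}$, $\Succd{d}{N}$, $\Predd{d}{N}$, $\Ifd{d}{N}{P}{Q}$, $\Letd{d}{x}{N}{P}$, and $\Pd N$, the push rules of \Cref{fig:state-reductions-push} are always applicable: they only require that $M$ have the given shape and that $s$ be a stack, with the typing guaranteeing that the pushed frame $\Starg{}{}$, $\Stsucc{}$, $\Stpred{}$, $\Stif{\delta}{}{}{}$, $\Stlet{\delta}{}{}{}$ or $\Stdiff{i}{}$ is well typed. Similarly for the differential ``administrative'' constructors $\Pdinj{d}{i}{N}$, $\Pdsum{d}{N}$, $\Pdflip{d}{l}{N}$, $\Pdproj{d}{i}{N}$, some rule of \Cref{fig:state-reductions-access} fires, inserting, deleting, permuting, or reading a letter in $\delta$; well-typedness ensures that $\Len\delta\geq d$ (respectively $d+1$, etc.) so that the rule is applicable and the resulting access word is consistent.

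Next I would handle the ``value'' cases, namely $M=\Abst{x}{A}{P}$, $M=\Fix P$, and $M=\Num\nu$. If $M=\Fix{P}$, the unrolling rule is applicable regardless of $s$ and $\delta$. If $M=\Abst{x}{A}{P}$, the typing forces $\Tdiffm{d}{E}=\Timpl{A}{B}$; sharpness of $E$ then forces $d=0$ and $E=\Timpl{A}{F}$ with $F$ sharp, so the stack, having type $\Timpl{A}{F}$, must be of the form $\Starg{N}{t}$ or $\Stdiff{i}{t}$ by inspection of the typing rules in \Cref{fig:typing-stacks}, and a pop rule (ordinary $\beta$ or $\partial$-reduction) of \Cref{fig:state-reductions-pop} fires. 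Finally if $M=\Num\nu$, then $\Tdiffm{d}{E}=\Tnat$ so $d=0$ and $E=\Tnat$; the typing $\Stseq{s}{\Tnat}$ leaves the cases $s=\Stempty$, $s=\Stsucc{t}$, $s=\Stpred{t}$, $s=\Stif{\delta'}{P}{Q}{t}$ and $s=\Stlet{\delta'}{x}{P}{t}$, and in each non-empty case a pop rule fires; only $\delta=\Seqempty$ and $s=\Stempty$ remain, which is precisely the final-state configuration of \Cref{def:final-state}.

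The main obstacle is the bookkeeping: one must verify for each differential administrative constructor that the well-typedness constraint $\Len\delta=d$ together with the shape of $M$ guarantees that $\delta$ has the right length and the right letters (in particular for the rule analyzing $\Pdproj{d}{i}{\Pdinj{d}{j}{N}}$ and for the $\Pdsum{d}{\cdot}$ rule producing an actual sum) for one of the rules of \Cref{fig:state-reductions-access} to apply; once this is checked rule by rule, the exhaustiveness of the case analysis directly yields the conclusion.
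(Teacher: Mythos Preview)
Your proposal is correct and follows the same approach as the paper, which simply says ``Simple case analysis on the typing rules of terms and stacks.'' Your expansion is a faithful unfolding of that sentence.

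One small inaccuracy: in the abstraction case you claim that \(\Tdiffm dE=\Timpl AB\) together with sharpness of \(E\) forces \(d=0\). This is not true. If \(B=\Tdiffm hF\) with \(F\) sharp, then \(\Timpl AB=\Tdiffm h{(\Timpl AF)}\), so \(d=h\) and \(E=\Timpl AF\); the value of \(d\) is determined by \(B\) and need not be zero. Fortunately this does not affect your argument: what matters is only that \(E\) is an arrow type \(\Timpl AF\), which forces the stack (typed by \(\Stseq sE\)) to be of shape \(\Starg{N}{t}\) or \(\Stdiff it\), and the corresponding pop rules in \Cref{fig:state-reductions-pop} fire for arbitrary \(\delta\). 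So the conclusion stands; just drop the claim that \(d=0\).
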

\begin{proof}
  Simple case analysis on the typing rules of terms and stacks.
\end{proof}

\begin{figure}
  \centering
  \begin{align*}
    \State\delta{\Succd{\Len\delta}M}s
    &\Rel\Stred\State\delta M{\Stsucc s}
    &\State\delta{\Predd{\Len\delta}M}s
    &\Rel\Stred\State\delta M{\Stpred s}
    \\
    \State{\epsilon\delta}{\Ifd{\Len\delta}MPQ}s
    &\Rel\Stred\State\delta M{\Stif{\epsilon}{P}{Q}{s}}
    &\State{\epsilon\delta}{\Letd{\Len\delta}xMP}s
    &\Rel\Stred\State\delta M{\Stlet{\epsilon}{x}{P}{s}}
    \\
    \State\delta{\App NP}s
    &\Rel\Stred\State\delta N{\Starg Ps}
    &\State\delta{\Fix N}s
    &\Rel\Stred\State\delta N{\Starg{\Fix N}s}
    \\
    \State{\delta i}{\Pd N}{s}
    &\Rel\Stred\State\delta N{\Stdiff is}
    &&
  \end{align*}
  \caption{Transition rules for states --- pushing onto the stack}
  \label{fig:state-reductions-push}
\end{figure}
\begin{figure}
  \centering
  \begin{align*}
    \State{\Seqempty}{\Num\nu}{\Stsucc s}
    &\Rel\Stred\State\Seqempty{\Num{\nu+1}}s
    &\State{\Seqempty}{\Num\nu}{\Stpred s}
    &\Rel\Stred\State\Seqempty{\Num{\nu\Kminus1}}s
    \\
    \State{\Seqempty}{\Num0}{\Stif\delta PQs}
    &\Rel\Stred\State\delta Ps
    &\State{\Seqempty}{\Num{\nu+1}}{\Stif\delta PQs}
    &\Rel\Stred\State\delta Qs
    \\
    \State{\Seqempty}{\Num\nu}{\Stlet\delta xPs}
    &\Rel\Stred\State\delta {\Subst P{\Num\nu}x}s
    &&
    \\
    \State{\delta}{\Abst xBN}{\Starg Ps}
    &\Rel\Stred\State\delta{\Subst NPx}{s}
    &\State{\delta}{\Abst xBN}{\Stdiff is}
    &\Rel\Stred\State{\delta i}{\Abst{x}{\Tdiff B}{\Diffm xN}}{s}
  \end{align*}
  \caption{Transition rules for states --- popping from the stack}
  \label{fig:state-reductions-pop}
\end{figure}
\begin{figure}
  \centering
  \begin{align*}
    \State{\epsilon\delta}{\Pdproj{\Len\delta}{i}N}s
    &\Rel\Stred\State{\epsilon i\delta}Ns
    &&\\
    \State{\epsilon i\delta}{\Pdinj{\Len\delta}{i}N}s
    &\Rel\Stred\State{\epsilon\delta}Ns
    &\State{\epsilon i\delta}{\Pdinj{\Len\delta}{1-i}N}s
    &\Rel\Stred0
    \\
    \State{\epsilon 0\delta}{\Pdsum{\Len\delta}N}{s}
    &\Rel\Stred\State{\epsilon 00\delta}{N}{s}
    &\State{\epsilon 1\delta}{\Pdsum{\Len\delta}N}{s}
    &\Rel\Stred\State{\epsilon 01\delta}{N}{s}+\State{\epsilon 10\delta}{N}{s}
    \\
    \State{\epsilon i_1\cdots i_{l+2}\delta}{\Pdflip{\Len\delta}l N}{s}
    &\Rel\Stred
      \State{\epsilon i_{l+2}i_1\cdots i_{l+1}\delta}{N}{s}
      &&
  \end{align*}
  \caption{Transition rules for states --- handling the access word}
  \label{fig:state-reductions-access}
\end{figure}

Let \(\States\) be the set of all well typed states.

\begin{lemma}
  If \(e\in\States\), \(e\Rel\Stred u\) and \(e'\) is a state such that
  \(u_{e'}\not=0\), then \(e'\in\States\).
\end{lemma}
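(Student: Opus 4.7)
The plan is to do a case analysis on the transition relation $\Stred$, defined by the rules in \Cref{fig:state-reductions-push,fig:state-reductions-pop,fig:state-reductions-access}, and for each rule verify that every summand $e'$ appearing with nonzero coefficient in the right-hand side $u$ is a well-typed state. Unfolding the definition, if $e=\State{\delta}{M}{s}\in\States$ then there is a sharp type $E$ with $\Tseq{}{M}{\Tdiffm{\Len\delta}E}$, $\delta\in\Set{0,1}^{\Len\delta}$, and $\Stseq{s}{E}$. For each rule I will exhibit the sharp type $E'$ needed for $e'=\State{\delta'}{M'}{s'}$, check $\Len{\delta'}$ matches the number of $\Tdiffsymb$ prefixes in the new term's type, and check that the new stack has type $E'$.

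The push rules of \Cref{fig:state-reductions-push} are essentially bookkeeping: each one strips a head constructor off $M$ and pushes a corresponding frame onto $s$, possibly consuming one bit from $\delta$ into the frame (as in the $\Ifd{}{}{}{}$ and $\Letd{}{}{}{}$ cases). In each case the sharp type $E$ is unchanged and the typing rules of \Cref{fig:cdpcf-typing,fig:typing-stacks} give the new stack judgment directly from the inversion of the term judgment. The only slightly subtle push rule is $\State{\delta i}{\Pd N}{s}\Rel\Stred\State{\delta}{N}{\Stdiff{i}{s}}$: from $\Tseq{}{\Pd N}{\Tdiffm{\Len\delta+1}E}$ I invert through the typing rule of $\Pd{\_}$ (which produces a type of shape $\Timpl{\Tdiff A}{\Tdiff B}$), and then use the typing rule of $\Stdiff{i}{\_}$ to extend the stack from arrow type $\Timpl{\Tdiff A}{E}$ back to $\Timpl{A}{E}$.

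The pop rules of \Cref{fig:state-reductions-pop} inspect a value against the top stack frame. The numeric and conditional cases are immediate from the inversion of $\Stseq{\Stsucc{s}}{\Tnat}$, $\Stseq{\Stif{\delta}{P}{Q}{s}}{\Tnat}$, and $\Stseq{\Stlet{\delta}{x}{P}{s}}{\Tnat}$ together with a standard substitution lemma for the let-case. The application pop rule uses the usual substitution lemma of $\PCFD$. The differential pop rule $\State{\delta}{\Abst{x}{B}{N}}{\Stdiff{i}{s}}\Rel\Stred\State{\delta i}{\Abst{x}{\Tdiff B}{\Diffm{x}{N}}}{s}$ is the one place where nontrivial typing work is required, and here I invoke \Cref{lemma:diffm-typing}: from $\Tseq{x:B}{N}{C}$ it gives $\Tseq{x:\Tdiff B}{\Diffm{x}{N}}{\Tdiff C}$, so that $\Abst{x}{\Tdiff B}{\Diffm{x}{N}}$ has exactly the type $\Timpl{\Tdiff B}{\Tdiff C}$ required by $s$, and the access word length grows by one in step with the extra $\Tdiffsymb$. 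This step is the main (and essentially the only) technical obstacle.

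The access rules of \Cref{fig:state-reductions-access} never modify $M$ or $s$, only the access word and the $\Pdproj$/$\Pdinj$/$\Pdsum$/$\Pdflip$ head of $M$. For each rule I read off from the typing rule of the head differential constructor that the reduced term has the right $\Tdiffm{d}$ degree to match the new length of $\delta'$ and the fixed sharp type $E$ of $s$. This includes the two rules that produce nonsingleton multisets: the $\Pdsum$ rule in the case $\epsilon 1\delta$ produces two summands $\State{\epsilon 01\delta}{N}{s}$ and $\State{\epsilon 10\delta}{N}{s}$, each with the same well-typed components, and the $\Pdinj$ rule with mismatched bit produces the empty multiset $0$, for which there is nothing to check. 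The $\Pdflip$ rule only permutes the access word, preserving its length and all other typing data. Subject reduction for terms, already known via \Cref{th:state-trans-to-term-red}, is not needed for this lemma, which is purely a local checking exercise once the differential modification lemma is in hand.
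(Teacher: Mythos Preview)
Your proposal is correct and follows essentially the same approach as the paper: a case analysis on the transition rules, with \Cref{lemma:diffm-typing} invoked for the differential pop rule, exactly as the paper does in its worked examples. One small inaccuracy: in the $\Ifd{}{}{}{}$ and $\Letd{}{}{}{}$ push rules the prefix $\epsilon$ transferred into the stack frame is an arbitrary-length word (matching the $\Tdiffm{\Len\epsilon}$ depth of $P,Q$), not a single bit, but this does not affect the structure of your argument.
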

\begin{proof}
  Simple inspection of the transition rules.
  As an example taken from \Cref{fig:state-reductions-access}, assume
  that \(e=\State{\epsilon 1\delta}{\Pdsum{\Len\delta}N}{s}\) and
  \(u=\State{\epsilon 01\delta}{N}{s}+\State{\epsilon
    10\delta}{N}{s}\).
  So
  \(e'\in\Set{\State{\epsilon 01\delta}{N}{s},\State{\epsilon
      10\delta}{N}{s}}\), say \(e'=\State{\epsilon 10\delta}{N}{s}\).
  Let \(d=\Len\delta\).
  There must be a type \(A\) such that \(\Tseq{}N{\Tdiffm{d+2}A}\) so that
  \(\Tseq{}{\Pdsum dN}{\Tdiffm{d+1}A}\).
  There are uniquely determined sharp type \(E\) and \(h\in\Nat\) such
  that %
  \(A=\Tdiffm hE\) and hence \(\Tseq{}{\Pdsum dN}{\Tdiffm{h+1+d}E}\)
  and since \(e\) is well typed we must have \(h=\Len\epsilon\) and
  \(\Tseqst sE\).
  So \(\Len{\epsilon10\delta}=h+2+d\) and since %
  \(\Tseq{}N{\Tdiffm{h+2+d}E}\), the state \(e'\) is well typed.

  Let us also deal with the case %
  \(e=\State{\delta i}{\Pd N}{s}\) and %
  \(e'=\State\delta N{\Stdiff is}\) of \Cref{fig:state-reductions-push}.
  For \(\Pd N\) to be typed we need \(\Tseq{}N{\Timpl BA}\) and then
  we have \(\Tseq{}{\Pd N}{\Timpl{\Tdiff B}{\Tdiff A}}\).
  There are uniquely determined sharp type \(E\) and \(d\in\Nat\) such
  that %
  \(A=\Tdiffm dE\), so that %
  \(\Tseq{}{\Pd N}{\Tdiffm{d+1}{(\Timpl{\Tdiff B}E)}}\).
  Since \(e\) is well typed, we must have %
  \(d=\Len\delta\) and \(\Tseqst s{\Timpl{\Tdiff B}E}\) so that %
  \(\Tseqst{\Stdiff is}{\Timpl{B}E}\) (see
  \Cref{fig:typing-stacks}) and hence \(e'\) is well typed since we
  have %
  \(\Tseq{}{N}{\Tdiffm d{(\Timpl BE)}}\).

  As a last example consider the case %
  \(e=\State{\delta}{\Abst xBN}{\Stdiff is}\) and
  \(e'=\State{\delta i}{\Abst{x}{\Tdiff B}{\Diffm xN}}{s}\) from %
  \Cref{fig:state-reductions-pop}.
  We must have \(\Tseq{x:B}{N}{A}\) for some type \(A=\Tdiffm dE\)
  (with \(E\) sharp and \(d\in\Nat\) uniquely defined).
  Accordingly \(\Tseq{}{\Abst xBN}{\Tdiffm d{(\Timpl BE)}}\) and since
  \(e\) is well typed, we must have \(d=\Len\delta\) and
  \(\Tseqst{\Stdiff is}{\Timpl BE}\) which, by the typing rules of %
  \Cref{fig:typing-stacks}, entails \(\Tseqst s{\Timpl{\Tdiff B}{E}}\).
  By \Cref{lemma:diffm-typing}, we have %
  \(\Tseq{x:\Tdiff B}{\Diffm xN}{\Tdiffm{d+1}E}\) so that \(e'\) is
  well typed since \(\Len{\delta i}=d+1\).
\end{proof}

\begin{remark}
  Notice that, on one side, the transition rules of
  \Cref{fig:state-reductions-push,fig:state-reductions-pop,fig:state-reductions-access}
  are deterministic in the sense that, for any \(e\in\States\), there
  is at most one \(u\in\Mfin\States\) such that \(e\Rel\Pcfred u\),
  and that, when there is no such transition from \(e\), then \(e\) is
  final in the sense of \Cref{def:final-state}.
  So we can define a function \(\Pcfredf:\States\to\Mfin\States\) such
  that \(\Pcfredf(e)=u\) if \(e\Rel\Pcfred u\) and \(\Pcfredf(e)=e\)
  if \(e\) is final.

  On the other side, these transition rules contain some
  nondeterminism precisely in the fact that transitions are from a
  state \(e\) to a finite multisets of states \(e_1+\cdots+e_k\)
  (which can be understood as the various possible results of a
  transition from \(e\)) and not from states to states.
  One of the purposes of the next section is to show that this
  nondeterminism is an illusion.
\end{remark}

\subsection{Denotational semantics}

Let \(\cL\) be a coherent differential resource category which is
closed (as an SMC) and Scott and where the coproduct
\(\Snat=\Bplus_{i\in\Nat}\Sone\) exists.

First, we define by induction on the type \(A\) an object \(\Tsem A\).
We take \(\Tsem\Tnat=\Snat\) and more generally
\(\Tsem{\Tdiffm d\Tnat}=\Dfun^d\Snat\).
And then \(\Tsem{\Timpl AB}=\Limplp{\Oc{\Tsem A}}{\Tsem B}\).

Then, given a term \(M\) of \(\PCFD\), a context %
\(\Gamma=(x_1:A_1,\dots,x_n:A_n)\) and a type \(B\) such that %
\(\Tseq\Gamma MB\), one defines by induction on the typing
derivation of \(\Tseq\Gamma MB\) (that is, by induction on \(M\)) an
element \(\Psem M\Gamma\in\Kl\cL(\Tsem\Gamma,\Tsem B)\).

We refer to~\cite{Ehrhard23b} for the precise definition of this
interpretation of terms, the syntax of \(\PCFD\) has been chosen in
order to make it fairly straightforward.
Concerning the ``object of integers'' \(\Snat\), we use
\begin{itemize}
\item the existence of a canonical isomorphism
  \(\Snatiso\in\cL(\Plus\Sone\Snat,\Snat)\), which is the key
  ingredient for interpreting \(\Succd dM\), \(\Predd dM\) and
  \(\Ifd dMPQ\);
\item the existence of a canonical \(\oc\)-coalgebra structure on
  \(\Snat\), which is the key ingredient for interpreting
  \(\Letd dxMP\).
  This is due to the fact that \(\Sone\) is a \(\oc\)-coalgebra
  (thanks to the Seely isomorphisms) and the fact that
  \(\oc\)-coalgebras are closed under arbitrary colimits which exist
  in \(\cL\).
\end{itemize}
As an example, using the definition of \(\Snat\) as a coproduct, we
can define, for any object \(X\) of \(\cL\), a morphism
\(f\in\cL(\Snat\ITens\Withp XX,X)\) uniquely characterized by
\begin{center}
  \begin{tikzcd}
    \Tens\Sone{\Withp XX}
    \ar[d,swap,"\Leftu"]
    \ar[r,"\Tens{\Snum 0}{\Withp XX}"]
    &[2em]
    \Tens\Snat{\Withp XX}
    \ar[d,"f"]
    \\
    \With XX
    \ar[r,"\Proj1"]
    &
    X
  \end{tikzcd}
  \Treesep
  \begin{tikzcd}
    \Tens\Sone{\Withp XX}
    \ar[d,swap,"\Leftu"]
    \ar[r,"\Tens{\Snum{\nu+1}}{\Withp XX}"]
    &[2em]
    \Tens\Snat{\Withp XX}
    \ar[d,"f"]
    \\
    \With XX
    \ar[r,"\Proj2"]
    &
    X
  \end{tikzcd}
\end{center}
where, for \(\nu\in\Nat\), the morphism
\(\Snum\nu\in\cL(\Sone,\Snat)\) is the \(\nu\)th injection of
\(\Sone\) into the coproduct \(\Snat\).
We set \(\Sif=\Derfun_2 f\in\Kl\cL(\Snat\IWith (X\IWith X),Y)\) which
is bilinear (see \Cref{sec:multilinearity}).
Then using \Cref{def:kleisli-partial-derivatives} we define %
\(\Sif^d =\A{\Dfun_1^d}\Sif\in\Kl\cL(\Dfun^d\Snat\IWith(\With
XX),\Dfun^d X)\) %
that we use straightforwardly to interpret the \(\Ifd dMPQ\) construct
of \(\PCFD\).

Then one can prove a standard substitution lemma.
\begin{lemma}
  \label{lemma:substit-sem}
  If \(\Tseq{\Gamma,x:A}MB\) and \(\Tseq\Gamma NA\), one has %
  \(\Psem{\Subst MNx}\Gamma=\Psem
  M{\Gamma,x:A}\Comp\Tuple{\Tsem\Gamma,\Psem N\Gamma}\) %
  in \(\Kl\cL\).
\end{lemma}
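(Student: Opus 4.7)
The plan is to proceed by induction on the typing derivation of \(\Tseq{\Gamma,x:A}{M}{B}\), which amounts to induction on the structure of \(M\). Throughout, I abbreviate \(\Gamma'=\Gamma,x:A\) and set \(\gamma=\Tuple{\Tsem\Gamma,\Psem N\Gamma}\in\Kl\cL(\Tsem\Gamma,\Tsem{\Gamma'})\), so that the equation to prove reads \(\Psem{\Subst MNx}{\Gamma}=\Psem{M}{\Gamma'}\Comp\gamma\).

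For the base cases: if \(M=x\) then \(\Subst xNx=N\) and \(\Psem{x}{\Gamma'}\) is the last projection, so the identity reduces to the defining property \(\Proj{n+1}\Comp\gamma=\Psem N\Gamma\) of the pairing. If \(M=y\) for \(y\not=x\) declared in \(\Gamma\), or if \(M=\Num\nu\), then \(\Subst MNx=M\) and \(\Psem{M}{\Gamma'}\) factors through the projection \(\Tsem{\Gamma'}\to\Tsem\Gamma\) that discards the \(x\)-component, so composing with \(\gamma\) recovers \(\Psem M\Gamma\) by the same pairing identity.

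For the inductive step, every remaining term constructor has its interpretation of the form \(h\Comp\Tuple{\Psem{M_1}{\Gamma'},\dots,\Psem{M_k}{\Gamma'}}\), where \(h\) is a fixed structural morphism in \(\Kl\cL\) independent of the context (for instance \(h=\Ev\) in the application case; \(h=\cY\) in the fixpoint case; \(h=\Sif^d\), the let-morphism, \(\Snatiso\) or its predecessor version for the arithmetic cases; and \(h\) built from \(\Dfunint\), \(\Dfunstrf\), and the structural maps \(\Pdinj di\), \(\Pdsum d\), \(\Pdflip dl\), \(\Pdproj di\) of \(\Dfun\) for the differential cases). Since substitution commutes with all of these constructors, the identity \(\Psem{\Subst MNx}{\Gamma}=\Psem M{\Gamma'}\Comp\gamma\) reduces, using the universal property \(\Tuple{f_i}_{i}\Comp g=\Tuple{f_i\Comp g}_{i}\) of the cartesian product, to applying the induction hypothesis in each coordinate.

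The one case requiring genuine work is abstraction \(M=\Abst yBP\): choosing \(y\) fresh by \(\alpha\)-conversion, we have \(\Subst{\Abst yBP}Nx=\Abst yB{\Subst PNx}\), and the interpretation is a currying \(\Psem{\Abst yBP}{\Gamma'}=\Ap\Cur{\Psem P{\Gamma',y:B}}\). Here one must (i) insert the fresh variable at the end of the context, which is handled by weakening \(N\) to make \(\Psem N{\Gamma,y:B}=\Psem N\Gamma\Comp\Proj{1,\dots,n}\), and (ii) apply naturality of currying in the Kleisli category, namely \(\Ap\Cur{f}\Comp g=\Ap\Cur{f\Comp(g\IWith Y)}\). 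Combining these with the induction hypothesis applied to \(P\) in the extended context \(\Gamma,x:A,y:B\) yields the desired equation. This is the main obstacle, but it is essentially a bureaucratic check of the compatibility of currying with the cartesian reindexing \(\gamma\IWith\Tsem B\), and it does not involve the differential structure at all. Once this case is dispatched, the application, fixpoint and all first-order cases fall out uniformly, and the differential constructors are handled as instances of the generic shape above.
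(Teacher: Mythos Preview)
The paper does not give a proof of this lemma at all: it simply introduces it as a ``standard substitution lemma'' and refers to~\cite{Ehrhard23b} for the precise definition of the interpretation. Your inductive argument is exactly the standard one and is correct in outline.

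One small omission: you single out abstraction as the only binder case, but \(\Letd dyMP\) also binds \(y\) in \(P\), so \(\Psem P{}\) lives in the extended context \(\Gamma',y:\Tnat\) rather than \(\Gamma'\), and this case does not fit your generic pattern \(h\Comp\Tuple{\Psem{M_1}{\Gamma'},\dots,\Psem{M_k}{\Gamma'}}\). It is handled by the same weakening-plus-naturality argument you give for abstraction, so this is a bookkeeping point rather than a real gap.
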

Notice indeed that %
\(\Psem M{\Gamma,x:A}\in\Kl\cL(\Tsem\Gamma\IWith\Tsem A,\Tsem B)\) %
and
\(\Tuple{\Tsem\Gamma,\Psem N\Gamma}
\in\Kl\cL(\Tsem\Gamma,\Tsem\Gamma\IWith\Tsem A)\).

We have an analogous lemma for the differential modification.
\begin{lemma}
  \label{lemma:modif-sem}
  If \(\Tseq{\Gamma,x:A}MB\) then %
  \(\Psem{\Diffm xM}{\Gamma,x:\Tdiff A}
  =\Ap{\Dfun_2}{\Psem M{\Gamma,x:A}}\).
\end{lemma}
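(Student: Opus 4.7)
The plan is to proceed by structural induction on the term $M$, or equivalently on the derivation of $\Tseq{\Gamma,x:A}MB$, matching each clause of the definition of $\Diffm xM$ in \Cref{fig:diffm-def} against the corresponding semantic identity expressing how $\Dfun_2$ interacts with the syntactic construct. The general shape of each verification is: rewrite $\Psem M{\Gamma,x:A}$ using the semantic clause associated with the head construct of $M$, apply $\Dfun_2$ using the structural properties of the differential monad developed in \Cref{sec:diff-structure,sec:closed-differential}, then recognize the result as $\Psem{\Diffm xM}{\Gamma,x:\Tdiff A}$ by using the inductive hypothesis on the immediate subterms and the semantic clauses for the syntactic constructs appearing on the right-hand side of the definition of $\Diffm xM$.

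The easy cases are the variable, numeral, and arithmetic ones. For $M = x$, the denotation $\Psem x{\Gamma,x:A}$ is the projection $\Proj2$ in $\Kl\cL$, and $\Dfun_2$ of such a projection is again the projection, giving $\Psem x{\Gamma,x:\Tdiff A}$. For $M = y$ with $y \neq x$, the denotation does not depend on the $x$-component, so $\Dfun_2 \Psem y{\Gamma,x:A}$ factors through $\Sin0$, which is precisely $\Psem{\Pdinj00 y}{\Gamma,x:\Tdiff A}$. The numeral case is analogous. The successor, predecessor cases follow because these operations are semantically linear in their argument, using the multilinearity result of \Cref{sec:multilinearity} to commute $\Dfun_2$ past the operation while only shifting the differentiation depth from $d$ to $d+1$. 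The abstraction case reduces, after currying, to the inductive hypothesis, using that $\Dfun_2$ commutes with currying in the sense of the proposition preceding \Cref{th:dfun-part-app}.

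The two substantive cases are application and the conditional/let. For $M = \App PN$, I will invoke \Cref{th:dfun-part-app}, which expresses $\Dfun_2 \App fg$ as an application involving $\Dfunint$, the isomorphism $\Dfunstrf$, and the second partial derivatives of $f$ and $g$, composed with the multiplication of the monad $\Dfun$. The right-hand side $\App{\Pdsum0{\Tdiff{\Diffm xP}}}{\Diffm xN}$ is tailored so that the syntactic $\Tdiff$ denotes $\Dfunint \Comp \Dfunstrf$ when applied to a term already differentiated, and the syntactic $\Pdsum0$ denotes post-composition with $\Dmonm$; combined with the inductive hypothesis on $P$ and $N$, this matches the semantic formula of \Cref{th:dfun-part-app} exactly. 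For $M = \Ifd dNPQ$, the interpretation factors through the semantic conditional $\Sif^d$, which is $d$-fold differentiated and linear in its second argument. Applying $\Dfun_2$ requires commuting this new derivative past the $d$ existing ones, producing a $\Sflip$ at the appropriate position by \ref{ax:sdschwarz}; this is exactly what the $\Pdflip 0d$ construct denotes. The $\Pdsum 0$ on the outside absorbs the resulting iterated $\Dfun$ via $\Dmonm$ so as to land in $\Tdiffm{d+1}C$ rather than $\Tdiffm{d+2}C$. The $\Letd dxNP$ case is treated identically.

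For the fixpoint case $M = \Fix P$, I will use the theorem on differential of fixpoints from \Cref{sec:diff-fix}: $\Dfun_2 \Ap\Sfix f$ is itself a fixpoint, namely the least fixpoint of the operator obtained by applying $\Dfunint \Comp \Dfunstrf$ to $\Dfun_2 f$ and post-composing with $\Dmonm$. The denotation of $\Fix{\Pdsum0{\Tdiff{\Diffm xP}}}$ is precisely this fixpoint by inductive hypothesis on $P$ and the denotational reading of $\Pd$ and $\Pdsum0$. The main obstacle will be the conditional/let case: one has to carefully track how the iterated $\Dfun$ monad structure interacts with the $d$-fold differentiated semantic conditional, and verify that the composite $\Pdsum0 \circ \Pdflip 0d$ denotes the right combination of $\Sflip_{\Oc X}$ and $\Sfunmonm$ to realize the commutation of the new derivative with the $d$ old ones. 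Once this bookkeeping is settled---essentially a diagram chase using \ref{ax:sdschwarz}, \ref{ax:sdadd}, and the characterization of partial derivatives of multilinear morphisms---all other cases fall into place.
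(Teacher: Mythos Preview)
The paper does not actually prove this lemma in the text; it is stated without proof, with the detailed development deferred to~\cite{Ehrhard23b}. Your structural induction on $M$ is the expected approach and is well-aligned with the machinery the paper has set up: \Cref{th:dfun-part-app} is precisely the lemma needed for the application case, the fixpoint theorem of \Cref{sec:diff-fix} handles $\Fix P$, and the multilinearity result of \Cref{sec:multilinearity} covers the arithmetic constructs. Your identification of the conditional/let case as the one requiring the most bookkeeping (commuting the fresh $\Dfun_2$ past the $d$ existing $\Dfun_1$'s via \ref{ax:sdschwarz}, then collapsing with $\Dmonm$) is accurate, and the syntactic constructs $\Pdflip 0d$ and $\Pdsum 0$ are indeed designed to denote exactly those semantic operations.

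One small point to tighten: in the application case you write that the syntactic $\Tdiff$ (i.e.\ $\Pd$) ``denotes $\Dfunint \Comp \Dfunstrf$ when applied to a term already differentiated''. More precisely, the semantics of $\Pd$ is post-composition with $\Dfunint$, and the iso $\Dfunstrf$ is what identifies $\Tsem{\Tdiff(\Timpl CB)} = \Tsem{\Timpl C{\Tdiff B}}$ with $\Dfun\Tsem{\Timpl CB}$ in the model; so the $\Dfunstrf$ is absorbed by the type equality rather than appearing as an explicit syntactic operation. This is a cosmetic matter and does not affect the correctness of your plan.
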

Notice that %
\(\Psem M{\Gamma,x:A}\in\Kl\cL(\Tsem\Gamma\IWith\Tsem A,\Tsem B)\) %
and hence
\(\Ap{\Dfun_2}{\Psem
  M{\Gamma,x:A}}\in\Kl\cL(\Tsem\Gamma\IWith\A\Dfun\Tsem A,\A\Dfun\Tsem
B)\) so that the equation above is well typed.

\begin{theorem}
  If \(\Tseq\Gamma MA\) and \(M\Rel\Pcfred\sum_{i=1}^k M_i\) then the
  morphisms %
  \(\Psem{M_i}\Gamma\in\Kl\cL(\Tsem\Gamma,\Tsem A)\) are summable in
  \(\cL(\A\Oc{\Tsem\Gamma},\Tsem A)\) and we have
  \(\Psem M\Gamma=\sum_{i=1}^k\Psem{M_i}\Gamma\).
\end{theorem}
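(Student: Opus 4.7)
The plan is to prove the statement by induction on the derivation of $M\Rel\Pcfred\sum_{i=1}^k M_i$, which decomposes into two main tasks: verifying the equation for each base rewriting rule (when the redex is at the root), and establishing context closure (the semantic counterpart of the fact that reduction can occur under any term constructor). Before starting, I would first observe that, by \Cref{lemma:summability-compl} and the fact that the semantics is built by composing and pairing morphisms from basic constants, summability and sums always propagate through the semantic operations we use.

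For the base cases, I would treat each redex rule individually. The ordinary $\beta$-redex $\App{\Abst xAM}N\Rel\Pcfred\Subst MNx$ is handled by \Cref{lemma:substit-sem} together with the definition of $\Psem{-}{}$ on application and abstraction. The differential $\beta$-redex $\Ap\Pd{\Abst xAM}\Rel\Pcfred\Abst x{\Tdiff A}{\Diffm xM}$ follows from \Cref{lemma:modif-sem} combined with the proposition in \Cref{sec:closed-differential} stating $\Dfunint\Comp\A\Cur f=\Ap\Cur{\Dfun_2 f}$. Fixpoint unfolding uses the fixpoint equation from \Cref{th:lambda-cat-fixpoint-operator}. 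The arithmetic redexes (successor, predecessor, conditional, let) are handled using that $\Snat\Isom\Plus\Sone\Snat$, that $\Sif$ is bilinear, and its partial derivatives are defined via \Cref{def:kleisli-partial-derivatives}. The projection/injection redexes $\Pdproj di{\Pdinj djM}$ are handled using $\Sproj i\Compl\Sin j=\Kronecker ij\Id$. The flip redex follows from the definition of $\Sflip$. The one genuinely sum-producing redex $\Pdproj d1{\Pdsum dM}\Rel\Pcfred\Pdproj d0{\Pdproj d1 M}+\Pdproj d1{\Pdproj d0M}$ is precisely the characterization of $\Sfunmonm$ given in \Cref{lemma:ssum-monad-mult}: $\Sproj1\Compl\Sfunmonm=\Sproj0\Compl\Sproj1+\Sproj1\Compl\Sproj0$, so summability and equality of the semantics hold by construction.

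For context closure, I would show by structural induction on terms that if $\Tseq\Gamma{N}B$ and $(\Psem{N_i}\Gamma)_{i=1}^k$ are summable with sum $\Psem N\Gamma$, then for any way of plugging $N$ into a term constructor the resulting family of interpretations is summable with the expected sum. Constructs that are interpreted by pre-/post-composition with a fixed morphism (projections, injections, successor, predecessor at depth $0$, abstraction, differentiation $\Pd$) are handled directly by \Cref{lemma:summability-compl,lemma:curlin-additive}. Application $\App PN$ needs two sub-cases: reduction inside $P$ is handled by composition, and reduction inside $N$ requires that the argument slot of $\A\Derfun{\Ev}\Comp\Tuple{\Psem P\Gamma,\_}$ preserves summability, which follows from the fact that $\Ev$ involves $\Der$ in its definition, so composition with it is additive by \Cref{lemma:summability-compl}. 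The arithmetic constructs at higher depth $\Succd d$, $\Predd d$, $\Ifd dMPQ$, $\Letd dxMP$ are interpreted through iterated partial derivatives of bilinear morphisms; reduction inside the relevant argument is then handled by the theorem of \Cref{sec:multilinearity} which ensures $\Sproj j\Comp\Dfun_i f=f\Comp(\ldots\IWith\Sproj j\IWith\ldots)$, coupled with \ref{ax:sdadd} which guarantees that derivatives preserve partial sums. Fixpoints are handled using the Scott enrichment: if the body satisfies an additive equation, then the approximants $\Ap{\Sfix_n}{-}$ do too, and the lub is summable by closure of summability under monotone limits.

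The main obstacle will be the context closure for the differential constructs at higher depths, which requires fitting together the strength $\Dfunstr i$, the monad multiplication $\Dmonm$, and the additivity of derivatives coming from \ref{ax:sdadd}: one must verify that every syntactic ``commutation'' rule in the rewriting system (moving $\Pdproj di$, $\Pdinj di$, $\Pdflip dl$ past $\Succd{}$, $\Predd{}$, $\Ifd{}$, $\Letd{}$, $\App{}{}$, $\Pd{}$, $\Abst{}{}{}$) corresponds to a commutative diagram in $\Kl\cL$. Each such diagram is an instance of naturality of one of the structure maps $\Sproj i$, $\Sin i$, $\Sfunmonm$, $\Sflip$, $\Sfunlift$, possibly combined with \ref{ax:sdchain}, \ref{ax:sdloc}, \ref{ax:sdwith} or \ref{ax:sdschwarz}; since the syntax was designed exactly to mirror these categorical axioms, each verification is, while routine, somewhat delicate to set up uniformly. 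The punchline is that the axioms of a coherent differential resource category force exactly the equations needed to absorb the sums introduced by the rewriting rule for $\Pdsum d{}$, and the Scott enrichment guarantees compatibility with $\Fix$.
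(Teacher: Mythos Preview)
Your overall architecture matches the paper's: induction on the derivation of the reduction, with the two key ingredients being \Cref{lemma:substit-sem} for the ordinary $\beta$-redex and \Cref{lemma:modif-sem} for the differential $\beta$-redex; the paper itself says no more than this and defers the details to~\cite{Ehrhard23b}.

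There is, however, a genuine error in your context-closure argument. You claim that reduction inside the argument $N$ of $\App PN$ is sound because ``$\Ev$ involves $\Der{}$ in its definition, so composition with it is additive''. This is false: Kleisli composition $\Ev\Comp\Tuple{\Psem P\Gamma,g}=\Ev\Compl\Oc\Tuple{\Psem P\Gamma,g}\Compl\Digg{}$ passes through the functor $\Oc$, which is \emph{not} additive ($\Oc(g_0+g_1)\neq\Oc g_0+\Oc g_1$ in general). Concretely, in $\PCOH$ take any term $P$ that duplicates its argument; then $N\mapsto\Psem{\App PN}{}$ is a genuinely nonlinear analytic function and does not preserve sums. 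The same objection applies to your treatment of reduction inside the body of $\Fix$ and inside the branches of $\Ifd d\_PQ$ and $\Letd dx\_P$: none of these positions is linear, and the Scott-continuity argument you sketch for $\Fix$ cannot manufacture additivity where there is none.

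The resolution is that the rewriting relation $\Pcfred$ of~\cite{Ehrhard23b} is \emph{not} closed under arbitrary one-hole contexts; it is given by a finite list of rules (the base redexes plus explicit commutation rules such as $\Pdproj 0i{\App NP}\Rel\Pcfred\App{\Pdproj 0iN}P$) that only ever fire in linear head position. This is exactly why the paper can assert soundness using only the substitution and modification lemmas. So the cases you worry about simply do not occur; you should drop them rather than justify them with an argument that, as written, would prove something false.
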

This result expresses the soundness of this denotational semantics.
The proof uses \Cref{lemma:substit-sem,lemma:modif-sem}.
Using the notions introduced in \Cref{sec:rewriting}, this generalizes
easily as follows.

\begin{corollary}
  \label{cor:soundness-den-sem}
  Assume that \(\List M1p\) are terms such that %
  \((\Tseq\Gamma{M_i}A)_{j=1}^p\) and the morphisms %
  \((\Psem{M_i}\Gamma\in\cL(\A\Oc{\Tsem\Gamma},\Tsem A))_{j=1}^p\) are
  summable.
  If \(\sum_{j=1}^pM_j\Rel{\Mslifttr{\Pcfred}}\sum_{j=1}^{p'}M'_{j}\)
  then the morphisms
  \((\Psem{M'_j}\Gamma\in\cL(\A\Oc{\Tsem\Gamma},\Tsem A))_{j=1}^{p'}\)
  are summable and we have %
  \(\sum_{j=1}^p\Psem{M_j}\Gamma=\sum_{j=1}^{p'}\Psem{M'_j}\Gamma\).
\end{corollary}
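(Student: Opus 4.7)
The plan is to proceed by induction on the derivation of the reduction $\sum_{j=1}^p M_j \Rel{\Mslifttr{\Pcfred}} \sum_{j=1}^{p'} M'_j$, using the inductive characterization of $\Mslifttr\rho$ given in the lemma of Section~\ref{sec:rewriting}: it is the least reflexive and transitive relation on $\Mfin T$ such that $(t,m)\in\rho$ implies $(\Mset t,m)\in\Mslifttr\rho$ and such that if $(m_i,m'_i)\in\Mslifttr\rho$ for $i=1,2$ then $(m_1+m_2,m'_1+m'_2)\in\Mslifttr\rho$. So four cases must be handled: reflexivity, the singleton base case, transitivity, and parallel composition.

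The reflexive case is tautological, and the singleton base case is exactly the previous theorem (the unlifted soundness statement): if $\Mset{M}\Rel{\Mslifttr{\Pcfred}}\sum_{i=1}^k M_i$ because $M\Rel\Pcfred\sum_{i=1}^k M_i$, then the already-established soundness gives both summability of the $\Psem{M_i}\Gamma$ and the equation $\Psem M\Gamma=\sum_{i=1}^k\Psem{M_i}\Gamma$. Transitivity is handled by chaining two applications of the inductive hypothesis: preservation of summability along the first step yields a summable family whose sum equals that of the original $M_j$'s, and the second step applies directly to this intermediate multiset.

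The main obstacle, as expected, is the parallel composition case, where one must juggle summability across regroupings of finite families. Concretely, suppose the derivation ends with $(m_1+m_2, m'_1+m'_2)\in\Mslifttr\Pcfred$ derived from $(m_i,m'_i)\in\Mslifttr\Pcfred$ for $i=1,2$, and the input multiset $\sum_j M_j$ has been partitioned as $m_1+m_2$, so the interpretations form an indexed family $(\Psem{M_j}\Gamma)_{j\in J}=(\Psem{M_j}\Gamma)_{j\in J_1}\cup(\Psem{M_j}\Gamma)_{j\in J_2}$ over a partition $J=J_1\sqcup J_2$. By Lemma~\ref{th:part-monoid-summable-subfam}, the restriction of a summable family to a subset is summable, so the two subfamilies indexed by $J_1$ and $J_2$ are both summable; this provides the summability hypotheses needed to apply the inductive hypothesis independently to each $(m_i,m'_i)$. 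The IH then yields summability of each interpreted family corresponding to $m'_i$, together with the equations $\sum_{j\in J_i}\Psem{M_j}\Gamma=\sum_{k}\Psem{M'^{(i)}_k}\Gamma$.

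To finish, one reassembles the two resulting summable families into a single summable family for $m'_1+m'_2$, which is precisely the content of Theorem~\ref{th:part-monoid-summable-fam} applied in the reverse direction: summability of each part together with summability of the partial sums (which holds since both partial sums equal $\sum_{j\in J_i}\Psem{M_j}\Gamma$ and these are summable by hypothesis on the original family) entails summability of the whole, and moreover $\sum_j\Psem{M_j}\Gamma=\sum_i\sum_{j\in J_i}\Psem{M_j}\Gamma=\sum_i\sum_k\Psem{M'^{(i)}_k}\Gamma=\sum_{j'}\Psem{M'_{j'}}\Gamma$. This closes the induction. No new categorical ingredient is required beyond the partial-monoid bookkeeping results of Section~\ref{sec:partial-monoids} and the previously established soundness for single-step reductions.
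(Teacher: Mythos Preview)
Your proof is correct and is exactly the argument the paper has in mind: the paper does not spell out a proof of this corollary beyond the remark ``using the notions introduced in \Cref{sec:rewriting}, this generalizes easily,'' and your induction on the derivation of $\Mslifttr\Pcfred$ via the characterization lemma of that section, together with the partial-monoid bookkeeping of \Cref{th:part-monoid-summable-fam} and \Cref{th:part-monoid-summable-subfam}, is precisely that easy generalization made explicit.
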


\subsubsection{Adequacy and determinism}
Now we specialize to the case where \(\cL=\PCOH\).

If \(\Tseq{}MA\) then we know that %
\(\Psem M{}\in\Pcoh{\Tsem A}\subseteq\Realpto{\Web{\Tsem A}}\).
Moreover, a simple inspection of the definition of the semantics shows
that actually \(\Psem M{}\in\Nat^{\Web{\Tsem A}}\).
Of course the situation would be different if the language \(\PCFD\)
were extended with a probabilistic choice operator (or more simply,
\Eg{}, with a ``constant'' \(\mathsf{rand}\) of type \(\Tnat\) which
has probability \(1/2\) to reduce to \(\Num 0\) and \(1/2\) to reduce
to \(\Num1\)), but this is not the case in the present paper and
in~\cite{Ehrhard23b}.

If \(A=\Tnat\), this means that \(\Psem M{}\in\Nat^\Nat\) and that we have
\begin{align*}
  \sum_{\nu\in\Nat}{\Psem M{}}_\nu\in\Intercc 01
\end{align*}
so that \(\forall\nu\in\Nat\ {\Psem M{}}_\nu\in\Set{0,1}\) and there
is at most one \(\nu\in\Nat\) such \({\Psem M{}}_\nu=1\).
In other words, either \(\Psem M{}=0\) or \(\Psem M{}=\Base\nu\) (for
a uniquely determined \(\nu\in\Nat\)).

\begin{theorem}
  Let \(M\) be a term such that \(\Tseq{}M\Tnat\) and let \(\nu\in\Nat\).
  The two following conditions are equivalent.
  \begin{itemize}
  \item \(\Psem M{}=\Base\nu\)
  \item \(\State{\Seqempty}{M}{\Stempty}\Rel{\Mslifttr{\Pcfred}}\Num\nu\).
  \end{itemize}
\end{theorem}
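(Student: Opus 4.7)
The plan is to prove the two implications separately, using completely different machinery.

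The implication from operational success to semantic value ($\Leftarrow$) is essentially a direct consequence of the results already established in the excerpt. If $\State{\Seqempty}{M}{\Stempty}\Rel{\Mslifttr{\Pcfred}}\Num\nu$, then applying \Cref{th:state-trans-to-term-red} at each transition step and lifting to the reflexive-transitive closure of the multiset-reduction relation yields $\Termofst{\State{\Seqempty}{M}{\Stempty}}\Rel{\Mslifttr{\Pcfred}}\Termofst{\State{\Seqempty}{\Num\nu}{\Stempty}}$; unfolding the definitions, $\Termofst{\State{\Seqempty}{M}{\Stempty}}=M$ and $\Termofst{\State{\Seqempty}{\Num\nu}{\Stempty}}=\Num\nu$. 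Then \Cref{cor:soundness-den-sem} gives $\Psem{M}{}=\Psem{\Num\nu}{}=\Base\nu$, where the last equality is a trivial unfolding of the interpretation of numerals through the coproduct structure on $\Snat$.

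The converse (adequacy, $\Rightarrow$) is the main obstacle and cannot be obtained by soundness alone. I would proceed by a Tait-style computability argument, tailored to $\PCOH$ as in~\cite{DanosEhrhard08} and adapted to the differential constructs of $\PCFD$ as in~\cite{Ehrhard23b}. Concretely, I would define by induction on types $A$ a relation $\Realize{A}\subseteq\Pcoh{\Tsem A}\times\Closed A$, required to be Scott-closed in its first argument (down-closed and closed under lubs of monotone sequences) so that fixpoints can be treated by approximation. At base type $\Tnat$, set $x\Realize\Tnat M$ iff, for every $\nu$ with $x_\nu>0$, the machine reduces $\State{\Seqempty}{M}{\Stempty}$ to $\Num\nu$; note that $x$ is either $0$ or $\Base\nu$, so this clause really only has semantic content in the latter case. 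At arrow type $\Timpl AB$, use the standard logical-relation clause via Kleisli application, evaluated through $\Ev$. At differentiated type $\Tdiffm d A$, the clause is designed so that $\Pdproj d i$ and $\Pdinj d i$ act as expected on both sides and so that summability in $\Pcoh{\Dfun^d\Tsem A}$ matches operational behavior through $\Pdsum d$ and $\Pdflip$; this is where the operational role of the $\delta$-access word in states, orthogonal to the stack, matches the semantic $\Sfun$-monad and its structure maps $\Sin0,\Sfunmonm,\Sflip,\Sfunlift$.

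The centerpiece is the Fundamental Lemma: for any $\Tseq\Gamma M A$ with $\Gamma=(x_1:A_1,\dots,x_n:A_n)$ and any family $((\xi_i,N_i)\in\Realize{A_i})_{i=1}^n$, one has $(\Psem M\Gamma(\xi_1,\dots,\xi_n),\Subst{M}{\vec N}{\vec x})\in\Realize{A}$. The proof is by induction on the typing derivation. Variable, numeral, successor, predecessor, abstraction, application, if-then-else and let cases are routine, mirroring the transitions in \Cref{fig:state-reductions-push,fig:state-reductions-pop} against the categorical interpretation and using \Cref{lemma:substit-sem}. The $\Fix$ case uses Scott-continuity of the fixpoint operator and the required closure of $\Realize A$ under lubs of monotone sequences. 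The hard part, and the real obstacle, will be the differential clauses: one needs to verify that $\Pd{\_}$ matches the semantic differential $\Dfun$ (closing the gap between the transition $\State\delta{\Abst xBN}{\Stdiff is}\to\State{\delta i}{\Abst{x}{\Tdiff B}{\Diffm xN}}{s}$ and the categorical equation of \Cref{lemma:modif-sem}), and that the operational splitting performed by the $\Pdsum d$ transition $\State{\epsilon1\delta}{\Pdsum{\Len\delta}N}{s}\to\State{\epsilon01\delta}{N}{s}+\State{\epsilon10\delta}{N}{s}$ is correctly absorbed by the monad multiplication $\Sfunmonm$ and the required summability. Once the Fundamental Lemma is established, instantiating it at the empty context with $A=\Tnat$ and $\Psem{M}{}=\Base\nu$ immediately yields $\State{\Seqempty}{M}{\Stempty}\Rel{\Mslifttr{\Pcfred}}\Num\nu$, closing the proof.
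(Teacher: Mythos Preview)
Your treatment of the $\Leftarrow$ direction matches the paper exactly: soundness via \Cref{cor:soundness-den-sem}, transported along the state-to-term translation using \Cref{th:state-trans-to-term-red}.

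For the $\Rightarrow$ direction you also correctly identify reducibility as the tool, but your decomposition differs from the paper's. You propose a direct logical relation $\Realize A\subseteq\Pcoh{\Tsem A}\times\Closed A$ between PCoh values and closed terms. The paper instead factors through an intermediate layer: it uses an intersection typing system associated with the \emph{relational} semantics underlying $\PCOH$, and runs the reducibility argument there. The bridge back to $\PCOH$ is then the observation made just before the theorem, that every coefficient of $\Psem M{}$ lies in $\Set{0,1}$, so $\Psem M{}=\Base\nu$ is equivalent to $\nu$ belonging to the relational interpretation. This buys modularity: the quantitative step (from $\PCOH$ to $\REL$) is isolated and trivial, and the reducibility argument is carried out in a purely qualitative, syntactic setting where intersection types give a finitary handle on each point of the web. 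Your direct approach should work in principle --- your base clause is already essentially ``for every $\nu$ in the support'', which is the relational content --- but it blends the two steps together and may make the differential clauses (especially the summability bookkeeping around $\Pdsum d$ and $\Pdflip dl$) heavier to verify than in the paper's intersection-type formulation.
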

The implication \(\Leftarrow\) boils down to
\Cref{cor:soundness-den-sem} through the translation \(\Termofst e\)
from states to terms and \Cref{th:state-trans-to-term-red}.
The implication \(\Rightarrow\) is proven using an adaptation of the
reducibility method applied to an intersection typing system
associated with a relational semantics of \(\PCFD\) which underlies
the \(\PCOH\) semantics.

So the calculus \(\PCFD\), and its operational semantics formalized by
our Krivine machine, is essentially deterministic in the sense that,
starting from a well typed state \(\State{\Seqempty}{M}{\Stempty}\),
there is at most one reduction path which leads to a final state
\(\State{\Seqempty}{M}{\Stempty}\) where \(\nu\in\Nat\) is uniquely
determined by \(M\) (interpreting the reduction \(e\Rel\Stred e_1+e_2\)
as a nondeterministic choice), the other ones leading to \(0\).
The situation is not completely satisfactory yet since we do not
know \emph{a priori} which transition path is ``the good one''.

Another important contribution of~\cite{Ehrhard23b} is a solution of
this issue based on a simple and natural idea suggested to us by
Guillaume~Geoffroy: make the access word \(\delta\) of a state
\(\State\delta Ns\) writable.

\section*{Conclusion}
We have presented coherent differentiation from a semantical and
syntactical point of view, explaining how this new setting allows to
combine the ideas of differential \LL{} with determinism and with
probabilistic computations.

Even if we consider this as a major improvement wrt.~the earlier
approaches to differential \LL{}, the precise meaning of the resulting
functional calculus is still mysterious.
More recently, in a joint work with Aymeric~Walch, we have extended
this approach to iterated derivatives and to Taylor expansions of
terms, still in a deterministic setting~\cite{EhrhardWalch23b}.
These new results might provide the sought programming interpretation
of CD as it allows to enforce within the language strong restrictions
on the resource consumption of programs.

\section*{Acknowledgment}
I want mainly to thank Aymeric Walch who made many important
observations about the first presentation of CD in~\cite{Ehrhard23a}
of which the present paper has benefited crucially, mainly in
\Cref{sec:partial-monoids,sec:diff-structure}.


\bibliography{newbiblio.bib}


\end{document}